\documentclass[a4paper,12pt]{amsart}
\usepackage[hmarginratio={1:1},vmarginratio={1:1},heightrounded,textwidth=455pt]{geometry}
\usepackage{amsfonts}
\usepackage{amsmath}
\usepackage{amssymb, colonequals}
\usepackage{hyperref}
\usepackage{tikz}
\usetikzlibrary{arrows}
\usetikzlibrary{patterns}
\usetikzlibrary{shapes,snakes}
\usetikzlibrary{arrows}
\usetikzlibrary{patterns}
\usetikzlibrary{shapes,snakes}
\usetikzlibrary{calc}
\usepackage[textsize=small]{todonotes}
\usepackage{url}
\usepackage{enumitem}
\usepackage[vlined,norelsize,ruled,linesnumbered]{algorithm2e}
\usepackage{multirow}

%\algnewcommand\algorithmicforeach{\textbf{for each}}
%\algdef{S}[FOR]{ForEach}[1]{\algorithmicforeach\ #1\ \algorithmicdo}%

\newcommand{\di}{\hspace{0.1cm} \textbf{\textit{di}} \hspace{0.1cm} }
\newcommand{\cs}{\hspace{0.1cm} \textbf{\textit{cs}} \hspace{0.1cm} }
\newcommand{\cd}{\hspace{0.1cm} \textbf{\textit{cd}} \hspace{0.1cm} }
\newcommand{\cc}{\hspace{0.1cm} \textbf{\textit{cc}} \hspace{0.1cm} }
\newcommand{\ov}{\hspace{0.1cm} \textbf{\textit{ov}} \hspace{0.1cm} }
\newcommand{\bdi}{\textbf{\textit{di}}}
\newcommand{\bcs}{\textbf{\textit{cs}}}
\newcommand{\bcd}{\textbf{\textit{cd}}}
\newcommand{\bcc}{\textbf{\textit{cc}}}
\newcommand{\bov}{\textbf{\textit{ov}}}

\newcommand{\inside}{inside}

\newcommand{\leftside}{left}
\newcommand{\rightside}{right}
\let\leq\leqslant
\let\geq\geqslant
\let\setminus\smallsetminus

\let\rho\varrho

\makeatletter
\newcommand\ie{i.e\@ifnextchar.{}{.\@}}
\newcommand\etc{etc\@ifnextchar.{}{.\@}}
\newcommand\etal{et~al\@ifnextchar.{}{.\@}}
\makeatother

\newcounter{dummy} 
\numberwithin{dummy}{section}
\newtheorem{theorem}[dummy]{Theorem}
\newtheorem{claim}[dummy]{Claim}
\newtheorem{lemma}[dummy]{Lemma}

\newtheorem{definition}[dummy]{Definition}

\newcounter{hackcount}

%************************************
\title[Testing isomorphism of circular-arc graphs]{Testing isomorphism of circular-arc graphs -- Hsu's approach revisited}

\thanks{
Tomasz Krawczyk is partially supported by the Polish National Science Center grant UMO-2015/17/B/ST6/01873.
}

\author[T.~Krawczyk]{Tomasz Krawczyk}
\address[T.~Krawczyk]{Theoretical Computer Science Department, Faculty of Mathematics and Computer Science, Jagiellonian University, Krak\'ow, Poland}
\email{krawczyk@tcs.uj.edu.pl}
\date{}

\begin{document}

\thispagestyle{empty}

\maketitle

\begin{abstract}
Circular-arc graphs are intersection graphs of arcs on the circle.
The aim of our work is to present a polynomial time algorithm testing whether two 
circular-arc graphs are isomorphic.
To accomplish our task we construct decomposition trees, which are
the structures representing all normalized intersection models of circular-arc graphs.
Normalized models reflect the neighbourhood relation in circular-arc graphs
and can be seen as their canonical representations;
in particular, every intersection model can be easily transformed into a normalized one.
Decomposition trees generalize PQ-trees, which are the structures that represent all intersection models of interval graphs.

Our work adapts and appropriately extends the previous work on the similar topic done by Hsu [\emph{SIAM J. Comput. 24(3), 411--439, (1995)}].
In his work, Hsu developed decomposition trees representing all normalized models
of circular-arc graphs.
However due to the counterexample given in [\emph{Discrete Math. Theor. Comput. Sci., 15(1), 157--182, 2013}], his decomposition trees can not be used by algorithms testing isomorphism of circular-arc graphs.

\end{abstract}

\section{Introduction}
\label{sec:introduction}
Circular-arc graphs are intersection graphs of arcs on the circle. 
Circular-arc graphs generalize interval graphs, 
which are the intersection graphs of intervals on a real line.
Usually, the problems for circular-arc graphs tend to be harder than for their interval counterparts.
A good example illustrating our remark is the problem of compiling 
the lists of minimal forbidden induced subgraphs for these classes of graphs.
For interval graphs such a list was completed by Lekkerkerker and Boland already in the 1960s \cite{LekBol62}
but for circular-arc graphs, despite a flurry of research \cite{Bon09,Fed99,Fra15,Klee69,Chih06,Tro76,Tuc70}, it is still unknown.
We refer the readers to the survey papers \cite{surDur14, surLinSchw09}, where the state of 
research on the structural properties of circular-arc graphs is outlined.

%The situation is different for the recognition problem.
The first linear time algorithm for the recognition of interval graphs was given by Booth and Lueker \cite{BoothLueker76} in the 1970s.
A few years later, the first polynomial time algorithm for the recognition of circular-arc graphs was constructed by Tucker \cite{Tuck80}.
The complexity of this algorithm has been subsequently improved in \cite{EschSpin93, Hsu95}.
Currently, there are known at least two linear-time algorithms recognizing circular-arc graphs \cite{KapNus11, McCon03}.

In the 1970s Booth and Lueker \cite{BoothLueker76} introduced \emph{PQ-trees}, structures
that appear to be useful to represent all intersection models of interval graphs.
A few years later Lueker and Booth used PQ-trees in the construction of a linear time algorithm testing isomorphism of interval graphs~\cite{LuekerBooth79}.
On the other hand, the isomorphism problem for circular-arc graphs has been open for almost 40 years.
There are known linear algorithms solving the isomorphism problem on proper circular-arc graphs \cite{counterex13, Lin08} and co-bipartite circular-arc graphs \cite{Eschen97}.
The isomorphism problem can be solved in linear time \cite{counterex13} and logarithmic space \cite{Kob16} in the class of Helly circular-arc graphs.
The partial results for the general case have been given in \cite{Chan16}.
Only recently, the first polynomial time-algorithm for the isomorphism problem for circular-arc graphs was announced by Nedela, Ponomarenko, and Zeman \cite{NPZ19}.
We mention that their algorithm uses quite different techniques from those presented in this paper.

In 1990's Hsu claimed a theorem describing the structure of all normalized intersection models of circular-arc graphs and a polynomial time algorithm for the isomorphism problem \cite{Hsu95}.
In his work Hsu developed decomposition trees, which are structures that represent all normalized models of a circular-arc graph.
Based on his decomposition trees, Hsu proposed a polynomial time algorithm testing isomorphism of circular-arc graphs.
However, Hsu's algorithm was proven to be incorrect and a few years ago a counterexample 
for its correctness was constructed by Curtis, Lin, McConnell, Nussbaum, Soulignac, Spinrad, and Szwarcfiter~\cite{counterex13}.
In particular, decomposition trees proposed by Hsu can not be used to test whether two circular-arc graphs are isomorphic.

\subsection{Our work}
We adapt and extend Hsu's ideas appropriately and we construct refined decomposition trees 
representing all normalized models of a circular-arc graph.
To attain our goal we exploit the ideas invented by Spinrad \cite{Spin88}, 
which enabled him to reduce the recognition problem of co-bipartite circular-arc graphs
to testing whether some carefully designed posets have dimension at most two.
%One can easily extend Spinrad's ideas to provide a characterization of all normalized models
%of co-bipartite circular-arc graphs in terms of two-dimensional realizers of the above-mentioned two-dimensional %posets.
We extend Spinrad's ideas to the whole class of circular-arc graphs (to characterize normalized models of some parts of circular-arc graphs) and we plug them appropriately to Hsu's framework.
Eventually, we develop a decomposition tree representing all normalized models of a circular-arc graph.
Decomposition trees presented here generalize PQ-trees, which are the structures representing all intersection models of
interval graphs.
Given such decomposition trees, we propose a polynomial time algorithm for the isomorphism problem on circular-arc graphs.
%We do not attempt to optimize time complexity of our algorithm.
%It seems that the refined decomposition trees (which can be constructed in a linear time by known recognition algorithms) satisfy %some properties that might be used to develop 
%a fast algorithm for the isomorphism problem on circular-arc graphs.

Our paper is organized as follows:
\begin{itemize}
 \item In Section \ref{sec:normalized_models_and_Hsu_approach} we compare our approach and Hsu's approach to the problem of characterization of all normalized models of circular-arc graphs.
 We also quote a counterexample to the correctness of Hsu's isomorphism algorithm constructed in~\cite{counterex13}.
 \item In Section \ref{sec:preliminaries} we introduce notation used throughout the paper.
 \item In Section \ref{sec:tools} we describe all tools required to prove our results, including
 split decomposition of circle graphs, modular decomposition, and transitive orientations of graphs.
 \item In Section \ref{sec:conformal_models} we describe a decomposition tree that keeps a track of all normalized models of a circular-arc graph.
 \item In Section \ref{sec:isomorphism_problem} we present a polynomial time algorithm for the isomorphism problem on circular-arc graphs.
\end{itemize}

\section{Two approaches to the problem of characterization of all normalized models of circular-arc graphs}
\label{sec:normalized_models_and_Hsu_approach}
A \emph{circular-arc model} $\psi$ of a graph 
$G = (V,E)$ is a collection of arcs $\{\psi(v): v \in V\}$ of a given circle $C$
such that for every $u,v \in V$ we have $uv \in E$ iff $\psi(u) \cap \psi(v) \neq \emptyset$.
A graph $G$ is a \emph{circular-arc graph} if $G$ admits a circular-arc model.
In this paper we only consider circular-arc models $\psi$
in which the arcs from $\{\psi(v): v \in V\}$ have different endpoints: 
one can easily verify that any arc model of $G$ can be turned into a model that satisfies this property.

A \emph{chord model} $\phi$ of a graph 
$G = (V,E)$ is a collection of chords $\{\phi(v): v \in V\}$ of a given circle $C$
such that for every $u,v \in V$ we have $uv \in E$ iff $\phi(u) \cap \phi(v) \neq \emptyset$.
A graph $G$ is a \emph{circle graph} if $G$ admits a chord model.

%A \emph{chord model} $\phi$ of a graph $G=(V,E)$ is a collection of chords $\{\phi(v): v \in V\}$ of a given circle %$C$ 
%such that for every $u,v \in V$:
%$$\phi(u) \cap \phi(v) \neq \emptyset \iff u \sim v.$$
%A graph $G$ is a \emph{circle graph} if $G$ admits a chord model.

Let $G$ be a circular-arc graph with no twins and no universal vertices.
Suppose $\psi$ is an arc model of $G$.
Since $G$ has no universal vertices and since the endpoints of the arcs from $\{\psi(v): v \in V\}$ are pairwise different, 
we can distinguish five possibilities describing the mutual positions of every two arcs from $\{\psi(v): v \in V\}$.
Let $(v,u)$ be a pair of distinct vertices in $G$.
We say that:
\begin{itemize}
 \item $\psi(v)$ and $\psi(u)$ are \emph{disjoint} if $\psi(v) \cap \psi(u) = \emptyset$,
 \item $\psi(v)$ \emph{contains} $\psi(u)$ if $\psi(v) \supsetneq \psi(u)$,
 \item $\psi(v)$ \emph{is contained} in $\psi(u)$ if $\psi(v) \subsetneq \psi(u)$,
 \item $\psi(v)$ and $\psi(u)$ \emph{cover the circle} if $\psi(v) \cup \psi(u) = C$,
 \item $\psi(v)$ and $\psi(u)$ \emph{overlap}, otherwise.
\end{itemize}
See Figure \ref{fig:mutual_arc_position} for an illustration.

\begin{figure}[htp!]
%disjoint
\begin{tikzpicture}[scale=0.5]
\coordinate (center) at (0,0) {};
\coordinate (v) at ($(center)+(90:2cm)$) {};
\coordinate (u) at ($(center)+(270:2cm)$) {};

\coordinate (lv) at ($(center)+(90:2.6cm)$) {};
\coordinate (lu) at ($(center)+(270:2.6cm)$) {};

\tikzstyle{every node}=[inner sep=1pt]
\begin{scriptsize}
\node at (lv) {$\psi(v)$};
\node at (lu) {$\psi(u)$};
\end{scriptsize}

%u
\draw[thick] ([shift=(30:2.1cm)]0,0) arc (30:150:2.1cm);
%u
\draw[thick] ([shift=(210:2.1cm)]0,0) arc (210:330:2.1cm);

\draw[thick, white] (-3.0,-3)--(-3.0,-2.8);
\draw[thick, white] (3.0,3)--(3.0,2.8);

\end{tikzpicture}
%contains
\begin{tikzpicture}[scale=0.5]
\coordinate (center) at (0,0) {};
\coordinate (v) at ($(center)+(90:2cm)$) {};
\coordinate (u) at ($(center)+(270:2cm)$) {};

\coordinate (lv) at ($(center)+(90:2.6cm)$) {};
\coordinate (lu) at ($(center)+(90:1.25cm)$) {};

\tikzstyle{every node}=[inner sep=1pt]
\begin{scriptsize}
\node at (lv) {$\psi(v)$};
\node at (lu) {$\psi(u)$};
\end{scriptsize}

%v
\draw[thick] ([shift=(0:2.1cm)]0,0) arc (0:180:2.1cm);
%u
\draw[thick] ([shift=(45:1.8cm)]0,0) arc (45:135:1.8cm);

\draw[thick, white] (-3.0,-3)--(-3.0,-2.8);
\draw[thick, white] (3.0,3)--(3.0,2.8);

\end{tikzpicture}
%is_contained
\begin{tikzpicture}[scale=0.5]
\coordinate (center) at (0,0) {};
\coordinate (v) at ($(center)+(90:2cm)$) {};
\coordinate (u) at ($(center)+(270:2cm)$) {};

\coordinate (lu) at ($(center)+(90:2.6cm)$) {};
\coordinate (lv) at ($(center)+(90:1.25cm)$) {};

\tikzstyle{every node}=[inner sep=1pt]
\begin{scriptsize}
\node at (lv) {$\psi(v)$};
\node at (lu) {$\psi(u)$};
\end{scriptsize}

%v
\draw[thick] ([shift=(0:2.1cm)]0,0) arc (0:180:2.1cm);
%u
\draw[thick] ([shift=(45:1.8cm)]0,0) arc (45:135:1.8cm);

\draw[thick, white] (-3.0,-3)--(-3.0,-2.8);
\draw[thick, white] (3.0,3)--(3.0,2.8);

\end{tikzpicture}
%cover_the_circle
\begin{tikzpicture}[scale=0.5]
\coordinate (center) at (0,0) {};
\coordinate (v) at ($(center)+(90:2cm)$) {};
\coordinate (u) at ($(center)+(270:2cm)$) {};

\coordinate (lv) at ($(center)+(90:2.6cm)$) {};
\coordinate (lu) at ($(center)+(270:2.4cm)$) {};

\tikzstyle{every node}=[inner sep=1pt]
\begin{scriptsize}
\node at (lv) {$\psi(v)$};
\node at (lu) {$\psi(u)$};
\end{scriptsize}

%v
\draw[thick] ([shift=(-20:2.1cm)]0,0) arc (-20:200:2.1cm);
%u
\draw[thick] ([shift=(160:1.9cm)]0,0) arc (160:380:1.9cm);

\draw[thick, white] (-3.0,-3)--(-3.0,-2.8);
\draw[thick, white] (3.0,3)--(3.0,2.8);

\end{tikzpicture}
%overlap
\begin{tikzpicture}[scale=0.5]
\coordinate (center) at (0,0) {};
\coordinate (v) at ($(center)+(90:2cm)$) {};
\coordinate (u) at ($(center)+(270:2cm)$) {};

\coordinate (lv) at ($(center)+(90:2.6cm)$) {};
\coordinate (lu) at ($(center)+(270:2.4cm)$) {};

\tikzstyle{every node}=[inner sep=1pt]
\begin{scriptsize}
\node at (lv) {$\psi(v)$};
\node at (lu) {$\psi(u)$};
\end{scriptsize}
\draw[thick] ([shift=(70:2.1cm)]0,0) arc (70:200:2.1cm);
\draw[thick] ([shift=(160:1.9cm)]0,0) arc (160:290:1.9cm);

\draw[thick, white] (-3.0,-3)--(-3.0,-2.8);
\draw[thick, white] (3.0,3)--(3.0,2.8);
\end{tikzpicture}

\caption{\label{fig:mutual_arc_position} From left to right:
$\psi(v)$ and $\psi(u)$ are disjoint, $\psi(v)$ contains $\psi(u)$, $\psi(v)$ is contained in $\psi(u)$,
$\psi(v)$ and $\psi(u)$ cover the circle, and $\psi(v)$ and $\psi(u)$ overlap.
}

\end{figure}
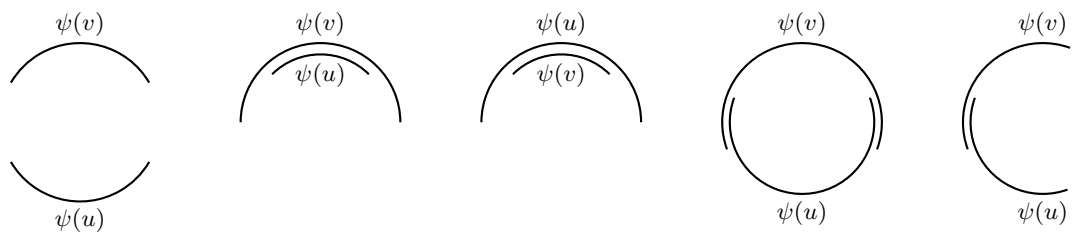

Following the ideas from \cite{Hsu95, Tuck80}, the \emph{intersection matrix} of $G$ is an $|V| \times |V|$ matrix $M_{G}$, whose rows and columns correspond to the vertices of $G$. 
Assuming that $N[v] = \{u \in V: uv \in E\} \cup \{v\}$
denotes the \emph{closed neighborhood} of the vertex $v$ in $G$,
the entries of $M_G[v,u]$ are defined such that:
$$M_G[v,u] = 
\left\{
\begin{array}{ccl}
\bdi &
\text{if} &\ vu \notin E, \\
\smallskip
\bcs
& \text{if} &N[u] \subsetneq N[v],\\
\smallskip
\bcd
& \text{ if} &N[v] \subsetneq N[u],\\
\smallskip
\bcc
&  \text{if} & 
\smallskip
\begin{array}{l}
vu \in E,\ N[v] \cup N[u] = V, \\
\text{foreach } w \in N[v]\setminus N[u] \text{ we have } N[w] \subsetneq N[v], \text{ and } \\
\text{foreach } w \in N[u]\setminus N[v] \text{ we have } N[w] \subsetneq N[u], \\
\end{array} \\
\smallskip
\bov
&& \text{ otherwise.}
\end{array}
\right.
$$
Note that the matrix $M_G$ is symmetric except that for every $u,v \in V$ we have $M_G[v,u] = \bcs \text{ iff }  M_G[u,v] = \bcd$.
In what follows we abbreviate and we write $v \bullet u$ if $M_G[v,u] = \bullet$, for $\bullet \in \{ \bdi, \bcs, \bcd, \bcc, \bov \}$. 

The intersection matrix $M_G$ encodes the relative relation between the closed neighborhoods of the vertices in the graph $G$.
The matrix $M_G$ tries to capture some relations between the entries of $M_G$ and the relative positions of the arcs in a circular-arc model of $G$.
In particular, one can easily verify that for every circular-arc model $\psi$ of $G$ and every pair of distinct vertices $(v, u)$ in $G$:
\begin{itemize}
 \item $\psi(v)$ and $\psi(u)$ are disjoint iff $v \di u$.
 \item If $\psi(v)$ contains $\psi(u)$, then $N[u] \subsetneq N[v]$, and hence $v \cs u$.
 \item If $\psi(v)$ is contained in $\psi(u)$, then $N[v] \subsetneq N[u]$, and hence $v \cd u$.
 \item If $\psi(v)$ and $\psi(u)$ cover the circle, then $N[v] \cup N[u] = V$, 
 $N[w] \subsetneq N[v]$ for every $w \in N[v] \setminus N[u]$, and $N[w] \subsetneq N[u]$
 for every $w \in N[u] \setminus N[v]$, and hence $v \cc u$. 
\end{itemize}
In so-called \emph{normalized models}, introduced by Hsu in \cite{Hsu95}, the relationship between the entries of $M_G$ and the relative positions of the arcs is even more rigid.
\begin{definition}
\label{def:normalized_model}
A circular-arc model $\psi$ of $G$ is \emph{normalized} if for every pair $(v,u)$ of distinct vertices of $G$ the following conditions are satisfied:
\begin{enumerate}
 \item \label{item:norm_di} $v \di u \iff \text{$\psi(v)$ and $\psi(u)$ are disjoint,}$
 \item \label{item:norm_cs} $v \cs u \iff \text{$\psi(v)$ contains $\psi(u)$,}$
 \item \label{item:norm_cd} $v \cd u \iff \text{$\psi(v)$ is contained in $\psi(u)$,}$
 \item \label{item:norm_cc} $v \cc u \iff \text{$\psi(v)$ and $\psi(u)$ cover the circle,}$
 \item \label{item:norm_ov} $v \ov u \iff \text{$\psi(v)$ and $\psi(u)$ overlap.}$
\end{enumerate}
\end{definition}
Every circular-arc model of $G$ fulfills \eqref{item:norm_di}, 
but it might not satisfy \eqref{item:norm_cs}, \eqref{item:norm_cd}, \eqref{item:norm_cc}, or \eqref{item:norm_ov}.
However, every circular-arc model $\psi$ of $G$ can be turned into a normalized model
by carrying out a normalization procedure on $\psi$.
The normalization procedure performs the following transformation on $\psi$ whenever there are adjacent vertices $(v,u)$ in $G$ violating \eqref{item:norm_cs}, \eqref{item:norm_cd} or \eqref{item:norm_cc}:
\begin{itemize}
 \item if $v \cs u$ but $\psi(v)$ does not contain $\psi(u)$, 
 it picks the endpoint of $\psi(v)$ contained in $\psi(u)$ and pulls it outside $\psi(u)$ so as
 $\psi(v)$ contains $\psi(u)$,
 \item if $v \cd u$ but $\psi(v)$ is not contained in $\psi(u)$, 
 it picks the endpoint of $\psi(u)$ contained in $\psi(v)$ and pulls it outside $\psi(v)$ so as $\psi(v)$ is contained in $\psi(u)$, 
 \item if $v \cc u$ but $\psi(v)$ and $\psi(u)$ do not cover the circle, 
 it picks the endpoint of $\psi(v)$ from outside $\psi(u)$ and the endpoint of $\psi(u)$ from outside $\psi(v)$ and pulls these endpoints towards each other until they pass somewhere on the circle $C$.
\end{itemize}
The above transformations keep $\psi$ a model of $G$ and, 
if performed in an appropriate order, eventually lead to a normalized model of $G$ -- see \cite{ Hsu95, Tuck80} for more details.
\begin{theorem}[\cite{Hsu95, Tuck80}]
\label{thm:circular_arc_graphs_normalized_models}
Suppose $G$ is a graph with no twins and no universal vertices.
Then, $G$ is a circular-arc graph if and only if $G$ has a normalized model.
\end{theorem}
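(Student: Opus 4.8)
The backward implication is immediate --- a normalized model is by definition a circular-arc model --- so the whole task is the forward direction, and the plan is to start from an arbitrary circular-arc model of $G$ and drive it to a normalized one by the normalization procedure described above. Along the way I would need to establish three things: that each elementary transformation again produces a model of $G$; that the transformations can be scheduled so that the procedure terminates; and that a model admitting no further transformation is in fact normalized.

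Concretely, I would fix a circular-arc model $\psi$ of $G$ with pairwise distinct arc endpoints (available by the remark preceding the theorem). Since $vu\in E \iff \psi(v)\cap\psi(u)\neq\emptyset$, condition \eqref{item:norm_di} holds for $\psi$, and it persists throughout because none of the three transformations converts an intersecting pair into a disjoint one or vice versa. For the invariance of ``being a model of $G$'', take the $\bcs$-transformation applied to a pair $(v,u)$ with $v\cs u$ but $\psi(v)$ not containing $\psi(u)$; here $v\cs u$ unwinds to $N[u]\subsetneq N[v]$. Pulling the endpoint of $\psi(v)$ that lies inside $\psi(u)$ out past the nearest endpoint of $\psi(u)$ sweeps $\psi(v)$ across a small region of $C$, so the only pairs whose intersection status can change are the $(\psi(v),\psi(w))$ with $w$ owning an endpoint in that region; a short local case analysis, using $N[u]\subsetneq N[v]$ and the fact that before the move $\psi(w)$ met $\psi(u)$ exactly for $w\in N[u]$, shows that no intersection of $\psi(v)$ with a neighbour of $v$ is lost and none with a non-neighbour is created. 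The $\bcd$- and $\bcc$-cases are analogous, and this is the verification alluded to in \cite{Hsu95, Tuck80}.

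The part I expect to be the main obstacle is termination. A single $\bcs$-move can a priori re-create a $\bcd$- or $\bcc$-violation that an earlier move had repaired, so one cannot simply fix violations in arbitrary order. The remedy I would use is to process the violating pairs in a fixed order governed by the inclusion preorder on closed neighbourhoods --- innermost containments and circle-coverings first --- and to check that, under this schedule, each transformation repairs its own pair without reviving an earlier one; a monovariant such as the number of pairs with $v\cs u$, $v\cd u$, or $v\cc u$ not yet realized geometrically then forces the procedure to halt after finitely many steps. This scheduling-plus-potential argument is essentially the bookkeeping imported from \cite{Hsu95, Tuck80}.

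Finally, let $\psi$ be a model to which no transformation applies. The implications stated before the theorem already give $\psi(v)\supsetneq\psi(u)\implies v\cs u$, $\psi(v)\subsetneq\psi(u)\implies v\cd u$, and $\psi(v)\cup\psi(u)=C\implies v\cc u$, while the absence of applicable transformations gives the three converse implications; together with \eqref{item:norm_di} this establishes \eqref{item:norm_cs}, \eqref{item:norm_cd} and \eqref{item:norm_cc} with both directions. For \eqref{item:norm_ov} I would argue by exhaustion: the five mutual positions of an ordered pair of distinct arcs partition the ordered pairs of distinct vertices of $G$, and so do the five values $\bdi,\bcs,\bcd,\bcc,\bov$ of $M_G$; since the two partitions already coincide on the four classes attached to $\bdi,\bcs,\bcd,\bcc$, they must coincide on the fifth as well, i.e. $v\ov u \iff \psi(v)$ and $\psi(u)$ overlap. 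The pre-theorem implications just used, together with the non-degeneracy of this five-way classification, are exactly where the hypotheses that $G$ has no twins and no universal vertices enter. Hence $\psi$ is a normalized model of $G$, which completes the proof.
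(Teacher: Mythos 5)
Your proposal is correct and follows the same normalization-procedure route the paper sketches: it reduces the forward direction to the three elementary moves, checks they preserve the intersection model, and defers the scheduling and termination bookkeeping to \cite{Hsu95, Tuck80}, exactly as the paper does (the paper itself gives no self-contained proof beyond the informal description and this citation). Your observation that, once no move applies, condition \eqref{item:norm_ov} follows automatically from \eqref{item:norm_di}--\eqref{item:norm_cc} by comparing the two five-fold partitions of ordered pairs is a clean, explicit articulation of a point the paper leaves implicit.
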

Our goal is to describe the structure representing all normalized models of a circular-arc graph $G$.
To achieve our goal, we follow the approach taken by Hsu~\cite{Hsu95}.
We consider the overlap graph $G_{ov}$ associated with $G$ which joins with an edge every two vertices $u,v$
such that $M_G[u,v]=\bov$.
Then, we are searching for some particular chord models of $G_{ov}$, called \emph{conformal}, 
which are in one-to-one correspondence with normalized models of $G$.
Then, we describe the structure of all conformal models of $G_{ov}$,
thus obtaining a description of all normalized models of $G$.
Similarly to Hsu's work, to achieve our goals we exploit a \emph{split decomposition} of $G_{ov}$, 
a structure describing all chord models of $G_{ov}$, and a \emph{modular decomposition} of $G_{ov}$, 
a structure that appears to be appropriate to characterize all conformal models of $G_{ov}$.
Below we detail our approach.

\begin{definition}[\cite{Hsu95}]
Let $G = (V,E)$ be a circular-arc graph with no twins and no universal vertices.
The \emph{overlap graph} $G_{ov} = (V,{\sim})$ of $G$ joins with an edge $\sim$ every two vertices $u, v \in V$ such that $M_G[u,v] = M_G[v,u] = \bov$.
\end{definition}

There is a natural \emph{straightening procedure} that transforms normalized models $\psi$ of $G$ 
into \emph{oriented chord models} $\phi$ of $G_{ov}$:
it converts every arc $\psi(v)$ into an oriented chord $\phi(v)$ such that $\phi(v)$ has the same endpoint as
$\psi(v)$ and $\phi(v)$ is oriented so as it has the arc $\psi(v)$ on its left side if we traverse 
$\phi(v)$ from its tail to its head -- see Figure \ref{fig:straightening_bending} for an illustration.

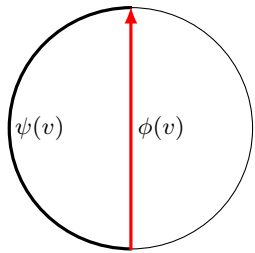
\begin{figure}[htp!]
\begin{tikzpicture}[xscale=0.80,yscale=0.80,>=latex,shorten >=-0.4pt,shorten <=-0.4pt]

\coordinate (lphiv) at (0.5,0) {};
\coordinate (lpsiv) at (-1.5,0) {};

\tikzstyle{every node}=[inner sep=1pt]
\begin{scriptsize}
\node at (lphiv) {$\phi(v)$};
\node at (lpsiv) {$\psi(v)$};
\end{scriptsize}

\draw (0,0) circle (2cm);
\draw[very thick] ([shift=(90:2cm)]0,0) arc (90:270:2cm);
\draw[red,<-,very thick] ([shift=(90:2cm)]0,0)--([shift=(270:2cm)]0,0);
\end{tikzpicture} 
\caption{\label{fig:straightening_bending} The straightening procedure transforms the arc $\psi(v)$ into an oriented chord $\phi(v)$. The bending procedure transforms the oriented chord $\phi(v)$ into an arc $\psi(v)$.}
\end{figure}

Clearly, for every $v,u \in V(G)$, the oriented chords $\phi(v)$ and $\phi(u)$ intersect 
if and only if the arcs $\psi(v)$ and $\psi(u)$ overlap.
Hence, for every $v,u \in V$ we have $v \sim u$ iff the chords $\phi(v)$ and $\phi(u)$ intersect. 
This means that $\phi$ is an \emph{oriented chord model} of $G_{ov}$ and $G_{ov}$ is a circle graph.
\begin{lemma}[\cite{Hsu95}]
\label{lem:G_ov_is_circle}
Suppose $G$ is a circular-arc graph with no twins and no universal vertices.
Then $G_{ov}$ is a circle graph.
\end{lemma}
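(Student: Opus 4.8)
\medskip
\noindent\textbf{Proof strategy.}
The plan is to turn a normalized model of $G$ into an explicit chord model of $G_{ov}$. First I would invoke Theorem~\ref{thm:circular_arc_graphs_normalized_models}: since $G$ is a circular-arc graph with no twins and no universal vertices, it has a normalized model $\psi$. To $\psi$ I apply the straightening procedure described above, which replaces each arc $\psi(v)$ by the oriented chord $\phi(v)$ with the same two endpoints as $\psi(v)$. Because, by our standing convention on models, the arcs of $\psi$ have pairwise distinct endpoints, every $\phi(v)$ is a genuine (non-degenerate) chord and all $2|V|$ chord endpoints are distinct. The orientation of the chords is irrelevant for the present statement (it matters only later, for the notion of a \emph{conformal} model), so I would simply disregard it here.

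The core of the argument is the elementary geometric fact that, for two arcs $\psi(v)$ and $\psi(u)$ with four distinct endpoints on $C$, the chords $\phi(v)$ and $\phi(u)$ meet in the interior of the disc if and only if exactly one endpoint of $\psi(u)$ lies on the open arc $\psi(v)$ and the other lies in its complement. A short inspection of the five mutual positions of arcs recalled before Figure~\ref{fig:mutual_arc_position} shows that this occurs precisely when $\psi(v)$ and $\psi(u)$ overlap: if the two arcs are disjoint, or one of them contains the other, or they together cover the circle, then both endpoints of $\psi(u)$ lie in a single one of the two closed arcs cut off by the endpoints of $\psi(v)$, so the chords do not cross. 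This case analysis is the only step requiring a modicum of care, and it is entirely routine.

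It remains to match this geometric dichotomy with the combinatorics of $G_{ov}$. By clause~\eqref{item:norm_ov} of Definition~\ref{def:normalized_model}, for every pair of distinct vertices $v, u$ the arcs $\psi(v)$ and $\psi(u)$ overlap if and only if $v \ov u$, that is $M_G[v,u] = \bov$; by the definition of the overlap graph this is exactly the adjacency $v \sim u$ in $G_{ov}$. Combining the two facts, for all distinct $v, u \in V$ we get $\phi(v) \cap \phi(u) \neq \emptyset$ if and only if $v \sim u$, so $\phi$ is a chord model of $G_{ov}$ and $G_{ov}$ is a circle graph. (Only clause~\eqref{item:norm_ov} of Definition~\ref{def:normalized_model} is used, so in fact any model of $G$ satisfying that single clause would do.)
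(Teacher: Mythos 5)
Your proposal is correct and takes essentially the same approach as the paper: start from a normalized model (via Theorem~\ref{thm:circular_arc_graphs_normalized_models}), apply the straightening procedure to get chords, observe geometrically that two chords cross exactly when the corresponding arcs overlap, and close the loop via clause~\eqref{item:norm_ov} of Definition~\ref{def:normalized_model}. The only difference is cosmetic: the paper compresses the geometric case analysis into a single ``Clearly,'' whereas you spell out the alternating-endpoints criterion and check the four non-overlap cases, which is a reasonable thing to do but not a new idea.
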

However, the converse operation is not always possible -- we can not convert any oriented chord model of $G_{ov}$ into a normalized circular-arc model of $G$.
To describe models for which such operation is feasible,
we first note a simple property of the oriented chord models of $G_{ov}$ obtained by the straightening procedure.
We associate with every vertex $v \in V(G)$ two sets, $\leftside(v)$ and $\rightside(v)$:
$$
\begin{array}{rcl}
\leftside(v) &=& \{ u \in V(G): \quad
v \cs u \quad \text {or} \quad v \cc u\}, 
\\
\rightside(v) &=& \{ u \in V(G):
\quad v \di u \quad \text{or} \quad v \cd u \}.
\end{array}
$$
If $\phi$ is an oriented chord model of $G_{ov}$ obtained from the straightening of a normalized model $\psi$, the oriented chords $\phi(u)$ for $u \in \leftside(v)$ lie on the left side of $\phi(v)$ and the oriented chords $\phi(u)$ for $u \in \rightside(v)$ lie on the right side of $\phi(v)$, for every $v \in V(G)$. 
See Figure~\ref{fig:uv_pairs} for an illustration.
\begin{definition}
An oriented chord model $\phi$ of $G_{ov}$ is \emph{conformal} if for every $v,u \in V(G)$:
\begin{itemize}
\item $u \in \leftside(v)$ iff $\phi(u)$ lies on the left side of $\phi(v)$,
\item $u \in \rightside(v)$ iff $\phi(u)$ lies on the right side of $\phi(v)$.
\end{itemize}
\end{definition}
So, the straightening procedure transforms normalized models of $G$ into conformal models of $G_{ov}$.

\begin{figure}[htp!]
%uv_disjoint
\begin{tikzpicture}[scale=0.6,>=latex,shorten >=-0.4pt,shorten <=-0.4pt]
\coordinate (label) at (0,-3) {};

\coordinate (lv) at (-0.3,0) {};
\coordinate (lu) at (1.35,0) {};

\tikzstyle{every node}=[inner sep=1pt]
\begin{scriptsize}
\node at (lv) {$v$};
\node at (lu) {$u$};
\end{scriptsize}

%u
\draw[thick] ([shift=(-60:2.2cm)]0,0) arc (-60:60:2.2cm);
\draw[thick,<-] ([shift=(-60:2.2cm)]0,0) -- ([shift=(60:2.2cm)]0,0);

%v
\draw[red, thick] ([shift=(90:2.0cm)]0,0) arc (90:270:2.0cm);
\draw[red, thick,<-] ([shift=(90:2.0cm)]0,0) -- ([shift=(270:2.0cm)]0,0);

%\draw[thick,red] ([shift=(80:3cm)]0,0) arc (80:130:3cm);
\draw[white] (-2.5,-2.5)--(-2.5,-2.3);
\draw[white] (2.5,2.5)--(2.5,2.3);
\end{tikzpicture}
\hspace{0.2cm}
%v_contains_u
\begin{tikzpicture}[scale=0.6,>=latex,shorten >=-0.4pt,shorten <=-0.4pt]
\coordinate (label) at (0,-3) {};

\coordinate (lv) at (0.3,0) {};
\coordinate (lu) at (-1.25,0) {};

\tikzstyle{every node}=[inner sep=1pt]
\begin{scriptsize}
\node at (lv) {$v$};
\node at (lu) {$u$};
\end{scriptsize}

%u
\draw[thick] ([shift=(120:2.0cm)]0,0) arc (120:240:2.0cm);
\draw[thick,->] ([shift=(240:2.0cm)]0,0) -- ([shift=(120:2.0cm)]0,0);

%v
\draw[red, thick] ([shift=(90:2.2cm)]0,0) arc (90:270:2.2cm);
\draw[red, thick,<-] ([shift=(90:2.2cm)]0,0) -- ([shift=(270:2.2cm)]0,0);

%\draw[thick,red] ([shift=(80:3cm)]0,0) arc (80:130:3cm);
\draw[white] (-2.5,-2.5)--(-2.5,-2.3);
\draw[white] (1.1,2.5)--(1.1,2.3);
\end{tikzpicture}
\hspace{0.2cm}
%v_is_contained_in_u
\begin{tikzpicture}[scale=0.6,>=latex,shorten >=-0.4pt,shorten <=-0.4pt]
\coordinate (label) at (0,-3) {};

\coordinate (lv) at (-0.3,0) {};
\coordinate (lu) at (1.35,0) {};

\tikzstyle{every node}=[inner sep=1pt]
\begin{scriptsize}
\node at (lv) {$v$};
\node at (lu) {$u$};
\end{scriptsize}

%u
\draw[thick] ([shift=(60:2.2cm)]0,0) arc (60:300:2.2cm);
\draw[thick,->] ([shift=(300:2.2cm)]0,0) -- ([shift=(60:2.2cm)]0,0);

%v
\draw[red, thick] ([shift=(90:2.0cm)]0,0) arc (90:270:2.0cm);
\draw[red, thick,<-] ([shift=(90:2.0cm)]0,0) -- ([shift=(270:2.0cm)]0,0);

%\draw[thick,red] ([shift=(80:3cm)]0,0) arc (80:130:3cm);
\draw[white] (-2.5,-2.5)--(-2.5,-2.3);
\draw[white] (2.3,2.5)--(2.3,2.3);
\end{tikzpicture}
\hspace{0.2cm}
%uv_cover_the_circle
\begin{tikzpicture}[scale=0.6,>=latex,shorten >=-0.4pt,shorten <=-0.4pt]
\coordinate (label) at (0,-3) {};

\coordinate (lv) at (0.3,0) {};
\coordinate (lu) at (-1.0,0) {};

\tikzstyle{every node}=[inner sep=1pt]
\begin{scriptsize}
\node at (lv) {$v$};
\node at (lu) {$u$};
\end{scriptsize}

%u
\draw[thick] ([shift=(250:2.0cm)]0,0) arc (250:470:2.0cm);
\draw[thick,<-] ([shift=(250:2.0cm)]0,0) -- ([shift=(470:2.0cm)]0,0);

%v
\draw[red, thick] ([shift=(90:2.2cm)]0,0) arc (90:270:2.2cm);
\draw[red, thick,<-] ([shift=(90:2.2cm)]0,0) -- ([shift=(270:2.2cm)]0,0);

%\draw[thick,red] ([shift=(80:3cm)]0,0) arc (80:130:3cm);
\draw[white] (-2.5,-2.5)--(-2.5,-2.3);
\draw[white] (2.5,2.5)--(2.5,2.3);
\end{tikzpicture}
\hspace{0.2cm}
%uv_overlap
\begin{tikzpicture}[scale=0.6,>=latex,shorten >=-0.4pt,shorten <=-0.4pt]
\coordinate (label) at (0,-3) {};

\coordinate (lv) at (-0.3,0) {};
\coordinate (lu) at (1.05,0.7) {};

\tikzstyle{every node}=[inner sep=1pt]
\begin{scriptsize}
\node at (lv) {$v$};
\node at (lu) {$u$};
\end{scriptsize}

%u
\draw[thick] ([shift=(30:2.0cm)]0,0) arc (30:150:2.0cm);
\draw[thick,<-] ([shift=(30:2.0cm)]0,0) -- ([shift=(150:2.0cm)]0,0);

%v
\draw[red, thick] ([shift=(90:2.2cm)]0,0) arc (90:270:2.2cm);
\draw[red, thick,<-] ([shift=(90:2.2cm)]0,0) -- ([shift=(270:2.2cm)]0,0);

%\draw[thick,red] ([shift=(80:3cm)]0,0) arc (80:130:3cm);
\draw[white] (-2.5,-2.5)--(-2.5,-2.3);
\draw[white] (2.5,2.5)--(2.5,2.3);
\end{tikzpicture}
\caption{\label{fig:uv_pairs} The mutual positions of the arcs $\psi(v)$ and $\psi(u)$ and the mutual positions of the corresponding oriented chords $\phi(v)$ and $\phi(u)$ for the cases: $v \di u$, $v \cs u$, $v \cd u$, $v \cc u$, and $v \ov u$, respectively.}
\end{figure}
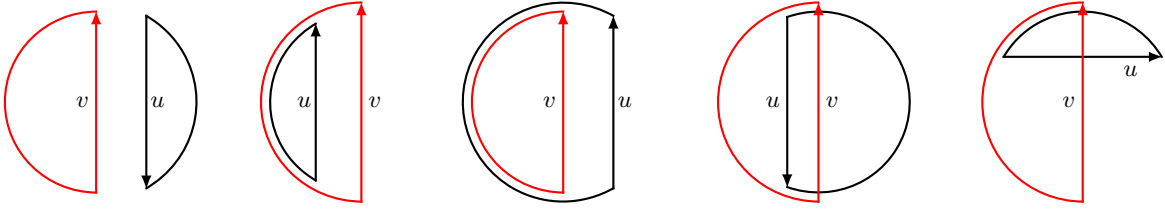

Now, suppose $\phi$ is a conformal model of $G_{ov}$.
A \emph{bending procedure} transforms every oriented chord $\phi(v)$ into an arc $\psi(v)$ 
with the same endpoints as $\phi(v)$ placed on the left side of $\phi(v)$, for $v \in V$.
So, the bending procedure is the inverse of the straightening procedure.
One can easily check that the bending procedures transforms $\phi$ into a normalized model $\psi$ of $G$.
Indeed, note that for every pair $(v,u)$ of distinct vertices in $G$ the statements:
\begin{itemize}
 \item $v \di u$, $v \in \rightside(u)$ and $u \in \rightside(v)$,
 $\psi(v)$ and $\psi(u)$ are disjoint, $\phi(v)$ has $\phi(u)$ on its right side and 
 $\phi(u)$ has $\phi(v)$ on its right side, are equivalent.
 \item $v \cs u$, $v \in \rightside(u)$ and $u \in \leftside(v)$, $\psi(v)$ contains $\psi(u)$, 
 $\phi(u)$ has $\phi(v)$ on its right side and $\phi(v)$ has $\phi(u)$ on its left side, are equivalent. 
 \item $v \cd u$, $v \in \leftside(u)$ and $u \in \rightside(v)$, $\psi(v)$ is contained in $\psi(u)$,
 $\phi(u)$ has $\phi(v)$ on its left side, $\phi(v)$ has $\phi(u)$ on its right side, are equivalent.
 \item $v \cc u$, $v \in \leftside(u)$ and $u \in \leftside(v)$, $\psi(v)$ and $\psi(u)$ cover the circle,
 $\phi(v)$ has $\phi(u)$ on its left side and $\phi(u)$ has $\phi(v)$ on its left side, are equivalent.
 \item $v \ov u$, $\psi(v)$ and $\psi(u)$ overlap, $\phi(v)$ and $\phi(u)$ intersect, are equivalent.
\end{itemize}

Since the straightening procedure and the bending procedure establish a one-to-one correspondence between
normalized models of $G$ and conformal model of $G_{ov}$, we have the following theorems.
\begin{theorem}[\cite{Hsu95}]
\label{thm:normalized_models_versus_conformal_models}
Let $G$ be a graph with no twins and no universal vertices.
Let $G_{ov}$ be an overlap graph associated with $G$.
Then, $G$ is a circular-arc graph if and only if $G_{ov}$ is a circle graph that admits a conformal model.
\end{theorem}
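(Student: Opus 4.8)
The plan is to read off the claimed equivalence from the straightening and bending procedures introduced above, using Theorem~\ref{thm:circular_arc_graphs_normalized_models} as the link between circular-arc graphs and normalized models. Concretely, I would establish the chain: $G$ is a circular-arc graph $\iff$ $G$ has a normalized model (Theorem~\ref{thm:circular_arc_graphs_normalized_models}) $\iff$ $G_{ov}$ admits a conformal oriented chord model (the correspondence to be proved) $\iff$ $G_{ov}$ is a circle graph admitting a conformal model (the last step being immediate, since a conformal oriented chord model of $G_{ov}$ is in particular a chord model of $G_{ov}$; compare Lemma~\ref{lem:G_ov_is_circle}).

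For the implication from a normalized model to a conformal one, I would take a normalized model $\psi$ of $G$, apply the straightening procedure to obtain the oriented chord model $\phi$ of $G_{ov}$ (the fact that this is a chord model of $G_{ov}$ has already been observed above: $\phi(v)$ and $\phi(u)$ cross iff $\psi(v)$ and $\psi(u)$ overlap iff $v \sim u$), and then verify conformality. Fix $v \in V(G)$. The chord $\phi(v)$ splits the remaining chords into those crossing it, which are exactly the $u$ with $v \sim u$, and those lying strictly to its left or strictly to its right, which are exactly the $u$ with $M_G[v,u] \in \{\bdi, \bcs, \bcd, \bcc\}$, that is, the elements of $\leftside(v) \cup \rightside(v)$. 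Since the partitions $\big(\{u : v \sim u\},\ \leftside(v),\ \rightside(v)\big)$ and $\big(\text{crossing},\ \text{strictly left},\ \text{strictly right}\big)$ of $V(G) \setminus \{v\}$ have matching first parts, it suffices to check one direction of each remaining case, namely $u \in \leftside(v) \implies \phi(u)$ lies left of $\phi(v)$ and $u \in \rightside(v) \implies \phi(u)$ lies right of $\phi(v)$; the converses follow automatically. Each implication is a brief examination of the cyclic order of the at most four endpoints involved, using that $\psi$ is normalized: if $u \in \leftside(v)$, then $v \cs u$ or $v \cc u$, hence $\psi(v) \supsetneq \psi(u)$ or $\psi(v) \cup \psi(u) = C$, and in both configurations both endpoints of $\psi(u)$ lie on the arc $\psi(v)$, which forces $\phi(u)$ onto the side of $\phi(v)$ carrying $\psi(v)$ — by construction of the straightening, the left side; symmetrically, if $u \in \rightside(v)$, then $v \di u$ or $v \cd u$, hence $\psi(v) \cap \psi(u) = \emptyset$ or $\psi(u) \supsetneq \psi(v)$, and both endpoints of $\psi(u)$ fall in the complementary arc of $\psi(v)$, forcing $\phi(u)$ onto the right side.

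For the reverse implication, from a conformal model $\phi$ of $G_{ov}$ I would apply the bending procedure to turn each $\phi(v)$ into an arc $\psi(v)$ placed on the left side of $\phi(v)$, and then check that $\psi$ satisfies all five biconditionals of Definition~\ref{def:normalized_model}; an appeal to Theorem~\ref{thm:circular_arc_graphs_normalized_models} then shows $G$ is a circular-arc graph. This verification is precisely the list of five equivalence chains recorded just before the statement: for a pair $(v,u)$ one reads the relation $v \bullet u$, converts it into the corresponding memberships in $\leftside(\cdot)$ and $\rightside(\cdot)$, then — by conformality of $\phi$ — into the left/right positions of $\phi(v)$ relative to $\phi(u)$ and vice versa, and finally — by the definition of bending — into the mutual position of $\psi(v)$ and $\psi(u)$; every step in the chain is reversible, so the biconditional follows. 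I expect the only fiddly part of the whole argument to be the bookkeeping that, for each of the four non-overlap types $\bdi, \bcs, \bcd, \bcc$, pins down the exact cyclic arrangement of the endpoints and hence identifies which side of $\phi(v)$ a given chord lies on; but no idea beyond the straightening/bending correspondence is needed, and the fact that the $\sim$-neighbourhood of $v$, the set $\leftside(v)$, and the set $\rightside(v)$ together partition $V(G) \setminus \{v\}$ — mirroring the crossing/left/right trichotomy among the chords — is exactly what lets every one-directional check be upgraded to an equivalence.
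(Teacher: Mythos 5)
Your proposal is correct and takes essentially the same route as the paper: reduce circular-arc-ness to the existence of a normalized model via Theorem~\ref{thm:circular_arc_graphs_normalized_models}, then use the straightening and bending procedures to pass between normalized models of $G$ and conformal models of $G_{ov}$. The paper itself presents the theorem as an immediate corollary of that correspondence (citing Hsu) and supplies only the surrounding sketch — the list of five equivalence chains and the remark that straightening and bending are mutually inverse — so your write-up is simply a fleshed-out version of the paper's own argument, with the small additional (and correct) observation that the matching of the crossing components of the two partitions lets you upgrade one-directional checks to biconditionals.
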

\begin{theorem}[\cite{Hsu95}]
\label{thm:normalized_conformal_correspondence}
Let $G$ be a circular-arc graph with no twins and no universal vertices.
There is a one-to-one correspondence between normalized models of $G$ 
and conformal models of $G_{ov}$.
\end{theorem}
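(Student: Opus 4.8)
The plan is to show that the straightening procedure and the bending procedure described above are mutually inverse bijections, the first one mapping normalized models of $G$ onto conformal models of $G_{ov}$ and the second mapping conformal models of $G_{ov}$ onto normalized models of $G$; the theorem is then immediate. I would first fix the combinatorial meaning of a model: up to an orientation-preserving homeomorphism of $C$, a circular-arc model is the cyclic sequence of the $2|V|$ pairwise distinct arc endpoints together with, for each $v$, the choice of which of the two arcs cut off by the endpoints of $v$ equals $\psi(v)$, and an oriented chord model is the same cyclic sequence together with, for each $v$, a choice of tail and head. At a single vertex, the assignment sending an arc to the oriented chord with the same endpoints that carries this arc on its left side is a bijection onto the oriented chords with those endpoints, since a chord admits exactly one orientation placing a prescribed side on its left; taking the arc on the left side of an oriented chord is the inverse assignment. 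Because straightening and bending act endpoint by endpoint and do not change the cyclic order of endpoints, they are mutually inverse as maps between circular-arc models on $V$ and oriented chord models on $V$, and it only remains to verify that they interchange the properties ``normalized'' and ``conformal''.

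Next I would check that straightening a normalized model $\psi$ of $G$ produces a conformal model $\phi$ of $G_{ov}$. That $\phi$ is a chord model of $G_{ov}$ is the observation preceding Lemma~\ref{lem:G_ov_is_circle}: $\phi(v)$ crosses $\phi(u)$ if and only if $\psi(v)$ and $\psi(u)$ overlap, that is, if and only if $v\sim u$. For conformality I would record that $\{v\}$, $\leftside(v)$, $\rightside(v)$ and $\{u: u\sim v\}$ partition $V$: they are pairwise disjoint because membership in each is determined by the single entry $M_G[v,u]$, and they cover $V$ because $M_G[v,u]\in\{\bdi,\bcs,\bcd,\bcc,\bov\}$. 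The discussion preceding the definition of conformal models supplies the ``forward'' inclusions: since $\psi$ is normalized, $u\in\leftside(v)$ means $v\cs u$ or $v\cc u$, hence $\psi(u)\subsetneq\psi(v)$ or $\psi(u)\cup\psi(v)=C$, and in either case $\phi(u)$ lies on the left side of $\phi(v)$; symmetrically $u\in\rightside(v)$ forces $\phi(u)$ on the right side. The reverse implications follow from the partition: if $\phi(u)$ lies on the left side of $\phi(v)$ then $\phi(u)$ does not cross $\phi(v)$ and $u\ne v$, hence $u\in\leftside(v)\cup\rightside(v)$, and $u\in\rightside(v)$ is excluded because it would place $\phi(u)$ on the right side; the right side case is symmetric. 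Hence $\phi$ is conformal.

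Then I would check that bending a conformal model $\phi$ of $G_{ov}$ produces a normalized model $\psi$ of $G$. Fix distinct $v,u\in V$. The value $M_G[v,u]$ simultaneously tells which of $\leftside(v)$, $\rightside(v)$, $\{w: w\sim v\}$ contains $u$ and, using the near-symmetry of $M_G$, which of $\leftside(u)$, $\rightside(u)$, $\{w: w\sim u\}$ contains $v$. Conformality of $\phi$ then determines the relative position of the oriented chords $\phi(v)$ and $\phi(u)$, and passing to the arcs on their left sides shows that $\psi(v)$ and $\psi(u)$ stand in exactly the mutual position prescribed by the corresponding clause of Definition~\ref{def:normalized_model}; this is precisely the content of the five displayed equivalences preceding the theorem. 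Reading off the $\bdi$ and $\bov$ cases shows that $\psi$ is a circular-arc model of $G$, and reading off all five shows that it is normalized. Combining the three steps, straightening and bending restrict to inverse bijections between normalized models of $G$ and conformal models of $G_{ov}$.

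The step I expect to require the most care is the last one, specifically the claim that conformality together with the value $M_G[v,u]$ pins down the configuration of $\phi(v)$ and $\phi(u)$ uniquely, up to the symmetry exchanging $v$ and $u$, so that no alternative planar placement of the two chords is compatible with $\phi$ being conformal. This is exactly why each equivalence records both pieces of two-sided information (for instance $u\in\leftside(v)$ \emph{and} $v\in\rightside(u)$ in the $\bcs$ case); once both are available, the remaining verification is the elementary planar check already sketched in the text.
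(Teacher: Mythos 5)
Your proof is correct and follows the same route the paper takes: it explicates that the straightening and bending procedures act endpoint-by-endpoint and are mutually inverse, then verifies via the partition $\{v\}\cup\leftside(v)\cup\rightside(v)\cup\{u:u\sim v\}=V$ and the five displayed equivalences that these procedures interchange the properties ``normalized'' and ``conformal''. The paper condenses exactly this argument to the remarks surrounding the straightening and bending procedures and then states the theorem as an immediate consequence, so there is no substantive difference in approach.
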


\subsection{Hsu's approach}
The straightening procedure and the bending procedure were introduced by Hsu \cite{Hsu95}.
However, Hsu's straightening procedure does not orient the chords in $\phi$.
Thus, the bending procedure needs to be performed more carefully,
but still can be uniquely performed unless $G$ has universal vertices. 
In particular, Lemma \ref{lem:G_ov_is_circle} and Theorems \ref{thm:normalized_models_versus_conformal_models} and \ref{thm:normalized_conformal_correspondence} were proved by Hsu \cite{Hsu95}, 
but in a slightly different setting.
The main difference between our approaches lies in the definition of conformal models of $G_{ov}$.
In fact, Hsu assumes the following definition: a non-oriented \emph{chord model $\phi$ of $G_{ov}$ is conformal 
if for every vertex $v$ of $G$ the chords associated with vertices in $\leftside(v)$ are on one side of $\phi(v)$
and those associated with vertices in $\rightside(v)$ are on the other side of $\phi(v)$} (Section 5.2 in \cite{Hsu95}).
Such a definition has one drawback:
there might exist two non-isomorphic circular-arc graphs $G=(V,E)$ and $G'=(V,E')$, defined on the same set of vertices $V$, such that $G_{ov} = G'_{ov}$ and such that both $G_{ov}$ and $G'_{ov}$ have the same conformal model $\phi$ -- see Figures~\ref{fig:counterex_graphs_models}--\ref{fig:counterex_overlap_and_hsu_conformal}.
This observation was noted by Curtis, Lin, McConnell, Nussbaum, Soulignac, Spinrad, and Szwarcfiter \cite{counterex13} and resulted in the construction of a counterexample to the correctness of Hsu's isomorphism algorithm.
As is stated in \cite{counterex13}: ``The origin of the mistake in Hsu's algorithm is the statement:
\emph{To test the isomorphism between two circular-arc graphs $G$ and $G'$, it suffices to test whether there exists isomorphic conformal models for $G_{ov}$ and $G'_{ov}$}'', which is not true due to the example constructed in \cite{counterex13} and shown in Figures~\ref{fig:counterex_graphs_models}--\ref{fig:counterex_overlap_and_hsu_conformal}.
Note that, assuming our definition, the conformal models corresponding to normalized models of $G$ and $G'$
are not isomorphic -- see Figure \ref{fig:counterex_models}.

\begin{figure}[htp!]
\centering
\begin{tikzpicture}[scale=0.70]
\coordinate (center) at (0,0) {};
\coordinate (a1) at ($(center)+(90:2cm)$) {};
\coordinate (a2) at ($(center)+(141:2cm)$) {};
\coordinate (a3) at ($(center)+(192:2cm)$) {};
\coordinate (a4) at ($(center)+(250:2cm)$) {};
\coordinate (a5) at ($(center)+(296:2cm)$) {};
\coordinate (a6) at ($(center)+(348:2cm)$) {};
\coordinate (a7) at ($(center)+(39:2cm)$) {};

\coordinate (lcenter) at ($(center)+(0.3,0)$) {};
\coordinate (la1) at ($(center)+(90:2.3cm)$) {};
\coordinate (la2) at ($(center)+(141:2.3cm)$) {};
\coordinate (la3) at ($(center)+(192:2.3cm)$) {};
\coordinate (la4) at ($(center)+(250:2.3cm)$) {};
\coordinate (la5) at ($(center)+(296:2.3cm)$) {};
\coordinate (la6) at ($(center)+(348:2.3cm)$) {};
\coordinate (la7) at ($(center)+(39:2.3cm)$) {};

\tikzstyle{every node}=[circle,minimum size=5pt,inner sep=0pt,draw,fill]
\node at (center) {};
\node at (a1) {};
\node at (a2) {};
\node at (a3) {};
\node at (a4) {};
\node at (a5) {};
\node at (a6) {};
\node at (a7) {};

\tikzstyle{every node}=[inner sep=1pt]
%\node at (lcenter) {$v$};
%\node at (la1) {$a$};
%\node at (la2) {$b$};
%\node at (la3) {$c$};
%\node at (la4) {$d$};
%\node at (la5) {$e$};
%\node at (la6) {$f$};
%\node at (la7) {$g$};

\path (center) edge[thick] (a1);
\path (center) edge[thick] (a2);
\path (center) edge[thick] (a3);
\path (center) edge[thick] (a4);
\path (center) edge[thick] (a5);

\path (a1) edge[thick] (a2);
\path (a2) edge[thick] (a3);
\path (a3) edge[thick] (a4);
\path (a4) edge[thick] (a5);
\path (a5) edge[thick] (a6);
\path (a6) edge[thick] (a7);
\path (a7) edge[thick] (a1);

%\draw[thick,red] ([shift=(30:2.5cm)]0,0) arc (30:60:2.5cm);

%\draw[thick,red] ([shift=(80:3cm)]0,0) arc (80:130:3cm);
\end{tikzpicture}
\hspace{0.1cm}
\begin{tikzpicture}[scale=0.60]
\coordinate (center) at (0,0) {};
\coordinate (a) at ($(center)+(90:2cm)$) {};
\coordinate (b) at ($(center)+(141:2cm)$) {};
\coordinate (c) at ($(center)+(192:2cm)$) {};
\coordinate (d) at ($(center)+(250:2cm)$) {};
\coordinate (e) at ($(center)+(296:2cm)$) {};
\coordinate (f) at ($(center)+(348:2cm)$) {};
\coordinate (g) at ($(center)+(39:2cm)$) {};

\coordinate (la) at ($(center)+(90:2cm)$) {};
\coordinate (lb) at ($(center)+(141:2cm)$) {};
\coordinate (lc) at ($(center)+(192:2cm)$) {};
\coordinate (ld) at ($(center)+(250:2cm)$) {};
\coordinate (le) at ($(center)+(296:2cm)$) {};
\coordinate (lf) at ($(center)+(348:2cm)$) {};
\coordinate (lg) at ($(center)+(39:2cm)$) {};

\tikzstyle{every node}=[inner sep=1pt]

%a
\draw[thick] ([shift=(60:2.3cm)]0,0) arc (60:120:2.3cm);
%b
\draw[thick] ([shift=(111:2.4cm)]0,0) arc (111:171:2.4cm);
%c
\draw[thick] ([shift=(162:2.3cm)]0,0) arc (162:225:2.3cm);
%d
\draw[thick] ([shift=(217:2.43cm)]0,0) arc (217:277:2.43cm);
%e
\draw[thick] ([shift=(267:2.35cm)]0,0) arc (266:325:2.35cm);
%f
\draw[thick] ([shift=(318:2.3cm)]0,0) arc (318:378:2.3cm);
%g
\draw[thick] ([shift=(9:2.4cm)]0,0) arc (9:69:2.4cm);

%v
\draw[thick] ([shift=(90:2.1cm)]0,0) arc (90:296:2.1cm);

%\draw[thick,red] ([shift=(80:3cm)]0,0) arc (80:130:3cm);

\end{tikzpicture}
\hspace{1cm}
\begin{tikzpicture}[scale=0.70]
\coordinate (center) at (0,0) {};
\coordinate (a1) at ($(center)+(90:2cm)$) {};
\coordinate (a2) at ($(center)+(141:2cm)$) {};
\coordinate (a3) at ($(center)+(192:2cm)$) {};
\coordinate (a4) at ($(center)+(250:2cm)$) {};
\coordinate (a5) at ($(center)+(296:2cm)$) {};
\coordinate (a6) at ($(center)+(348:2cm)$) {};
\coordinate (a7) at ($(center)+(39:2cm)$) {};

\coordinate (lcenter) at ($(center)+(-0.3,0)$) {};
\coordinate (la1) at ($(center)+(90:2.3cm)$) {};
\coordinate (la2) at ($(center)+(141:2.3cm)$) {};
\coordinate (la3) at ($(center)+(192:2.3cm)$) {};
\coordinate (la4) at ($(center)+(250:2.3cm)$) {};
\coordinate (la5) at ($(center)+(296:2.3cm)$) {};
\coordinate (la6) at ($(center)+(348:2.3cm)$) {};
\coordinate (la7) at ($(center)+(39:2.3cm)$) {};

\tikzstyle{every node}=[circle,minimum size=5pt,inner sep=0pt,draw,fill]
\node at (center) {};
\node at (a1) {};
\node at (a2) {};
\node at (a3) {};
\node at (a4) {};
\node at (a5) {};
\node at (a6) {};
\node at (a7) {};

\tikzstyle{every node}=[inner sep=1pt]
%\node at (lcenter) {$v$};
%\node at (la1) {$a$};
%\node at (la2) {$b$};
%\node at (la3) {$c$};
%\node at (la4) {$d$};
%\node at (la5) {$e$};
%\node at (la6) {$f$};
%\node at (la7) {$g$};

\path (center) edge[thick] (a5);
\path (center) edge[thick] (a6);
\path (center) edge[thick] (a7);
\path (center) edge[thick] (a1);

\path (a1) edge[thick] (a2);
\path (a2) edge[thick] (a3);
\path (a3) edge[thick] (a4);
\path (a4) edge[thick] (a5);
\path (a5) edge[thick] (a6);
\path (a6) edge[thick] (a7);
\path (a7) edge[thick] (a1);
\end{tikzpicture}
\hspace{0.1cm}
\begin{tikzpicture}[scale=0.60]
\coordinate (center) at (0,0) {};
\coordinate (a) at ($(center)+(90:2cm)$) {};
\coordinate (b) at ($(center)+(141:2cm)$) {};
\coordinate (c) at ($(center)+(192:2cm)$) {};
\coordinate (d) at ($(center)+(250:2cm)$) {};
\coordinate (e) at ($(center)+(296:2cm)$) {};
\coordinate (f) at ($(center)+(348:2cm)$) {};
\coordinate (g) at ($(center)+(39:2cm)$) {};

\coordinate (la) at ($(center)+(90:2cm)$) {};
\coordinate (lb) at ($(center)+(141:2cm)$) {};
\coordinate (lc) at ($(center)+(192:2cm)$) {};
\coordinate (ld) at ($(center)+(250:2cm)$) {};
\coordinate (le) at ($(center)+(296:2cm)$) {};
\coordinate (lf) at ($(center)+(348:2cm)$) {};
\coordinate (lg) at ($(center)+(39:2cm)$) {};

\tikzstyle{every node}=[inner sep=1pt]

%a
\draw[thick] ([shift=(60:2.3cm)]0,0) arc (60:120:2.3cm);
%b
\draw[thick] ([shift=(111:2.4cm)]0,0) arc (111:171:2.4cm);
%c
\draw[thick] ([shift=(162:2.3cm)]0,0) arc (162:225:2.3cm);
%d
\draw[thick] ([shift=(217:2.43cm)]0,0) arc (217:277:2.43cm);
%e
\draw[thick] ([shift=(267:2.35cm)]0,0) arc (266:325:2.35cm);
%f
\draw[thick] ([shift=(318:2.25cm)]0,0) arc (318:378:2.25cm);
%g
\draw[thick] ([shift=(9:2.4cm)]0,0) arc (9:69:2.4cm);

%v
\draw[thick] ([shift=(296:2.1cm)]0,0) arc (296:450:2.1cm);

%\draw[thick,red] ([shift=(80:3cm)]0,0) arc (80:130:3cm);

\end{tikzpicture}

\caption{\label{fig:counterex_graphs_models} Circular-arc graphs $G$ and $G'$ and their normalized models.
Graphs $G$ and $G'$ are not isomorphic as they have different number of edges.}
\end{figure}
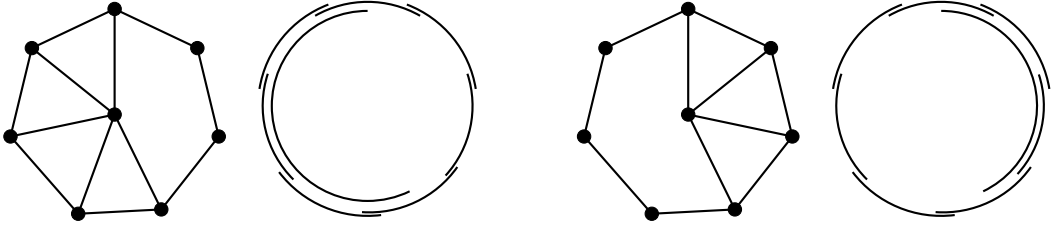
\begin{figure}[htp!]
\centering
\begin{tikzpicture}[scale=0.70]
\coordinate (center) at (0,0) {};
\coordinate (a1) at ($(center)+(90:2cm)$) {};
\coordinate (a2) at ($(center)+(141:2cm)$) {};
\coordinate (a3) at ($(center)+(192:2cm)$) {};
\coordinate (a4) at ($(center)+(250:2cm)$) {};
\coordinate (a5) at ($(center)+(296:2cm)$) {};
\coordinate (a6) at ($(center)+(348:2cm)$) {};
\coordinate (a7) at ($(center)+(39:2cm)$) {};

\coordinate (lcenter) at ($(center)+(0.3,0)$) {};
\coordinate (la1) at ($(center)+(90:2.3cm)$) {};
\coordinate (la2) at ($(center)+(141:2.3cm)$) {};
\coordinate (la3) at ($(center)+(192:2.3cm)$) {};
\coordinate (la4) at ($(center)+(250:2.3cm)$) {};
\coordinate (la5) at ($(center)+(296:2.3cm)$) {};
\coordinate (la6) at ($(center)+(348:2.3cm)$) {};
\coordinate (la7) at ($(center)+(39:2.3cm)$) {};

\tikzstyle{every node}=[circle,minimum size=5pt,inner sep=0pt,draw,fill]
\node at (center) {};
\node at (a1) {};
\node at (a2) {};
\node at (a3) {};
\node at (a4) {};
\node at (a5) {};
\node at (a6) {};
\node at (a7) {};

\tikzstyle{every node}=[inner sep=1pt]

\path (center) edge[thick] (a1);
\path (center) edge[thick] (a5);

\path (a1) edge[thick] (a2);
\path (a2) edge[thick] (a3);
\path (a3) edge[thick] (a4);
\path (a4) edge[thick] (a5);
\path (a5) edge[thick] (a6);
\path (a6) edge[thick] (a7);
\path (a7) edge[thick] (a1);

%\draw[thick,red] ([shift=(30:2.5cm)]0,0) arc (30:60:2.5cm);

%\draw[thick,red] ([shift=(80:3cm)]0,0) arc (80:130:3cm);
\end{tikzpicture}
\hspace{1cm}
\begin{tikzpicture}[scale=0.70]
\coordinate (la) at ($(center)+(125:2cm)$) {};
\coordinate (ra) at ($(center)+(55:2cm)$) {};
\coordinate (lb) at ($(center)+(176:2cm)$) {};
\coordinate (rb) at ($(center)+(106:2cm)$) {};
\coordinate (lc) at ($(center)+(229:2cm)$) {};
\coordinate (rc) at ($(center)+(157:2cm)$) {};
\coordinate (d) at ($(center)+(250:2cm)$) {};
\coordinate (rd) at ($(center)+(215:2cm)$) {};
\coordinate (ld) at ($(center)+(285:2cm)$) {};
%\coordinate (le) at ($(center)+(296:2cm)$) {};
\coordinate (le) at ($(center)+(261:2cm)$) {};
\coordinate (re) at ($(center)+(331:2cm)$) {};
%\coordinate (f) at ($(center)+(348:2cm)$) {};
\coordinate (lf) at ($(center)+(313:2cm)$) {};
\coordinate (rf) at ($(center)+(383:2cm)$) {};
%\coordinate (g) at ($(center)+(39:2cm)$) {};
\coordinate (lg) at ($(center)+(4:2cm)$) {};
\coordinate (rg) at ($(center)+(74:2cm)$) {};

\coordinate (lv) at ($(center)+(90:2cm)$) {};
\coordinate (rv) at ($(center)+(298:2cm)$) {};
\draw (la)--(ra);
\draw (lb)--(rb);
\draw (lc)--(rc);
\draw (ld)--(rd);
\draw (le)--(re);
\draw (lf)--(rf);
\draw (lg)--(rg);
\draw (rv)--(lv);

\draw (0,0) circle (2cm);

\end{tikzpicture} 

\caption{\label{fig:counterex_overlap_and_hsu_conformal} Graphs $G$ and $G'$ have the same overlap graph $G_{ov} = G'_{ov}$ (shown to the left). The overlap graphs $G_{ov}$ and $G'_{ov}$ have the same conformal model (shown to the right).}
\end{figure}
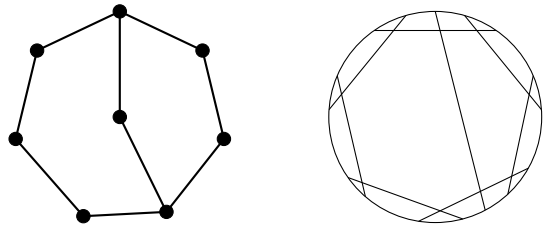

\begin{figure}[!htp]
\begin{tikzpicture}[scale=0.70]
\coordinate (center) at (0,0) {};
\coordinate (a) at ($(center)+(90:2cm)$) {};
\coordinate (b) at ($(center)+(141:2cm)$) {};
\coordinate (c) at ($(center)+(192:2cm)$) {};
\coordinate (d) at ($(center)+(250:2cm)$) {};
\coordinate (e) at ($(center)+(296:2cm)$) {};
\coordinate (f) at ($(center)+(348:2cm)$) {};
\coordinate (g) at ($(center)+(39:2cm)$) {};

\coordinate (la) at ($(center)+(90:2cm)$) {};
\coordinate (lb) at ($(center)+(141:2cm)$) {};
\coordinate (lc) at ($(center)+(192:2cm)$) {};
\coordinate (ld) at ($(center)+(250:2cm)$) {};
\coordinate (le) at ($(center)+(296:2cm)$) {};
\coordinate (lf) at ($(center)+(348:2cm)$) {};
\coordinate (lg) at ($(center)+(39:2cm)$) {};

\tikzstyle{every node}=[inner sep=1pt]

%a
\draw[thick] ([shift=(60:2.3cm)]0,0) arc (60:120:2.3cm);
%b
\draw[thick] ([shift=(111:2.4cm)]0,0) arc (111:171:2.4cm);
%c
\draw[thick] ([shift=(162:2.3cm)]0,0) arc (162:225:2.3cm);
%d
\draw[thick] ([shift=(217:2.43cm)]0,0) arc (217:277:2.43cm);
%e
\draw[thick] ([shift=(267:2.35cm)]0,0) arc (266:325:2.35cm);
%f
\draw[thick] ([shift=(318:2.3cm)]0,0) arc (318:378:2.3cm);
%g
\draw[thick] ([shift=(9:2.4cm)]0,0) arc (9:69:2.4cm);

%v
\draw[thick] ([shift=(90:2.1cm)]0,0) arc (90:296:2.1cm);

%\draw[thick,red] ([shift=(80:3cm)]0,0) arc (80:130:3cm);

\end{tikzpicture}
\hspace{0.1cm}
\begin{tikzpicture}[xscale=0.80,yscale=0.80,>=latex,shorten >=-0.4pt,shorten <=-0.4pt]

\coordinate (la) at ($(center)+(125:2cm)$) {};
\coordinate (ra) at ($(center)+(55:2cm)$) {};
\coordinate (lb) at ($(center)+(176:2cm)$) {};
\coordinate (rb) at ($(center)+(106:2cm)$) {};
\coordinate (lc) at ($(center)+(229:2cm)$) {};
\coordinate (rc) at ($(center)+(157:2cm)$) {};
%\coordinate (d) at ($(center)+(250:2cm)$) {};
\coordinate (ld) at ($(center)+(285:2cm)$) {};
\coordinate (rd) at ($(center)+(210:2cm)$) {};
%\coordinate (le) at ($(center)+(296:2cm)$) {};
\coordinate (le) at ($(center)+(331:2cm)$) {};
\coordinate (re) at ($(center)+(261:2cm)$) {};
%\coordinate (f) at ($(center)+(348:2cm)$) {};
\coordinate (lf) at ($(center)+(383:2cm)$) {};
\coordinate (rf) at ($(center)+(313:2cm)$) {};
%\coordinate (g) at ($(center)+(39:2cm)$) {};
\coordinate (lg) at ($(center)+(74:2cm)$) {};
\coordinate (rg) at ($(center)+(4:2cm)$) {};

\coordinate (lv) at ($(center)+(90:2cm)$) {};
\coordinate (rv) at ($(center)+(298:2cm)$) {};
\draw (0,0) circle (2cm);
\draw[white] (0,0) circle (2.2cm);
\draw[->] (la)--(ra);
\draw[->] (lb)--(rb);
\draw[->] (lc)--(rc);
\draw[->] (ld)--(rd);
\draw[->] (le)--(re);
\draw[->] (lf)--(rf);
\draw[->] (lg)--(rg);
\draw[->] (rv)--(lv);
\draw[red, very thick] ([shift=(90:2cm)]0,0) arc (90:298:2cm);

\draw (rv)--(lv);
\end{tikzpicture} 
\hspace{0.5cm}
\begin{tikzpicture}[scale=0.70]
\coordinate (center) at (0,0) {};
\coordinate (a) at ($(center)+(90:2cm)$) {};
\coordinate (b) at ($(center)+(141:2cm)$) {};
\coordinate (c) at ($(center)+(192:2cm)$) {};
\coordinate (d) at ($(center)+(250:2cm)$) {};
\coordinate (e) at ($(center)+(296:2cm)$) {};
\coordinate (f) at ($(center)+(348:2cm)$) {};
\coordinate (g) at ($(center)+(39:2cm)$) {};

\coordinate (la) at ($(center)+(90:2cm)$) {};
\coordinate (lb) at ($(center)+(141:2cm)$) {};
\coordinate (lc) at ($(center)+(192:2cm)$) {};
\coordinate (ld) at ($(center)+(250:2cm)$) {};
\coordinate (le) at ($(center)+(296:2cm)$) {};
\coordinate (lf) at ($(center)+(348:2cm)$) {};
\coordinate (lg) at ($(center)+(39:2cm)$) {};

\tikzstyle{every node}=[inner sep=1pt]

%a
\draw[thick] ([shift=(60:2.3cm)]0,0) arc (60:120:2.3cm);
%b
\draw[thick] ([shift=(111:2.4cm)]0,0) arc (111:171:2.4cm);
%c
\draw[thick] ([shift=(162:2.3cm)]0,0) arc (162:225:2.3cm);
%d
\draw[thick] ([shift=(217:2.43cm)]0,0) arc (217:277:2.43cm);
%e
\draw[thick] ([shift=(267:2.35cm)]0,0) arc (266:325:2.35cm);
%f
\draw[thick] ([shift=(318:2.25cm)]0,0) arc (318:378:2.25cm);
%g
\draw[thick] ([shift=(9:2.4cm)]0,0) arc (9:69:2.4cm);

%v
\draw[thick] ([shift=(296:2.1cm)]0,0) arc (296:450:2.1cm);

%\draw[thick,red] ([shift=(80:3cm)]0,0) arc (80:130:3cm);

\end{tikzpicture}
\hspace{0.1cm}
\begin{tikzpicture}[scale=0.80,>=latex,shorten >=-0.4pt,shorten <=-0.4pt]

\coordinate (la) at ($(center)+(125:2cm)$) {};
\coordinate (ra) at ($(center)+(55:2cm)$) {};
\coordinate (lb) at ($(center)+(176:2cm)$) {};
\coordinate (rb) at ($(center)+(106:2cm)$) {};
\coordinate (lc) at ($(center)+(229:2cm)$) {};
\coordinate (rc) at ($(center)+(157:2cm)$) {};
%\coordinate (d) at ($(center)+(250:2cm)$) {};
\coordinate (ld) at ($(center)+(285:2cm)$) {};
\coordinate (rd) at ($(center)+(210:2cm)$) {};
%\coordinate (le) at ($(center)+(296:2cm)$) {};
\coordinate (le) at ($(center)+(331:2cm)$) {};
\coordinate (re) at ($(center)+(261:2cm)$) {};
%\coordinate (f) at ($(center)+(348:2cm)$) {};
\coordinate (lf) at ($(center)+(383:2cm)$) {};
\coordinate (rf) at ($(center)+(313:2cm)$) {};
%\coordinate (g) at ($(center)+(39:2cm)$) {};
\coordinate (lg) at ($(center)+(74:2cm)$) {};
\coordinate (rg) at ($(center)+(4:2cm)$) {};

\coordinate (lv) at ($(center)+(90:2cm)$) {};
\coordinate (rv) at ($(center)+(298:2cm)$) {};

\draw (0,0) circle (2cm);
\draw[white] (0,0) circle (2.2cm);
\draw[red,very thick] ([shift=(298:2cm)]0,0) arc (298:450:2cm);

\draw[->] (la)--(ra);
\draw[->] (lb)--(rb);
\draw[->] (lc)--(rc);
\draw[->] (ld)--(rd);
\draw[->] (le)--(re);
\draw[->] (lf)--(rf);
\draw[->] (lg)--(rg);
\draw[->] (lv)--(rv);
\end{tikzpicture} 

\caption{\label{fig:counterex_models}}
\end{figure}

Despite the mistake, Hsu's paper \cite{Hsu95} contains many brilliant ideas that are used in our work.
This includes:
\begin{itemize}
 \item reducing the problem of characterization of the normalized models of $G$ to the problem of characterization of the conformal models of the circle graph $G_{ov}$,
 \item the use of the modular decomposition of $G_{ov}$ to devise the structure representing all conformal models of $G_{ov}$. This includes, in particular:
 \begin{itemize}
 \item the concept of consistent decompositions, introduced in Subsection \ref{subsec:conformal_models_of_improper_prime_modules},
 \item the concept of $T_{NM}$-tree, introduced in Subsection \ref{subsec:conformal_models_of_improper_parallel_modules}.
\end{itemize}
\end{itemize}
Although we use similar concepts to those introduced by Hsu,
the details hidden behind them are different. 
Since we deal with conformal models defined in a different way,
definitions and proofs contained in our work differ (sometimes a lot) from those proposed in \cite{Hsu95}.
The differences are also caused by the approach in searching for conformal models.
Hsu's divides all the triples $(u,v,w)$ consisting of pairwise non-adjacent vertices in $G_{ov}$ into two categories.
Such a triple is:
\begin{itemize}
 \item \emph{in parallel}, written $u|v|w$, if the vertex $v$ has the vertices $u$ and $w$ on its different sides (that is, either $u \in \leftside(v)$ and $w \in \rightside(v)$ or $w \in \leftside(v)$ and $u \in \rightside(v)$),
 \item \emph{in series}, written $u-v-w$, if any vertex from $\{u,v,w\}$ has the remaining two vertices on the same side,
\end{itemize}
(see Section 5.1 of \cite{Hsu95}).
Then, Hsu is searching for chord models $\phi$ of $G_{ov}$ that satisfy 
the conditions:
\begin{itemize}
 \item if $u|v|w$ then the chord $\phi(v)$ has $\phi(u)$ and $\phi(w)$ on its different sides,
 \item if $u-v-w$ then every chord from $\{\phi(u), \phi(v), \phi(w)\}$ has the remaining two chords on the same side.
\end{itemize}
Hsu showed that such chord models correspond to conformal models (Section 5.2 in \cite{Hsu95}).
Consequently, Hsu builds his decomposition trees based on the types of the triples $(u,v,w)$ --
in particular, he tries to describe what kind of transformations on conformal models keep the relations between
every three non-intersecting chords unchanged.
Our approach is different: instead of looking at the triples, every vertex $v$ is
responsible for itself to be represented by a chord that has the chords representing the
vertices from $\leftside(v)$ on its left side and the chords representing the vertices from $\rightside(v)$
on its right side. 
This explains why we need to use the orientations of the chords in conformal models $\phi$ of $G_{ov}$: 
just to distinguish the left side of $\phi(u)$ from the right side of
$\phi(u)$.
Moreover, to characterize transformations between conformal models we do not need to analyze triples:
we are searching for transformations that keep the relative relations between the pairs of non-intersecting oriented chords unchanged.

\subsection{Our approach} 
In our work we use the framework proposed by Hsu \cite{Hsu95} to describe all conformal models of $G_{ov}$.
In addition, we use the ideas invented by Spinrad \cite{Spin88} that allow to characterize all normalized models of co-bipartite circular-arc graphs in terms of two-dimensional realizers of appropriately chosen two-dimensional posets (see Subsection \ref{subsec:conformal_models_of_proper_prime_parallel_modules} for more details).
We extend Spinrad's ideas on the whole class of circular-arc graphs:
in particular, we use them to characterize conformal models of some parts of the overlap graph $G_{ov}$.
Eventually, we stick all these pieces together and we develop a decomposition tree that represents all conformal models of $G_{ov}$.

Suppose $G = (V,E)$ and $G' = (V',E')$ are circular-arc graphs with no twins and no universal vertices.
The isomorphism algorithm devised in this paper tests whether there exists a bijection $\alpha: V \to V'$ 
that satisfies for every $(v,u) \in V \times V$ the following conditions:
\begin{itemize}
 \item $u$ in $\leftside(v)$ iff $\alpha(u)$ in $\leftside(\alpha(v))$,
 \item $u$ in $\rightside(v)$ iff $\alpha(u)$ in $\rightside(\alpha(v))$.
\end{itemize}
Hence, for every $v,u \in V$ the bijection $\alpha$ satisfies the properties:
\begin{itemize} 
 \item $v \di u$ iff $\alpha(v) \di \alpha(u)$,
 \item $v \cs u$ iff $\alpha(v) \cs \alpha(u)$,
 \item $v \cd u$ iff $\alpha(v) \cd \alpha(u)$,
 \item $v \cc u$ iff $\alpha(v) \cc \alpha(u)$,
 \item $v \ov u$ iff $\alpha(v) \ov \alpha(u)$,
\end{itemize}
which show that the graphs $G$ and $G'$ are indeed isomorphic.
To test whether such bijection exists we exploit decomposition trees of $G$ and of $G'$.
We traverse these trees bottom-up and for every pair of the nodes from these trees we test whether there is a bijection $\alpha$ satisfying the above properties with the restriction to the vertices kept in these two nodes.

One can extend the above ideas to handle also the case when $G$ and $G'$ contain twins and universal vertices.

\section{Preliminaries}
\label{sec:preliminaries}
A sequence $\tau$ over an alphabet $\Sigma$ is a \emph{word}. 
A \emph{circular word} represents the set of words which are cyclical shifts of one another. 
In the sequel, we represent a circular word by a word
from its corresponding set of words.
We do not introduce any notation to distinguish between words and circular words;
if it is not clear from the context we state explicitly whether we are dealing with a word or a circular word. 
We do one exception: we use operator $\equiv$ to indicate that the equality holds between two circular words.

Suppose $G=(V,E)$ is a circular-arc graph with no twins and no universal vertices.
Suppose $\psi$ is a normalized model of $G$ and $\phi$ is a conformal model of $G_{ov}$ associated with $\psi$.
Conformal model $\phi$ is represented by means of a circular word $\tau$ over the set of letters $V^{*} = \{v^{0}, v^{1} : V\}$. 
The circular word $\tau$ is obtained from the model $\phi$ as follows.
We choose a point $P$ on the circle $C$ and then we traverse $C$ in the clockwise order:
if we pass the tail of the chord $\phi(v)$ we append the letter $v^{0}$ to $\tau$ and when we pass the head of the chord $\phi(v)$ 
we append the letter $v^{1}$ to $\tau$. 
When we encounter $P$ again, we make the word $\tau$ circular.
We write $\phi \equiv \tau$ to denote that $\tau$ is a \emph{word representation} of $\phi$.
We consider two conformal models $\phi_1$ and $\phi_2$ of $G_{ov}$ \emph{equivalent}, 
written $\phi_1 \equiv \phi_2$, if the word representations of $\phi_1$ and $\phi_2$ are equal.
Usually we use the same symbol to denote a conformal model of $G_{ov}$ and its word representation.
Figure \ref{fig:examp} shows a circular-arc graph $G = (V,E)$, where $V = \{v_1,\ldots,v_6\}$ and $E = \{ v_iv_{i+1}: i \in [5]\}\cup \{v_6v_1\}$, its normalized model $\psi$, and a conformal model $\phi$ of $G_{ov}$ associated with $\psi$.
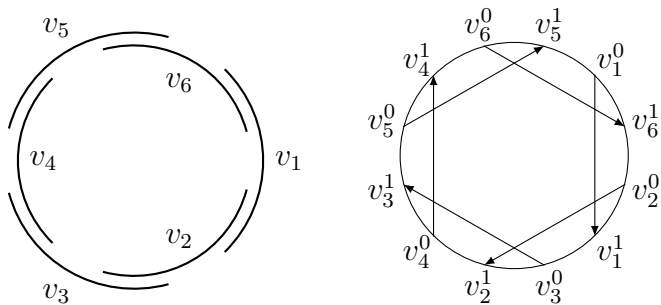
\begin{figure}[htp!]
\centering
\begin{tikzpicture}[xscale=0.85,yscale=0.85,>=latex,shorten >=-0.4pt,shorten <=-0.4pt]
\coordinate (center) at (0,0) {};
\coordinate (Lv1) at ($(center)+(0:2.4cm)$) {};
\coordinate (Lv6) at ($(center)+(60:1.4cm)$) {};
\coordinate (Lv5) at ($(center)+(120:2.4cm)$) {};
\coordinate (Lv4) at ($(center)+(180:1.4cm)$) {};
\coordinate (Lv3) at ($(center)+(240:2.4cm)$) {};
\coordinate (Lv2) at ($(center)+(300:1.4cm)$) {};

\coordinate (lv1) at ($(center)+(-45:2cm)$) {};
\coordinate (rv1) at ($(center)+(45:2cm)$) {};
\coordinate (lv6) at ($(center)+(15:2cm)$) {};
\coordinate (rv6) at ($(center)+(105:2cm)$) {};
\coordinate (lv5) at ($(center)+(75:2cm)$) {};
\coordinate (rv5) at ($(center)+(165:2cm)$) {};
\coordinate (lv4) at ($(center)+(135:2cm)$) {};
\coordinate (rv4) at ($(center)+(225:2cm)$) {};
\coordinate (lv3) at ($(center)+(195:2cm)$) {};
\coordinate (rv3) at ($(center)+(285:2cm)$) {};
\coordinate (lv2) at ($(center)+(255:2cm)$) {};
\coordinate (rv2) at ($(center)+(345:2cm)$) {};

\tikzstyle{every node}=[inner sep=1pt]
\node at (Lv1) {$v_1$};
\node at (Lv2) {$v_2$};
\node at (Lv3) {$v_3$};
\node at (Lv4) {$v_4$};
\node at (Lv5) {$v_5$};
\node at (Lv6) {$v_6$};

\draw[thick] ([shift=(-45:2cm)]0,0) arc (-45:45:2cm);
\draw[thick] ([shift=(15:1.8cm)]0,0) arc (15:105:1.8cm);
\draw[thick] ([shift=(75:2cm)]0,0) arc (75:165:2cm);
\draw[thick] ([shift=(135:1.8cm)]0,0) arc (135:225:1.8cm);
\draw[thick] ([shift=(195:2cm)]0,0) arc (195:285:2cm);
\draw[thick] ([shift=(255:1.8cm)]0,0) arc (255:345:1.8cm);

\end{tikzpicture}
\hspace{0.5cm}
\begin{tikzpicture}[xscale=0.75,yscale=0.75,>=latex,shorten >=-0.4pt,shorten <=-0.4pt]
\coordinate (center) at (0,0) {};
\coordinate (Llv1) at ($(center)+(-45:2.4cm)$) {};
\coordinate (Lrv1) at ($(center)+(45:2.4cm)$) {};
\coordinate (Llv6) at ($(center)+(15:2.4cm)$) {};
\coordinate (Lrv6) at ($(center)+(105:2.4cm)$) {};
\coordinate (Llv5) at ($(center)+(75:2.4cm)$) {};
\coordinate (Lrv5) at ($(center)+(165:2.4cm)$) {};
\coordinate (Llv4) at ($(center)+(135:2.4cm)$) {};
\coordinate (Lrv4) at ($(center)+(225:2.4cm)$) {};
\coordinate (Llv3) at ($(center)+(195:2.4cm)$) {};
\coordinate (Lrv3) at ($(center)+(285:2.4cm)$) {};
\coordinate (Llv2) at ($(center)+(255:2.4cm)$) {};
\coordinate (Lrv2) at ($(center)+(345:2.4cm)$) {};

\coordinate (rv1) at ($(center)+(-45:2cm)$) {};
\coordinate (lv1) at ($(center)+(45:2cm)$) {};
\coordinate (rv6) at ($(center)+(15:2cm)$) {};
\coordinate (lv6) at ($(center)+(105:2cm)$) {};
\coordinate (rv5) at ($(center)+(75:2cm)$) {};
\coordinate (lv5) at ($(center)+(165:2cm)$) {};
\coordinate (rv4) at ($(center)+(135:2cm)$) {};
\coordinate (lv4) at ($(center)+(225:2cm)$) {};
\coordinate (rv3) at ($(center)+(195:2cm)$) {};
\coordinate (lv3) at ($(center)+(285:2cm)$) {};
\coordinate (rv2) at ($(center)+(255:2cm)$) {};
\coordinate (lv2) at ($(center)+(345:2cm)$) {};

\tikzstyle{every node}=[inner sep=1pt]
\node at (Lrv1) {$v^0_1$};
\node at (Llv1) {$v^1_1$};
\node at (Lrv2) {$v^0_2$};
\node at (Llv2) {$v^1_2$};
\node at (Lrv3) {$v^0_3$};
\node at (Llv3) {$v^1_3$};
\node at (Lrv4) {$v^0_4$};
\node at (Llv4) {$v^1_4$};
\node at (Lrv5) {$v^0_5$};
\node at (Llv5) {$v^1_5$};
\node at (Lrv6) {$v^0_6$};
\node at (Llv6) {$v^1_6$};

\draw (0,0) circle (2cm);
\draw[->] (lv1)--(rv1);
\draw[->] (lv2)--(rv2);
\draw[->] (lv3)--(rv3);
\draw[->] (lv4)--(rv4);
\draw[->] (lv5)--(rv5);
\draw[->] (lv6)--(rv6);

\end{tikzpicture}
\caption{\label{fig:examp} Circular-arc graph $G$, its normalized model $\psi$ and the corresponding conformal model $\phi$.}
\end{figure}
The conformal model $\phi$ is represented by the circular word $v_2^0v_1^1v_3^0v_2^1v_4^0v_3^1v_5^0v_4^1v_6^0v_5^1v_1^0v_6^1$, that is,
$$\phi \equiv v_2^0v_1^1v_3^0v_2^1v_4^0v_3^1v_5^0v_4^1v_6^0v_5^1v_1^0v_6^1.$$ 

The elements of $V$ are called \emph{letters},
the elements of $V^{*} = \{v^{0},v^{1}: V \in V\}$ are called \emph{labeled letters}.
Given a set $A' \subset V^{*}$, by $\phi|A'$ we denote either a circular word which is the restriction of $\phi$ to the labeled letter from $A'$ or the set of all maximum contiguous subwords of the circular word $\phi$ containing all the labeled letters from $A'$.
Usually the meaning of $\phi|A'$ is clear from the context; 
otherwise, we say explicitly whether $\phi|A'$ is the circular word or the set of contiguous subwords of $\phi$.
We say $\phi|A'$ forms \emph{$k$ contiguous subwords} in $\phi$ if the set $\phi|A'$ contains exactly $k$ words.
If $k=1$, we say that $\phi|A'$ is a contiguous subword of the circular word $\phi$.
In our example, for $A' = \{v^0_1,v^1_1, v^0_6,v^1_6\}$, $\phi|A' \equiv v_6^0v_1^0v_6^1v^1_1$ if $\phi|A'$ is treated as the circular subword of $\phi$ or $\phi|A' = \{v_6^0, v^0_1v_6^1, v_1^1\}$ if $\phi|A'$ is treated as the set of all contiguous subwords containing all the labeled letters from $A'$.
In this particular case, $\phi|A'$ forms three contiguous subwords in $\phi$.

Let $A$ be a subset of $V$. 
By $A^*$ we denote the set $\{v^{0},v^{1}: v \in A\}$.
We abbreviate and we write $\phi|A$ to denote $\phi|A^{*}$.
In particular, $\phi|\{v_1,v_6\}$ means the same as $\phi|\{v^0_1,v^1_1,v^0_6,v^1_6\}$.

Let $A \subset V$ and let $A' \subset A^*$. 
If $A'$ contains exactly one labeled letter from $\{v^{0},v^{1}\}$ for every $v \in A$, 
then $A'$ is called a \emph{labeled copy} of $A$.
A word $\tau$ is a \emph{labeled permutation} of $A$ if
$\tau$ is a permutation of some labeled copy of $A$.
For example, $\{v^0_1,v^1_2,v^1_6\}$ is a labeled copy and $v^1_2v^0_1v^1_6$ is a labeled permutation of 
$\{v_1,v_2,v_6\}$.
If $A'$ is a labeled copy of $A$ or $\tau$ is a labeled permutation of $A$,
by $u^*$ we denote the unique labeled letter $u^{j} \in \{u^{0},u^{1}\}$ such that $u^j \in A'$
or $u^j \in \tau$, for $u \in A$.

Let $u'$ and $v'$ be two labeled letters in a circular word $\phi$.
We say that a labeled letter $w'$ is \emph{between $u'$ and $v'$ in $\phi$} 
if we pass $w'$ when we traverse $\phi$ from $\phi(u')$ to $\phi(v')$ in the clockwise order.
The labeled letters $v_6^0,v_5^1,v_1^0,v_6^1,v_2^0,v_1^1,v_3^0$ are between $v_4^1$ and $v_2^1$ in $\phi$ and 
the labeled letters $v_4^0,v_3^1,v_5^0$ are between $v_2^1$ and $v_4^1$ in $\phi$.

Let $\psi$ be a circular-arc model of $G$.
Let $L$ be any line in the plane and let $\psi^R$
be the reflection of $\psi$ over $L$ -- see Figure \ref{fig:reflection}.
\begin{figure}[htp!]
\centering
\begin{tikzpicture}[scale=0.80,>=latex,shorten >=-0.4pt,shorten <=-0.4pt]

\coordinate (center) at (0,0) {};

\coordinate (lv10) at ($(center)+(270:2.4cm)$) {};
\coordinate (lv11) at ($(center)+(90:2.4cm)$) {};

\coordinate (lv20) at ($(center)+(0:2.4cm)$) {};
\coordinate (lv21) at ($(center)+(180:2.4cm)$) {};

\coordinate (lv30) at ($(center)+(165:2.4cm)$) {};
\coordinate (lv31) at ($(center)+(30:2.4cm)$) {};

\coordinate (lv40) at ($(center)+(60:2.4cm)$) {};
\coordinate (lv41) at ($(center)+(-60:2.4cm)$) {};

\tikzstyle{every node}=[inner sep=1pt]
\node at (lv10) {$v^0_1$};
\node at (lv11) {$v^1_1$};
\node at (lv20) {$v^0_2$};
\node at (lv21) {$v^1_2$};
\node at (lv30) {$v^0_3$};
\node at (lv31) {$v^1_3$};
\node at (lv40) {$v^0_4$};
\node at (lv41) {$v^1_4$};

\draw (0,0) circle (2cm);

\draw[very thick] ([shift=(180:1.92cm)]0,0) arc (180:360:1.92cm);
\draw[very thick,->] ([shift=(360:1.92cm)]0,0) -- ([shift=(180:1.92cm)]0,0);

\draw[very thick] ([shift=(30:1.92cm)]0,0) arc (30:165:1.92cm);
\draw[very thick,->] ([shift=(165:1.92cm)]0,0) -- ([shift=(30:1.92cm)]0,0);

\draw[red,very thick] ([shift=(90:2cm)]0,0) arc (90:270:2cm);
\draw[red,very thick,->] ([shift=(270:2cm)]0,0) -- ([shift=(90:2cm)]0,0) ;

\draw[red,very thick] ([shift=(-60:2cm)]0,0) arc (-60:60:2cm);
\draw[red,very thick,->] ([shift=(60:2cm)]0,0) -- ([shift=(-60:2cm)]0,0);

\draw[thick, -] (3.5,-3) -- (3.5,3);

\end{tikzpicture} 
\hspace{0.5cm}
\begin{tikzpicture}[scale=0.80,>=latex,shorten >=-0.4pt,shorten <=-0.4pt]

\coordinate (center) at (0,0) {};

\coordinate (lv10) at ($(center)+(270:2.4cm)$) {};
\coordinate (lv11) at ($(center)+(90:2.4cm)$) {};

\coordinate (lv20) at ($(center)+(0:2.4cm)$) {};
\coordinate (lv21) at ($(center)+(180:2.4cm)$) {};

\coordinate (lv30) at ($(center)+(150:2.4cm)$) {};
\coordinate (lv31) at ($(center)+(15:2.4cm)$) {};

\coordinate (lv40) at ($(center)+(120:2.4cm)$) {};
\coordinate (lv41) at ($(center)+(240:2.4cm)$) {};

\tikzstyle{every node}=[inner sep=1pt]
\node at (lv10) {$v^1_1$};
\node at (lv11) {$v^0_1$};
\node at (lv20) {$v^0_2$};
\node at (lv21) {$v^1_2$};
\node at (lv30) {$v^0_3$};
\node at (lv31) {$v^1_3$};
\node at (lv40) {$v^1_4$};
\node at (lv41) {$v^0_4$};

\draw (0,0) circle (2cm);

\draw[very thick] ([shift=(180:1.92cm)]0,0) arc (180:360:1.92cm);
\draw[very thick,->] ([shift=(360:1.92cm)]0,0) -- ([shift=(180:1.92cm)]0,0);

\draw[very thick] ([shift=(15:1.92cm)]0,0) arc (15:150:1.92cm);
\draw[very thick,->] ([shift=(150:1.92cm)]0,0) -- ([shift=(15:1.92cm)]0,0);

\draw[red,very thick] ([shift=(270:2cm)]0,0) arc (270:450:2cm);
\draw[red,very thick,->]  ([shift=(450:2cm)]0,0) -- ([shift=(270:2cm)]0,0);

\draw[red,very thick] ([shift=(120:2cm)]0,0) arc (120:240:2cm);
\draw[red,very thick,->] ([shift=(240:2cm)]0,0) -- ([shift=(120:2cm)]0,0);

\draw[white, thick, -] (3.5,-3) -- (3.5,3);
\end{tikzpicture} 

\caption{\label{fig:reflection} Circular-arc graph $G$, its normalized model $\psi$, and its reflection $\psi^R$.}
\end{figure}
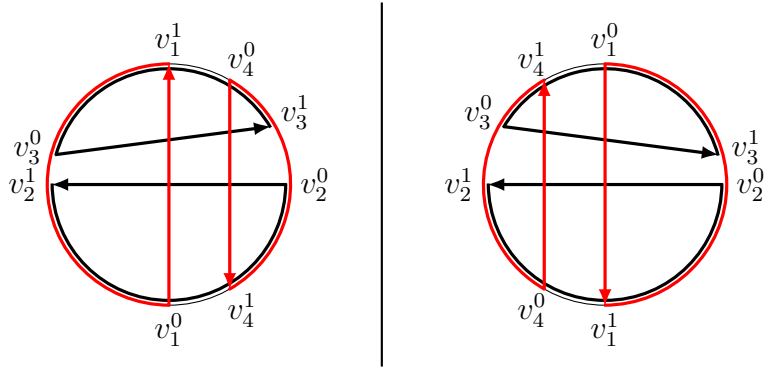
Clearly, $\psi^R$ is also a circular-arc model of $G$.
Now, suppose $\phi$ and $\phi^R$ are conformal models of $G$ associated with $\psi$ and $\psi^R$.
Note that $\phi^R$ is obtained from $\phi$ as follows:
we traverse the circular word $\phi$ in the anti-clockwise order and we replace $v^{0}$ by $v^{1}$
and $v^{1}$ by $v^{0}$, for every $v \in V$.
Indeed, in our example
$$ \phi \equiv v_2^0v_4^1v_1^0v_2^1v_3^0v_1^1v_4^0v_3^1 \text{ and } \phi^R \equiv v_3^0v_4^1v_1^0v_3^1v_2^0v_1^1v_4^0v_2^1.$$
The conformal model $\phi^R$ obtained this way is called the \emph{reflection} of $\phi$.
Note the following relation between $\phi$ and $\phi^R$:
for every $u \sim v$ the circular circular word $u^{0}v^{0}u^1v^1$ appears in $\phi$ iff 
the circular word $u^0v^1u^1v^0$ appears in $\phi^R$ --
see for the oriented chords $\phi(v_1),\phi(v_2)$ and $\phi^R(v_1),\phi^R(v_2)$ in Figure \ref{fig:reflection}.

Let $G_{ov}$ be the overlap graph of $G$ and let $\psi$ be a non-oriented chord model of $G_{ov}$.
The \emph{word representation} $\tau$ of $\psi$ is obtained similarly to the word representation of a conformal model of $G_{ov}$, except that we append $v$ to $\tau$ whenever we pass the 
end of the chord $\psi(v)$ for $v \in V$.
We write $\psi \equiv \tau$ if $\tau$ is a word representation of $\psi$.
Two chord models $\psi_1$ and $\psi_2$ of $G_{ov}$ are \emph{equivalent}, 
written $\psi_1 \equiv \psi_2$, if their word representations are equal.
The reflection $\psi^R$ of a chord model $\psi$ of $G_{ov}$ is defined analogously.

We use similar notations for circle graphs $G_{ov}$ and their chord models as for conformal models $\phi$.

Suppose $G=(V,E)$ is a graph with no twins and no universal vertices and suppose $G_{ov} = (V,{\sim})$
is the overlap graph associated with $G$.
We denote the complement of $G_{ov}$ by $(V,{\parallel})$.
If $U$ is a subset of $V$, by $G[U]$, $(U,{\sim})$, and $(U,{\parallel})$ we denote the subgraphs of $G$, $(V,{\sim})$, and $(V,{\parallel})$ induced by the set $U$, respectively.
For two sets $U_1,U_2 \subset V$, we write $U_1 \sim U_2$ ($U_1 \parallel U_2$)
if $u_1 \sim u_2$ ($u_1 \parallel u_2$, respectively) for every $u_1 \in U_1$ and $u_2 \in U_2$.

In the rest of the paper we will require an analogue of Theorem \ref{thm:normalized_conformal_correspondence}
extended on the induced subgraphs of $G$ and $G_{ov}$.
\begin{definition}
Suppose $U$ is a non-empty subset of $V$.
A circular-arc model $\psi$ of $G[U]$ is \emph{normalized} if every pair of vertices $(v,u)$ from $U$ satisfies conditions \ref{def:normalized_model}.\eqref{item:norm_di}-\eqref{item:norm_ov}.
\end{definition}
Note that the pair $(v,u)$ needs to satisfy conditions \ref{def:normalized_model}.\eqref{item:norm_di}-\eqref{item:norm_ov} with respect to $M_G$, not with respect to $M_{G[U]}$.
In particular, for any non-empty subset $U$ of $V$, if $\psi$ is a normalized model of $G$,
then $\psi$ restricted to $U$ is a normalized model of $G[U]$.

\begin{definition}
Let $U$ be a non-empty subset of $V$.
An oriented chord model $\phi$ of $(U,{\sim})$ is \emph{conformal}
if for every $v \in U$ the oriented chords $\phi(u)$ for
$u \in \leftside(v) \cap U$ lie on the left side of $\phi(v)$ 
and the oriented chords $\phi(u)$ for $u \in \rightside(v) \cap U$ 
lie on the right side of $\phi(v)$.
\end{definition}
Clearly, if $\phi$ is a conformal model of $(U,{\sim})$, then $\phi|U$ is a conformal model of $(U,{\sim})$.
\begin{theorem}
\label{thm:subgraphs_normalized_conformal_correspondence}
Let $G$ be a circular-arc graph and let $U$ be a non-empty subset of $V$.
There is a one-to-one correspondence between the normalized models of $G[U]$ 
and the conformal models of $(U,{\sim})$.
\end{theorem}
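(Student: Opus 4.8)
The plan is to show that the \emph{straightening procedure} and the \emph{bending procedure}, applied verbatim to $G[U]$ and $(U,{\sim})$, are mutually inverse bijections. The guiding observation is that every argument used to establish Theorem~\ref{thm:normalized_conformal_correspondence} is \emph{local}: it concerns a single pair $(v,u)$ of vertices, the entry $M_G[v,u]$, and the mutual position of $\psi(v),\psi(u)$ or of $\phi(v),\phi(u)$. Since a normalized model of $G[U]$ is required to satisfy conditions \ref{def:normalized_model}.\eqref{item:norm_di}--\eqref{item:norm_ov} with respect to $M_G$ (not $M_{G[U]}$), and since the sets $\leftside(v),\rightside(v)$ occurring in the definition of a conformal model of $(U,{\sim})$ are likewise defined via $M_G$, all of these local statements remain available after restricting attention to $U$. (As usual we may assume $G$ has no twins and no universal vertices, so that $(V,{\sim})$ and the sets $\leftside,\rightside$ are defined.)

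First I would describe the straightening map. Given a normalized model $\psi$ of $G[U]$, replace each arc $\psi(v)$, $v\in U$, by the oriented chord $\phi(v)$ with the same endpoints, oriented so that $\psi(v)$ lies on its left side. For $v,u\in U$ the chords $\phi(v),\phi(u)$ cross iff $\psi(v),\psi(u)$ overlap, which by condition~\ref{def:normalized_model}.\eqref{item:norm_ov} happens iff $v\ov u$, i.e.\ iff $v\sim u$; hence $\phi$ is an oriented chord model of $(U,{\sim})$. Conformality is checked by the same case analysis as in the main text: if $u\in\leftside(v)\cap U$ then $v\cs u$ or $v\cc u$, so by \ref{def:normalized_model}.\eqref{item:norm_cs} and~\eqref{item:norm_cc} either $\psi(v)\supsetneq\psi(u)$ or $\psi(v),\psi(u)$ cover the circle, and in both cases the endpoints of $\psi(u)$ lie on the sub-arc of $C$ bounding the left region of $\phi(v)$, whence the chord $\phi(u)$ lies on the left side of $\phi(v)$; symmetrically $u\in\rightside(v)\cap U$ forces $\phi(u)$ onto the right side of $\phi(v)$. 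Since for $u\in U\setminus\{v\}$ with $u\not\sim v$ exactly one of $u\in\leftside(v)$, $u\in\rightside(v)$ holds, and since a chord crossing $\phi(v)$ lies on neither side, these implications upgrade to the required equivalences.

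Next I would treat the bending map. Given a conformal model $\phi$ of $(U,{\sim})$, replace each oriented chord $\phi(v)$ by an arc $\psi(v)$ with the same endpoints placed on the left side of $\phi(v)$; this is the inverse construction. To see that $\psi$ is a normalized model of $G[U]$ I would invoke, for each pair $(v,u)$ in $U$, the five chains of equivalences displayed just before Theorem~\ref{thm:normalized_models_versus_conformal_models} (for instance ``$v\cs u$, $v\in\rightside(u)$ and $u\in\leftside(v)$, $\psi(v)$ contains $\psi(u)$, $\phi(u)$ has $\phi(v)$ on its right side and $\phi(v)$ has $\phi(u)$ on its left side, are equivalent''): each chain uses only conformality of $\phi$ restricted to $\{v,u\}$, the relation between $M_G[v,u]$ and the $(\leftside,\rightside)$-memberships of $v,u$, and the left-side bending convention, all of which hold here. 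This yields conditions \ref{def:normalized_model}.\eqref{item:norm_di}--\eqref{item:norm_ov} for every pair from $U$, i.e.\ $\psi$ is a normalized model of $G[U]$.

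Finally, straightening followed by bending returns the arc placed on the same side of the same chord, hence the same model up to the equivalence ${\equiv}$, and likewise bending followed by straightening is the identity; so the two maps form the desired bijection (in particular both families are non-empty, since restricting a normalized model of $G$ to $U$ and straightening it produces a conformal model of $(U,{\sim})$). The one mild subtlety --- and the point I would emphasise --- is that $(U,{\sim})$ need not coincide with the overlap graph of $G[U]$, so one cannot simply apply Theorem~\ref{thm:normalized_conformal_correspondence} to $G[U]$; instead one must re-run the (entirely pairwise) argument with $M_G$ held fixed throughout, which is precisely why normalized models of induced subgraphs and conformal models of $(U,{\sim})$ were defined in terms of $M_G$ and $\leftside,\rightside$ rather than in terms of $M_{G[U]}$.
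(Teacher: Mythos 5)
Your proof is correct and matches the argument the paper implicitly relies on: the theorem is stated without an explicit proof because it follows by re-running the straightening/bending correspondence from Section~\ref{sec:normalized_models_and_Hsu_approach} pairwise, and you correctly identify that this works precisely because both the normalized-model condition for $G[U]$ and the sets $\leftside(\cdot),\rightside(\cdot)$ used to define conformality of $(U,{\sim})$ are formulated with respect to $M_G$ rather than $M_{G[U]}$. Your closing remark about $(U,{\sim})$ not being the overlap graph of $G[U]$ is exactly the subtlety that makes the theorem non-vacuous and is worth spelling out.
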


\section{Tools}
\label{sec:tools}
\subsection{The structure of all representations of a circle graph}
\label{sec:split_decomposition}
The description of the structure of all chord models of circle graphs, presented in this subsection, is 
taken from the article \cite{CFK13} by Chaplick, Fulek, and Klav\'{i}k.
The concept of split decomposition is due to Cunningham \cite{Cun82}, Theorem \ref{thm:cicle_representation_no_split} is due to
Gabor, Supowit, and Hsu \cite{GSH89}, relation $\diamond$ is due to Chaplick, Fulek, and Klav\'{i}k \cite{CFK13}, which were inspired by Naji \cite{Naji85} (see \cite{CFK13} for more details), maximal splits are due to Chaplick, Fulek, and Klav\'{i}k \cite{CFK13}.

Suppose $G_{ov} = (V,{\sim})$ is a connected circle graph.
A tuple $(A, \alpha(A), B, \alpha(B))$ is a \emph{split} in $G_{ov}$ if: 
\begin{itemize}
\item The sets $A$, $B$, $\alpha(A)$, $\alpha(B)$ form a partition of $V$,
\item We have $A \neq \emptyset$ and $B \neq \emptyset$, but possibly $\alpha(A) = \emptyset$ or $\alpha(B) = \emptyset$,
\item We have $A \sim B$.
\item We have $\alpha(A) \parallel (B \cup \alpha(B))$ and $\alpha(B) \parallel A \cup \alpha(A)$,
\end{itemize}
see Figure \ref{fig:split}.
Since $G_{ov}$ is connected, $(A,\alpha(A),B,\alpha(B))$ can be uniquely recovered from the sets $A$ and $B$.
Hence, without loosing any information, we say $(A,\alpha(A),B,\alpha(B))$ is just the split between $A$ and $B$,
and we denote $(A,\alpha(A),B,\alpha(B))$ simply by $(A,B)$.

\begin{figure}[!htp]

\begin{tikzpicture}[scale=0.4]
    \coordinate (A_center) at (-2,0) {};
    \coordinate (lA) at (-2,2.5) {};
    
    \coordinate (Ar1) at ($(A_center)+(-40:1 and 2)$) {};
    \coordinate (Ar2) at ($(A_center)+(-20:1 and 2)$) {};
    \coordinate (Ar3) at ($(A_center)+(0:1 and 2)$) {};
    \coordinate (Ar4) at ($(A_center)+(20:1 and 2)$) {};
    \coordinate (Ar5) at ($(A_center)+(40:1 and 2)$) {};
    \coordinate (Al1) at ($(A_center)+(220:1 and 2)$) {};
    \coordinate (Al2) at ($(A_center)+(200:1 and 2)$) {};
    \coordinate (Al3) at ($(A_center)+(180:1 and 2)$) {};
    \coordinate (Al4) at ($(A_center)+(160:1 and 2)$) {};
    \coordinate (Al5) at ($(A_center)+(140:1 and 2)$) {};

    \coordinate (B_center) at (2,0) {};    
    \coordinate (lB) at (2,2.5) {};

    \coordinate (Br1) at ($(B_center)+(-40:1 and 2)$) {};
    \coordinate (Br2) at ($(B_center)+(-20:1 and 2)$) {};
    \coordinate (Br3) at ($(B_center)+(0:1 and 2)$) {};
    \coordinate (Br4) at ($(B_center)+(20:1 and 2)$) {};
    \coordinate (Br5) at ($(B_center)+(40:1 and 2)$) {};
    \coordinate (Bl1) at ($(B_center)+(220:1 and 2)$) {};
    \coordinate (Bl2) at ($(B_center)+(200:1 and 2)$) {};
    \coordinate (Bl3) at ($(B_center)+(180:1 and 2)$) {};
    \coordinate (Bl4) at ($(B_center)+(160:1 and 2)$) {};
    \coordinate (Bl5) at ($(B_center)+(140:1 and 2)$) {};
    
    \coordinate (AA_center) at (-5.5,0) {};
    \coordinate (lAA) at (-5.5,2.2) {};
    \coordinate (AAr1) at ($(AA_center)+(-40:1 and 1.5)$) {};
    \coordinate (AAr2) at ($(AA_center)+(-20:1 and 1.5)$) {};
    \coordinate (AAr3) at ($(AA_center)+(0:1 and 1.5)$) {};
    \coordinate (AAr4) at ($(AA_center)+(20:1 and 1.5)$) {};
    \coordinate (AAr5) at ($(AA_center)+(40:1 and 1.5)$) {};

    \coordinate (AB_center) at (5.5,0) {};
    \coordinate (lAB) at (5.5,2.2) {};
    \coordinate (ABl1) at ($(AB_center)+(220:1 and 1.5)$) {};
    \coordinate (ABl2) at ($(AB_center)+(200:1 and 1.5)$) {};
    \coordinate (ABl3) at ($(AB_center)+(180:1 and 1.5)$) {};
    \coordinate (ABl4) at ($(AB_center)+(160:1 and 1.5)$) {};
    \coordinate (ABl5) at ($(AB_center)+(140:1 and 1.5)$) {};

    \draw[fill=gray!60] (A_center) ellipse (1 and 2);
    \draw[fill=gray!60] (B_center) ellipse (1 and 2);
    \draw[fill=gray!30] (AA_center) ellipse (1 and 1.5);
    \draw[fill=gray!30] (AB_center) ellipse (1 and 1.5);

    \tikzstyle{every node}=[inner sep=1pt]
    \node at (lA) {$A$};
    \node at (lB) {$B$};
    \node at (lAA) {$\alpha(A)$};
    \node at (lAB) {$\alpha(B)$};

    \draw[-] (Ar1)--(Bl1);
    \draw[-] (Ar1)--(Bl2);
    \draw[-] (Ar1)--(Bl3);
    \draw[-] (Ar1)--(Bl4);
    \draw[-] (Ar1)--(Bl5);

    \draw[-] (Ar2)--(Bl1);
    \draw[-] (Ar2)--(Bl2);
    \draw[-] (Ar2)--(Bl3);
    \draw[-] (Ar2)--(Bl4);
    \draw[-] (Ar2)--(Bl5);
    
    \draw[-] (Ar3)--(Bl1);
    \draw[-] (Ar3)--(Bl2);
    \draw[-] (Ar3)--(Bl3);
    \draw[-] (Ar3)--(Bl4);
    \draw[-] (Ar3)--(Bl5);
    
    \draw[-] (Ar4)--(Bl1);
    \draw[-] (Ar4)--(Bl2);
    \draw[-] (Ar4)--(Bl3);
    \draw[-] (Ar4)--(Bl4);
    \draw[-] (Ar4)--(Bl5);
    
    \draw[-] (Ar5)--(Bl1);
    \draw[-] (Ar5)--(Bl2);
    \draw[-] (Ar5)--(Bl3);
    \draw[-] (Ar5)--(Bl4);
    \draw[-] (Ar5)--(Bl5);

    \draw[-] (AAr1)--(Al1);
    \draw[-] (AAr2)--(Al2);
    \draw[-] (AAr3)--(Al3);
    \draw[-] (AAr4)--(Al4);
    \draw[-] (AAr5)--(Al5);

    \draw[-] (Br1)--(ABl1);
    \draw[-] (Br2)--(ABl2);
    \draw[-] (Br3)--(ABl3);
    \draw[-] (Br4)--(ABl4);
    \draw[-] (Br5)--(ABl5);
    
    \draw[black] (-8,-4)--(-7.5,-4);
\end{tikzpicture}
\hspace{0.5cm}
\begin{tikzpicture}[scale=0.6,>=latex,shorten >=-0.4pt,shorten <=-0.4pt]
\coordinate (center) at (0,0) {};

\coordinate (tauA) at ($(center)+(90:2.4)$) {};
\coordinate (tau'A) at ($(center)+(270:2.4)$) {};
\coordinate (tauB) at ($(center)+(0:2.4)$) {};
\coordinate (tau'B) at ($(center)+(180:2.4)$) {};

\coordinate (A) at ($(center)+(70:1.3)$) {};
\coordinate (B) at ($(center)+(200:1.3)$) {};

\coordinate (u1) at ($(center)+(120:2)$) {};
\coordinate (u2) at ($(center)+(110:2)$) {};
\coordinate (u3) at ($(center)+(93:2)$) {};
\coordinate (u4) at ($(center)+(87:2)$) {};
\coordinate (u5) at ($(center)+(70:2)$) {};
\coordinate (u6) at ($(center)+(60:2)$) {};

\coordinate (b1) at ($(center)+(240:2)$) {};
\coordinate (b2) at ($(center)+(250:2)$) {};
\coordinate (b3) at ($(center)+(267:2)$) {};
\coordinate (b4) at ($(center)+(273:2)$) {};
\coordinate (b5) at ($(center)+(290:2)$) {};
\coordinate (b6) at ($(center)+(300:2)$) {};

\coordinate (l1) at ($(center)+(160:2)$) {};
\coordinate (l2) at ($(center)+(177:2)$) {};
\coordinate (l3) at ($(center)+(183:2)$) {};
\coordinate (l4) at ($(center)+(200:2)$) {};

\coordinate (r1) at ($(center)+(3:2)$) {};
\coordinate (r2) at ($(center)+(-3:2)$) {};

\draw (center) circle (2cm);

\draw[thick,-] (u1)--(u6);
\draw[thick,-] (u2)--(u5);
\draw[thick,-] (u3)--(b3);
\draw[thick,-] (u4)--(b4);
\draw[thick,-] (b2)--(b5);

\draw[thick,-] (l2)--(r1);
\draw[thick,-] (l3)--(r2);
\draw[thick,-] (l1)--(l4);

\draw[red, very thick,<-] ([shift=(50:2cm)]0,0) arc (50:130:2cm);
\draw[blue, very thick,<-] ([shift=(230:2cm)]0,0) arc (230:310:2cm);

\draw[black, very thick,<-] ([shift=(-30:2cm)]0,0) arc (-30:30:2cm);
\draw[black, very thick,<-] ([shift=(150:2cm)]0,0) arc (150:210:2cm);

\tikzstyle{every node}=[inner sep=1pt]
\node[red] at (tauA) {$\tau_A$};
\node[blue] at (tau'A) {$\tau'_A$};;
\node[black] at (tauB) {$\tau_B$};
\node[black] at (tau'B) {$\tau'_B$};
\node at (A) {$A$};
\node at (B) {$B$};

\draw[black] (-2.5,-3.0)--(-2,-3.0);
\end{tikzpicture}
\hspace{0.25cm}
\begin{tikzpicture}[scale=0.6,>=latex,shorten >=-0.4pt,shorten <=-0.4pt]
\coordinate (center) at (0,0) {};

\coordinate (tauA) at ($(center)+(90:2.4)$) {};
\coordinate (tau'A) at ($(center)+(270:2.4)$) {};
\coordinate (tauB) at ($(center)+(0:2.4)$) {};
\coordinate (tau'B) at ($(center)+(180:2.4)$) {};

\coordinate (A) at ($(center)+(70:1.3)$) {};
\coordinate (B) at ($(center)+(200:1.3)$) {};

\coordinate (u1) at ($(center)+(120:2)$) {};
\coordinate (u2) at ($(center)+(110:2)$) {};
\coordinate (u3) at ($(center)+(93:2)$) {};
\coordinate (u4) at ($(center)+(87:2)$) {};
\coordinate (u5) at ($(center)+(70:2)$) {};
\coordinate (u6) at ($(center)+(60:2)$) {};

\coordinate (b1) at ($(center)+(240:2)$) {};
\coordinate (b2) at ($(center)+(250:2)$) {};
\coordinate (b3) at ($(center)+(267:2)$) {};
\coordinate (b4) at ($(center)+(273:2)$) {};
\coordinate (b5) at ($(center)+(290:2)$) {};
\coordinate (b6) at ($(center)+(300:2)$) {};

\coordinate (l1) at ($(center)+(160:2)$) {};
\coordinate (l2) at ($(center)+(177:2)$) {};
\coordinate (l3) at ($(center)+(183:2)$) {};
\coordinate (l4) at ($(center)+(200:2)$) {};

\coordinate (r1) at ($(center)+(3:2)$) {};
\coordinate (r2) at ($(center)+(-3:2)$) {};

\draw (center) circle (2cm);

%\draw[thick,-] (u1)--(u6);
\draw[thick,-] (u2)--(u5);
\draw[thick,-] (u3)--(b3);
\draw[thick,-] (u4)--(b4);
\draw[thick,-] (b2)--(b5);
\draw[thick,-] (b1)--(b6);

\draw[thick,-] (l2)--(r1);
\draw[thick,-] (l3)--(r2);
\draw[thick,-] (l1)--(l4);

\draw[blue, very thick,<-] ([shift=(50:2cm)]0,0) arc (50:130:2cm);
\draw[red, very thick,<-] ([shift=(230:2cm)]0,0) arc (230:310:2cm);

\draw[black, very thick,<-] ([shift=(-30:2cm)]0,0) arc (-30:30:2cm);
\draw[black, very thick,<-] ([shift=(150:2cm)]0,0) arc (150:210:2cm);

\tikzstyle{every node}=[inner sep=1pt]
\node[blue] at (tauA) {$\tau'_A$};
\node[red] at (tau'A) {$\tau_A$};;
\node[black] at (tauB) {$\tau_B$};
\node[black] at (tau'B) {$\tau'_B$};
\node at (A) {$A$};
\node at (B) {$B$};

\draw[black] (-2.5,-3.0)--(-2,-3.0);

\end{tikzpicture}
\caption{\label{fig:split} Split $(\alpha(A),A,\alpha(B),B)$ in $G_{ov}$ and two possible chord models of $G_{ov}$: $\tau_A\tau_B\tau'_A\tau'_B$ and $\tau'_A\tau_B\tau_A\tau'_B$.}
\end{figure}
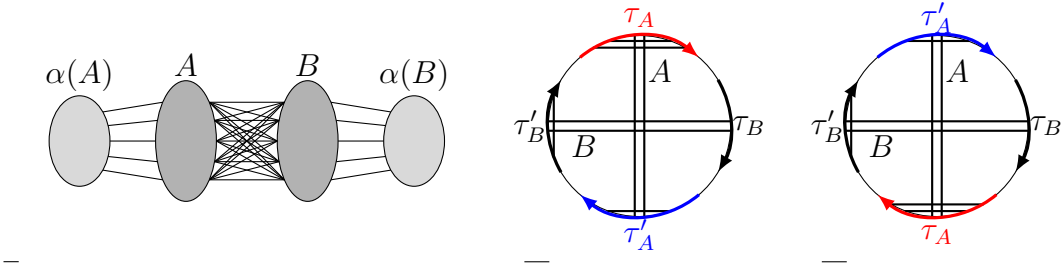

A split $(A,B)$ is \emph{non-trivial} if $|A \cup \alpha(A)| \geq 2$ and $|B \cup \alpha(B)| \geq 2$;
otherwise $(A,B)$ is \emph{trivial}.
\begin{theorem}[\cite{GSH89}]
\label{thm:cicle_representation_no_split}
If $G_{ov}$ has no non-trivial split, $G_{ov}$ has only two chord models, one being the reflection of the other.
\end{theorem}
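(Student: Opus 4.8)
The plan is to show that a connected circle graph $G_{ov}$ with no non-trivial split is ``prime'' enough that the cyclic order of chord-endpoints around $C$ is forced up to reflection. I would start from the observation that in any chord model $\psi$ of $G_{ov}$, picking any chord $\psi(x)$ partitions the remaining chords into those entirely on one side and those entirely on the other side, together with those crossing $\psi(x)$; the crossing ones are exactly the neighbours of $x$ in $G_{ov}$. The key local fact to establish first is: if two vertices $u,v$ have the same closed neighbourhood-behaviour with respect to crossing (more precisely, if every chord is forced into the same relative position with respect to $\psi(u)$ and $\psi(v)$ in every model), then $(\{u,v\},\ldots)$-type data yields a split; so absence of non-trivial splits pins down fine structure. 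I would make this precise by contradiction: assume $G_{ov}$ has two chord models $\psi_1,\psi_2$ that are not equal and not reflections of one another, and extract from this discrepancy a non-trivial split.

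The main technical engine I would use is a \emph{local move} / \emph{uniqueness} argument. First I would reduce to the case where $G_{ov}$ is connected with at least, say, three vertices (small cases are checked by hand: $K_1$, $K_2$, $P_3$ etc.\ each have exactly two models). Then, fixing a reference chord $\psi_1(x)$ and the subset $N(x)$ of chords crossing it, I would argue that the circular order in which the endpoints of $N(x)$-chords interleave with the two endpoints of $\psi(x)$ is determined: if not, the set of chords that can be moved as a block across an endpoint of $\psi(x)$ without changing any crossing forms a ``movable module,'' and I would show such a module, being separated from the rest of the graph by the two endpoints of $\psi(x)$, is precisely one side $A$ (with $\alpha(A)$ the chords nested strictly inside) of a split $(A,B)$. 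Checking the four split axioms ($A\sim B$ from the crossing condition at $x$; $\alpha(A)\parallel(B\cup\alpha(B))$ and $\alpha(B)\parallel(A\cup\alpha(A))$ from nesting; non-triviality from the size assumptions) is the routine part. Contrapositively, no non-trivial split forces the order around each chord, and gluing these local orders around all chords forces the whole circular word up to starting point and direction — i.e.\ up to rotation and reflection, which is exactly ``two chord models, one the reflection of the other.''

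I would also need to handle the degenerate direction of the ``exactly two'' claim: that a no-non-trivial-split connected circle graph does have \emph{at least} two models (namely $\psi$ and $\psi^R$), and that these two are genuinely distinct unless $G_{ov}$ is so small/symmetric that $\psi\equiv\psi^R$ — but for the statement as phrased (``only two chord models, one being the reflection of the other'') the reflection is always available and the content is the \emph{uniqueness up to reflection}, so I would phrase the conclusion as: the word representation of any chord model is determined up to cyclic rotation and the reversal-with-swap operation defining $\psi^R$.

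The step I expect to be the main obstacle is the extraction of the split from a discrepancy between two models: one must argue that the ``ambiguously placed'' block of chords is separated from its complement by exactly the two endpoints of a single chord, and that the nested-inside chords land in $\alpha(A)$ rather than spoiling the $A\sim B$ condition. Making the block maximal and using connectivity of $G_{ov}$ to rule out the block being split further should do it, but this is where the care is needed; the reference to Theorem~\ref{thm:cicle_representation_no_split} being from \cite{GSH89}, I would follow that argument and merely adapt notation to the word-representation framework set up above.
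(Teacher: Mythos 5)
The paper does not prove this statement: Theorem~\ref{thm:cicle_representation_no_split} is stated with a citation to~\cite{GSH89} and no proof is given in the text, so there is no ``paper's own proof'' to compare against. What you have written should therefore be assessed on its own terms, and you yourself frame it as a plan you would carry out by following the argument in~\cite{GSH89} rather than as a complete proof.

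As an outline your sketch has the right shape — assume two chord models that are neither equal nor reflections, localize a discrepancy, and show the discrepancy produces a non-trivial split, contradicting the hypothesis — but two of the load-bearing steps are underspecified in ways that would not survive being written out in full. First, the ``movable module'' extraction: you describe the discrepancy as a block of chords that ``can be moved as a block across an endpoint of $\psi(x)$ without changing any crossing,'' and you identify this block with one side $A$ of a split with $\alpha(A)$ the chords nested inside. But the split axioms require $A \sim B$, meaning \emph{every} chord of $A$ crosses \emph{every} chord of $B$; a block that is merely ``movable without changing crossings'' has no reason a priori to be homogeneous in this sense, and the work of choosing $A$ and $\alpha(A)$ so that all four split conditions hold simultaneously is exactly the nonroutine part you wave off as ``the routine part.'' You would need to argue that maximality of the movable block, together with the structure of chords nested inside it versus crossing out of it, forces the required homogeneity. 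Second, the ``gluing local orders'' step: even granting that the cyclic order of endpoints around any single chord $\psi(x)$ is forced, it does not follow immediately that the whole circular word is forced; one must propagate the local determination along edges of $G_{ov}$, and the consistency of overlapping local pictures at adjacent vertices is itself a claim that needs an argument (connectivity alone is not enough — you need primality to rule out a ``hinge'' where two consistent local pictures can be glued in two inequivalent ways, which is again the appearance of a split). Since the paper leans on~\cite{GSH89} for precisely these points, your plan is a reasonable reading guide to that reference, but it is not yet a proof.
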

On the other hand, if $G_{ov}$ has non-trivial splits, 
$G_{ov}$ may have many non-equivalent chord models -- see Figure \ref{fig:split}.

A split in $G_{ov}$ between $A$ and $B$ is \emph{maximal}
if there is no split in $G_{ov}$ between $A'$ and $B'$, where $A'$ and $B'$ are such that $A \subseteq A'$, $B \subseteq B'$, and $|A| < |A'|$ or $|B| < |B|'$.
Lemma~1 in \cite{CFK13} provides the following characterization of maximal splits in $G_{ov}$: 
a split between $A$ and $B$ is maximal if and only if
there exists no $C \subseteq \alpha(A)$ such that $(C,{\sim})$ is a connected component in $(\alpha(A),{\sim})$
and for every vertex $u  \in C$ either $u \sim A$ or $u \parallel A$, 
and similarly for $\alpha(B)$ and $B$.
This observation allows to present the algorithm for computing a maximal split in $G_{ov}$ (see \cite{CFK13} for more details):
\begin{itemize}
 \item start with any non-trivial split between $A$ and $B$,
 \item while there exists $C$ as described in Lemma 1 of \cite{CFK13}: 
 if $C \subseteq \alpha(A)$ set $A = A$ and $B = B \cup C'$, 
 and if $C \subseteq \alpha(B)$, set $B = B$ and $A = A \cup C'$,
 where $C'$ is the set of all vertices from $C$ adjacent to $A$ if $C \subset \alpha(A)$ or adjacent to $B$ if $C \subset \alpha(B)$, respectively,
 \item return $(A,B)$.
\end{itemize}

Suppose $(A,B)$ is a maximal split in $G_{ov}$ produced by the above algorithm.
Note that $(A,B)$ might be trivial. 
Then, Lemma~2 of~\cite{CFK13} proves the following property:
if $(A,B)$ is trivial with $|A|= \{a\}$ and $\alpha(A) = \emptyset$,
then $a$ is an \emph{articulation} vertex in $G_{ov}$, i.e. $(V \setminus \{a\}, {\sim})$ is disconnected.

\subsection{The structure of chord models of $G_{ov}$ with respect to a non-trivial maximal split $(A,B)$}
\label{subsec:structure_non_trivial_split}
Suppose $G_{ov}$ has a non-trivial maximal split $(A,B)$.
Let $C = A\cup B$. 
Following \cite{CFK13}, let $\diamond$ be the smallest equivalence relation on $C$ containing all the pairs $(u,v) \in C \times C$ such that:
\begin{itemize}
\item $u \parallel v$,
\item $u,v$ are connected by a path in $(V,{\sim})$ with 
all the inner vertices in $\alpha(A) \cup \alpha(B)$.
\end{itemize}
Suppose $C_1,\ldots,C_k$ are the equivalence classes of $\diamond$.
Note that $C_i \subseteq A$ or $C_i \subseteq B$ for every $i \in [k]$, and
hence $k \geq 2$.
Observe that: 
\begin{itemize}
\item $C_i \sim C_j$ for every $i \neq j$, $i,j \in [k]$.
\end{itemize}
Following \cite{CFK13}, one can uniquely partition the set $V \setminus C$ into 
the sets $\alpha(C_1),\ldots,\alpha(C_k)$ ($\alpha(C_i)$ might be empty) so as:
\begin{itemize}
 \item $\alpha(C_i) \parallel (\alpha(C_j) \cup C_j)$ for every $i \neq j$, $i,j \in [k]$.
\end{itemize}
See Figure \ref{fig:non_trivial_maximal_split} for an illustration.

\input splits_non_trivial_maximal_split.tex

Further, let $G_i$ by a graph obtained from $G_{ov}$ by contracting the vertices
from $V \setminus (C_i \cup \alpha(C_i))$ into a single vertex $v_i$.
Thus, $G_i$ is such that $V(G_i) = C_i \cup \alpha(C_i) \cup \{v_i\}$, 
$v_iv \in E(G_i)$ for every $v \in C_i$, $v_iv \notin E(G_i)$ for every $v \in \alpha(C_i)$, 
and $uv \in E(G_i)$ iff $u \sim v$ for every $u,v \in C_i \cup \alpha(C_i)$.
Note that every chord model of $G_i$ has the form $v_i \tau_i v_i \tau'_i$, where
every $v \in C_i$ occurs in both words $\tau$ and $\tau'$ exactly once and 
every $v \in \alpha(C_i)$ occurs twice either in $\tau_i$ or in $\tau'_i$.
The next theorem describes the relationship between the set of all chord models of $G_{ov}$ and 
the set of all chord models of $G_i$.

\begin{theorem}[Proposition~1 from \cite{CFK13}] 
\label{thm:non_trivial_split_representations}
The following statements hold:
\begin{enumerate}
\item If $v_i \tau_i v_i \tau'_i$ is a chord model of $G_i$ for $i \in [k]$,
$i_1, \ldots, i_k$ is a permutation of $[k]$, and the words $\mu_i,\mu'_i$ are such that $\{\mu_i,\mu'_i\} = \{\tau_i,\tau'_i\}$ for $i \in [k]$, then 
$$\tau \equiv \mu_{i_1}\ldots \mu_{i_k}\mu'_{i_1}\ldots \mu'_{i_k},$$
is a chord model of $G_{ov}$.
\item If $\tau$ is a chord model of $G_{ov}$, then 
$$\tau \equiv \mu_{i_1}\ldots \mu_{i_k}\mu'_{i_1}\ldots \mu'_{i_k},$$
where $i_1, \ldots, i_k$ is a permutation of $[k]$  and $v_{i_j} \mu_{i_j} v_{i_j} \mu'_{i_j}$ is a chord model of $G_{i_j}$ for $j \in [k]$.
\end{enumerate}
\end{theorem}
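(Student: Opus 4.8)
The plan is to prove both directions by a direct analysis of how the chords in $\alpha(C_i) \cup C_i$ interact with the split structure, using the fact that $C_1,\dots,C_k$ are mutually $\sim$-complete to each other (so their chords must interleave pairwise) while $\alpha(C_i) \parallel (\alpha(C_j) \cup C_j)$ for $i \neq j$ (so those chords must be separated). First I would record the two structural facts already noted before the statement: $C_i \sim C_j$ for all $i \neq j$, and $\alpha(C_i) \parallel (\alpha(C_j) \cup C_j)$ for all $i \neq j$. From these, in any chord model $\tau$ of $G_{ov}$, for each pair $i \neq j$ every chord of $C_i$ must cross every chord of $C_j$; a standard argument about chords of a circle shows that this forces the endpoints of $C_i$-chords and $C_j$-chords to alternate in a very restricted way. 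More precisely, I would first prove the following local claim: the set of endpoints of all chords in $C = C_1 \cup \dots \cup C_k$ splits the circle into exactly $2k$ arcs, with the chords in $C_i$ having one endpoint in one arc and the other endpoint in the ``antipodal'' arc, so that traversing the circle we see the pattern (block for $C_{i_1}$)(block for $C_{i_2}$)$\cdots$(block for $C_{i_k}$)(second block for $C_{i_1}$)$\cdots$(second block for $C_{i_k}$) for some permutation $i_1,\dots,i_k$ of $[k]$. This is the heart of the matter and is exactly the statement that a chord model of the complete multipartite-like ``skeleton'' on the $C_i$'s has this cyclic block form.

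Granting that claim, the rest is bookkeeping. For the ``only if'' direction (part (2)): since $\alpha(C_i) \parallel (C_j \cup \alpha(C_j))$ for $j \neq i$, no chord of $\alpha(C_i)$ crosses any chord of $C_j$ or of $\alpha(C_j)$, so all endpoints of $\alpha(C_i)$-chords must lie inside the two arcs occupied by the $C_i$-block; moreover each such chord, having both endpoints in the union of those two arcs but crossing no $C_j$-chord, must have both its endpoints in the \emph{same} one of the two $C_i$-arcs. This lets me define $\mu_i$ (resp.\ $\mu'_i$) as the subword of $\tau$ read off the first (resp.\ second) $C_i$-block. Then, contracting $V \setminus (C_i \cup \alpha(C_i))$ to a single vertex $v_i$ placed with one endpoint in each gap between the two $C_i$-blocks, one checks directly that $v_i \mu_i v_i \mu'_i$ realises exactly the adjacencies of $G_i$: $v_i$ crosses precisely the $C_i$-chords (which span the two blocks) and none of the $\alpha(C_i)$-chords (which sit inside a single block), while crossings among $C_i \cup \alpha(C_i)$ are inherited verbatim from $\tau$. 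That gives the claimed decomposition $\tau \equiv \mu_{i_1}\cdots\mu_{i_k}\mu'_{i_1}\cdots\mu'_{i_k}$.

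For the ``if'' direction (part (1)), I would go the other way: given chord models $v_i\tau_i v_i\tau'_i$ of each $G_i$, a permutation $i_1,\dots,i_k$, and a choice $\{\mu_i,\mu'_i\}=\{\tau_i,\tau'_i\}$, I form $\tau \equiv \mu_{i_1}\cdots\mu_{i_k}\mu'_{i_1}\cdots\mu'_{i_k}$ and verify that every pair $u,v \in V$ satisfies $uv \in E(G_{ov})$ iff the corresponding chords in $\tau$ cross. This is a case check on where $u$ and $v$ lie: (a) both in some $C_i \cup \alpha(C_i)$ — crossings are determined by $v_i\tau_i v_i \tau'_i$ once one notes that the relation of a chord to the two occurrences of $v_i$ tells us whether its endpoints sit in one block or span both, and this matches the block layout of $\tau$; (b) $u \in C_i \cup \alpha(C_i)$, $v \in C_j \cup \alpha(C_j)$ with $i \neq j$ — here the block structure forces exactly the crossings dictated by $C_i \sim C_j$ and $\alpha(C_i) \parallel (C_j \cup \alpha(C_j))$, i.e.\ a $C_i$-chord (spanning both $C_i$-blocks) crosses a $C_j$-chord (spanning both $C_j$-blocks) since the blocks interleave antipodally, but an $\alpha(C_i)$-chord crosses nothing outside its own block. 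The main obstacle is the local claim in the first paragraph — making rigorous that pairwise-crossing families of chords force the antipodal block pattern, and that within a block the $\alpha(C_i)$-chords cannot ``leak'' across; everything after that is a careful but routine crossing-pattern verification, which is presumably why the authors simply cite Proposition~1 of \cite{CFK13}.
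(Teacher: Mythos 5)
The paper does not supply a proof of this statement — it is cited verbatim as Proposition~1 of \cite{CFK13}, with the whole subsection attributed to that source — so there is no in-house argument to compare against; I can only judge your sketch on its own merits. The overall plan — first force the antipodal block pattern on the $C_i$'s, then route the $\alpha(C_i)$-chords into the two $C_i$-blocks to read off $\mu_i,\mu'_i$, with the converse being a crossing-pattern case check — is the right shape, and your parts~(1) and~(2) are unobjectionable \emph{given} the local claim. But the local claim itself, which you correctly identify as the crux, is not justified, and you ascribe the wrong source of difficulty to it. You assert that the block pattern on $C$ follows from the two facts $C_i\sim C_j$ for $i\neq j$ and $\alpha(C_i)\parallel(C_j\cup\alpha(C_j))$ for $i\neq j$, via ``a standard argument about chords of a circle.'' No such argument exists: with $C_1=\{a_1,a_2\}$, $C_2=\{b\}$, $C_3=\{c\}$ and $a_1,a_2,b,c$ pairwise crossing, the arrangement $a_1\,b\,a_2\,c\,a_1\,b\,a_2\,c$ is a chord model of $K_4$ in which every chord of $C_i$ crosses every chord of $C_j$ for $i\neq j$, yet the endpoints of $C_1$ sit in four singleton blocks rather than two. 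Pairwise $\sim$-completeness between the $C_i$'s therefore does not give the block pattern.

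What actually forbids such arrangements is the \emph{definition} of the $C_i$'s as $\diamond$-equivalence classes, which your sketch never invokes. Two vertices placed in the same $C_i$ are joined by a chain of generating $\diamond$-pairs (pairs that are $\parallel$ inside $C$, or connected by a $\sim$-path with interior in $\alpha(A)\cup\alpha(B)$). In any chord model of $G_{ov}$ the chords realising such a chain avoid every $C_j\cup\alpha(C_j)$ with $j\neq i$, so they are confined to the region between the two would-be $C_i$-blocks, and it is precisely these chords that glue the $C_i$-endpoints into two contiguous arcs. Concretely, if you enrich the $K_4$ example with a witness $w\in\alpha(A)$ satisfying $w\sim a_1$, $w\sim a_2$, $w\parallel b$, $w\parallel c$, which is what would make $a_1$ and $a_2$ land in one class, then $w$ cannot be inserted into the circular word $a_1\,b\,a_2\,c\,a_1\,b\,a_2\,c$ at all: $w$ must cross both $a_1$ and $a_2$ while avoiding $b$ and $c$, but $b$ and $c$ cut the circle into four short arcs each containing the endpoint of only one of $a_1,a_2$. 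So your local claim is true in the theorem's setting, but establishing it requires tracing through the $\diamond$-relation and the maximality of the split; the derivation you propose — from $C_i\sim C_j$ and the $\parallel$ conditions alone — is false, and filling that hole is essentially the content of the cited proposition.
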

See Figure \ref{fig:non_trivial_maximal_split} for an illustration.

\subsection{The structure of chord models of $G_{ov}$ with respect to a trivial maximal split $(A,B)$}
\label{subsec:structure_trivial_split}
Suppose $(A,B)$ is a trivial maximal split in $G_{ov}$.
Without loss of generality we assume that $A = \{a\}$ and $\alpha(A) = \emptyset$.
We recall that $a$ is an articulation of $G_{ov}$ by Lemma~2 in~\cite{CFK13}.
Suppose that $D_1,\dots,D_k \subset V\setminus \{a\}$ are such that $(D_i,{\sim})$
is a connected component of $(V \setminus \{a\},{\sim})$ for every $i \in [k]$.
Clearly, $k \geq 2$ as $a$ is an articulation in $G_{ov}$.
Let $C_i = \{v \in D_i: v \sim a\}$ and $\alpha(C_i) = \{v \in D_i: v \parallel a\}$.
Let $G_i$ be the restriction of $G_{ov}$ to the set $\{a\} \cup C_i \cup \alpha(C_i)$, 
i.e. $G_i = (\{a\} \cup C_i \cup \alpha(C_i), {\sim})$.
Note that every chord model of $G_i$ has the form $a \tau_i a \tau'_i$, where
every $v \in C_i$ occurs in both words $\tau_i$ and $\tau'_i$ exactly once and
every $v \in \alpha(C_i)$ occurs twice in either $\tau_i$ or in $\tau'_i$.
The next theorem describes the relation between the set of all chord models of $G_{ov}$ and 
the set of all chord models of $G_i$.

\begin{theorem}[Proposition 2 in \cite{CFK13}] 
\label{thm:trivial_split_representations}
The following statements hold:
\begin{enumerate}
\item If $a \tau_i a \tau'_i$ is a chord model of $G_i$ for $i \in [k]$,
$i_1, \ldots, i_k$ is a permutation of $[k]$, and the words $\mu_i,\mu'_i$ are such that $\{\mu_i,\mu'_i\} = \{\tau_i,\tau'_i\}$ for $i \in [k]$, then 
$$\tau \equiv a\mu_{i_1}\ldots \mu_{i_k} a\mu'_{i_k}\ldots \mu'_{i_1}$$
is a chord model of $G_{ov}$.
\item If $\tau$ is a chord model of $G_{ov}$, then 
$$\tau \equiv a\mu_{i_1} \ldots \mu_{i_k}a\mu'_{i_k} \ldots \mu'_{i_1},$$
where $i_1, \ldots, i_k$ is a permutation of $[k]$ and $a \mu_{i_j} a \mu'_{i_j}$ is a chord model of $G_{i_j}$ for $j \in [k]$.
\end{enumerate}
\end{theorem}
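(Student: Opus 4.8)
The plan is to prove the two items separately; item~(1) is the constructive, routine direction and item~(2) carries the real content.

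For item~(1), given chord models $a\tau_i a\tau'_i$ of the graphs $G_i$, a permutation $i_1,\dots,i_k$ of $[k]$, and choices $\{\mu_i,\mu'_i\}=\{\tau_i,\tau'_i\}$, I would draw the chords according to the circular word $\tau\equiv a\mu_{i_1}\dots\mu_{i_k}a\mu'_{i_k}\dots\mu'_{i_1}$, write $\phi(w)$ for the chord of $w$, and verify directly that $\phi(u)$ and $\phi(v)$ cross if and only if $u\sim v$. First I would note that $a\tau_i a\tau'_i\equiv a\tau'_i a\tau_i$ as circular words, so the swap $\mu_i\leftrightarrow\mu'_i$ is harmless and the restriction of $\tau$ to the letters of $G_i$ coincides, as a circular word, with the given chord model of $G_i$. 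Then I would check the adjacency condition pair by pair: (i) $a$ versus $v\in C_i$, where the two occurrences of $v$ separate the two occurrences of $a$, giving a crossing as required by $v\sim a$; (ii) $a$ versus $v\in\alpha(C_i)$, where both occurrences of $v$ lie on one side of $\phi(a)$, giving no crossing as required by $v\parallel a$; (iii) $u,v$ in a common $D_i$, where the crossing pattern is inherited from the chord model of $G_i$ and $uv\in E(G_i)$ iff $u\sim v$; and (iv) $u\in D_{i_s}$, $v\in D_{i_t}$ with $s\ne t$, where I would exploit the fact that the second half of $\tau$ lists the blocks in the order reverse to the first half: for $s<t$ the boundary arc running clockwise from the first letter of $\mu_{i_t}$ to the last letter of $\mu'_{i_t}$ contains every $D_{i_t}$-letter and no $D_{i_s}$-letter, so the chords of the two components lie in complementary arcs and cannot cross, as required by $u\parallel v$.

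For item~(2) I would take an arbitrary chord model $\phi$ of $G_{ov}$, let $\tau\equiv a\sigma a\sigma'$ be its word representation split at the two occurrences of $a$, and show that $\sigma$ and $\sigma'$ decompose into blocks indexed by the components $D_i$, in reverse orders. The preliminaries are easy: each $v\in C_i$ occurs once in $\sigma$ and once in $\sigma'$ because $\phi(v)$ crosses $\phi(a)$, each $v\in\alpha(C_i)$ occurs twice in exactly one of $\sigma,\sigma'$, and---since $G_{ov}$ is connected while each $D_i$ is a connected component of $(V\setminus\{a\},{\sim})$---each $C_i$ is non-empty, so $D_i$ contributes letters to both $\sigma$ and $\sigma'$. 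The heart of the argument, which I expect to be the main obstacle, is the contiguity claim: the $D_i$-letters form a single contiguous block $\mu_i$ of $\sigma$ and a single contiguous block $\mu'_i$ of $\sigma'$. I would prove it topologically. The set $\Gamma=\bigcup_{w\in D_i}\phi(w)$ is a connected subset of the closed disk bounded by $C$ meeting $C$ exactly in the endpoints of the $D_i$-chords. If some $D_j$-endpoint ($j\ne i$), or an endpoint of $\phi(a)$, were to lie along $\sigma$ strictly between two $D_i$-endpoints $\hat x,\hat y$ that are chosen as close together as possible, then the chord carrying that endpoint is a connected set disjoint from $\Gamma$ (distinct components share no edge, and all chord endpoints are distinct), so a Jordan-curve argument forces its other endpoint onto the same boundary arc between $\hat x$ and $\hat y$; this contradicts either that $\phi(a)$ separates the two occurrences of a $C_j$-vertex or the minimality of the flanking pair $\hat x,\hat y$. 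Making this Jordan-curve step fully rigorous---$\Gamma$ is only connected, not a simple arc---is the delicate point; the fallback I would use is the combinatorial reformulation: if $\mu_i$ were not contiguous, exhibit a short path in $(D_i,{\sim})$ together with a chord of another component whose endpoints interleave with that path, forcing a crossing that cannot occur.

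Once contiguity is established the rest is short. To see that the blocks appear in $\sigma'$ in the order reverse to $\sigma$, I would take, for $s<t$, vertices $x\in C_{i_s}$ and $y\in C_{i_t}$; their chords do not cross, and reading $\tau$ one meets the $\sigma$-occurrence of $x$, then that of $y$, then the second occurrence of $a$, then the $\sigma'$-occurrences of $x$ and $y$, so non-crossing forces the nested arrangement and the $\sigma'$-occurrence of $y$ precedes that of $x$; since this holds for all $s<t$, the blocks of $\sigma'$ read $\mu'_{i_k},\dots,\mu'_{i_1}$. Finally, restricting $\tau$ to the vertex set $\{a\}\cup C_i\cup\alpha(C_i)$ of $G_i$ returns the word $a\mu_i a\mu'_i$, which is therefore a chord model of $G_i$; assembling these observations gives exactly the asserted form of $\tau$.
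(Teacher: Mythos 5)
The paper does not prove this statement: it is cited verbatim as Proposition~2 of Chaplick, Fulek, and Klav\'{i}k (the reference \cite{CFK13}), with no in-paper argument to compare against. I therefore assess your proof on its own terms.

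Your item~(1) is correct. Your item~(2) has the right architecture (each $C_i\neq\emptyset$; contiguity of the $D_i$-blocks in $\sigma$ and $\sigma'$; reverse ordering of blocks in $\sigma'$ via a non-crossing pair $x\in C_{i_s}$, $y\in C_{i_t}$; restriction to recover a model of $G_i$), and you correctly identify the contiguity claim as the crux. But there is a genuine gap exactly there. The Jordan-curve step does force the other endpoint of the offending chord $\phi(z)$ into the same open arc $(\hat x,\hat y)$, and this does yield a contradiction when $z\in C_j$ (since $\phi(z)$ must then cross $\phi(a)$ and cannot have both endpoints in $\sigma$). But if $z\in\alpha(C_j)$, then $\phi(z)$ does not cross $\phi(a)$, and having both endpoints of $\phi(z)$ inside $(\hat x,\hat y)\subset\sigma$ is perfectly consistent; nor does it contradict the minimality of the flanking pair, because $\phi(z)$ is not a $D_i$-chord and contributes no new $D_i$-endpoints between $\hat x$ and $\hat y$. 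What is missing is a propagation step along the connectivity of $(D_j,{\sim})$: any neighbour $z'$ of $z$ in $D_j$ has $\phi(z')$ crossing $\phi(z)$, hence $\phi(z')$ has an endpoint strictly between the two endpoints of $\phi(z)$, i.e.\ inside $(\hat x,\hat y)$; the same Jordan argument (applied again with $\Gamma$) then places both endpoints of $\phi(z')$ in $(\hat x,\hat y)$. Iterating over a spanning path of $(D_j,{\sim})$ shows that the whole set $\bigcup_{w\in D_j}\phi(w)$ lies inside the arc $(\hat x,\hat y)\subset\sigma$. Only now does your earlier observation $C_j\neq\emptyset$ bite: some $w\in C_j$ then has both endpoints of $\phi(w)$ in $\sigma$, but $w\sim a$ forces $\phi(w)$ to cross $\phi(a)$, a contradiction. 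Your combinatorial fallback (``a chord of another component whose endpoints interleave with that path, forcing a crossing'') would not rescue this case either, since the whole difficulty in the $\alpha(C_j)$ situation is precisely that $\phi(z)$ creates no crossing with any $D_i$-chord. With this propagation step inserted, the remainder of your item~(2) is correct.
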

See Figure \ref{fig:trivial_maximal_split} for an illustration.
\begin{figure}[!htp]
\begin{tikzpicture}[scale=0.35]
    \coordinate (a) at (0,-2) {};
    \coordinate (la) at (0,-3) {};
    \coordinate (C1_center) at (-6,2) {};
    \coordinate (C2_center) at (0,2) {};
    \coordinate (C3_center) at (6,2) {};
    
    \coordinate (AC1_center) at (-6,5) {};
    \coordinate (AC2_center) at (0,5) {};
    \coordinate (AC3_center) at (6,5) {};

    \coordinate (lC1) at ($(C1_center)+(180:2.8 and 1)$) {};
    \coordinate (lC2) at ($(C2_center)+(180:2.8 and 1)$) {};
    \coordinate (lC3) at ($(C3_center)+(180:2.8 and 1)$) {};

    \coordinate (lAC1) at ($(AC1_center)+(180:2.6 and 1)$) {};
    \coordinate (lAC2) at ($(AC2_center)+(180:2.6 and 1)$) {};
    \coordinate (lAC3) at ($(AC3_center)+(180:2.6 and 1)$) {};

    \coordinate (C1l1) at ($(C1_center)+(240:2 and 1)$) {};
    \coordinate (C1l2) at ($(C1_center)+(270:2 and 1)$) {};
    \coordinate (C1l3) at ($(C1_center)+(290:2 and 1)$) {};
    \coordinate (C1l4) at ($(C1_center)+(310:2 and 1)$) {};

    \coordinate (C1u1) at ($(C1_center)+(120:2 and 1)$) {};
    \coordinate (C1u2) at ($(C1_center)+(90:2 and 1)$) {};
    \coordinate (C1u3) at ($(C1_center)+(60:2 and 1)$) {};

    \coordinate (AC1l1) at ($(AC1_center)+(240:1 and 1)$) {};
    \coordinate (AC1l2) at ($(AC1_center)+(270:1 and 1)$) {};
    \coordinate (AC1l3) at ($(AC1_center)+(300:1 and 1)$) {};

    \coordinate (C2l1) at ($(C2_center)+(300:2 and 1)$) {};
    \coordinate (C2l2) at ($(C2_center)+(285:2 and 1)$) {};
    \coordinate (C2l3) at ($(C2_center)+(270:2 and 1)$) {};
    \coordinate (C2l4) at ($(C2_center)+(255:2 and 1)$) {};
    \coordinate (C2l5) at ($(C2_center)+(240:2 and 1)$) {};

    \coordinate (C2u1) at ($(C2_center)+(120:2 and 1)$) {};
    \coordinate (C2u2) at ($(C2_center)+(90:2 and 1)$) {};
    \coordinate (C2u3) at ($(C2_center)+(60:2 and 1)$) {};

    \coordinate (AC2l1) at ($(AC2_center)+(240:1 and 1)$) {};
    \coordinate (AC2l2) at ($(AC2_center)+(270:1 and 1)$) {};
    \coordinate (AC2l3) at ($(AC2_center)+(300:1 and 1)$) {};

    \coordinate (C3l1) at ($(C3_center)+(300:2 and 1)$) {};
    \coordinate (C3l2) at ($(C3_center)+(270:2 and 1)$) {};
    \coordinate (C3l3) at ($(C3_center)+(250:2 and 1)$) {};
    \coordinate (C3l4) at ($(C3_center)+(230:2 and 1)$) {};
    
    \coordinate (C3u1) at ($(C3_center)+(120:2 and 1)$) {};
    \coordinate (C3u2) at ($(C3_center)+(90:2 and 1)$) {};
    \coordinate (C3u3) at ($(C3_center)+(60:2 and 1)$) {};

    \coordinate (AC3l1) at ($(AC3_center)+(240:1 and 1)$) {};
    \coordinate (AC3l2) at ($(AC3_center)+(270:1 and 1)$) {};
    \coordinate (AC3l3) at ($(AC3_center)+(300:1 and 1)$) {};

    \draw[fill=gray!60] (C1_center) ellipse (2 and 1);
    \draw[fill=gray!60] (C2_center) ellipse (2 and 1);
    \draw[fill=gray!60] (C3_center) ellipse (2 and 1);

    \draw[fill=gray!30] (AC1_center) ellipse (1 and 1);
    \draw[fill=gray!30] (AC2_center) ellipse (1 and 1);
    \draw[fill=gray!30] (AC3_center) ellipse (1 and 1);

    \tikzstyle{every node}=[inner sep=1pt]
    \node at (lC1) {$C_1$};
    \node at (lC2) {$C_2$};
    \node at (lC3) {$C_3$};
    \node at (lAC1) {$\alpha(C_1)$};
    \node at (lAC2) {$\alpha(C_2)$};
    \node at (lAC3) {$\alpha(C_3)$};
    \node at (la) {$a$};
    
    \tikzstyle{every node}=[circle,minimum size=5pt,inner sep=0pt,draw,fill]
    \node at (a) {};

    \draw[-] (a)--(C1l1);
    \draw[-] (a)--(C1l2);
    \draw[-] (a)--(C1l3);
    \draw[-] (a)--(C1l4);
    \draw[-] (a)--(C2l1);
    \draw[-] (a)--(C2l2);
    \draw[-] (a)--(C2l3);
    \draw[-] (a)--(C2l4);
    \draw[-] (a)--(C2l5);
    \draw[-] (a)--(C3l1);
    \draw[-] (a)--(C3l2);
    \draw[-] (a)--(C3l3);
    \draw[-] (a)--(C3l4);
    
    \draw[-] (C1u1)--(AC1l1);
    \draw[-] (C1u2)--(AC1l2);
    \draw[-] (C1u3)--(AC1l3);
    
    \draw[-] (C2u1)--(AC2l1);
    \draw[-] (C2u2)--(AC2l2);
    \draw[-] (C2u3)--(AC2l3);

    \draw[-] (C3u1)--(AC3l1);
    \draw[-] (C3u2)--(AC3l2);
    \draw[-] (C3u3)--(AC3l3);
\draw[white] (-2.5,-4)--(-2,-4);
\end{tikzpicture}
\hspace{0.5cm}
\begin{tikzpicture}[scale=0.7,>=latex,shorten >=-0.4pt,shorten <=-0.4pt]
\coordinate (center) at (0,0) {};

\coordinate (tau2) at ($(center)+(45:2.4)$) {};
\coordinate (tau'2) at ($(center)+(315:2.4)$) {};

\coordinate (tau3) at ($(center)+(90:2.4)$) {};
\coordinate (tau'3) at ($(center)+(270:2.4)$) {};

\coordinate (tau4) at ($(center)+(135:2.4)$) {};
\coordinate (tau'4) at ($(center)+(225:2.4)$) {};

\coordinate (r) at ($(center)+(0:2)$) {};
\coordinate (ur) at ($(center)+(45:2)$) {};
\coordinate (u) at ($(center)+(90:2)$) {};
\coordinate (ul) at ($(center)+(135:2)$) {};
\coordinate (l) at ($(center)+(180:2)$) {};
\coordinate (bl) at ($(center)+(225:2)$) {};
\coordinate (b) at ($(center)+(270:2)$) {};
\coordinate (br) at ($(center)+(315:2)$) {};

\draw (center) circle (2cm);

\draw[very thick,-] (r)--(l);
\draw[very thick,-] (ur)--(br);
\draw[very thick,-] (u)--(b);
\draw[very thick,-] (ul)--(bl);

\draw[very thick] ([shift=(30:2cm)]0,0) arc (30:60:2cm);
\draw[very thick] ([shift=(75:2cm)]0,0) arc (75:105:2cm);
\draw[very thick] ([shift=(120:2cm)]0,0) arc (120:150:2cm);
\draw[very thick] ([shift=(210:2cm)]0,0) arc (210:240:2cm);
\draw[very thick] ([shift=(255:2cm)]0,0) arc (255:285:2cm);
\draw[very thick] ([shift=(300:2cm)]0,0) arc (300:330:2cm);

\tikzstyle{every node}=[inner sep=1pt]
\node at (tau1) {$a$};
\node at (tau'1) {$a$};
\node at (tau2) {$\tau_1$};
\node at (tau'2) {$\tau'_1$};
\node at (tau3) {$\tau'_2$};
\node at (tau'3) {$\tau_2$};
\node at (tau4) {$\tau_3$};
\node at (tau'4) {$\tau'_3$};

\draw[white] (-2.5,-3)--(-2,-3);
\end{tikzpicture}
\begin{tikzpicture}[scale=0.7,>=latex,shorten >=-0.4pt,shorten <=-0.4pt]
\coordinate (center) at (0,0) {};

\coordinate (tau2) at ($(center)+(45:2.4)$) {};
\coordinate (tau'2) at ($(center)+(315:2.4)$) {};

\coordinate (tau3) at ($(center)+(90:2.4)$) {};
\coordinate (tau'3) at ($(center)+(270:2.4)$) {};

\coordinate (tau4) at ($(center)+(135:2.4)$) {};
\coordinate (tau'4) at ($(center)+(225:2.4)$) {};

\coordinate (r) at ($(center)+(0:2)$) {};
\coordinate (ur) at ($(center)+(45:2)$) {};
\coordinate (u) at ($(center)+(90:2)$) {};
\coordinate (ul) at ($(center)+(135:2)$) {};
\coordinate (l) at ($(center)+(180:2)$) {};
\coordinate (bl) at ($(center)+(225:2)$) {};
\coordinate (b) at ($(center)+(270:2)$) {};
\coordinate (br) at ($(center)+(315:2)$) {};

\draw (center) circle (2cm);

\draw[very thick,-] (r)--(l);
\draw[very thick,-] (ur)--(br);
\draw[very thick,-] (u)--(b);
\draw[very thick,-] (ul)--(bl);

\draw[very thick] ([shift=(30:2cm)]0,0) arc (30:60:2cm);
\draw[very thick] ([shift=(75:2cm)]0,0) arc (75:105:2cm);
\draw[very thick] ([shift=(120:2cm)]0,0) arc (120:150:2cm);
\draw[very thick] ([shift=(210:2cm)]0,0) arc (210:240:2cm);
\draw[very thick] ([shift=(255:2cm)]0,0) arc (255:285:2cm);
\draw[very thick] ([shift=(300:2cm)]0,0) arc (300:330:2cm);

\tikzstyle{every node}=[inner sep=1pt]
\node at (tau1) {$a$};
\node at (tau'1) {$a$};
\node at (tau2) {$\tau'_2$};
\node at (tau'2) {$\tau_2$};
\node at (tau3) {$\tau_1$};
\node at (tau'3) {$\tau'_1$};
\node at (tau4) {$\tau_3$};
\node at (tau'4) {$\tau'_3$};

\draw[white] (-2.5,-3)--(-2,-3);
\end{tikzpicture}
\caption{\label{fig:trivial_maximal_split} Maximal trivial split. 
Given $a\tau_ia\tau'_i$ is a chord model of $G_i$ for $i \in [3]$,
two examples of chord models of $G_{ov}$ obtained from these models, namely $a\tau_3\tau'_2\tau_1a\tau'_1\tau_2\tau'_3$ and $a\tau_3\tau_1\tau'_2a\tau_2\tau'_1\tau'_3$, are shown to the right.}
\end{figure}
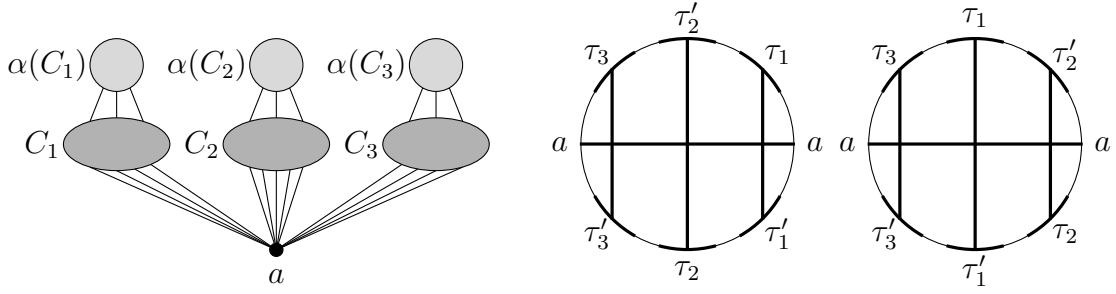
Note that the above theorem is valid for any vertex $a$ in $G_{ov}$ such that
$a$ is the articulation point in $G_{ov}$.

\subsection{Modular decomposition of $G_{ov}$}
\label{subsec:modular_decomposition}
The results presented in this subsection are due to Gallai \cite{Gal67}.

Let $G = (V,E)$ be a graph with no twins and no universal vertices and 
let $G_{ov} = (V,{\sim})$ be the overlap graph associated with $G$.

A non-empty set $M\subseteq V$ is a \emph{module} in $G_{ov}$ 
if $x\sim M$ or $x \parallel M$ for every $x\in V\smallsetminus M$.
The singleton sets and the whole $V$ are the \emph{trivial} modules of $G_{ov}$. 
The graph $(U,{\sim})$ is \emph{prime} if $(U,{\sim})$ has no modules other than the trivial ones.

A module $M$ of $G_{ov}$ is \emph{strong} if $M\subset N$, $N\subset M$, or $M\cap N=\emptyset$ for every other module $N$ in $G_{ov}$.
In particular, two strong modules are either nested or disjoint.
The \emph{modular decomposition} of $G_{ov}$, denoted by $\mathcal{M}(G_{ov})$, 
is the set containing all strong modules of $G_{ov}$.
The set $\mathcal{M}(G_{ov})$, ordered by inclusion, forms a tree in which $V$ is the root, 
maximal proper subsets in $\mathcal{M}(G_{ov})$ of a strong module $M$ are the children of $M$, 
and the leaves are the singleton modules $\{x\}$ for $x\in V$.
The children of a non-singleton module $M\in\mathcal{M}(G_{ov})$ form a partition of~$M$.

A module $M \in \mathcal{M}(G_{ov})$ is \emph{serial} if $M_1\sim M_2$ for every two children $M_1$ and $M_2$ of $M$, 
\emph{parallel} if $M_1 \parallel M_2$ for every two children $M_1$ and $M_2$ of $M$, and \emph{prime} otherwise.
Equivalently, $M \in \mathcal{M}$ is serial if $(M,{\parallel})$ is disconnected, 
parallel if $(M,{\sim})$ is disconnected, and prime if both $(M,{\sim})$ and $(M,{\parallel})$ are connected.

\subsection{Permutation subgraphs of $G_{ov}$ and the structure of its permutation models} 
\label{subsec:permutation_graphs_and_their_models}
Let $G = (V,E)$ be a circular-arc graph with no twins and no universal vertices and
let $G_{ov}$ be the overlap graph associated with $G$. 
Let $U$ be a subset of $V$.
The graph $(U,{\sim})$ is a \emph{permutation subgraph of $G_{ov}$} if there
exists a pair $(\tau^0,\tau^1)$, where $\tau^{0}$ and $\tau^{1}$ are permutations of $U$, such
that for every $x,y \in U$:
$$x \sim y \iff  
\begin{array}{l}
\text{$x$ appears before $y$ in both $\tau^{0}$ and $\tau^{1}$, or} \\
\text{$y$ appears before $x$ in both $\tau^{0}$ and $\tau^{1}$.}
\end{array}
$$
If this is the case, $(\tau^{0},\tau^{1})$ is called a \emph{permutation model} of $(U,{\sim})$.
See Figure \ref{fig:non_oriented_permutation_graph} for an example of a permutation graph and its permutation model.

\begin{figure}[htp!]
\centering
\begin{tikzpicture}[xscale=0.85,yscale=0.5,>=latex,shorten >=-0.4pt,shorten <=-0.4pt]
\coordinate (center) at (0,0) {};
\coordinate (ua) at ($(center)+(105:2cm)$) {};
\coordinate (ub) at ($(center)+(90:2cm)$) {};
\coordinate (uc) at ($(center)+(75:2cm)$) {};

\coordinate (lua) at ($(center)+(105:2.4cm)$) {};
\coordinate (lub) at ($(center)+(90:2.4cm)$) {};
\coordinate (luc) at ($(center)+(75:2.4cm)$) {};

\coordinate (bc) at ($(center)+(270:2cm)$) {};
\coordinate (bb) at ($(center)+(255:2cm)$) {};
\coordinate (ba) at ($(center)+(285:2cm)$) {};

\coordinate (lbc) at ($(center)+(270:2.4cm)$) {};
\coordinate (lbb) at ($(center)+(255:2.4cm)$) {};
\coordinate (lba) at ($(center)+(285:2.4cm)$) {};

\coordinate (tau0) at ($(center)+(60:2.4cm)$) {};
\coordinate (tau1) at ($(center)+(240:2.4cm)$) {};

\tikzstyle{every node}=[inner sep=1pt]
\begin{footnotesize}

\node at (lua) {$a$};
\node at (lba) {$a$};

\node at (lub) {$b$};
\node at (lbb) {$b$};

\node at (luc) {$c$};
\node at (lbc) {$c$};

\node at (tau0) {$\tau^0$};
\node at (tau1) {$\tau^1$};

\end{footnotesize}

\draw[very thick,<-] ([shift=(60:2cm)]0,0) arc (60:120:2cm);
\draw[very thick,<-] ([shift=(240:2cm)]0,0) arc (240:300:2cm);

\draw (ua)--(ba);
\draw (ub)--(bb);
\draw (uc)--(bc);

\end{tikzpicture}

\caption{\label{fig:non_oriented_permutation_graph} 
Permutation model $(\tau^0,\tau^1) = (abc,acb)$ of the permutation graph $(\{a,b,c\}, \{a \sim b, a \sim c)\}$.
}
\end{figure}
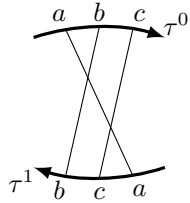

\begin{definition}
A module $U$ in $G_{ov}$ is \emph{proper} if there is a vertex $x \in V \setminus U$ such that $x \sim U$.
\end{definition}
The next claim shows that every proper module $U$ in $G_{ov}$ induces a permutation
subgraph in $G_{ov}$.
\begin{claim}
\label{claim:permutation_graphs_in_G_ov}
Suppose $U$ is a proper module in $G_{ov}$ such that $x \sim U$ for some $x \in V \setminus U$.
Then, for any chord model $\psi$ of $G_{ov}$, 
$$\psi|(U \cup \{x\}) \equiv x \tau x \tau',$$
where $(\tau,\tau')$ and $(\tau',\tau)$ are permutation models of $(M,{\sim})$.
In particular, $(U,{\sim})$ is a permutation subgraph of $G_{ov}$.
\end{claim}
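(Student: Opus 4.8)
The plan is to work directly with a chord model $\psi$ of $G_{ov}$ and the vertex $x$ with $x \sim U$, and to extract the structure of $\psi$ restricted to $U \cup \{x\}$ from the fact that $U$ is a module. First I would fix a chord model $\psi$ of $G_{ov}$ and a word representation of it; the chord $\psi(x)$ cuts the circle into two arcs, call them the \emph{upper} arc and the \emph{lower} arc, with the two endpoints of $\psi(x)$ separating them. Since $x \sim u$ for every $u \in U$, every chord $\psi(u)$ with $u \in U$ crosses $\psi(x)$, hence has exactly one endpoint on the upper arc and one endpoint on the lower arc. Reading the endpoints on the upper arc in clockwise order yields a permutation $\tau$ of $U$ (each $u \in U$ contributes exactly one endpoint there), and reading the endpoints on the lower arc yields a permutation $\tau'$ of $U$. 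This already gives $\psi|(U \cup \{x\}) \equiv x\,\tau\,x\,\tau'$, where the two occurrences of $x$ mark the two endpoints of $\psi(x)$.

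Next I would show that $(\tau,\tau')$ is a permutation model of $(U,{\sim})$. Take $u,v \in U$. Their chords $\psi(u),\psi(v)$ both cross $\psi(x)$; whether they cross each other is determined by the cyclic interleaving of their four endpoints. Labelling the endpoint of $\psi(u)$ on the upper arc by $u$ and on the lower arc by $u'$ (similarly for $v$), a short case analysis shows $\psi(u)$ and $\psi(v)$ cross if and only if the upper-arc order of $u,v$ agrees with the lower-arc order of $u,v$ — i.e. $u$ precedes $v$ in $\tau$ iff $u$ precedes $v$ in $\tau'$ — which is exactly the permutation-graph condition (here one must be careful about the orientation/reading direction of the two arcs so that ``agrees'' comes out right; with the clockwise convention it works out, and symmetrically $(\tau',\tau)$ is also a permutation model). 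This is the step that needs the crossing-pattern bookkeeping, but it is elementary.

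The one genuine point is that so far I have only used $x \sim U$, not that $U$ is a \emph{module}; the module property is needed to guarantee that $\psi|(U \cup \{x\})$ really has the clean form $x\tau x\tau'$ with no other vertices interfering — but since we restricted to $U \cup \{x\}$ this is automatic, so in fact the module hypothesis is used only to know that such an $x$ with $x \sim U$ exists and, more importantly, it is the hypothesis under which this claim will later be applied. (If one wanted the statement for a set $U$ that is $\sim$-joined to $x$ but is not a module, the conclusion still holds for the restriction; the word ``proper module'' in the statement is there because that is the context of use.) So the main obstacle is really just the orientation-convention bookkeeping in the crossing analysis of the previous paragraph — making sure the two permutations are read so that ``$u$ before $v$ in both'' corresponds to crossing rather than to nesting. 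Once that is pinned down, the conclusion ``$(U,{\sim})$ is a permutation subgraph of $G_{ov}$'' follows immediately from the definition, witnessed by $(\tau,\tau')$.
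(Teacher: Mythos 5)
Your proposal is correct and follows essentially the same route as the paper's own proof: since every chord $\psi(u)$, $u \in U$, crosses $\psi(x)$, the restriction has the form $x\tau x\tau'$, and the crossing-pattern analysis of two chords relative to $\psi(x)$ gives exactly the permutation-model condition (the paper's definition of permutation model is the "same order implies adjacent" convention, which is precisely what the clockwise reading produces, so your orientation worry resolves cleanly). Your side remark that the module hypothesis is not actually used for the restriction statement — only $x \sim U$ is — is a fair observation that the paper's proof also implicitly confirms.
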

\begin{proof}
Let $\psi$ be a chord model of $G_{ov}$.
Note that every chord $\psi(u)$ for $u \in U$ has 
its endpoints on different sides of the chord $\psi(x)$.
Thus, $\psi|(M \cup \{x\}) \equiv x \tau x \tau'$, where
$\tau$ and $\tau'$ are permutations of $U$.
Clearly, for every $u,v \in U$, $\psi(u)$ intersect $\psi(v)$ 
iff either $u$ appears before $v$ in both $\tau$ and $\tau'$ or
$v$ appears before $u$ in both $\tau$ and $\tau'$.
In particular, both $(\tau,\tau')$ and $(\tau',\tau)$ are permutation models of $(U,{\sim})$.
\end{proof}
Let $U$ be a proper module in $G_{ov}$.
Now, our goal is to describe all permutation models of $(U,{\sim})$.
To accomplish our task we use the modular decomposition of $(U,{\sim})$.
Note that the modular decomposition of $(U,{\sim})$ is associated with $\mathcal{M}(G_{ov})$ by 
the following equation:
$$\mathcal{M}(U,{\sim}) = \{M \in \mathcal{M}(G_{ov}): M \subseteq U\} \cup \{U\}.$$
An \emph{orientation} $(U,{\prec})$ of $(U,{\sim})$ is a binary relation on $U$ such that for every $u,v \in U$:
$$u \sim v \iff \text{either } u \prec v \text{ or } v \prec u.$$
In other words, an orientation $(U,\prec)$ arises by orienting every edge $u \sim v$ of $(U,{\sim})$ either
from $u$ to $v$ (denoted $u \prec v$) or from $v$ to $u$ (denoted $v \prec u$).
An orientation $(U,{\prec})$ of $(U,{\sim})$ is \emph{transitive} if ${\prec}$ is a transitive relation on $U$.
The connections between transitive orientations of $(U,{\sim})$ and the modular decomposition of $(U,{\sim})$
have been established by Gallai \cite{Gal67}.
\begin{theorem}[\cite{Gal67}]
\label{thm:prime_graph_has_two_transitive_orientations}
If $M_1,M_2 \in \mathcal{M}(U,{\sim})$ are such that $M_1 \sim M_2$, then every
transitive orientation $(U,{\prec})$ satisfies either $M_1 \prec M_2$ or $M_2 \prec M_1$.
\end{theorem}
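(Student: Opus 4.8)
The plan is to fix an arbitrary transitive orientation $(U,{\prec})$ and prove that all edges of $(U,{\sim})$ running between $M_1$ and $M_2$ point in the same direction; the statement follows at once (the case $M_1\cap M_2\neq\emptyset$ being degenerate). Assume for contradiction that this fails, so that there are $a_1,b_1\in M_1$ and $a_2,b_2\in M_2$ with $a_1\prec a_2$ but $b_2\prec b_1$.

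The first step will be a \emph{uniformity lemma}: for any module $M$ of $(U,{\sim})$ and any $y\notin M$ with $y\sim M$, the sets $M^{<}=\{x\in M:x\prec y\}$ and $M^{>}=\{x\in M:y\prec x\}$ are again modules of $(U,{\sim})$. This is easy: $\{M^{<},M^{>}\}$ partitions $M$ because every $x\in M$ is adjacent to $y$; transitivity forces $x'\prec x''$ for all $x'\in M^{<}$ and $x''\in M^{>}$, so members of $M^{>}$ are adjacent to all of $M^{<}$ and conversely; and any vertex outside $M$ sees all of $M$ the same way, hence sees $M^{<}$ and $M^{>}$ the same way. The point of the lemma is that if $y$ splits $M$ non-trivially then $(M,{\sim})$ is the complete join of the non-empty graphs $(M^{<},{\sim})$ and $(M^{>},{\sim})$, so $(M,{\parallel})$ is disconnected; when $M$ is moreover a strong module this says exactly that $M$ is a \emph{serial} node of $\mathcal{M}(U,{\sim})$.

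Next I would pass to the modular decomposition. Since $M_2$ witnesses that $M_1$ is proper and vice versa, both $M_1$ and $M_2$ are nodes of $\mathcal{M}(U,{\sim})$; let $M^{\ast}$ be their least common ancestor, so that $M_1\subseteq C_1$ and $M_2\subseteq C_2$ for two distinct children $C_1,C_2$ of $M^{\ast}$. A short argument with the module property of $C_1,C_2,M_1,M_2$ gives $C_1\sim C_2$, so $M^{\ast}$ is serial or prime, and it suffices to show that all $C_1$--$C_2$ edges agree. In the serial case $C_1$ is not a serial node (a child of a serial node is never serial), so by the uniformity lemma no $y\in C_2$ can split $C_1$: each vertex of $C_2$ lies entirely before or entirely after $C_1$, and symmetrically for $C_1$ versus $C_2$; if some $y_1\in C_2$ were after $C_1$ and some $y_2\in C_2$ before $C_1$, then $y_2\prec x_0\prec y_1$ for any $x_0\in C_1$ would split $C_2$, a contradiction. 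Hence all of $C_2$ sits on one side of $C_1$ and all $C_1$--$C_2$ edges agree. In the prime case I would look at the quotient graph $Q$ on the children $C_1,\dots,C_k$ of $M^{\ast}$: choosing one representative per child realises $Q$ as an induced subgraph of $(U,{\sim})$, hence as a prime comparability graph, which by Gallai's theorem~\cite{Gal67} has exactly two transitive orientations, reverses of one another and differing on every edge. For every choice of representatives, $\prec$ restricted to them is one of these two orientations; changing a single representative leaves the orientation of some edge (one avoiding that child, which exists because $Q$ is prime with $k\geq 4$) unchanged, so the induced orientation of $Q$ does not depend on the choice. As $C_1\sim C_2$ is an edge of $Q$ with a now-fixed direction, all $C_1$--$C_2$ edges of $(U,{\sim})$ point that way, contradicting the choice of $a_1,a_2,b_1,b_2$.

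The main obstacle will be the prime case: one has to notice that the quotient at a prime node is itself a prime comparability graph, invoke the rigidity of transitive orientations of prime graphs, and then run the representative-independence argument carefully (in particular that a prime graph on at least four vertices still has an edge after deleting any vertex). The serial case, the uniformity lemma, and the extraction of $C_1\sim C_2$ are routine once the modular-decomposition language is in place.
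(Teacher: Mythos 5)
The paper states this theorem as a result of Gallai~\cite{Gal67} and gives no proof, so there is no in-paper argument to compare yours against; I will instead assess the argument on its own terms. Your reconstruction is correct. The uniformity lemma holds as you state it: $M^{<}$ and $M^{>}$ are modules, and if both are non-empty then transitivity through $y$ forces $M^{<}\sim M^{>}$, so $(M,{\parallel})$ is disconnected and a strong $M$ must be serial. The reduction to the children $C_1\supseteq M_1$, $C_2\supseteq M_2$ of the least common ancestor $M^{\ast}$ is the right move, and your derivation of $C_1\sim C_2$ from the module property of $C_1$ and $C_2$ is sound. The serial case is clean: $C_1$ and $C_2$ are children of a serial node and hence not serial themselves, so by the contrapositive of the uniformity lemma neither can be split, and the ``some of $C_2$ before $C_1$, some after'' configuration is immediately impossible. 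The prime case is also sound: the quotient is a prime graph on $k\geq 4$ vertices, its rigidity (exactly two transitive orientations, reverses of one another) is a genuine and independent Gallai theorem about graphs with no nontrivial modules --- no circularity with the present statement --- and the one-representative-at-a-time argument correctly shows that the induced orientation on the quotient is independent of the choice, using that deleting one vertex from a prime graph on at least four vertices cannot leave an edgeless graph (else the original would be a star, which is not prime). Two minor remarks: the parenthetical about $M_1\cap M_2\neq\emptyset$ is not a case to be handled but one that cannot occur, since strong modules with $M_1\sim M_2$ are necessarily disjoint; and one should note (as your argument silently accommodates) that a singleton child is never split, so the uniformity lemma is applied vacuously there.
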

Let $M$ be a strong module in $\mathcal{M}(U,{\sim})$.
The edge relation $\sim$ in $(M,{\sim})$ restricted to the edges joining the vertices from two different children of $M$ is denoted by $\sim_M$.
If $x\sim y$, then $x\sim_My$ for exactly one module $M\in\mathcal{M}(U,{\sim})$.
Hence, the set $\{{\sim_M} : M \in \mathcal{M}(U,{\sim})\}$ forms a partition of the edge set ${\sim}$ of the graph $(U,{\sim})$.
\begin{theorem}[\cite{Gal67}]
\label{thm:transitive_orientations_versus_transitive_orientations_of_strong_modules}
There is a one-to-one correspondence between the set of transitive orientations $(U,{\prec})$ of $(U,{\sim})$
and the families $$\{(M,{\prec_M}): M \in \mathcal{M}(U,{\prec}) \text{ and } \prec_M \text{ is a transitive orientation of $(M, \sim_M)$}\}$$
given by $x \prec y \iff x \prec_M y$, where $M$ is the module in $\mathcal{M}$ such that $x \sim_M y$.
\end{theorem}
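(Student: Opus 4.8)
The plan is to realize the correspondence by an explicit pair of mutually inverse maps and then to check that each of them lands in the claimed set; the only substantial ingredient will be Theorem~\ref{thm:prime_graph_has_two_transitive_orientations}. Regard orientations as sets of ordered pairs. Given a transitive orientation $(U,{\prec})$ of $(U,{\sim})$, for every strong module $M\in\mathcal{M}(U,{\sim})$ I put $\prec_M := {\prec}\cap{\sim_M}$; conversely, given a family $\{(M,{\prec_M})\}$ in which each $\prec_M$ is a transitive orientation of $(M,{\sim_M})$, I put ${\prec} := \bigcup_M {\prec_M}$. Because $\{{\sim_M}:M\in\mathcal{M}(U,{\sim})\}$ is a partition of ${\sim}$ and $\prec_M\subseteq{\sim_M}$ for every $M$, intersecting $\bigcup_M{\prec_M}$ with one block $\sim_N$ returns $\prec_N$, and unioning the pieces ${\prec}\cap{\sim_M}$ returns ${\prec}\cap{\sim}={\prec}$; hence the two maps are inverse to one another the moment we know they are well defined -- that is, that each $\prec_M$ obtained the first way is a transitive orientation of $(M,{\sim_M})$, and that $\bigcup_M{\prec_M}$ obtained the second way is a transitive orientation of $(U,{\sim})$. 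Checking that both constructions produce genuine \emph{orientations} is routine (for each edge $x\sim y$ there is exactly one $M$ with $x\sim_M y$, and $\prec_M$ orients that edge, so $\prec$ orients it too and nothing else), so the real content is transitivity.

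Both transitivity claims reduce to one observation: \emph{if $M_1,M_2$ are two children of a strong module and $M_1\sim M_2$, then under any transitive orientation every edge between $M_1$ and $M_2$ points the same way} (either all of $M_1$ precedes all of $M_2$, or conversely). For transitive orientations of $(U,{\sim})$ this is exactly Theorem~\ref{thm:prime_graph_has_two_transitive_orientations}. For a transitive orientation $\prec_M$ of a quotient $(M,{\sim_M})$ it is immediate: the subgraph of $(M,{\sim_M})$ induced on $M_1\cup M_2$ is complete bipartite with no edge inside $M_1$ and none inside $M_2$, so a directed path $a\prec_M b\prec_M a'$ with $a,a'\in M_1$ and $b\in M_2$ would demand a nonexistent edge $a\sim_M a'$; thus every vertex of $M_2$ has all its incident edges oriented the same way, and, the bipartite graph being connected, so does the whole of $M_1\cup M_2$.

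Using this, transitivity of $\prec_M:={\prec}\cap{\sim_M}$ goes as follows. Let $x\prec_M y$ and $y\prec_M z$; then $x,y$ lie in distinct children $M_1,M_2$ of $M$ and $y,z$ in distinct children $M_2,M_3$, and transitivity of $\prec$ gives $x\prec z$, so $x\sim z$ and $x\neq z$. If $x$ and $z$ lay in the same child (i.e.\ $M_1=M_3$), then, applying the observation to $M_1\sim M_2$, the orientation $\prec$ would send all of $M_1$ before all of $M_2$ or vice versa, contradicting ``$x\prec y$ and $y\prec z$'' in either case. Hence $x,z$ lie in distinct children of $M$, so their least common ancestor in the decomposition tree is $M$, i.e.\ $x\sim_M z$, and therefore $x\prec_M z$.

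For the reverse direction, suppose $x\prec y$ and $y\prec z$ in ${\prec}=\bigcup_M{\prec_M}$, and let $M$ and $N$ be the unique strong modules with $x\sim_M y$ and $y\sim_N z$. Both contain $y$, so they are nested. If $M=N$, transitivity of $\prec_M$ finishes the job. If $M\subsetneq N$, then $x$ and $y$ lie in one child $N_2$ of $N$ while $z$ lies in a different child $N_3$, with $N_2\sim N_3$; then $x\sim_N z$, and since $y\prec_N z$ the observation applied inside $(N,{\sim_N})$ puts all of $N_2$ before all of $N_3$, so $x\prec_N z$ and hence $x\prec z$. The case $N\subsetneq M$ is symmetric. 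This establishes the bijection. The one genuinely structural step is the uniformity observation; everything else is bookkeeping with the tree of strong modules, and the hardest part -- that an orientation of $(U,{\sim})$ cannot split the edges between two $\sim$-related modules -- is handed to us by Theorem~\ref{thm:prime_graph_has_two_transitive_orientations}.
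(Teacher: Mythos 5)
The paper states this as a theorem of Gallai (\cite{Gal67}) and gives no proof, so there is no argument in the text to compare against; your job here is to reconstruct the standard argument, and you do it correctly. The two maps ${\prec}\mapsto({\prec}\cap{\sim_M})_M$ and $({\prec_M})_M\mapsto\bigcup_M{\prec_M}$ are visibly mutually inverse once each is shown to land in the advertised set, and you correctly reduce both well-definedness checks to the single uniformity observation: between two $\sim$-adjacent children of a strong module, every edge is oriented the same way under any transitive orientation. For $(U,{\sim})$ you rightly invoke Theorem~\ref{thm:prime_graph_has_two_transitive_orientations}; for the quotient $(M,{\sim_M})$ your direct argument on the complete bipartite graph (each vertex is a source or a sink, then propagate along the connected complete bipartite graph) is fine. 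The case split in the reverse direction ($M=N$, $M\subsetneq N$, $N\subsetneq M$) is exhaustive because both strong modules contain $y$ and hence are nested, and the claim that $M\subsetneq N$ forces $x,y$ into a single child $N_2$ of $N$ is exactly the tree structure of $\mathcal{M}(U,{\sim})$ — worth stating explicitly, but not a gap. No issues.
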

The above theorem asserts that every transitive orientation of $(U,{\sim})$ restricted to the edges of the graph $(M, {\sim_M})$ induces a transitive orientation of $(M, {\sim_M})$, for every $M \in \mathcal{M}(U,{\sim})$.
On the other hand, every transitive orientation of $(U,{\sim})$ can be obtained 
by independent transitive orientation of $(M,{\sim_M})$, for $M \in \mathcal{M}(U,{\sim})$.
Gallai \cite{Gal67} characterized all possible transitive orientation of strong modules $(M,{\sim_M})$.
\begin{theorem}[\cite{Gal67}]
Let $M$ be a prime module in $\mathcal{M}(U,{\sim})$. 
Then, $(M,{\sim_M})$ has two transitive orientations, one being the reverse of the other.
\end{theorem}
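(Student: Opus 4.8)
The plan is to pass to the quotient graph of $M$ by its children and then invoke Gallai's analysis of the transitive orientations of prime comparability graphs.

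\emph{Reduction to the quotient.} Let $M_1,\dots,M_r$ be the children of $M$ in $\mathcal{M}(U,{\sim})$ and let $H$ be the graph on $\{M_1,\dots,M_r\}$ with $M_iM_j\in E(H)$ iff $M_i\sim M_j$. By the structure theorem for modular decomposition, since $M$ is a strong module that is neither serial nor parallel, $H$ is a prime graph having at least one edge and at least one non-edge (so $r\ge 4$). By (the proof of) Theorem~\ref{thm:prime_graph_has_two_transitive_orientations}, applied to the children $M_1,\dots,M_r$, which are modules of $(M,{\sim_M})$, any transitive orientation of $(M,{\sim_M})$ orients all edges between $M_i$ and $M_j$ in a single direction; conversely, replacing each vertex of a transitive orientation of $H$ by the independent set $M_i$ yields a transitive orientation of $(M,{\sim_M})$. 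Thus transitive orientations of $(M,{\sim_M})$ are in bijection with those of $H$, and this bijection sends reversals to reversals. Moreover $(M,{\sim_M})$ is a comparability graph: $U$ is a proper module, so by Claim~\ref{claim:permutation_graphs_in_G_ov} the graph $(U,{\sim})$ is a permutation graph and hence admits a transitive orientation, which by Theorem~\ref{thm:transitive_orientations_versus_transitive_orientations_of_strong_modules} restricts to a transitive orientation of $(M,{\sim_M})$, and therefore of $H$. It remains to prove that a prime comparability graph $H$ has exactly two transitive orientations, one the reverse of the other.

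\emph{Prime comparability graphs.} Here I would use the forcing relation on the oriented edges of $H$. Writing $(a,b)$ for the edge $ab$ oriented from $a$ to $b$, declare $(a,b)\,\Gamma\,(a,c)$ whenever $b\ne c$ and $bc\notin E(H)$, and $(a,b)\,\Gamma\,(c,b)$ whenever $a\ne c$ and $ac\notin E(H)$. A short transitivity check shows that if a transitive orientation of $H$ contains $(a,b)$ and $(a,b)\,\Gamma\,(x,y)$, then it contains $(x,y)$; hence every transitive orientation of $H$ is a union of classes of $\Gamma^{*}$ (the reflexive--transitive closure of $\Gamma$), and it contains a whole class as soon as it meets it. Since $H$ admits a transitive orientation, no $\Gamma^{*}$-class meets the set of reverses of its own members. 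The crux is Gallai's lemma that primeness of $H$ forces a single $\Gamma^{*}$-class up to reversal: if the undirected edges underlying one class formed a proper nonempty subset $F\subsetneq E(H)$, then, via the Triangle Lemma and the attendant module-extraction argument, the set of vertices incident to $F$ would ``absorb'' its neighbours and either exhaust $V(H)$ or leave behind a nontrivial module, contradicting that $H$ is prime. Consequently $E(H)$ is a single color class, i.e.\ $\Gamma^{*}$ has exactly two classes $A$ and its reverse $\widehat A$ with $A\cup\widehat A$ equal to the set of all oriented edges, so $H$ has precisely the two transitive orientations $A$ and $\widehat A$, which are reverses of one another.

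\emph{Main obstacle and conclusion.} The genuinely nontrivial point is this last structural fact -- Gallai's Triangle Lemma together with the extraction of a module from a proper color class; the rest is bookkeeping. An alternative that is shorter in the present setting is to cite directly Gallai's result \cite{Gal67} that two transitive orientations of a graph differ only by the reversal of a nontrivial module: as $H$ is prime it has no nontrivial modules, so its only transitive orientations are a fixed one and its reverse. In either case, pushing this back through the bijection of the first paragraph shows that $(M,{\sim_M})$ has exactly two transitive orientations, one the reverse of the other.
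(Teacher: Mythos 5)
The paper does not prove this statement---it is quoted directly as a result of Gallai~\cite{Gal67} with no argument---so there is nothing internal to compare against. Your outline is a correct rendering of the classical route: pass to the prime quotient $H$ of $M$ by its children (a necessary step, since $(M,{\sim_M})$ itself is not prime once some child $M_i$ has more than one vertex), verify that $H$ is a comparability graph (here available from Claim~\ref{claim:permutation_graphs_in_G_ov} and Theorem~\ref{thm:transitive_orientations_versus_transitive_orientations_of_strong_modules}), and then apply Gallai's forcing/color-class analysis to conclude that a prime comparability graph has a unique transitive orientation up to reversal. The one genuinely nontrivial step---the Triangle Lemma together with the extraction of a nontrivial module from a proper forcing class---you leave as a citation to~\cite{Gal67}, which is exactly the paper's own treatment; so in the end both arguments rest on the same source, and your version merely makes explicit the quotient reduction and the shape of Gallai's proof.
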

A parallel module $(M,{\sim_M})$ has exactly one (empty) transitive orientation.
The transitive orientations of serial modules $(M,{\sim})$ correspond to the total orderings of its children, that
is, every transitive orientation of $(M,{\sim_M})$ is of the form $M_{i_1} \prec \ldots \prec M_{i_k}$, where
$i_1 \ldots i_k$ is a permutation of $[k]$ and $M_1, \ldots,M_k$ are the children of $M$ in $\mathcal{M}(U,{\sim})$.

Since $(U,{\sim})$ is a proper module in $G_{ov}$, then 
$(U,{\sim})$ admits a permutation model $(\tau^{0}, \tau^{1})$.
Note that $(\tau^0,\tau^1)$ yields transitive orientations of the graphs $(U,{\sim})$ and $(U,{\parallel})$
given by:
\begin{equation}
\label{eq:models_of_permutation_graphs_1}
\begin{array}{lll}
x \prec y & \iff & x \text{ occurs before } y \text{ in } \tau^0 \text{ and } x \sim y, \\
x < y & \iff &  x \text{ occurs before } y \text{ in } \tau^0 \text{ and } x \parallel y. \\
\end{array}
\end{equation}
In particular, the orientations $(U,{\prec})$ and $(U,{<})$ are consistent with the word $\tau^0$.
On the other hand, given transitive orientations ${\prec}$ and ${<}$ of $(U,{\sim})$ and $(U,{\parallel})$, respectively, one can construct a permutation model $(\tau^{0}, \tau^{1})$ of $(U,{\sim})$ such that
\begin{equation}
\label{eq:models_of_permutation_graphs_2}
\begin{array}{lll}
x \text{ occurs before } y \text{ in } \tau^{0} \iff x \prec y \text{ or } x < y,\\
x \text{ occurs before } y \text{ in } \tau^{1} \iff x \prec y \text{ or } y < x.\\
\end{array}
\end{equation}
\begin{theorem}[\cite{DM41}]
\label{thm:permutation_models_transitive_orientations}
Let $(U,{\sim})$ be a proper submodule of $G_{ov}$.
There is a one-to-one correspondence between permutation models $(\tau^{0}, \tau^{1})$ of $(U,{\sim})$
and the pairs $({<}, {\prec})$ of transitive orientations of $(U,{\parallel})$ and $(U,{\sim})$, respectively, established by equations \eqref{eq:models_of_permutation_graphs_1} and~\eqref{eq:models_of_permutation_graphs_2}.
\end{theorem}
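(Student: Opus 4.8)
The plan is to verify that the two maps described by equations~\eqref{eq:models_of_permutation_graphs_1} and~\eqref{eq:models_of_permutation_graphs_2} are well-defined and mutually inverse. First I would take a permutation model $(\tau^0,\tau^1)$ of $(U,{\sim})$ and check that the relations ${\prec}$ and ${<}$ defined from it in~\eqref{eq:models_of_permutation_graphs_1} are indeed \emph{transitive} orientations of $(U,{\sim})$ and $(U,{\parallel})$. That ${\prec}$ (resp. ${<}$) is an orientation is immediate from the defining property of a permutation graph: $x\sim y$ exactly when $x,y$ occur in the same relative order in $\tau^0$ and $\tau^1$, so the edge gets a consistent direction; when $x\parallel y$ the relative orders differ. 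Transitivity of ${\prec}$: if $x\prec y$ and $y\prec z$ then $x$ precedes $y$ precedes $z$ in $\tau^0$, hence $x$ precedes $z$ in $\tau^0$; but also $x$ precedes $y$ and $y$ precedes $z$ in $\tau^1$ (since $x\sim y$ and $y\sim z$), so $x$ precedes $z$ in $\tau^1$ as well, whence $x\sim z$ and $x\prec z$. Transitivity of ${<}$ is the dual argument using that $x\parallel y$ and $y\parallel z$ means the order reverses on each edge between $\tau^0$ and $\tau^1$, so $x$ precedes $z$ in $\tau^0$ and $z$ precedes $x$ in $\tau^1$, giving $x\parallel z$ and $x<z$.

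Next I would go the other direction: given transitive orientations ${\prec}$ of $(U,{\sim})$ and ${<}$ of $(U,{\parallel})$, I must show that the prescriptions in~\eqref{eq:models_of_permutation_graphs_2} actually define a pair of permutations $(\tau^0,\tau^1)$ and that $(\tau^0,\tau^1)$ is a permutation model of $(U,{\sim})$. The key point is that the binary relation ``$x$ before $y$ iff $x\prec y$ or $x<y$'' is a total order on $U$: it is total because for any two distinct $x,y$ exactly one of $x\sim y$ or $x\parallel y$ holds, so exactly one of the four possibilities $x\prec y$, $y\prec x$, $x<y$, $y<x$ holds; it is antisymmetric for the same reason; and transitivity follows by a short case analysis on the types of the three edges in a triple $\{x,y,z\}$ — here one uses transitivity of ${\prec}$, transitivity of ${<}$, and the fact (Gallai, Theorems~\ref{thm:prime_graph_has_two_transitive_orientations} and~\ref{thm:transitive_orientations_versus_transitive_orientations_of_strong_modules}, together with the structure of $\mathcal{M}(U,{\sim})$) that a transitively oriented comparability graph and its complement cannot jointly contain a ``mixed'' $3$-cycle. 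The same reasoning applied with ${<}$ replaced by its reverse on the $\parallel$-edges shows $\tau^1$ is well-defined. Then one checks directly from~\eqref{eq:models_of_permutation_graphs_2} that for $x\sim y$ the relative order of $x,y$ agrees in $\tau^0$ and $\tau^1$ (governed by ${\prec}$), and for $x\parallel y$ it disagrees (governed by ${<}$ versus its reverse), which is precisely the statement that $(\tau^0,\tau^1)$ is a permutation model realizing $(U,{\sim})$.

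Finally I would check that the two constructions are inverse to each other. Starting from $(\tau^0,\tau^1)$, extracting $({<},{\prec})$ via~\eqref{eq:models_of_permutation_graphs_1}, and re-applying~\eqref{eq:models_of_permutation_graphs_2} returns a pair of permutations whose induced total orders coincide with those of $\tau^0$ and $\tau^1$ by construction, hence equals $(\tau^0,\tau^1)$ as circular/linear words; conversely, starting from $({<},{\prec})$, building $(\tau^0,\tau^1)$ and reading off the orientations recovers ${\prec}$ (the edges $x\sim y$ with $x$ before $y$ in $\tau^0$) and ${<}$ (the edges $x\parallel y$ with $x$ before $y$ in $\tau^0$) exactly.

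The main obstacle is the transitivity argument in the backward direction: showing that ``$x$ before $y$ iff $x\prec y$ or $x<y$'' is genuinely transitive requires ruling out the configuration where, say, $x\prec y$, $y<z$, and $z\prec x$ (or permutations thereof), i.e. a directed $3$-cycle using a mixture of $\sim$- and $\parallel$-edges. This is exactly the classical fact underlying Dushnik--Miller's characterization of dimension-$2$ posets / permutation graphs, and the reference \cite{DM41} is the place where it is proved; I would either invoke it directly or reprove it here via the modular-decomposition description of transitive orientations established in Theorems~\ref{thm:prime_graph_has_two_transitive_orientations}--\ref{thm:transitive_orientations_versus_transitive_orientations_of_strong_modules}, handling the prime, serial, and parallel strong modules of $(U,{\sim})$ separately and then gluing.
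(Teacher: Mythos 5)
The paper does not actually prove Theorem~\ref{thm:permutation_models_transitive_orientations}: it is quoted as a classical result with the citation \cite{DM41} and no proof appears in the text, so there is no ``paper proof'' to compare against. Judged on its own merits, your proof proposal is correct in its overall structure. The forward direction (extracting ${<}$ and ${\prec}$ from a permutation model via~\eqref{eq:models_of_permutation_graphs_1} and checking transitivity of each) is fine, and the observation that the two maps are inverse by construction is right.

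The one thing you get wrong is the difficulty and attribution of the transitivity step for the backward direction. You claim that ruling out a ``mixed'' directed $3$-cycle requires Gallai's theorems on modular decomposition (Theorems~\ref{thm:prime_graph_has_two_transitive_orientations} and~\ref{thm:transitive_orientations_versus_transitive_orientations_of_strong_modules}) or a black-box appeal to \cite{DM41}. In fact it follows immediately from transitivity of the two orientations alone and needs no structural machinery. Suppose the tournament ``$x$ before $y$ iff $x\prec y$ or $x<y$'' had a directed $3$-cycle on $\{a,b,c\}$. If all three arcs are $\prec$-arcs (respectively $<$-arcs), transitivity of $\prec$ (respectively $<$) gives $a\prec a$ (respectively $a<a$), a contradiction. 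If exactly two arcs are $\prec$-arcs, say $a\prec b$, $b\prec c$, $c<a$, then transitivity of $\prec$ gives $a\prec c$, so $a\sim c$, contradicting $c\parallel a$ implied by $c<a$. If exactly one arc is a $\prec$-arc, say $a\prec b$, $b<c$, $c<a$, then transitivity of $<$ applied to $b<c$ and $c<a$ forces $b<a$, hence $b\parallel a$, contradicting $a\sim b$. So the union $\prec\cup<$ is automatically acyclic, total and antisymmetric, hence a linear order; the same applies with $<$ replaced by its reverse to define $\tau^1$. This four-line check is exactly the content of~\cite{DM41}; invoking modular decomposition and handling prime/serial/parallel modules separately, as you propose, would be a long detour for a fact that is already in hand.
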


\subsection{Modular decomposition $\mathcal{M}(G_{ov})$ and chord models of $G_{ov}$}
In this subsection we describe properties of chord models of $G_{ov}$
with respect to the modular decomposition of the graph $G_{ov}$.

The purpose of the next lemmas is to describe the restrictions of 
chord models of $G_{ov}$ to the modules $M$ from $\mathcal{M}(G_{ov})$.
For this purpose, for a given module $M \in \mathcal{M}(G_{ov})$, we define
\begin{equation*}
\begin{array}{ccl}
N[M] &=& \{x \in V \setminus M: x \sim M\}, \\
C[M] &=& \text{the connected component of $G_{ov}$ containing the module $M$.}
\end{array}
\end{equation*}
Note that $C[M]$ is not defined in the case when $M=V$ and $G_{ov}$ is disconnected.

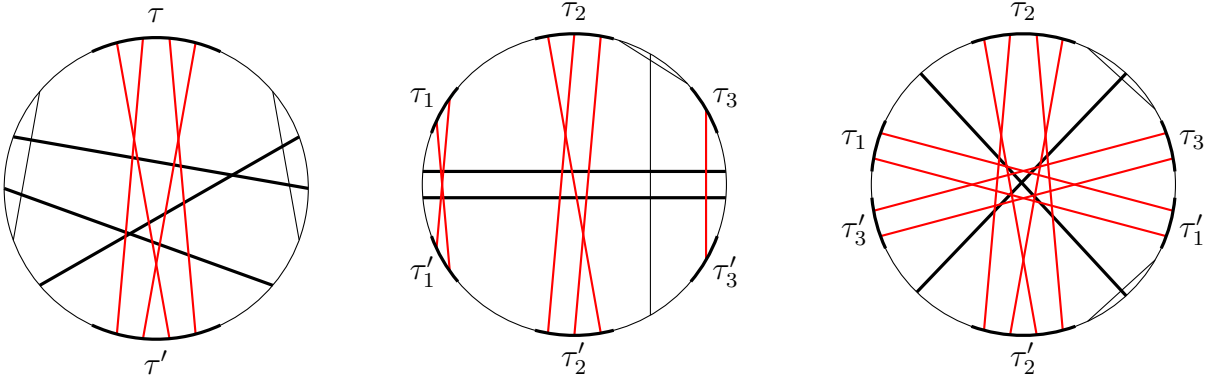
\begin{figure}[htp!]
\centering
\begin{tikzpicture}[scale=1]
\coordinate (center) at (0,0) {};
\coordinate (ltau) at ($(center)+(90:2.3cm)$) {};
\coordinate (ltau') at ($(center)+(270:2.3cm)$) {};

\coordinate (u4) at ($(center)+(75:2cm)$) {};
\coordinate (u3) at ($(center)+(85:2cm)$) {};
\coordinate (u2) at ($(center)+(95:2cm)$) {};
\coordinate (u1) at ($(center)+(105:2cm)$) {};

\coordinate (b4) at ($(center)+(255:2cm)$) {};
\coordinate (b3) at ($(center)+(265:2cm)$) {};
\coordinate (b2) at ($(center)+(275:2cm)$) {};
\coordinate (b1) at ($(center)+(285:2cm)$) {};

\coordinate (l5) at ($(center)+(140:2cm)$) {};
\coordinate (l4) at ($(center)+(160:2cm)$) {};
\coordinate (l3) at ($(center)+(180:2cm)$) {};
\coordinate (l2) at ($(center)+(200:2cm)$) {};
\coordinate (l1) at ($(center)+(220:2cm)$) {};

\coordinate (r5) at ($(center)+(-40:2cm)$) {};
\coordinate (r4) at ($(center)+(-20:2cm)$) {};
\coordinate (r3) at ($(center)+(0:2cm)$) {};
\coordinate (r2) at ($(center)+(20:2cm)$) {};
\coordinate (r1) at ($(center)+(40:2cm)$) {};

\draw ($(center)$) circle (2cm);

\draw[black] (l2)-- (l5);
\draw[black] (r1)-- (r4);

\draw[black,very thick] (l1)-- (r2);
\draw[black,very thick] (l3)-- (r5);
\draw[black,very thick] (l4)-- (r3);

\draw[thick,red] (u1)-- (b2);
\draw[thick,red] (u2)-- (b4);
\draw[thick,red] (u3)-- (b1);
\draw[thick,red] (u4)-- (b3);

\draw[very thick] ([shift=(245:2cm)]0,0) arc (245:295:2cm);
\draw[very thick] ([shift=(65:2cm)]0,0) arc (65:115:2cm);

\tikzstyle{every node}=[inner sep=1pt]
\node at (ltau) {$\tau$};
\node at (ltau') {$\tau'$};

\end{tikzpicture} 
\hspace{1cm}
%parallel
\begin{tikzpicture}[scale=1]
\coordinate (center) at (0,0) {};
\coordinate (ltau1) at ($(center)+(150:2.3cm)$) {};
\coordinate (ltau'1) at ($(center)+(210:2.3cm)$) {};
\coordinate (ltau2) at ($(center)+(90:2.3cm)$) {};
\coordinate (ltau'2) at ($(center)+(270:2.3cm)$) {};
\coordinate (ltau3) at ($(center)+(30:2.3cm)$) {};
\coordinate (ltau'3) at ($(center)+(330:2.3cm)$) {};

\coordinate (tau1u1) at ($(center)+(145:2cm)$) {};
\coordinate (tau1u2) at ($(center)+(155:2cm)$) {};

\coordinate (tau1b1) at ($(center)+(205:2cm)$) {};
\coordinate (tau1b2) at ($(center)+(215:2cm)$) {};

\coordinate (tau2u1) at ($(center)+(80:2cm)$) {};
\coordinate (tau2u2) at ($(center)+(90:2cm)$) {};
\coordinate (tau2u3) at ($(center)+(100:2cm)$) {};

\coordinate (tau2b1) at ($(center)+(260:2cm)$) {};
\coordinate (tau2b2) at ($(center)+(270:2cm)$) {};
\coordinate (tau2b3) at ($(center)+(280:2cm)$) {};

\coordinate (tau3u1) at ($(center)+(30:2cm)$) {};
\coordinate (tau3b1) at ($(center)+(330:2cm)$) {};

\draw ($(center)$) circle (2cm);

\draw[black,very thick] ($(center)+(5:2cm)$)--($(center)+(175:2cm)$);
\draw[black,very thick] ($(center)+(-5:2cm)$)--($(center)+(185:2cm)$);
\draw[black] ($(center)+(60:2cm)$)--($(center)+(300:2cm)$);
\draw[black] ($(center)+(42:2cm)$)--($(center)+(73:2cm)$);

\draw[red,thick] (tau1u1)-- (tau1b1);
\draw[red,thick] (tau1u2)-- (tau1b2);

\draw[red,thick] (tau2u1)-- (tau2b2);
\draw[red,thick] (tau2u2)-- (tau2b1);
\draw[red,thick] (tau2u3)-- (tau2b3);

\draw[red,thick] (tau3u1)-- (tau3b1);

\draw[very thick] ([shift=(20:2cm)]0,0) arc (20:40:2cm);
\draw[very thick] ([shift=(75:2cm)]0,0) arc (75:105:2cm);
\draw[very thick] ([shift=(140:2cm)]0,0) arc (140:160:2cm);

\draw[very thick] ([shift=(200:2cm)]0,0) arc (200:220:2cm);
\draw[very thick] ([shift=(255:2cm)]0,0) arc (255:285:2cm);
\draw[very thick] ([shift=(320:2cm)]0,0) arc (320:340:2cm);

\tikzstyle{every node}=[inner sep=1pt]
\node at (ltau1) {$\tau_1$};
\node at (ltau'1) {$\tau'_1$};
\node at (ltau2) {$\tau_2$};
\node at (ltau'2) {$\tau'_2$};
\node at (ltau3) {$\tau_3$};
\node at (ltau'3) {$\tau'_3$};

\end{tikzpicture} 
\hspace{1cm}
\begin{tikzpicture}[scale=1]
\coordinate (center) at (0,0) {};
\coordinate (ltau1) at ($(center)+(165:2.3cm)$) {};
\coordinate (ltau'1) at ($(center)+(345:2.3cm)$) {};
\coordinate (ltau3) at ($(center)+(195:2.3cm)$) {};
\coordinate (ltau'3) at ($(center)+(15:2.3cm)$) {};
\coordinate (ltau2) at ($(center)+(90:2.3cm)$) {};
\coordinate (ltau'2) at ($(center)+(270:2.3cm)$) {};

\coordinate (u4) at ($(center)+(75:2cm)$) {};
\coordinate (u3) at ($(center)+(85:2cm)$) {};
\coordinate (u2) at ($(center)+(95:2cm)$) {};
\coordinate (u1) at ($(center)+(105:2cm)$) {};

\coordinate (b4) at ($(center)+(255:2cm)$) {};
\coordinate (b3) at ($(center)+(265:2cm)$) {};
\coordinate (b2) at ($(center)+(275:2cm)$) {};
\coordinate (b1) at ($(center)+(285:2cm)$) {};

\coordinate (tau1u1) at ($(center)+(160:2cm)$) {};
\coordinate (tau1u2) at ($(center)+(170:2cm)$) {};
\coordinate (tau1b1) at ($(center)+(340:2cm)$) {};
\coordinate (tau1b2) at ($(center)+(350:2cm)$) {};

\coordinate (tau3u1) at ($(center)+(10:2cm)$) {};
\coordinate (tau3u2) at ($(center)+(20:2cm)$) {};
\coordinate (tau3b1) at ($(center)+(190:2cm)$) {};
\coordinate (tau3b2) at ($(center)+(200:2cm)$) {};

\draw ($(center)$) circle (2cm);

\draw[very thick] ($(center)+(47.5:2cm)$)-- ($(center)+(225.5:2cm)$);
\draw ($(center)+(30:2cm)$)-- ($(center)+(65:2cm)$);

\draw[very thick] ($(center)+(132.5:2cm)$)-- ($(center)+(312.5:2cm)$);
\draw ($(center)+(295:2cm)$)-- ($(center)+(330:2cm)$);

\draw[thick,red] (tau1u1)-- (tau1b2);
\draw[thick,red] (tau1u2)-- (tau1b1);

\draw[thick,red] (tau3u1)-- (tau3b2);
\draw[thick,red] (tau3u2)-- (tau3b1);

\draw[thick,red] (u1)-- (b2);
\draw[thick,red] (u2)-- (b4);
\draw[thick,red] (u3)-- (b1);
\draw[thick,red] (u4)-- (b3);

\draw[very thick] ([shift=(5:2cm)]0,0) arc (5:25:2cm);
\draw[very thick] ([shift=(70:2cm)]0,0) arc (70:110:2cm);
\draw[very thick] ([shift=(155:2cm)]0,0) arc (155:175:2cm);

\draw[very thick] ([shift=(185:2cm)]0,0) arc (185:205:2cm);
\draw[very thick] ([shift=(250:2cm)]0,0) arc (250:290:2cm);
\draw[very thick] ([shift=(335:2cm)]0,0) arc (335:355:2cm);

\tikzstyle{every node}=[inner sep=1pt]
\node at (ltau1) {$\tau_1$};
\node at (ltau'1) {$\tau'_1$};
\node at (ltau2) {$\tau_2$};
\node at (ltau'2) {$\tau'_2$};
\node at (ltau3) {$\tau'_3$};
\node at (ltau'3) {$\tau_3$};

\end{tikzpicture} 

\caption{\label{fig:chord_models_for_proper_modules} The restriction of $\psi$ to $C[M]$. 
Chords associated with the module $M$ are red, chords associated with $N[M]$ are bolded. 
From left to right: $M$ is proper prime (Lemma \ref{lemma:circle_models_of_a_proper_prime_module}), $M$ is proper parallel (Lemma \ref{lemma:circle_models_of_a_parallel_module}), and $M$ is proper serial (Lemma \ref{lemma:circle_models_of_a_serial_module}
).}
\end{figure}

\begin{claim}
\label{claim:circle_models_of_proper_modules_N_M}
Suppose $M$ is a proper prime or a proper parallel module in $\mathcal{M}(G_{ov})$.
For any chord model $\psi$ of $G_{ov}$ we have
$$\psi|(M \cup N[M]) \equiv \pi \tau \pi' \tau',$$
where $(\tau,\tau')$ is a permutation model of $(M,{\sim})$ and $\pi, \pi'$ are permutations of $N[M]$.
In other words, $\psi|M$ forms two contiguous subwords in the circular word $\psi|(M \cup N[M])$ (see Figure \ref{fig:chord_models_for_proper_modules}).
\end{claim}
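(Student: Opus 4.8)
The plan is to derive everything from one application of Claim~\ref{claim:permutation_graphs_in_G_ov} followed by a careful analysis of where the endpoints of the chords of $N[M]$ can lie. Since $M$ is proper, fix $x\in N[M]$ with $x\sim M$. Claim~\ref{claim:permutation_graphs_in_G_ov} gives $\psi|(M\cup\{x\})\equiv x\tau x\tau'$, where $(\tau,\tau')$ is a permutation model of $(M,{\sim})$. Hence on the circle the $2|M|$ endpoints of the chords of $M$ occupy two disjoint arcs --- the arc $A_\tau$ spelling $\tau$ and the arc $A_{\tau'}$ spelling $\tau'$ --- while the complement of $A_\tau\cup A_{\tau'}$ consists of two arcs $G_1,G_2$ (the ``gaps''), each containing exactly one endpoint of $\psi(x)$ and no endpoint of any chord of $M$. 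I claim it suffices to show that every $y\in N[M]$ has exactly one endpoint of $\psi(y)$ in $G_1$ and one in $G_2$: granting this, the endpoints of the chords of $N[M]$ lying in $G_1$ spell a permutation $\pi$ of $N[M]$ and those lying in $G_2$ spell a permutation $\pi'$ of $N[M]$, so reading the circle cyclically yields $\psi|(M\cup N[M])\equiv\pi\tau\pi'\tau'$; since $N[M]\neq\emptyset$, the two blocks $\tau$ and $\tau'$ of $M$-letters are genuinely separated, which is the ``two contiguous subwords'' statement, and $(\tau,\tau')$ is a permutation model of $(M,{\sim})$ as required.

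To prove the key statement, fix $y\in N[M]$. Since $y\sim M$, the chord $\psi(y)$ crosses $\psi(u)$ for \emph{every} $u\in M$; equivalently, each of the two arcs into which $\psi(y)$ cuts the circle contains exactly one endpoint of $\psi(u)$, simultaneously for all $|M|$ vertices $u$. I would then split into cases according to the two endpoints of $\psi(y)$. If both endpoints lie inside the same gap, $\psi(y)$ separates no chord of $M$, a contradiction. If one endpoint lies strictly inside $A_\tau$ (between two consecutive $\tau$-endpoints) while the other lies in a gap or also strictly inside $A_\tau$ --- and symmetrically with $A_{\tau'}$ in place of $A_\tau$ --- then a direct count shows that one of the two arcs cut off by $\psi(y)$ carries only the $\tau$-endpoints of a proper non-empty subset of $M$, so $\psi(y)$ misses some chord of $M$, again a contradiction. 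None of these cases uses the hypothesis on $M$; they are pure bookkeeping about points on a circle.

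The genuinely delicate case --- and the only place the hypothesis ``$M$ prime or parallel'' is needed --- is when one endpoint of $\psi(y)$ lies strictly inside $A_\tau$, say just after the first $a$ letters of $\tau$, and the other lies strictly inside $A_{\tau'}$, just after the first $b$ letters of $\tau'$. Tracking which chords of $M$ are crossed, one finds that $\psi(y)$ crosses \emph{all} chords of $M$ only if $a=b$ and the set $L$ of the first $a$ letters of $\tau$ equals the set of the first $a$ letters of $\tau'$, with $L$ and $M\setminus L$ both non-empty. But then every vertex of $L$ precedes every vertex of $M\setminus L$ in \emph{both} $\tau$ and $\tau'$, so the definition of a permutation model forces $L\sim(M\setminus L)$; equivalently there is no $\parallel$-edge between $L$ and $M\setminus L$, so $(M,{\parallel})$ is disconnected. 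This contradicts $M$ being prime (where both $(M,{\sim})$ and $(M,{\parallel})$ are connected) or parallel (where $(M,{\parallel})$ contains a complete multipartite graph on the $\geq2$ children of $M$, hence is connected). So this configuration cannot occur, the key statement holds, and the claim follows. I expect this final case --- ruling out a chord $\psi(y)$ that would ``split'' $M$ into two blocks appearing in the same order in both halves of the model --- to be the main obstacle; the rest is routine.
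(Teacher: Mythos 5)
Your proof is correct and uses essentially the same strategy as the paper: apply Claim~\ref{claim:permutation_graphs_in_G_ov} to get $\psi|(M\cup\{x\})\equiv x\tau x\tau'$, then show that an endpoint of any $\psi(y)$ with $y\in N[M]$ landing strictly inside the $\tau$- or $\tau'$-block forces a nontrivial bipartition $L,M\setminus L$ with $L\sim M\setminus L$, contradicting connectivity of $(M,{\parallel})$. The paper's proof is terser---it writes down only the crucial configuration (one endpoint inside $\tau$, and then derives that the other must be inside $\tau'$ and partitions $\tau'$ compatibly)---whereas you explicitly enumerate and discharge the easy subcases first, but the substance and the contradiction are identical.
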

\begin{proof}
Since $M$ is proper, we can pick $x \in V \setminus M$ such that $x \sim M$.
Orient the chord $\psi(x)$ arbitrarily.
By Claim \ref{claim:permutation_graphs_in_G_ov} we have that
$$\psi|(M \cup \{x\})  \equiv x^{0} \tau x^{1} \tau',$$
where $(\tau,\tau')$ is a permutation model of $(M,{\sim})$.
We need to show that
$$\psi|(M_i \cup N[M]) \equiv \pi \tau \pi' \tau',$$
where $\pi, \pi'$ are some permutations of $N[M]$.

Fix $z \in N[M]$ such that $z \neq x$.
Suppose for a contradiction that the chord $\psi(z)$ has one of its ends between
the ends of the chords corresponding to the letters of $\tau$.
That is, suppose that $x^{0}\tau_1 z \tau_2 x^{1}$ is a subword of the circular word $\psi|(M \cup \{x,z\})$,
where $\tau_1$ and $\tau_2$ are non-empty words such that $\tau_1\tau_2 = \tau$.
Now, consider a partition of $M$ into two sets, $M_1$ and $M_2$:
$$
M_1 = \{u \in M: u \in \tau_1\}  \text{ and }  M_2 = \{u \in M: u \in \tau_2\}.
$$
Since every $\psi(u)$ for $u \in M$ must intersect $\psi(z)$, 
we have that $x^{1} \tau'_1 z \tau'_2 x^{0}$ is a subword of $\psi|(M \cup \{z,t\})$, where
$\tau'_i$ is a permutation of the set of the letters in $\tau_i$ for every $i \in [2]$.
It means, in particular, that every two chords $\psi(u_1)$ and $\psi(u_2)$ for $u_1 \in M_1$ and $u_2 \in M_2$ intersect.
So, we have $M_1 \sim M_2$, which contradicts that $M$ is a prime or a parallel module in $\mathcal{M}(G_{ov})$.
\end{proof}

\begin{lemma}
\label{lemma:circle_models_of_a_proper_prime_module}
Suppose $M$ is a proper prime module in $\mathcal{M}(G_{ov})$ and
suppose $\psi$ is a chord model of $G_{ov}$.
Then, 
$$\psi|M = \tau\tau',$$ 
where $(\tau,\tau')$ is a permutation model of $(M,{\sim})$. 
Moreover, $\tau$ and $\tau'$ are contiguous subwords in the circular word $\psi|C[M]$ (see Figure \ref{fig:chord_models_for_proper_modules}).
\end{lemma}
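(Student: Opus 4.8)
The plan is to bootstrap from Claim~\ref{claim:circle_models_of_proper_modules_N_M}, which already tells us that $\psi|(M \cup N[M]) \equiv \pi\tau\pi'\tau'$ with $(\tau,\tau')$ a permutation model of $(M,{\sim})$ and $\pi,\pi'$ permutations of $N[M]$. So the permutation-model assertion is immediate; what remains is the statement about contiguity in the circular word $\psi|C[M]$, namely that $\tau$ and $\tau'$ each appear as contiguous blocks when we restrict $\psi$ to the whole connected component $C[M]$, not merely to $M \cup N[M]$. The heart of the matter is therefore to rule out that some chord $\psi(z)$, with $z \in C[M] \setminus (M \cup N[M])$ — hence $z \parallel M$ — has exactly one endpoint interleaved among the endpoints of the letters of $\tau$ (or symmetrically of $\tau'$).

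First I would set up the contradiction hypothesis: assume $\psi|(M \cup \{z\})$ contains a subword of the form $\tau_1 z \tau_2$ where $\tau_1,\tau_2$ are nonempty with $\tau_1\tau_2 = \tau$ (up to cyclic rotation / choice of which of the two $M$-blocks we look at), i.e. one end of $\psi(z)$ falls strictly inside the block $\tau$ while the other end lies outside $M \cup N[M]$ entirely. Since $z \in C[M]$, there is a path in $G_{ov}$ from $z$ to $M$; pick a shortest such path and let $x$ be its last vertex before entering $M$, so $x \sim M$, i.e. $x \in N[M]$, and $x$ lies on a path to $z$ whose internal vertices avoid $M$. Now exploit that $\psi(x)$ crosses every chord $\psi(u)$, $u \in M$: as in the proof of Claim~\ref{claim:circle_models_of_proper_modules_N_M}, $\psi(x)$ separates the two blocks $\tau$ and $\tau'$. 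Tracking where the endpoints of $\psi(z)$ sit relative to $\psi(x)$ and chasing the connecting path (whose chords all avoid crossing $M$-chords, since those internal vertices are $\parallel M$), I would argue that $\psi(z)$ is forced to have an endpoint inside $\tau'$ as well — i.e. the ``interleaving'' propagates to the mirror block — which then lets me partition $M = M_1 \sqcup M_2$ according to which side of $\psi(z)$ each $M$-chord's $\tau$-endpoint lies, and conclude $M_1 \sim M_2$ exactly as in Claim~\ref{claim:circle_models_of_proper_modules_N_M}. This contradicts primeness of $M$.

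The main obstacle is making the propagation step rigorous: from ``$\psi(z)$ has one end inside the $\tau$-block and one end outside $M \cup N[M]$'' one must carefully derive ``$\psi(z)$ (or some chord reachable from it along a monochromatic-to-$M$ path) also pokes into the $\tau'$-block,'' which is where the connectivity of $C[M]$ and the Helly-like behaviour of chords crossing all of $M$ get used; the subtlety is that $z$ itself need not cross into $\tau'$, so one may have to induct along the path or argue about the region of the disk bounded by $\psi(x)$ and the $M$-chords. A cleaner alternative I would try first: apply Claim~\ref{claim:circle_models_of_proper_modules_N_M} not to $M$ but note that $M$ is also a proper module in the induced subgraph $(C[M],{\sim})$ (it is still prime, still proper since $N[M]\neq\emptyset$ within the component), and then observe that any chord with an endpoint strictly between the letters of $\tau$ and the other endpoint anywhere on the circle would, together with the separating chord $\psi(x)$, already yield the forbidden split $M_1 \sim M_2$ — so in fact \emph{no} chord of $\psi|C[M]$ has an endpoint interior to the $\tau$-block except those of $\tau$ itself, giving contiguity directly. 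I expect the second route to be the shorter one, with the first serving as a fallback if the ``other endpoint anywhere'' case needs the path argument after all.
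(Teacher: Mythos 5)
Your setup correctly reduces the problem via Claim~\ref{claim:circle_models_of_proper_modules_N_M} to showing that no chord $\psi(z)$ with $z \in C[M]\setminus(M\cup N[M])$, hence $z\parallel M$, has an endpoint strictly interior to the $\tau$- or $\tau'$-block. But you mislabel the contradiction. Since $z\parallel M$, the chord $\psi(z)$ crosses \emph{no} $M$-chord, so when $\psi(z)$ partitions $M$ into the chords lying on its two sides, the two groups are completely separated by $\psi(z)$ and therefore pairwise non-crossing: the split is $M_1\parallel M_2$, disconnecting $(M,{\sim})$, not $M_1\sim M_2$. Your appeal to ``exactly as in Claim~\ref{claim:circle_models_of_proper_modules_N_M}'' does not transfer: that argument hinges on $z\in N[M]$, i.e.\ $z\sim M$, which forces every $M$-chord to cross $\psi(z)$ and yields the opposite split. (Both splits contradict primeness of $M$, so this particular slip is reparable, but the cited mechanism is the wrong one.)

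The more serious gap is that your ``cleaner alternative'' does not bypass the propagation step you yourself flag as the main obstacle in the first route. It is simply false that every chord with an endpoint strictly inside the $\tau$-block produces a forbidden split: if \emph{both} endpoints of $\psi(z)$ land between the same two consecutive $M$-endpoints of $\tau$ (a ``pocket'' chord), then $\psi(z)$ separates no $M$-chords and no partition of $M$ results, so no contradiction follows from $\psi(z)$ alone; adding the separating chord $\psi(x)$ does not help, since $\psi(z)$ sits entirely on one side of it. Ruling out pocket chords is precisely where connectivity of $(C[M],{\sim})$ enters, and it requires the path-chasing of your first route: along a shortest $\sim$-path from $z$ to $M\cup N[M]$, each successive chord entering the pocket arc, being $\parallel M$, must either remain in the pocket or have its other end strictly inside $\tau'$; the terminal vertex lies in $M\cup N[M]$ and cannot place a chord end in the pocket arc (its endpoints are at $M$-positions, or in the gaps by the Claim), so some earlier vertex $u$ on the path escapes into $\tau'$ and supplies the chord with both endpoints interior to the blocks, giving $M_1\parallel M_2$ and the contradiction. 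So the first route, made precise and with the split corrected, is the one that works; the second route reopens the very hole it was meant to close. For comparison, the paper's proof takes essentially the first route but compresses the pocket discussion into a single appeal to connectivity and modularity before asserting the existence of such a $u$.
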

\begin{proof}
By Claim \ref{claim:circle_models_of_proper_modules_N_M},
$\psi|(M \cup N[M]) \equiv \pi \tau \pi' \tau'$, where $\pi, \pi'$ are permutations of $N[M]$.
Clearly, $N[M] \subset C[M]$.
To complete the proof of the lemma 
it suffices to show that either $\psi|(M \cup \{v\}) \equiv v v\tau \tau'$  or $\psi|(M \cup \{v\}) \equiv \tau v v \tau'$ for every $v \in C[M] \smallsetminus N[M]$.
Assume otherwise. 
Since $(C[M],{\sim})$ is connected and since $M$ is a module in $(C[M],{\sim})$, 
there is $u \in C[M] \smallsetminus N[M]$ 
such that
$$\psi|(M \cup \{u\}) \equiv \tau_1 u \tau_2 \tau'_2 u \tau'_1,$$
where $\tau_1, \tau_2$ and $\tau'_1, \tau'_2$ are such that $\tau_1\tau_2 = \tau$ and $\tau'_2\tau'_1=\tau'$,
both $\tau_1, \tau_2$ are non-empty or both $\tau'_1, \tau'_2$ are non-empty.
Since $u \parallel M$, we conclude that $\tau_i$ is a permutation of $\tau'_i$ for $i \in [2]$.
Hence, the sets
$$M_1 = \{w \in M: w \in \tau_1\} \text{ and } M_2 = \{w \in M: w \in \tau_2\}$$
are such that $M_1 \neq \emptyset$, $M_2 \neq \emptyset$, and $M_1 \parallel M_2$.
So, $(M, {\sim})$ is not connected, which contradicts the fact that $M$ is a prime module in $\mathcal{M}(G_{ov})$.
\end{proof}

\begin{lemma}
\label{lemma:circle_models_of_a_parallel_module}
Suppose $M$ is a proper parallel module in $\mathcal{M}(G_{ov})$ with children $M_1,\ldots,M_k$ and
suppose $\psi$ is a chord model of $G_{ov}$.
Then, 
$$\psi|M \equiv \tau_{i_1} \ldots \tau_{i_k} \tau'_{i_k} \ldots \tau'_{i_1},$$
where $(i_1,\ldots,i_k)$ is a permutation of $[k]$ and $(\tau_{i_j}, \tau'_{i_j})$
is a permutation model of $(M_{i_j},{\sim})$ for every $j \in [k]$.
Moreover, for every $j \in [k]$ the set $\psi|M_{i_j}$ consists of two contiguous subwords, $\tau_{i_j}$ and $\tau'_{i_j}$,  in the circular word $\psi|C[M]$ (see Figure \ref{fig:chord_models_for_proper_modules}).
\end{lemma}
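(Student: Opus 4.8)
The plan is to combine Claim~\ref{claim:circle_models_of_proper_modules_N_M}, which already breaks $\psi|M$ into two blocks, with the description of permutation models of $(M,{\sim})$ coming from Theorem~\ref{thm:permutation_models_transitive_orientations} and the Gallai theory of transitive orientations, and finally to upgrade contiguity from $\psi|(M\cup N[M])$ to $\psi|C[M]$ by rerunning the proof of Lemma~\ref{lemma:circle_models_of_a_proper_prime_module} on each child of $M$.

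First I would invoke Claim~\ref{claim:circle_models_of_proper_modules_N_M}: since $M$ is a proper parallel module, $\psi|(M\cup N[M])\equiv\pi\tau\pi'\tau'$ with $(\tau,\tau')$ a permutation model of $(M,{\sim})$ and $\pi,\pi'$ permutations of $N[M]$, so $\psi|M\equiv\tau\tau'$ as a circular word. By Theorem~\ref{thm:permutation_models_transitive_orientations} the model $(\tau,\tau')$ corresponds to a pair $({<},{\prec})$ of transitive orientations of $(M,{\parallel})$ and $(M,{\sim})$, tied to $(\tau,\tau')$ by \eqref{eq:models_of_permutation_graphs_1}--\eqref{eq:models_of_permutation_graphs_2}; in particular, with $\tau=\tau^0$ and $\tau'=\tau^1$, a letter $x$ precedes $y$ in $\tau$ iff $x\prec y$ or $x<y$, and $x$ precedes $y$ in $\tau'$ iff $x\prec y$ or $y<x$.

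Next I would use that $M$ is parallel: its children $M_1,\dots,M_k$ are pairwise in relation $\parallel$, so $M$ is ``serial'' in the complement $(M,{\parallel})$, and by Gallai's description of the transitive orientations of a serial module (applied to the complement) the orientation $<$ induces a total order of the children; write $M_{i_1}<\dots<M_{i_k}$ for that order. A cross-child pair $x\in M_{i_a}$, $y\in M_{i_b}$ with $a<b$ has $x\parallel y$ and $x<y$, hence $x$ precedes $y$ in $\tau$ and $y$ precedes $x$ in $\tau'$; so the children occur unmixed and, setting $\tau_{i_j}:=\tau|M_{i_j}$ and $\tau'_{i_j}:=\tau'|M_{i_j}$, we get $\tau\equiv\tau_{i_1}\dots\tau_{i_k}$, $\tau'\equiv\tau'_{i_k}\dots\tau'_{i_1}$, and therefore $\psi|M\equiv\tau_{i_1}\dots\tau_{i_k}\tau'_{i_k}\dots\tau'_{i_1}$. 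Since restricting a permutation model of $(M,{\sim})$ to a subset of $M$ gives a permutation model of the induced subgraph, each $(\tau_{i_j},\tau'_{i_j})$ is a permutation model of $(M_{i_j},{\sim})$. (The transitive-orientation step can be bypassed: if two distinct children were mixed inside $\tau$, connectivity of $(M_{i_a},{\sim})$ would force a $\sim$-edge between $M_{i_a}$ and $M_{i_b}$, which is impossible.)

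For the ``moreover'' clause I would fix $j$ and rerun the proof of Lemma~\ref{lemma:circle_models_of_a_proper_prime_module} with $M_{i_j}$ in place of $M$: this is legitimate because $M_{i_j}$ is a module of $(C[M],{\sim})$, which equals $C[M_{i_j}]$; because $N[M_{i_j}]=N[M]$ (vertices outside $M$ relate to $M_{i_j}$ just as to $M$, and siblings of $M_{i_j}$ are $\parallel M_{i_j}$); and because $(M_{i_j},{\sim})$ is connected, a child of a parallel module being never parallel. Starting from $\psi|M_{i_j}\equiv\tau_{i_j}\tau'_{i_j}$ (just obtained) instead of from Claim~\ref{claim:circle_models_of_proper_modules_N_M}, one shows that every $v\in C[M]\setminus M_{i_j}$ with $v\parallel M_{i_j}$ has $\psi|(M_{i_j}\cup\{v\})\equiv vv\tau_{i_j}\tau'_{i_j}$ or $\tau_{i_j}vv\tau'_{i_j}$ --- otherwise connectivity of $(C[M],{\sim})$ together with $M_{i_j}$ being a module yields a vertex straddling a nonempty proper part of $\psi|M_{i_j}$, i.e.\ a partition of $M_{i_j}$ into two nonempty $\parallel$-parts, contradicting connectivity of $(M_{i_j},{\sim})$; while any $v\in N[M]$ satisfies $\psi|(M_{i_j}\cup\{v\})\equiv v\tau_{i_j}v\tau'_{i_j}$, so it too leaves $\tau_{i_j}$ and $\tau'_{i_j}$ contiguous. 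I expect the main obstacle to be exactly this last step --- faithfully transcribing the bookkeeping of Lemma~\ref{lemma:circle_models_of_a_proper_prime_module}, in particular ruling out a chord lying entirely between two consecutive letters of one block --- although it requires no idea beyond that proof.
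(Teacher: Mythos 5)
Your proposal is correct, but for the first (unmixed) part of the conclusion it takes a genuinely different route from the paper. The paper does \emph{not} pass through Claim~\ref{claim:circle_models_of_proper_modules_N_M} together with Theorem~\ref{thm:permutation_models_transitive_orientations}/Gallai; instead it picks some $x\in V\setminus M$ with $x\sim M$, observes that each child $M_i$ is prime or serial and hence $(M_i,{\sim})$ is connected, so $x$ is an articulation vertex of $(M\cup\{x\},{\sim})$, and then appeals to Theorem~\ref{thm:trivial_split_representations} (the split--decomposition result) to obtain, in a single step, $\psi|(M\cup\{x\})\equiv x\,\tau_{i_1}\dots\tau_{i_k}\,x\,\tau'_{i_k}\dots\tau'_{i_1}$ with each $(\tau_{i_j},\tau'_{i_j})$ a permutation model of $(M_{i_j},{\sim})$. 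Your path --- extracting a permutation model $(\tau,\tau')$ of $(M,{\sim})$ from Claim~\ref{claim:circle_models_of_proper_modules_N_M}, translating it to a pair of transitive orientations $({<},{\prec})$ via Theorem~\ref{thm:permutation_models_transitive_orientations}, and then using Gallai's description of a serial module (applied to $(M,{\parallel})$, which has the same strong modules) to force the children to appear in an unmixed total order --- is correct and is arguably a bit more work, though it has the virtue of staying entirely within the modular-decomposition/transitive-orientation machinery and not invoking split decomposition; your alternative one-line connectivity argument (a mixed pair forces two $\sim$-adjacent vertices of $M_{i_a}$ to appear in opposite orders in $\tau$ and $\tau'$, a contradiction) is also sound. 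For the ``moreover'' clause your plan coincides with the paper's, which simply says to rerun the argument of Lemma~\ref{lemma:circle_models_of_a_proper_prime_module} on each child; your remarks that $C[M_{i_j}]=C[M]$, $N[M_{i_j}]=N[M]$, and $(M_{i_j},{\sim})$ is connected are exactly the hypotheses that make that re-run legitimate.
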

\begin{proof}
Since $M$ is proper, we can pick $x \in V \setminus M$ such that $x \sim M$.
Since $M$ is parallel, $M_i$ is either serial or prime.
In particular, $(M_i,{\sim})$ is connected for every $i \in [k]$.
Since $x$ is an articulation point in $(M \cup \{x\}, {\sim})$, by Theorem \ref{thm:trivial_split_representations} we have that
$$\psi|(M \cup \{x\}) \equiv x \tau_{i_1} \ldots \tau_{i_k} x \tau'_{i_k} \ldots \tau'_{i_1},$$
where $(i_1,\ldots,i_k)$ is a permutation of $[k]$ and $(\tau_{i_j}, \tau'_{i_j})$
is a permutation model of $(M_{i_j},{\sim})$ for every $j \in [k]$.
Since $M$ is proper parallel, by Claim \ref{claim:circle_models_of_proper_modules_N_M} we have that 
$$\psi|(M \cup N[M]) \equiv \pi \tau_{i_1} \ldots \tau_{i_k} \pi' \tau'_{i_1} \ldots \tau'_{i_k},$$
where $\pi$ and $\pi'$ are permutations of $N[M]$.
Now, using an argument similar to those used in the previous lemma,
we prove that $\tau_{i_j}$ and $\tau'_{i_j}$ are contiguous subwords in the circular word $\psi|C[M]$.
\end{proof}

\begin{lemma}
\label{lemma:circle_models_of_a_serial_module}
Suppose $M$ is a serial module in $\mathcal{M}(G_{ov})$ with children $M_1,\ldots,M_k$
and suppose $\psi$ is a chord model of $G_{ov}$.
Then
$$\psi|M \equiv \tau_{i_1} \ldots \tau_{i_k} \tau'_{i_1} \ldots \tau'_{i_k},$$
where $(i_1,\ldots,i_k)$ is a circular permutation of $[k]$ 
and $(\tau_{i_j}, \tau'_{i_j})$ is a permutation model of $(M_{i_j},{\sim})$ for every $j \in [k]$.
Moreover, for every $j \in [k]$ the set $\psi|M_{i_j}$ consists of two contiguous subwords, 
$\tau_{i_j}$ and $\tau'_{i_j}$, in the circular word $\psi|C[M]$ (see Figure \ref{fig:chord_models_for_proper_modules}).
\end{lemma}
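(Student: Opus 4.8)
The plan is to imitate the proofs of Lemmas~\ref{lemma:circle_models_of_a_proper_prime_module} and~\ref{lemma:circle_models_of_a_parallel_module}, the essential difference being that the children $M_1,\dots,M_k$ of a serial module $M$ are pairwise \emph{adjacent}, i.e. $M_i\sim M_j$ for all $i\ne j$, and $k\ge 2$. Since $M$ is serial, $(M,{\sim})$ is connected; hence $C[M]=C[M_i]$ for every $i$ (so the statement also makes sense, and the argument goes through, in the case $M=V$), and every child $M_i$ is a \emph{proper} module of $G_{ov}$, being seen by all of $M_j$ for any $j\ne i$.

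First I would fix the structure of a single child. Choose $i\in[k]$ and $x\in M_j$ for some $j\ne i$. Claim~\ref{claim:permutation_graphs_in_G_ov} applied to the proper module $M_i$ and the vertex $x$ gives $\psi|(M_i\cup\{x\})\equiv x\,\tau_i\,x\,\tau'_i$, where both $(\tau_i,\tau'_i)$ and $(\tau'_i,\tau_i)$ are permutation models of $(M_i,{\sim})$; in particular $\psi|M_i$ breaks into exactly the two contiguous subwords $\tau_i$ and $\tau'_i$, one lying on each side of the chord $\psi(x)$. (One could instead use Lemma~\ref{lemma:circle_models_of_a_proper_prime_module} or Lemma~\ref{lemma:circle_models_of_a_parallel_module} together with Claim~\ref{claim:circle_models_of_proper_modules_N_M} according to the type of $M_i$, but no finer information about $\tau_i$ is needed here.)

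The core step describes how the two blocks of $M_i$ interleave with the two blocks of $M_j$. Take any $u\in M_i$ and $v\in M_j$ with $i\ne j$; as $u\sim v$, the chords $\psi(u),\psi(v)$ cross, so their four endpoints alternate around the circle. One endpoint of $\psi(u)$ lies in $\tau_i$ and the other in $\tau'_i$, and similarly for $v$ with $\tau_j,\tau'_j$, so this alternation forces the four blocks to occur in the cyclic order $\tau_i,\tau_j,\tau'_i,\tau'_j$ in the circular word $\psi|M$ -- up to swapping $\tau_i\leftrightarrow\tau'_i$ or $\tau_j\leftrightarrow\tau'_j$, which is harmless since each choice still yields a permutation model. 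Now associate with each child $M_i$ the ``super-chord'' joining its two blocks in $\psi|M$; the observation above says these $k$ super-chords pairwise cross, and they have $2k$ distinct endpoints, namely the $2k$ blocks. By the elementary fact that $k$ pairwise crossing chords on $2k$ points of a circle must form the antipodal matching, the $2k$ blocks occur in cyclic order $B_1,\dots,B_k,B'_1,\dots,B'_k$ with $\{B_\ell,B'_\ell\}$ being the pair of blocks of some child $M_{i_\ell}$. Putting $\tau_{i_\ell}:=B_\ell$ and $\tau'_{i_\ell}:=B'_\ell$ we obtain $\psi|M\equiv\tau_{i_1}\cdots\tau_{i_k}\tau'_{i_1}\cdots\tau'_{i_k}$, where $(i_1,\dots,i_k)$ is a permutation of $[k]$ determined only up to a rotation of the $2k$-cycle, i.e. a circular permutation, and each $(\tau_{i_\ell},\tau'_{i_\ell})$ is a permutation model of $(M_{i_\ell},{\sim})$. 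Since $k\ge2$, the two blocks of $M_{i_\ell}$ are separated in $\psi|M$ by blocks of other children, so they really are two contiguous subwords of $\psi|M$.

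Finally I would lift contiguity from $\psi|M$ to $\psi|C[M]$, i.e. show that no vertex $z\in C[M]\setminus M$ has an endpoint strictly inside some block $\tau_{i_\ell}$. As in the proof of Lemma~\ref{lemma:circle_models_of_a_proper_prime_module}, argue by contradiction: since $M$ is a module, such a $z$ is adjacent either to all of $M$ or to none of it, so $\psi(z)$ crosses all $M$-chords or none of them; together with the fact that each block sits on a prescribed side of the super-chord of every other child $M_{i_m}$ ($m\ne\ell$), this pins $\psi(z)$ so that it cannot reach inside $\tau_{i_\ell}$, and connectivity of $(C[M],{\sim})$ then propagates the conclusion to all of $C[M]$. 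The step I expect to be the main obstacle is the combinatorial heart of the core paragraph: one must make sure that the block-level crossing extracted from a single pair of representatives is consistent for all choices, and that the local relabelling freedom $\tau_i\leftrightarrow\tau'_i$ is resolved simultaneously for all children -- which is precisely what the antipodal-matching fact accomplishes; the $\psi|C[M]$ bookkeeping at the end is routine and mirrors the earlier lemmas.
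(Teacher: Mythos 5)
Your route is genuinely different from the paper's. The paper obtains the block cyclic structure $\psi|M \equiv \tau_{i_1}\ldots\tau_{i_k}\tau'_{i_1}\ldots\tau'_{i_k}$ essentially as a black box, by invoking the split–representation machinery (Theorems~\ref{thm:non_trivial_split_representations}/\ref{thm:trivial_split_representations} applied to $(M,{\sim})$, the children $M_1,\dots,M_k$ being exactly the $\diamond$-classes of the evident split), and then handles $\psi|C[M]$ separately through Claim~\ref{claim:circle_models_of_proper_modules_N_M} plus the alternation contradiction for $u\parallel M$. Your proof instead tries to re-derive the antipodal structure from first principles via the ``super-chord'' crossing argument, which, if it worked, would be a nice self-contained alternative.

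However, there is a real gap at the start of your core step, and it propagates. Claim~\ref{claim:permutation_graphs_in_G_ov} applied to $M_i$ and a \emph{single} vertex $x\in M_j$ gives that the $M_i$-endpoints split into two halves, one on each side of the chord $\psi(x)$ — and nothing more. It does \emph{not} say that $\psi|M_i$ forms two contiguous arcs in $\psi|M$: a priori, different vertices $v,v'\in M\setminus M_i$ could split the $M_i$-endpoints into different halves, or equivalently the $M_i$-endpoints could form more than two arcs in $\psi|M$. Yet your next sentence already treats $\tau_i,\tau'_i$ as genuine blocks ``in the circular word $\psi|M$'' and immediately invokes the super-chord/antipodal-matching fact, which requires precisely this two-arc property to make sense. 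Your parenthetical remark — that one ``could instead'' use Lemma~\ref{lemma:circle_models_of_a_proper_prime_module} or Lemma~\ref{lemma:circle_models_of_a_parallel_module} ``but no finer information about $\tau_i$ is needed'' — is exactly backwards: that finer information (the two-arc contiguity of $\psi|M_i$ inside $\psi|C[M]$) is the content you are missing, and those lemmas are the intended way to supply it for a proper prime/parallel child. (One can also derive the consistency of the halves directly from the type of $M_i$, but that is a nontrivial argument that you neither give nor gesture at.) The last paragraph on lifting contiguity to $\psi|C[M]$ has the same problem: the claim that the block structure ``pins $\psi(z)$ so that it cannot reach inside $\tau_{i_\ell}$'' when $z\sim M$ is asserted without justification, whereas the paper supplies it via Claim~\ref{claim:circle_models_of_proper_modules_N_M}. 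If you had actually used Lemmas~\ref{lemma:circle_models_of_a_proper_prime_module}/\ref{lemma:circle_models_of_a_parallel_module} up front, the two-arcs-in-$\psi|C[M]$ statement for each child would be immediate, your super-chord argument would then be sound, and the final paragraph would become unnecessary; so the fix is the very lemma application you explicitly declined.
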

\begin{proof}
Since $M$ is serial, $M_i$ is either prime or parallel.
Moreover, since $x \sim M_i$ for every $x \in M \setminus M_i$, $M_i$ is proper.
From Theorem \ref{thm:trivial_split_representations}, we deduce that
$$\psi|M \equiv \tau_{i_1} \ldots \tau_{i_k} \tau'_{i_1} \ldots \tau'_{i_k},$$
where $(i_1,\ldots,i_k)$ is a circular permutation of $[k]$ 
and $(\tau_{i_j}, \tau'_{i_j})$ is a permutation model of $(M_{i_j},{\sim})$ for every $j \in [k]$.

Now, it remains to prove that $\tau_{i_j}$ and $\tau'_{i_j}$ are contiguous subwords in $\psi|C[M]$.
From Claim \ref{claim:circle_models_of_proper_modules_N_M} applied to every child of $M$, 
for every $x \in N[M]$ we must have
$$\psi|(M \cup \{x\}) \equiv \tau_{i_1} \ldots \tau_{i_j} x \tau_{i_{j+1}} \ldots \tau_{i_k}\tau'_{i_1} \ldots \tau'_{i_j} x \tau'_{i_{j+1}} \ldots \tau'_{i_k}$$
for some $j \in [k]$.
Assume that $\psi|M_{i_j}$ does not form two contiguous subwords in $\psi|C[M]$ for some $j \in [k]$.
By the above observation and by the connectivity of $(C[M],{\sim})$, there is $u \in C[M] \setminus N[M]$ such that
$u \parallel M$ and $\psi|(M_{i_j} \cup \{u\}) \equiv \tau_1 u \tau_2 \tau'_2 u \tau'_1$, where
$\tau_1, \tau_2$ and $\tau'_1, \tau'_2$ are such that $\tau_1\tau_2 = \tau_{i_j}$
and $\tau'_{2}\tau'_1 = \tau'_{i_j}$, and both $\tau_1, \tau_2$ or both $\tau'_1, \tau'_2$ are non-empty.
Since $u \parallel M_{i_j}$, $\tau_i$ is a permutation of $\tau'_i$ for $i \in [2]$.
Then, note that $\psi(u)$ must intersect every chord from $\psi(M \setminus M_{i_j})$,
which is not possible as $u \parallel M$.
\end{proof}

\section{The structure of all conformal models of $G_{ov}$}
\label{sec:conformal_models}
The goal of this section is to describe the structure of all conformal models of $G_{ov}$.
We split our work into subsections that cover the following cases:
\begin{itemize}
 \item Subsection \ref{subsec:conformal_models_of_serial_modules}: $(V,{\parallel})$ is disconnected, which corresponds to the case when $V$ is a serial module in $\mathcal{M}(G_{ov})$.
 \item Subsection \ref{subsec:conformal_models_of_improper_prime_modules}: $(V,{\sim})$ and $(V,{\parallel})$ are connected, which corresponds to the case when $V$ is an improper prime module in $\mathcal{M}(G_{ov})$.
 \item Subsection \ref{subsec:conformal_models_of_improper_parallel_modules}: $(V,{\sim})$ is disconnected, which corresponds to the case when $V$ is an improper parallel module in $\mathcal{M}(G_{ov})$.
\end{itemize}
Moreover, in the subsequent subsections we characterize all conformal models of
$(M,{\sim})$, where $M$ is a serial, an improper prime, and an improper parallel module in $\mathcal{M}(G_{ov})$, respectively.
We start with some preparatory results contained in Subsection \ref{subsec:conformal_models_of_proper_prime_parallel_modules}, where
we examine the restrictions of conformal models of $G_{ov}$ to proper prime and proper parallel modules in $\mathcal{M}(G_{ov})$.

\subsection{Proper prime and proper parallel modules of $G_{ov}$}
\label{subsec:conformal_models_of_proper_prime_parallel_modules}
The ideas presented in this subsection naturally naturally extend Spinrad's work
on co-bipartite circular-arc graphs \cite{Spin88}.
In particular, the algorithm described in Claim \ref{claim:invariants_for_proper_parallel_and_proper_prime_modules}
was used by Spinrad \cite{Spin88} in his recognition algorithm for the class of co-bipartite circular-arc graphs \cite{Spin88}.

Suppose $M$ is a proper prime or a proper parallel module in $\mathcal{M}(G_{ov})$.
Suppose $C[M]$ is the connected component containing $M$ 
and suppose $r$ is a fixed vertex in $M$, called the \emph{representant} of $M$.

Let $\phi$ be a conformal model of $G_{ov}$ and let $\phi'$
be the restriction of $\phi$ to $C[M]$.
By Lemmas \ref{lemma:circle_models_of_a_proper_prime_module} and \ref{lemma:circle_models_of_a_parallel_module}, 
$\phi'|M$ forms two contiguous subwords, $\tau^{0}_{\phi}$ and $\tau^{1}_{\phi}$, in the circular word $\phi'$. 
We indexed $\tau^0_{\phi},\tau^1_{\phi}$ such that $\tau^{j}_{\phi}$ contains the letter $r^{j}$ for $j \in \{0,1\}$. 
Suppose also that $M^{0}_{\phi}$ and $M^{1}_{\phi}$ are the sets consisting of all labeled letters in the words $\tau^{0}_{\phi}$ and $\tau^{1}_{\phi}$, respectively.
Note that $M^{0}_{\phi}$ and $M^{1}_{\phi}$ are labeled copies of $M$ and $\{M_{\phi}^{0},M_{\phi}^{1}\}$ forms a partition of $M^{*}$.
Note that the pair $(\tau^0_{\phi}, \tau^1_{\phi})$ is
an oriented permutation model of $(M,{\sim})$.
The non-oriented permutation model $(\tau^0_{\phi}, \tau^1_{\phi})$ corresponds, according to Theorem \ref{thm:permutation_models_transitive_orientations}, to the pair of transitive orientations $({<^{0}_{\phi}},{\prec^1_{\phi}})$ 
of $(M,{\parallel})$ and $(M,{\sim})$, respectively.
It turns out that the transitive orientation ${<^{0}_\phi}$ and the sets $M^{0}_{\phi}, M^{1}_{\phi}$ are independent on the choice of a conformal model $\phi$ of $G_{ov}$.
\begin{claim}
\label{claim:invariants_for_proper_parallel_and_proper_prime_modules}
There is a transitive orientation $(M,{<_M})$ of $(M,{\parallel})$ and there are labeled copies $M^{0}$ and $M^{1}$ of $M$ forming a partition of $M^{*}$, such that
$$(M^{0}_{\phi},M^{1}_{\phi},<^{0}_{\phi}) = (M^0,M^1,{<_{M}}) \quad \text{for every conformal model $\phi$ of $G_{ov}$.}$$
\end{claim}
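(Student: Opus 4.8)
The plan is to show that the entire triple $(M^{0}_{\phi},M^{1}_{\phi},{<^{0}_{\phi}})$ is already determined by the intersection matrix $M_G$ (equivalently, by the relations $\di,\cs,\cd,\cc,\sim$ on $V$), so that it cannot vary with the conformal model $\phi$. Throughout I would rely on the facts recorded just before the statement: writing $\phi'=\phi|C[M]$, the restriction $\phi'|M$ splits into the two contiguous subwords $\tau^{0}_{\phi},\tau^{1}_{\phi}$ of $\phi'$; each of these is a (labeled) permutation of $M$, so for every $v\in M$ exactly one of $v^{0},v^{1}$ lies in $M^{0}_{\phi}$; and $r^{0}\in M^{0}_{\phi}$ by the indexing convention. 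I would also use that $(M,{\parallel})$ is connected (this holds by definition when $M$ is proper parallel, and because both $(M,{\sim})$ and $(M,{\parallel})$ are connected when $M$ is proper prime), and the trichotomy that every pair $u,v\in M$ satisfies exactly one of $u\di v,\ u\cs v,\ u\cd v,\ u\cc v,\ u\sim v$, with $u\parallel v$ meaning one of the first four.

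The core step — and the one I expect to be the main obstacle — is a local rigidity statement, essentially Spinrad's observation \cite{Spin88}: for $u,v\in M$ with $u\parallel v$, the clockwise cyclic order of the four labeled letters $u^{0},u^{1},v^{0},v^{1}$ in any conformal model $\phi$ depends only on which of $u\di v,\ u\cs v,\ u\cd v,\ u\cc v$ holds. The argument I have in mind: the $M_G$-entry between $u$ and $v$ fixes on which side of $u$ the vertex $v$ sits, namely $v\in\leftside(u)$ for $u\cs v$ or $u\cc v$ and $v\in\rightside(u)$ for $u\di v$ or $u\cd v$, and symmetrically it fixes whether $u\in\leftside(v)$ or $u\in\rightside(v)$; by conformality this says precisely in which of the two arcs cut out by $\phi(u)$ both endpoints $v^{0},v^{1}$ lie, and in which arc cut out by $\phi(v)$ both endpoints $u^{0},u^{1}$ lie. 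Since $u\parallel v$ forces $\phi(u)$ and $\phi(v)$ not to cross, the four endpoints form one of only four possible cyclic patterns, and a short case check shows that the map sending such a pattern to the pair (arc of $\phi(u)$ holding $v^{0},v^{1}$; arc of $\phi(v)$ holding $u^{0},u^{1}$) is a bijection onto the four left/right $\times$ left/right combinations. Hence the pattern is read off from $M_G$.

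Granting this, the remainder is bookkeeping driven by a propagation along $(M,{\parallel})$. Since $M^{0}_{\phi}$ is the letter set of the contiguous subword $\tau^{0}_{\phi}$ of $\phi'$, for any $u,v\in M$ the set $M^{0}_{\phi}\cap\set{u^{0},u^{1},v^{0},v^{1}}$ is a block of consecutive letters in the (now known) cyclic pattern; when $u\parallel v$ it contains exactly one endpoint of $u$ and one of $v$, and the two consecutive pairs of this kind use complementary endpoints of $u$ and of $v$, so knowing which endpoint of $u$ lies in $M^{0}_{\phi}$ forces which endpoint of $v$ does. Rooting a spanning tree of the connected graph $(M,{\parallel})$ at $r$ and starting from $r^{0}\in M^{0}_{\phi}$, I would push this forcing down the tree to determine, for every $v\in M$, which of $v^{0},v^{1}$ lies in $M^{0}_{\phi}$; all of it depends only on $M_G$, so $M^{0}_{\phi}$, and $M^{1}_{\phi}=M^{*}\setminus M^{0}_{\phi}$, are the same for every $\phi$; call them $M^{0}$ and $M^{1}$. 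Finally, for $u\parallel v$ with their $M^{0}$-endpoints $u^{*},v^{*}$, one has $u<^{0}_{\phi}v$ iff $u^{*}$ precedes $v^{*}$ in $\tau^{0}_{\phi}$; since the other two endpoints lie in $\tau^{1}_{\phi}$ and $\tau^{0}_{\phi}$ is read clockwise, this order is the clockwise order of the consecutive pair $\set{u^{*},v^{*}}$ inside the cyclic pattern of the core step, hence is again determined by $M_G$. Thus $<^{0}_{\phi}$ equals a fixed transitive orientation $<_{M}$ of $(M,{\parallel})$, and taking $M^{0},M^{1},<_{M}$ as these common values — which are labeled copies of $M$ partitioning $M^{*}$, respectively a transitive orientation of $(M,{\parallel})$, by the remarks preceding the statement — proves the claim.
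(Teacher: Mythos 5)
Your proposal is correct and follows essentially the same route as the paper: the paper's proof runs a BFS over $(M,{\parallel})$ starting at $r$, at each step deciding the orientation of the newly visited vertex $v$ and the sign of $u<_{M}v$ from the $\leftside/\rightside$ relations between $v$ and an already-visited neighbour $u\parallel v$, which is precisely your local-rigidity ``core step'' combined with propagation along a spanning tree rooted at $r$. The only cosmetic difference is that you isolate the four-pattern bijection (cyclic order of $u^{0},u^{1},v^{0},v^{1}$ determined by which of $\di,\cs,\cd,\cc$ holds) as a standalone observation before propagating, whereas the paper interleaves the same case analysis directly into the BFS loop.
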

\begin{proof}
Suppose $\phi$ is a conformal model of $G_{ov}$ and suppose $u \in M$.
We say that:
\begin{itemize}
 \item $\phi(u)$ \emph{is oriented from $M_{\phi}^{0}$ to $M^{1}_{\phi}$}
if $u^{0} \in M_{\phi}^{0}$ and $u^{1} \in M_{\phi}^{1}$,
\item  $\phi(u)$ \emph{is oriented from $M_{\phi}^{1}$ to $M^{0}_{\phi}$} if $u^{0} \in M_{\phi}^{1}$ and $u^{1} \in M_{\phi}^{0}$.
\end{itemize}
Since in any conformal model $\phi$ the chord $\phi(r)$ is oriented from 
$M^{0}_\phi$ to $M^{1}_{\phi}$, we add $r^0$ to $M^0$ and $r^1$ to $M^1$.
Next, the algorithm traverses the graph $(M,{\parallel})$ in the bfs order starting at the vertex $r$.
For every visited vertex $u$ the algorithm has already decided whether $u^0 \in M^0$ and $u^1 \in M^1$ or
whether $u^1 \in M^0$ and $u^0 \in M^1$.
Moreover, the algorithm keeps the invariant that $u^0 \in M^0$ and $u^1 \in M^1$
iff for every conformal model $\phi$ the chord
$\phi(u)$ is oriented from $M^{0}_{\phi}$ to $M^{1}_{\phi}$ and $u^1 \in M^0$ and $u^0 \in M^1$
iff for every conformal model $\phi$ the chord
$\phi(u)$ is oriented from $M^{1}_{\phi}$ to $M^{0}_{\phi}$.
When the algorithm visits a new vertex $v$, it iterates over visited vertices $u \in M$ such that $u \parallel v$ and does the following.
If $u^0 \in M^0$ and $u^1 \in M^1$ (in every conformal model $\phi$ the chord $\phi(u)$
is oriented from $M^0_\phi$ to $M^1_\phi$ -- see Figure \ref{fig:spinrad_algorithm}), the algorithm does the following:
\begin{itemize}
   \item if $v \in \leftside(u)$ and $u \in \leftside(v)$, then $u^0$ appears before $v^1$ in $\tau^0_\phi$, and $\phi(v)$ is oriented from $M^1_{\phi}$ to $M^{0}_\phi$ in every conformal model $\phi$ (see Figure \ref{fig:spinrad_algorithm}). 
   So, the algorithm orients $u \parallel v$ such that $u <_M v$, and inserts $v^0$ to $M_1$ and $v^1$ to $M^0$.
   \item if $v \in \leftside(u)$ and $u \in \rightside(v)$, then $u^0$ appears before $v^0$ in $\tau^0_\phi$, and $\phi(v)$ is oriented from $M^0_{\phi}$ to $M^{1}_\phi$ in every conformal model $\phi$ (see Figure \ref{fig:spinrad_algorithm}).
   So, the algorithm orients $u \parallel v$ such that $u <_M v$, and inserts $v^0$ to $M_0$ and $v^1$ to $M^1$.
   \item if $v \in \rightside(u)$ and $u \in \rightside(v)$, then $v^1$ appears before $u^0$ in $\tau^0_\phi$, and $\phi(v)$ is oriented from $M^1_{\phi}$ to $M^{0}_\phi$ in every conformal model $\phi$ (see Figure \ref{fig:spinrad_algorithm}).
   So the algorithm orients $u \parallel v$ such that $v <_M u$, and inserts $v^0$ to $M_1$ and $v^1$ to $M^0$.
   \item if $v \in \rightside(u)$ and $u \in \leftside(v)$, then $v^0$ appears before $u^0$ in $\tau^0_\phi$, and $\phi(v)$ is oriented from $M^0_{\phi}$ to $M^{1}_\phi$ in every conformal model $\phi$ (see Figure \ref{fig:spinrad_algorithm}).
   So, the algorithm orients $u \parallel v$ such that $v <_M u$, and inserts $v^0$ to $M_0$ and $v^1$ to $M^1$.
\end{itemize}
We proceed similarly for the case when $u^0 \in M^{1}$ and $u^{0} \in M^{1}$.

One can easily check that the triple $(M^0,M^1,{<_M})$ satisfies the thesis of the lemma.
\end{proof}

\begin{figure}[htp!]
%leftleft
\begin{tikzpicture}[scale=0.65,>=latex,shorten >=-0.4pt,shorten <=-0.4pt]

\coordinate (lu) at (0.6,0) {};
\coordinate (lv) at (-1.25,0) {};
\coordinate (ltau1) at (0,2.4) {};
\coordinate (ltau0) at (0,-2.4) {};

\tikzstyle{every node}=[inner sep=1pt]
\begin{scriptsize}
\node at (lu) {$\phi(u)$};
\node at (lv) {$\phi(v)$};
\node at (ltau0) {$\tau^0_{\phi}$};
\node at (ltau1) {$\tau^1_{\phi}$};
\end{scriptsize}
\draw (0,0) circle (2cm);
\draw[red,<-,thick] ([shift=(90:2cm)]0,0)--([shift=(270:2cm)]0,0);
\draw[black,->,thick] ([shift=(110:2cm)]0,0)--([shift=(250:2cm)]0,0);

\draw[very thick,->] ([shift=(135:2cm)]0,0) arc (135:45:2cm);
\draw[very thick,<-] ([shift=(235:2cm)]0,0) arc (235:315:2cm);

\end{tikzpicture}
\hspace{0.2cm}
\begin{tikzpicture}[scale=0.65,>=latex,shorten >=-0.4pt,shorten <=-0.4pt]

\coordinate (lu) at (0.6,0) {};
\coordinate (lv) at (-1.25,0) {};
\coordinate (ltau1) at (0,2.4) {};
\coordinate (ltau0) at (0,-2.4) {};

\tikzstyle{every node}=[inner sep=1pt]
\begin{scriptsize}
\node at (lu) {$\phi(u)$};
\node at (lv) {$\phi(v)$};
\node at (ltau0) {$\tau^0_{\phi}$};
\node at (ltau1) {$\tau^1_{\phi}$};
\end{scriptsize}
\draw (0,0) circle (2cm);
\draw[red,<-,thick] ([shift=(90:2cm)]0,0)--([shift=(270:2cm)]0,0);
\draw[black,<-,thick] ([shift=(110:2cm)]0,0)--([shift=(250:2cm)]0,0);

\draw[very thick,->] ([shift=(135:2cm)]0,0) arc (135:45:2cm);
\draw[very thick,<-] ([shift=(235:2cm)]0,0) arc (235:315:2cm);
\end{tikzpicture}
\hspace{0.2cm}
\begin{tikzpicture}[scale=0.65,>=latex,shorten >=-0.4pt,shorten <=-0.4pt]

\coordinate (lu) at (-0.6,0) {};
\coordinate (lv) at (1.25,0) {};
\coordinate (ltau1) at (0,2.4) {};
\coordinate (ltau0) at (0,-2.4) {};

\tikzstyle{every node}=[inner sep=1pt]
\begin{scriptsize}
\node at (lu) {$\phi(u)$};
\node at (lv) {$\phi(v)$};
\node at (ltau0) {$\tau^0_{\phi}$};
\node at (ltau1) {$\tau^1_{\phi}$};
\end{scriptsize}
\draw (0,0) circle (2cm);
\draw[red,<-,thick] ([shift=(90:2cm)]0,0)--([shift=(270:2cm)]0,0);
\draw[black,->,thick] ([shift=(70:2cm)]0,0)--([shift=(290:2cm)]0,0);

\draw[very thick,->] ([shift=(135:2cm)]0,0) arc (135:45:2cm);
\draw[very thick,<-] ([shift=(235:2cm)]0,0) arc (235:315:2cm);
\end{tikzpicture}
\hspace{0.2cm}
\begin{tikzpicture}[scale=0.65,>=latex,shorten >=-0.4pt,shorten <=-0.4pt]

\coordinate (lu) at (-0.6,0) {};
\coordinate (lv) at (1.25,0) {};
\coordinate (ltau1) at (0,2.4) {};
\coordinate (ltau0) at (0,-2.4) {};

\tikzstyle{every node}=[inner sep=1pt]
\begin{scriptsize}
\node at (lu) {$\phi(u)$};
\node at (lv) {$\phi(v)$};
\node at (ltau0) {$\tau^0_{\phi}$};
\node at (ltau1) {$\tau^1_{\phi}$};
\end{scriptsize}
\draw (0,0) circle (2cm);
\draw[red,<-,thick] ([shift=(90:2cm)]0,0)--([shift=(270:2cm)]0,0);
\draw[black,<-,thick] ([shift=(70:2cm)]0,0)--([shift=(290:2cm)]0,0);

\draw[very thick,->] ([shift=(135:2cm)]0,0) arc (135:45:2cm);
\draw[very thick,<-] ([shift=(235:2cm)]0,0) arc (235:315:2cm);
\end{tikzpicture}

\caption{\label{fig:spinrad_algorithm}}
\end{figure}
We say that $u \in M$ is \emph{oriented from $M^0$ to $M^1$ ($M^1$ to $M^0$)} if $u^0 \in M^0$ and $u^1 \in M^1$
($u^0 \in M^1$ and $u^1 \in M^0$, respectively).
So, $u$ is oriented from $M^0$ to $M^1$ iff $\phi(u)$ is oriented from $M^0_{\phi}$ to $M^1_{\phi}$ in any conformal model $\phi$.

The tuple $\mathbb{M} = (M^{0},M^{1},{<_M})$ defined by the above claim is called the \emph{metaedge} of the module $M$.
Given the metaedge $\mathbb{M}$, we define so-called admissible models for $\mathbb{M}$, which are
exactly oriented permutation models of $(M,{\sim})$ that may appear as the restrictions of conformal models of $G_{ov}$ to the sets $M^0$ and $M^1$.
\begin{definition}
\label{def:admissible_models_for_proper_submodules}
Let $M$ be a proper prime or a proper parallel module in $\mathcal{M}(G_{ov})$ and let 
$\mathbb{M} = (M^{0},M^{1},{<_M})$ be the metaedge of $M$.
A pair $(\tau^{0}, \tau^{1})$ is \emph{an admissible model for $\mathbb{M}$} if:
\begin{itemize}
 \item $\tau^0$ is a permutation of $M^{0}$,
 \item $\tau^1$ is a permutation of $M^{1}$,
 \item $(\tau^{0},\tau^{1})$ is an oriented permutation model of $(M,{\sim})$ that corresponds to the pair $({<}, {\prec})$ of transitive orientations of $(M,{\parallel})$ and $(M,{\sim})$, respectively, where ${<} = {<_{M}}$.
\end{itemize}
\end{definition}
See Figure \ref{fig:admissible_models_for_metaedges} for an illustration.

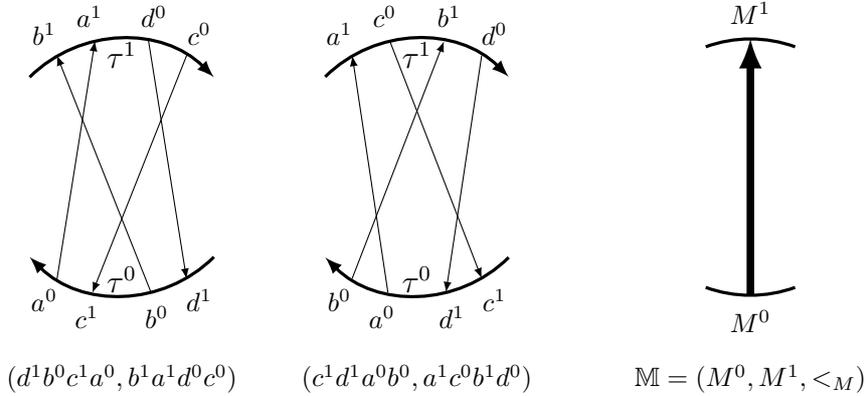
\begin{figure}[htp!]
\centering
\begin{tikzpicture}[xscale=0.85,yscale=0.85,>=latex,shorten >=-0.4pt,shorten <=-0.4pt]
\coordinate (center) at (0,0) {};
\coordinate (g1) at ($(center)+(120:2cm)$) {};
\coordinate (g2) at ($(center)+(102:2cm)$) {};
\coordinate (g3) at ($(center)+(78:2cm)$) {};
\coordinate (g4) at ($(center)+(60:2cm)$) {};
\coordinate (tau1) at ($(center)+(90:1.75cm)$) {};

\coordinate (d1) at ($(center)+(240:2cm)$) {};
\coordinate (d2) at ($(center)+(257:2cm)$) {};
\coordinate (d3) at ($(center)+(283:2cm)$) {};
\coordinate (d4) at ($(center)+(300:2cm)$) {};
\coordinate (tau0) at ($(center)+(270:1.75cm)$) {};
\coordinate (mod) at ($(center)+(270:3.3cm)$) {};

\coordinate (Lg1) at ($(center)+(120:2.4cm)$) {};
\coordinate (Lg2) at ($(center)+(102:2.4cm)$) {};
\coordinate (Lg3) at ($(center)+(78:2.4cm)$) {};
\coordinate (Lg4) at ($(center)+(60:2.4cm)$) {};

\coordinate (Ld1) at ($(center)+(240:2.4cm)$) {};
\coordinate (Ld2) at ($(center)+(257:2.4cm)$) {};
\coordinate (Ld3) at ($(center)+(283:2.4cm)$) {};
\coordinate (Ld4) at ($(center)+(300:2.4cm)$) {};

\tikzstyle{every node}=[inner sep=1pt]
\begin{footnotesize}
\node at (Lg1) {$b^1$};
\node at (Lg2) {$a^1$};
\node at (Lg3) {$d^0$};
\node at (Lg4) {$c^0$};

\node at (Ld1) {$a^0$};
\node at (Ld2) {$c^1$};
\node at (Ld3) {$b^0$};
\node at (Ld4) {$d^1$};
\node at (mod) {$(d^1b^0c^1a^0,b^1a^1d^0c^0)$};

\end{footnotesize}

\node at (tau0) {$\tau^0$};
\node at (tau1) {$\tau^1$};

\draw[very thick,->] ([shift=(135:2cm)]0,0) arc (135:45:2cm);
\draw[very thick,->] ([shift=(315:2cm)]0,0) arc (315:225:2cm);

\draw[->] (d1)--(g2);
\draw[->] (g4)--(d2);
\draw[->] (d3)--(g1);
\draw[->] (g3)--(d4);

\end{tikzpicture}
\hspace{0.5cm}
\begin{tikzpicture}[xscale=0.85,yscale=0.85,>=latex,shorten >=-0.2pt,shorten <=-0.2pt]
\coordinate (center) at (0,0) {};
\coordinate (g1) at ($(center)+(120:2cm)$) {};
\coordinate (g2) at ($(center)+(102:2cm)$) {};
\coordinate (g3) at ($(center)+(78:2cm)$) {};
\coordinate (g4) at ($(center)+(60:2cm)$) {};
\coordinate (tau1) at ($(center)+(90:1.75cm)$) {};
\coordinate (mod) at ($(center)+(270:3.3cm)$) {};

\coordinate (d1) at ($(center)+(240:2cm)$) {};
\coordinate (d2) at ($(center)+(257:2cm)$) {};
\coordinate (d3) at ($(center)+(283:2cm)$) {};
\coordinate (d4) at ($(center)+(300:2cm)$) {};
\coordinate (tau0) at ($(center)+(270:1.75cm)$) {};

\coordinate (Lg1) at ($(center)+(120:2.4cm)$) {};
\coordinate (Lg2) at ($(center)+(102:2.4cm)$) {};
\coordinate (Lg3) at ($(center)+(78:2.4cm)$) {};
\coordinate (Lg4) at ($(center)+(60:2.4cm)$) {};

\coordinate (Ld1) at ($(center)+(240:2.4cm)$) {};
\coordinate (Ld2) at ($(center)+(257:2.4cm)$) {};
\coordinate (Ld3) at ($(center)+(283:2.4cm)$) {};
\coordinate (Ld4) at ($(center)+(300:2.4cm)$) {};

\tikzstyle{every node}=[inner sep=1pt]
\begin{footnotesize}
\node at (Lg1) {$a^1$};
\node at (Lg2) {$c^0$};
\node at (Lg3) {$b^1$};
\node at (Lg4) {$d^0$};

\node at (Ld1) {$b^0$};
\node at (Ld2) {$a^0$};
\node at (Ld3) {$d^1$};
\node at (Ld4) {$c^1$};

\node at (mod) {$(c^1d^1a^0b^0,a^1c^0b^1d^0)$};
\end{footnotesize}
\node at (tau0) {$\tau^0$};
\node at (tau1) {$\tau^1$};

\draw[very thick,->] ([shift=(135:2cm)]0,0) arc (135:45:2cm);
\draw[very thick,->] ([shift=(315:2cm)]0,0) arc (315:225:2cm);

\draw[->] (d1)--(g3);
\draw[->] (d2)--(g1);
\draw[->] (g4)--(d3);
\draw[->] (g2)--(d4);
\end{tikzpicture}
\hspace{1cm}
\begin{tikzpicture}[xscale=0.85,yscale=0.85,>=latex,shorten >=-0.2pt,shorten <=-0.2pt]
\coordinate (center) at (0,0) {};
\coordinate (g1) at ($(center)+(90:2cm)$) {};
\coordinate (d1) at ($(center)+(270:2cm)$) {};

\coordinate (Lg1) at ($(center)+(90:2.4cm)$) {};
\coordinate (Ld1) at ($(center)+(270:2.4cm)$) {};
\coordinate (LLd1) at ($(center)+(270:3.3cm)$) {};

\tikzstyle{every node}=[inner sep=1pt]
\begin{footnotesize}
\node at (Lg1) {$M^1$};
\node at (Ld1) {$M^0$};
\node at (LLd1) {$\mathbb{M} = (M^0,M^1,{<_M})$};
\end{footnotesize}

\draw[very thick] ([shift=(70:2cm)]0,0) arc (70:110:2cm);
\draw[very thick] ([shift=(250:2cm)]0,0) arc (250:290:2cm);

\draw[line width=1mm,->] (d1)--(g1);
\end{tikzpicture}

\caption{\label{fig:admissible_models_for_metaedges} 
Two admissible models for the metaedge $\mathbb{M} = (M^0,M^1,{<_{M}})$, 
where $M^{0} = \{d^1,b^0,c^1,a^0\}$, $M^1 = \{b^1,a^1,d^0,c^0\}$, and ${<_M} = \{(d,a),(d,b),(c,a)\}$.
In these two models the relative position of non-intersecting chords is the same.}
\end{figure}

Given the above definition, we can summarize the results of this subsection with the following lemma.
\begin{lemma}
\label{lem:restriction_to_a_proper_module_is_admissible}
Suppose $\phi$ is a conformal model of $G_{ov}$, $M$ is a proper prime or a proper parallel module in $\mathcal{M}(G_{ov})$, and $\mathbb{M} = (M^0,M^1,{<_M})$ is the metaedge of $M$.
Then, the pair $(\phi|M^0, \phi|M^1)$ is an admissible model for $\mathbb{M}$.
\end{lemma}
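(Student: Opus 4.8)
The plan is to verify directly that $(\phi|M^0,\phi|M^1)$ satisfies each of the three conditions in Definition~\ref{def:admissible_models_for_proper_submodules}. Since $M$ is a proper prime or proper parallel module and $\phi$ is in particular a chord model of $G_{ov}$, Lemmas~\ref{lemma:circle_models_of_a_proper_prime_module} and~\ref{lemma:circle_models_of_a_parallel_module} apply: $\phi|M$ forms two contiguous subwords $\tau^0_\phi$ and $\tau^1_\phi$ in the circular word $\phi|C[M]$, and $(\tau^0_\phi,\tau^1_\phi)$ is an (oriented) permutation model of $(M,{\sim})$. With the indexing convention that $r^j \in \tau^j_\phi$, these subwords are exactly the words with letter-sets $M^0_\phi$ and $M^1_\phi$. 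So $(\phi|M^0,\phi|M^1)=(\tau^0_\phi,\tau^1_\phi)$ once we know $M^0_\phi=M^0$ and $M^1_\phi=M^1$ — but that is precisely the content of Claim~\ref{claim:invariants_for_proper_parallel_and_proper_prime_modules}, which asserts $(M^0_\phi,M^1_\phi,<^0_\phi)=(M^0,M^1,{<_M})$ for every conformal model $\phi$. Hence $\phi|M^0=\tau^0_\phi$ is a permutation of $M^0$ and $\phi|M^1=\tau^1_\phi$ is a permutation of $M^1$, giving the first two bullets.

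For the third bullet I would invoke the correspondence of Theorem~\ref{thm:permutation_models_transitive_orientations}: the permutation model $(\tau^0_\phi,\tau^1_\phi)$ of the proper module $(M,{\sim})$ corresponds, via equations \eqref{eq:models_of_permutation_graphs_1}--\eqref{eq:models_of_permutation_graphs_2}, to a unique pair $({<^0_\phi},{\prec^1_\phi})$ of transitive orientations of $(M,{\parallel})$ and $(M,{\sim})$ respectively. By the equality $<^0_\phi={<_M}$ from Claim~\ref{claim:invariants_for_proper_parallel_and_proper_prime_modules}, the transitive orientation of $(M,{\parallel})$ attached to $(\phi|M^0,\phi|M^1)$ is exactly ${<_M}$, which is the required condition ${<}={<_M}$. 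Thus all three conditions of Definition~\ref{def:admissible_models_for_proper_submodules} hold, and $(\phi|M^0,\phi|M^1)$ is an admissible model for $\mathbb{M}$.

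The only genuinely non-routine point is making sure the bookkeeping of the indexing lines up: Lemmas~\ref{lemma:circle_models_of_a_proper_prime_module}/\ref{lemma:circle_models_of_a_parallel_module} produce the two contiguous subwords as an unordered pair, and the identification of ``the subword through $r^0$'' with $\tau^0_\phi$, and in turn with $\phi|M^0$, is a labeling convention that must be threaded consistently through Claim~\ref{claim:invariants_for_proper_parallel_and_proper_prime_modules}. Once that is pinned down, the lemma is essentially a restatement and concatenation of Claim~\ref{claim:invariants_for_proper_parallel_and_proper_prime_modules}, Theorem~\ref{thm:permutation_models_transitive_orientations}, and the structural lemmas on chord models of proper modules, with no new combinatorial work required. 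I therefore expect the write-up to be short, with the ``main obstacle'' being purely notational rather than mathematical.
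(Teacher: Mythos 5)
Your proposal is correct and matches the paper's intent: the paper states this lemma as an immediate summary of the preceding material (Lemmas~\ref{lemma:circle_models_of_a_proper_prime_module}/\ref{lemma:circle_models_of_a_parallel_module}, Claim~\ref{claim:invariants_for_proper_parallel_and_proper_prime_modules}, and Theorem~\ref{thm:permutation_models_transitive_orientations}) without a separate proof, and your write-up supplies exactly that bookkeeping. The indexing concern you flag is already pre-resolved in the paper's setup of $\tau^0_\phi,\tau^1_\phi$ and $M^0_\phi,M^1_\phi$ just before Claim~\ref{claim:invariants_for_proper_parallel_and_proper_prime_modules}.
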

In our figures, we represent the metaedge $\mathbb{M}$ by the bolded oriented chord -- see Figure \ref{fig:admissible_models_for_metaedges}.
If $(\phi|M^0,\phi|M^1)$ is admissible for $\mathbb{M}$, the endpoints of $\mathbb{M}$ indicate the 
positions of the subwords $\phi|M^0$ and $\phi|M^1$ in the circular word $\phi$.

\subsection{Conformal models of serial modules}
\label{subsec:conformal_models_of_serial_modules}
Suppose $M$ is a serial module in $G_{ov}$ with children $M_1,\ldots,M_k$.
Since $M$ is serial, every $M_i$ is a proper prime or a proper parallel module in $\mathcal{M}(G_{ov})$.
Suppose that the representant $r_i$ of $M_i$ is fixed and
$\mathbb{M}_i = (M_i^{0}$, $M_i^{1},{<_{M_i}})$ is the metaedge of $M_i$.
Having in mind Lemmas~\ref{lemma:circle_models_of_a_serial_module} and~\ref{lem:restriction_to_a_proper_module_is_admissible}, 
one can obtain a theorem describing all conformal models of $(M,{\sim})$, which
actually follows from the work done by Spinrad \cite{Spin88} and Hsu \cite{Hsu95}.
\begin{theorem}
\label{thm:description_of_all_conformal_models_of_serial_modules}
Suppose $M$ is a serial module in $G_{ov}$ with children $M_1,\ldots,M_k$.
Every conformal model $\phi$ of $(M,{\sim})$ has the form
$$\phi \equiv \mu_{i_1} \ldots \mu_{i_k} \mu'_{i_1} \ldots \mu'_{i_k},$$
where $i_1,\ldots,i_k$ is a permutation of $[k]$ and for every 
$j \in [k]$ either the pair $(\mu_{i_j}, \mu'_{i_j})$ or the pair $(\mu'_{i_j}, \mu_{i_j})$ 
is an admissible model for $\mathbb{M}_{i_j}$.

On the other hand, for every permutation $(i_1,\ldots,i_k)$ of $[k]$, every admissible model $(\tau_{i}, \tau'_{i})$ of $\mathbb{M}_{i}$, and every two words $\mu_i,\mu'_i$ such that $\{\tau_{i},\tau'_{i}\} = \{\mu_{i},\mu'_{i}\}$, a circular word
$$\phi \equiv \mu_{i_1} \ldots \mu_{i_k} \mu'_{i_1} \ldots \mu'_{i_k}$$
is a conformal model of $(M,{\sim})$.
\end{theorem}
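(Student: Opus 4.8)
The plan is to prove both directions of the equivalence by combining the structural description of chord models of a serial module (Lemma~\ref{lemma:circle_models_of_a_serial_module}) with the characterization of the restrictions of conformal models to proper submodules (Lemma~\ref{lem:restriction_to_a_proper_module_is_admissible}), together with a routine check that the \leftside/\rightside conditions defining conformality factor through the children of $M$.

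For the forward direction, I would start with an arbitrary conformal model $\phi$ of $(M,{\sim})$. Since a conformal model is in particular a chord model, Lemma~\ref{lemma:circle_models_of_a_serial_module} applies and gives $\phi \equiv \tau_{i_1}\ldots\tau_{i_k}\tau'_{i_1}\ldots\tau'_{i_k}$ with $(i_1,\ldots,i_k)$ a circular permutation of $[k]$ and $(\tau_{i_j},\tau'_{i_j})$ a permutation model of $(M_{i_j},{\sim})$, where moreover each pair $\tau_{i_j},\tau'_{i_j}$ consists of exactly the two contiguous subwords of $\phi|M_{i_j}$. Next I would argue that the restriction $\phi|M_{i_j}$ is itself a conformal model of $(M_{i_j},{\sim})$: this is because conformality is a condition quantified over pairs $(v,u)$, and for $v,u \in M_{i_j}$ the sets $\leftside(v)\cap M_{i_j}$ and $\rightside(v)\cap M_{i_j}$ are just the restrictions of the corresponding global sets. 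Then Lemma~\ref{lem:restriction_to_a_proper_module_is_admissible} (applied with $\phi|C[M_{i_j}]$, or directly since $M_{i_j}$ is proper inside $M$) tells us that $(\phi|M_{i_j}^0,\phi|M_{i_j}^1)$ is an admissible model for $\mathbb{M}_{i_j}$; since the unordered pair $\{\phi|M_{i_j}^0,\phi|M_{i_j}^1\}$ equals $\{\tau_{i_j},\tau'_{i_j}\}$, one of the two orderings $(\tau_{i_j},\tau'_{i_j})$ or $(\tau'_{i_j},\tau_{i_j})$ is admissible, which is exactly the claimed conclusion after writing $\mu,\mu'$ for $\tau,\tau'$ and using that a circular permutation can be rotated to start with $i_1$.

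For the backward direction, I would take a permutation $(i_1,\ldots,i_k)$, admissible models $(\tau_i,\tau'_i)$ of each $\mathbb{M}_i$, and any choice $\{\mu_i,\mu'_i\}=\{\tau_i,\tau'_i\}$, and form $\phi \equiv \mu_{i_1}\ldots\mu_{i_k}\mu'_{i_1}\ldots\mu'_{i_k}$. By the ``if'' part of Lemma~\ref{lemma:circle_models_of_a_serial_module} (i.e.\ the construction direction coming from Theorem~\ref{thm:trivial_split_representations}), $\phi$ is a chord model of $(M,{\sim})$. It remains to verify conformality, i.e.\ that for every $v \in M$ the chords of $\leftside(v)\cap M$ lie on the left of $\phi(v)$ and those of $\rightside(v)\cap M$ on the right. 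I would split this into two cases. If $u$ and $v$ lie in the same child $M_{i_j}$, then the relation between $\phi(u)$ and $\phi(v)$ is determined entirely inside the block $\mu_{i_j},\mu'_{i_j}$, which — since $(\tau_{i_j},\tau'_{i_j})$ is admissible, hence corresponds to transitive orientations with $<\,=\,<_{M_{i_j}}$ and $M_{i_j}^0,M_{i_j}^1$ on the correct sides — reproduces exactly the conformal side relations; this is where Definition~\ref{def:admissible_models_for_proper_submodules} and the meaning of the metaedge orientation get used, via the translation \eqref{eq:models_of_permutation_graphs_1}--\eqref{eq:models_of_permutation_graphs_2} between permutation models and pairs of transitive orientations. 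If $u \in M_{i_j}$ and $v \in M_{i_\ell}$ with $j \neq \ell$, then $u \sim v$ (as $M$ is serial), and since $M_{i_\ell}$ is a module, all of $M_{i_\ell}$ lies on one fixed side of $\phi(u)$; one checks from the block structure of $\phi$ that this side is the left side precisely when $v \in \leftside(u)$, using that $\leftside$ and $\rightside$ are ``$M$-invariant'' on distinct children (every element of $M_{i_\ell}$ is classified the same way relative to $u$, because $M_{i_\ell}$ is a module and $u\sim M_{i_\ell}$ forces all of them to be in $\leftside(u)\cup\rightside(u)$ consistently). Summing over all pairs gives conformality of $\phi$.

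The main obstacle I expect is the cross-child verification in the backward direction: one must carefully unpack why, for $u \in M_{i_j}$, $v \in M_{i_\ell}$ with $j\neq\ell$, the membership $v\in\leftside(u)$ versus $v\in\rightside(u)$ is constant over $v \in M_{i_\ell}$ and matches the geometric side of $\phi(v)$ relative to $\phi(u)$ dictated by the cyclic block layout $\mu_{i_1}\ldots\mu_{i_k}\mu'_{i_1}\ldots\mu'_{i_k}$; this hinges on $M_{i_\ell}$ being a module together with the bending/straightening correspondence encoded in the equivalences listed just before Theorem~\ref{thm:normalized_models_versus_conformal_models}, and on the fact that the two occurrences of each block $M_{i_\ell}$ straddle $\phi(u)$ consistently. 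The within-child case, by contrast, is essentially a direct appeal to admissibility plus Theorem~\ref{thm:permutation_models_transitive_orientations}, and the ``chord model'' part is immediate from Lemma~\ref{lemma:circle_models_of_a_serial_module}. I would also note that this theorem is essentially a repackaging of results of Spinrad~\cite{Spin88} and Hsu~\cite{Hsu95}, so the proof can legitimately be kept brief, citing Lemmas~\ref{lemma:circle_models_of_a_serial_module} and~\ref{lem:restriction_to_a_proper_module_is_admissible} as the two workhorses and only spelling out the side-consistency check.
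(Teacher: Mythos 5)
The paper itself gives no proof of Theorem~\ref{thm:description_of_all_conformal_models_of_serial_modules}; it only remarks that the statement follows from Lemmas~\ref{lemma:circle_models_of_a_serial_module} and~\ref{lem:restriction_to_a_proper_module_is_admissible} and from the work of Spinrad and Hsu. Your overall architecture -- decompose the chord model via Lemma~\ref{lemma:circle_models_of_a_serial_module}, identify each block with an admissible model via Lemma~\ref{lem:restriction_to_a_proper_module_is_admissible} (resp.\ Claim~\ref{claim:invariants_for_proper_parallel_and_proper_prime_modules}), then re-assemble -- matches exactly what the paper hints at, and the forward direction is essentially right (modulo the minor observation that Lemma~\ref{lemma:circle_models_of_a_serial_module} as stated concerns chord models of the whole $G_{ov}$ restricted to $M$, but its first conclusion applies verbatim to chord models of $(M,{\sim})$ itself).

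However, the cross-child verification in your backward direction contains a genuine misunderstanding of the definitions. You write that for $u \in M_{i_j}$, $v \in M_{i_\ell}$ with $j\neq\ell$ one has $u \sim v$ and that this ``forces all of $M_{i_\ell}$ to be in $\leftside(u)\cup\rightside(u)$.'' This is the opposite of what is true: by construction $\leftside(u)$ and $\rightside(u)$ consist precisely of the vertices \emph{non-adjacent} to $u$ in $G_{ov}$ (those $w$ with $u\,\bcs\,w$, $u\,\bcc\,w$, $u\,\bdi\,w$ or $u\,\bcd\,w$), whereas $u\sim v$ means $M_G[u,v]=\bov$, hence $v\notin\leftside(u)\cup\rightside(u)$. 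Correspondingly, the chord $\phi(v)$ does not ``lie on one side'' of $\phi(u)$ -- it intersects it. The correct observation is that the cross-child case is \emph{vacuous}: since $M$ is serial, any two vertices in distinct children of $M$ are $\sim$-adjacent, so the conformality condition (which quantifies only over $u\in(\leftside(v)\cup\rightside(v))\cap M$) imposes nothing on cross-child pairs beyond the requirement that the chords cross, which is already guaranteed by $\phi$ being a chord model of $(M,{\sim})$ (and this follows from the split/join structure used in Lemma~\ref{lemma:circle_models_of_a_serial_module}). So the ``main obstacle'' you anticipate does not exist; the only non-trivial conformality checks are the within-child ones, which your appeal to Definition~\ref{def:admissible_models_for_proper_submodules}, Claim~\ref{claim:invariants_for_proper_parallel_and_proper_prime_modules}, and Theorem~\ref{thm:permutation_models_transitive_orientations} covers. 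Once you delete the erroneous cross-child argument and replace it with the observation that $u\sim v$ implies no left/right constraint, the proof is sound.
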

The above theorem can be used to characterize all conformal models of $G_{ov}$ 
in the case when the graph $(V,{\parallel})$ is disconnected.
Indeed, in this case $V$ is serial in $G_{ov}$ and Theorem \ref{thm:description_of_all_conformal_models_of_serial_modules} applies.
The schematic picture of some conformal models of $(V,{\sim})$ is shown in Figure~\ref{fig:serial_case_schematic_view}. 
To get the full picture of a normalized model one needs to replace (expand) every metaedge $(M^0_i,M^1_i,{<_{M_i}})$ with an admissible model hidden behind this metaedge (two admissible models for $(M^0_i,M^1_i,{<_{M_i}})$ are shown at the bottom of the figure).

\begin{figure}[htp!]
\begin{tikzpicture}[scale=0.75,>=latex,shorten >=-0.2pt,shorten <=-0.2pt]
\coordinate (center) at (0.0,-0.5) {};

\coordinate (k11) at ($(center)+(90:2cm)$) {};
\coordinate (k10) at ($(center)+(270:2cm)$) {};
\coordinate (k21) at ($(center)+(0:2cm)$) {};
\coordinate (k20) at ($(center)+(180:2cm)$) {};
\coordinate (k31) at ($(center)+(225:2cm)$) {};
\coordinate (k30) at ($(center)+(45:2cm)$) {};
\coordinate (k41) at ($(center)+(315:2cm)$) {};
\coordinate (k40) at ($(center)+(135:2cm)$) {};

\coordinate (lk11) at ($(center)+(90:2.4cm)$) {};
\coordinate (lk10) at ($(center)+(270:2.4cm)$) {};
\coordinate (lk21) at ($(center)+(0:2.4cm)$) {};
\coordinate (lk20) at ($(center)+(180:2.4cm)$) {};
\coordinate (lk31) at ($(center)+(45:2.4cm)$) {};
\coordinate (lk30) at ($(center)+(225:2.4cm)$) {};
\coordinate (lk41) at ($(center)+(135:2.4cm)$) {};
\coordinate (lk40) at ($(center)+(315:2.4cm)$) {};

\tikzstyle{every node}=[inner sep=1pt]
\begin{footnotesize}
\node at (lk11) {$M_1^1$};
\node at (lk10) {$M_1^0$};
\node at (lk21) {$M_2^1$};
\node at (lk20) {$M_2^0$};
\node at (lk31) {$M_3^0$};
\node at (lk30) {$M_3^1$};
\node at (lk41) {$M_4^0$};
\node at (lk40) {$M_4^1$};
\end{footnotesize}

\draw (center) circle (2cm);

\draw[line width=1mm,->] (k20)--(k21);
\draw[line width=1mm,->] (k30)--(k31);
\draw[line width=1mm,->] (k40)--(k41);
\draw[line width=1mm,->] (k10)--(k11);

\draw[white] (-3.0,-3.2)--(-3.0,-3.0);
\draw[white] (3.0,2.3)--(3.0,2.1);
\end{tikzpicture}
\begin{tikzpicture}[scale=0.75,>=latex,shorten >=-0.2pt,shorten <=-0.2pt]
\coordinate (center) at (0.0,-0.5) {};

\coordinate (k11) at ($(center)+(45:2cm)$) {};
\coordinate (k10) at ($(center)+(225:2cm)$) {};
\coordinate (k21) at ($(center)+(0:2cm)$) {};
\coordinate (k20) at ($(center)+(180:2cm)$) {};
\coordinate (k31) at ($(center)+(270:2cm)$) {};
\coordinate (k30) at ($(center)+(90:2cm)$) {};
\coordinate (k41) at ($(center)+(315:2cm)$) {};
\coordinate (k40) at ($(center)+(135:2cm)$) {};

\coordinate (lk11) at ($(center)+(45:2.4cm)$) {};
\coordinate (lk10) at ($(center)+(225:2.4cm)$) {};
\coordinate (lk21) at ($(center)+(0:2.4cm)$) {};
\coordinate (lk20) at ($(center)+(180:2.4cm)$) {};
\coordinate (lk31) at ($(center)+(270:2.4cm)$) {};
\coordinate (lk30) at ($(center)+(90:2.4cm)$) {};
\coordinate (lk41) at ($(center)+(135:2.4cm)$) {};
\coordinate (lk40) at ($(center)+(315:2.4cm)$) {};

\tikzstyle{every node}=[inner sep=1pt]
\begin{footnotesize}
\node at (lk11) {$M_1^1$};
\node at (lk10) {$M_1^0$};
\node at (lk21) {$M_2^1$};
\node at (lk20) {$M_2^0$};
\node at (lk31) {$M_3^1$};
\node at (lk30) {$M_3^0$};
\node at (lk41) {$M_4^0$};
\node at (lk40) {$M_4^1$};
\end{footnotesize}

\draw (center) circle (2cm);

\draw[line width=1mm,->] (k20)--(k21);
\draw[line width=1mm,->] (k40)--(k41);
\draw[line width=1mm,->,red] (k10)--(k11);
\draw[line width=1mm,->,red] (k30)--(k31);

\draw[white] (-3.0,-3.2)--(-3.0,-3.0);
\draw[white] (3.0,2.3)--(3.0,2.1);
\end{tikzpicture}
\begin{tikzpicture}[scale=0.75,>=latex,shorten >=-0.2pt,shorten <=-0.2pt]
\coordinate (center) at (0.0,-0.5) {};

\coordinate (k11) at ($(center)+(270:2cm)$) {};
\coordinate (k10) at ($(center)+(90:2cm)$) {};
\coordinate (k21) at ($(center)+(0:2cm)$) {};
\coordinate (k20) at ($(center)+(180:2cm)$) {};
\coordinate (k31) at ($(center)+(225:2cm)$) {};
\coordinate (k30) at ($(center)+(45:2cm)$) {};
\coordinate (k41) at ($(center)+(315:2cm)$) {};
\coordinate (k40) at ($(center)+(135:2cm)$) {};

\coordinate (lk11) at ($(center)+(270:2.4cm)$) {};
\coordinate (lk10) at ($(center)+(90:2.4cm)$) {};
\coordinate (lk21) at ($(center)+(0:2.4cm)$) {};
\coordinate (lk20) at ($(center)+(180:2.4cm)$) {};
\coordinate (lk31) at ($(center)+(45:2.4cm)$) {};
\coordinate (lk30) at ($(center)+(225:2.4cm)$) {};
\coordinate (lk41) at ($(center)+(135:2.4cm)$) {};
\coordinate (lk40) at ($(center)+(315:2.4cm)$) {};

\tikzstyle{every node}=[inner sep=1pt]
\begin{footnotesize}
\node at (lk11) {$M_1^1$};
\node at (lk10) {$M_1^0$};
\node at (lk21) {$M_2^1$};
\node at (lk20) {$M_2^0$};
\node at (lk31) {$M_3^0$};
\node at (lk30) {$M_3^1$};
\node at (lk41) {$M_4^0$};
\node at (lk40) {$M_4^1$};
\end{footnotesize}

\draw (center) circle (2cm);

\draw[line width=1mm,->] (k20)--(k21);
\draw[line width=1mm,->] (k30)--(k31);
\draw[line width=1mm,->] (k40)--(k41);
\draw[line width=1mm,->,red] (k10)--(k11);

\draw[white] (-3.0,-3.2)--(-3.0,-3.0);
\draw[white] (3.0,2.3)--(3.0,2.1);
\end{tikzpicture}

\begin{tikzpicture}[xscale=0.7,yscale=0.7,>=latex,shorten >=-0.4pt,shorten <=-0.4pt]
\coordinate (center) at (0,0) {};
\coordinate (au) at ($(center)+(60:2cm)$) {};
\coordinate (du) at ($(center)+(75:2cm)$) {};
\coordinate (cu) at ($(center)+(90:2cm)$) {};
\coordinate (bu) at ($(center)+(105:2cm)$) {};
\coordinate (eu) at ($(center)+(120:2cm)$) {};

\coordinate (al) at ($(center)+(300:2cm)$) {};
\coordinate (bl) at ($(center)+(285:2cm)$) {};
\coordinate (cl) at ($(center)+(270:2cm)$) {};
\coordinate (dl) at ($(center)+(255:2cm)$) {};
\coordinate (el) at ($(center)+(240:2cm)$) {};

\coordinate (lau) at ($(center)+(60:2.4cm)$) {};
\coordinate (ldu) at ($(center)+(75:2.4cm)$) {};
\coordinate (lcu) at ($(center)+(90:2.4cm)$) {};
\coordinate (lbu) at ($(center)+(105:2.4cm)$) {};
\coordinate (leu) at ($(center)+(120:2.4cm)$) {};

\coordinate (lal) at ($(center)+(300:2.4cm)$) {};
\coordinate (lbl) at ($(center)+(285:2.4cm)$) {};
\coordinate (lcl) at ($(center)+(270:2.4cm)$) {};
\coordinate (ldl) at ($(center)+(255:2.4cm)$) {};
\coordinate (lel) at ($(center)+(240:2.4cm)$) {};

\coordinate (below) at (0,-3) {};

\tikzstyle{every node}=[inner sep=1pt]
\begin{footnotesize}
%\node at (lau) {$a^1$};
%\node at (lbu) {$b^1$};
%\node at (lcu) {$c^0$};
%\node at (ldu) {$d^1$};
%\node at (leu) {$e^1$};

\node at (lal) {$a^0$};
\node at (lbl) {$b^0$};
\node at (lcl) {$c^1$};
\node at (ldl) {$d^0$};
\node at (lel) {$e^0$};

\node at (below) {$M^0_i$};
\end{footnotesize}

\draw[->,red] (al)--(au);
\draw[->] (bl)--(bu);
\draw[->] (cu)--(cl);
\draw[->] (dl)--(du);
\draw[->] (el)--(eu);

\draw[very thick,->] ([shift=(135:2cm)]0,0) arc (135:45:2cm);
\draw[very thick,->] ([shift=(315:2cm)]0,0) arc (315:225:2cm);

\draw[white] (-2,-3.2)--(-2,-3.0);
\draw[white] (2,2.3)--(2,2.1);
\end{tikzpicture}
\hspace{1.52cm}
\begin{tikzpicture}[xscale=0.7,yscale=0.7,>=latex,shorten >=-0.4pt,shorten <=-0.4pt]
\coordinate (center) at (0,0) {};
\coordinate (au) at ($(center)+(60:2cm)$) {};
\coordinate (du) at ($(center)+(75:2cm)$) {};
\coordinate (bu) at ($(center)+(90:2cm)$) {};
\coordinate (cu) at ($(center)+(105:2cm)$) {};
\coordinate (eu) at ($(center)+(120:2cm)$) {};

\coordinate (al) at ($(center)+(300:2cm)$) {};
\coordinate (cl) at ($(center)+(285:2cm)$) {};
\coordinate (bl) at ($(center)+(270:2cm)$) {};
\coordinate (dl) at ($(center)+(255:2cm)$) {};
\coordinate (el) at ($(center)+(240:2cm)$) {};

\coordinate (lau) at ($(center)+(60:2.4cm)$) {};
\coordinate (ldu) at ($(center)+(75:2.4cm)$) {};
\coordinate (lbu) at ($(center)+(90:2.4cm)$) {};
\coordinate (lcu) at ($(center)+(105:2.4cm)$) {};
\coordinate (leu) at ($(center)+(120:2.4cm)$) {};

\coordinate (lal) at ($(center)+(300:2.4cm)$) {};
\coordinate (lcl) at ($(center)+(285:2.4cm)$) {};
\coordinate (lbl) at ($(center)+(270:2.4cm)$) {};
\coordinate (ldl) at ($(center)+(255:2.4cm)$) {};
\coordinate (lel) at ($(center)+(240:2.4cm)$) {};

\coordinate (below) at (0,-3) {};

\tikzstyle{every node}=[inner sep=1pt]
\begin{footnotesize}
%\node at (lau) {$a^1$};
%\node at (lbu) {$b^1$};
%\node at (lcu) {$c^0$};
%\node at (ldu) {$d^1$};
%\node at (leu) {$e^1$};

\node at (lal) {$a^0$};
\node at (lbl) {$b^0$};
\node at (lcl) {$c^1$};
\node at (ldl) {$d^0$};
\node at (lel) {$e^0$};

\node at (below) {$M^0_i$};
\end{footnotesize}

\draw[->,red] (al)--(au);
\draw[->,thick,blue] (bl)--(bu);
\draw[->,thick,blue] (cu)--(cl);
\draw[->] (dl)--(du);
\draw[->] (el)--(eu);

\draw[very thick,->] ([shift=(135:2cm)]0,0) arc (135:45:2cm);
\draw[very thick,->] ([shift=(315:2cm)]0,0) arc (315:225:2cm);

\draw[white] (-2,-3.2)--(-2,-3.0);
\draw[white] (2,2.3)--(2,2.1);
\end{tikzpicture}

\caption{\label{fig:serial_case_schematic_view}}
\end{figure}

To transform a normalized model of $G_{ov}$ into a normalized model of $G_{ov}$ 
we can permute and reorient the metaedges arbitrarily and we can replace an admissible model
hidden behind every metaedge - see Figure \ref{fig:serial_case_schematic_view}.
Theorem \ref{thm:description_of_all_conformal_models_of_serial_modules} asserts that
such operations are complete, that is, we can transform one normalized model into any other by performing the above operations.

%Theorem \ref{thm:description_of_all_conformal_models_of_serial_modules} can be used 
%to give the recognition algorithm for circular-arc graphs $G$ for which $\overline{G_{ov}}$ is disconnected.
%\begin{theorem}[\cite{Hsu95}, \cite{Spin88}]
%\label{thm:recognition_complement_disconected}
%Suppose $G$ is a graph with no twins and no universal vertices.
%Suppose the graph $G_{ov}$ associated with $G$ is such that $\overline{G_{ov}}$ is disconnected.
%The following statements are equivalent:
%\begin{enumerate}
% \item $G$ is a circular-arc graph.
% \item For every child $M$ of $V$ in $\mathcal{M}(G_{ov})$ the orientation $(M,{<^0_M},M^{0},M^{1})$ admits an %admissible model.
% \item For every child $M$ of $V$ in $\mathcal{M}(G_{ov})$ the relation $(M,{<^0_{M}})$ computed by %Algorithm~\ref{alg:computing_invariants_for_proper_modules} is a two-dimensional poset, which means that $(M,%{<^0_{M}})$ is a transitive orientation of $(M,{\parallel})$ and $(M,{\sim})$ is transitively orientable.
%\end{enumerate}
%\end{theorem}
%The recognition algorithm described above was proposed by Spinrad \cite{Spin88}.
%In the next sections we develop the ideas presented here on the whole class of circular-arc graphs.

\subsection{Conformal models of improper prime modules}
\label{subsec:conformal_models_of_improper_prime_modules}
Suppose $G$ is a circular-arc graph with no twins and no universal vertices.
Suppose $M$ is an improper prime module in $G_{ov}$.
That is, either: 
\begin{itemize}
 \item $M = V$ and $V$ is a prime module in $G_{ov}$, which means that both $(V,{\sim})$ and $(V,{\parallel})$
 are connected, or 
 \item $M \subsetneq V$, $V$ is a parallel module in $G_{ov}$, and $M$ a is prime child of $V$ in $\mathcal{M}(G_{ov})$,
 which means that $(V,{\sim})$ is disconnected, $(M,{\sim})$ is a connected component of $(V,{\sim})$ such that
 $(M,{\parallel})$ is connected.
\end{itemize}

Let $M_1,\ldots,M_k$ be the children of the module $M$.
Let $U$ be a set containing exactly one element from every $M_i$, $i \in [k]$.
Clearly, $(U,{\sim})$ contains no non-trivial modules and hence the graph $(U,{\sim})$ is prime.
The next lemma is crucial for our work.
We mention here that an analogous lemma was stated by Hsu \cite{Hsu95}. 
However, we prove it again as we work with conformal models defined in a different way.
\begin{lemma}
\label{lemma:two_models_of_a_prime_graph}
The graph $(U,{\sim})$ has exactly two conformal models, one being the reflection of the other.
\end{lemma}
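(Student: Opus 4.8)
The plan is to show that the prime graph $(U,{\sim})$, being an induced subgraph of $G_{ov}$ on one representative vertex from each child $M_i$ of $M$, has exactly two conformal models, which are reflections of each other. First I would recall that $(U,{\sim})$ is prime as a circle graph, so by Theorem~\ref{thm:cicle_representation_no_split} it suffices to argue that $(U,{\sim})$ has no non-trivial split: once that is established, $(U,{\sim})$ has exactly two \emph{chord} models, $\psi$ and its reflection $\psi^R$, and the remaining task is to check that each of these two chord models can be oriented in exactly one way so as to become conformal. The point $U$ being prime as a graph does not immediately give that it is split-free as a circle graph, so I would invoke the relationship between split decomposition and modular decomposition: a non-trivial split $(A,B)$ in $(U,{\sim})$ with both sides of size $\ge 2$ would, via the partition $A,\alpha(A),B,\alpha(B)$, force one of the equivalence classes of the relation arising from the split to be a non-trivial module of $(U,{\sim})$ (since $A \sim B$ and the $\alpha$ parts are $\parallel$-separated), contradicting primeness. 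I expect this reduction to be short but slightly delicate, since one must handle trivial splits and the articulation-vertex case separately; however, a prime graph on at least three vertices has no articulation vertex in the circle-graph sense either, so that case is vacuous.

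Having reduced to: $(U,{\sim})$ has exactly two chord models $\psi \equiv \psi^R$, I would next show that orientations are forced. Fix one of the two chord models, say $\psi$. I must show there is exactly one way to orient every chord $\psi(u)$, $u \in U$, so that the resulting oriented model $\phi$ is conformal, i.e. satisfies $u \in \leftside(v) \iff \phi(u)$ lies on the left of $\phi(v)$ and $u \in \rightside(v) \iff \phi(u)$ lies on the right of $\phi(v)$, for all $u,v \in U$. For a fixed pair $u,v$ with $u \sim v$, the four possible mutual orientations of $\phi(u)$ and $\phi(v)$ (each of $u,v$ having the other on its left or right side) correspond to the four types $\bcs,\bcd,\bcc$ and the disjoint-type configuration among $u,v$ in $M_G$; exactly one of these is consistent with the unordered relation $u \sim v$ recorded in the straightening/bending dictionary listed in the excerpt. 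So for each edge $uv$ of $(U,{\sim})$, knowing the orientation of $\phi(u)$ determines the orientation of $\phi(v)$. Since $(U,{\sim})$ is connected (it is prime, hence certainly connected), propagating this along a spanning tree shows that once the orientation of a single chord is chosen, all others are determined; and the two global choices (flip all orientations or not) are exactly $\phi$ and its reflection $\phi^R$. It remains to check consistency — that this propagation is well-defined around cycles — which follows because conformality of $\phi$ and of $\phi^R$ is guaranteed: from the chord model $\psi$ one recovers a normalized model of $G[U]$ via bending, and by Theorem~\ref{thm:subgraphs_normalized_conformal_correspondence} such a normalized model exists and corresponds to a conformal model, so at least one valid orientation of $\psi$ exists; its reflection is the other.

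To make this rigorous I would proceed as follows. Step~1: prove $(U,{\sim})$ is split-free, using the characterization of maximal splits from Subsection~\ref{sec:split_decomposition} together with primeness of $(U,{\sim})$ — a non-trivial maximal split would yield the $\diamond$-classes $C_1,\dots,C_k$ with $k \ge 2$ and $C_i \sim C_j$, and the induced partition would give a non-trivial module, a contradiction. Step~2: apply Theorem~\ref{thm:cicle_representation_no_split} to conclude $(U,{\sim})$ has exactly two chord models $\psi, \psi^R$. Step~3: by Theorem~\ref{thm:subgraphs_normalized_conformal_correspondence} applied to $G[U]$, there is at least one conformal model $\phi$ of $(U,{\sim})$; its underlying chord model is $\psi$ or $\psi^R$, say $\psi$. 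Step~4: show the orientation making $\psi$ conformal is unique, by the edge-propagation argument above, using connectivity of $(U,{\sim})$ and the four-to-one correspondence between oriented configurations of an intersecting chord pair and the single unordered relation $u \sim v$. Step~5: conclude that $\phi$ and $\phi^R$ are the only two conformal models, and they are reflections of each other. The main obstacle I anticipate is Step~1 — cleanly extracting a non-trivial module of $(U,{\sim})$ from a hypothetical non-trivial split, and correctly disposing of the trivial-split/articulation subcase; the orientation-uniqueness steps are essentially bookkeeping using the equivalences already spelled out in the discussion of the bending procedure.
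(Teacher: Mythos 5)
Your Step~1 is where the argument breaks: primeness of $(U,{\sim})$ as a graph (no non-trivial modules) does \emph{not} imply that $(U,{\sim})$ is split-free as a circle graph, and there is no reduction of the one notion to the other. Splits are strictly more general than modules: in a split $(A,\alpha(A),B,\alpha(B))$ the side $A\cup\alpha(A)$ is almost never a module (once $\alpha(A)\neq\emptyset$, the vertices of $B$ are adjacent to $A$ but not to $\alpha(A)$), and the $\diamond$-classes you invoke need not yield non-trivial modules either. A concrete counterexample is $P_5=v_1v_2v_3v_4v_5$: it is prime, yet $(A,\alpha(A),B,\alpha(B))=(\{v_2\},\{v_1\},\{v_3\},\{v_4,v_5\})$ is a non-trivial split, and its $\diamond$-classes are singletons. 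So your plan to ``invoke the relationship between split decomposition and modular decomposition'' to dispose of non-trivial splits cannot work, and the bulk of the difficulty in the lemma remains untouched. In the paper this case is handled by an explicit induction on $|U|$: one extracts a maximal split, defines a family of proper induced subgraphs called \emph{probes}, shows each probe is again prime (or prime after deleting one vertex) and hence has a unique conformal model by the inductive hypothesis, and then argues that the probe models can be glued together in at most one way. That machinery has no analogue in your sketch.

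There are also two smaller issues in your orientation-propagation step. First, for a pair $u\sim v$ the chords intersect, so conformality places no side-constraint between $\phi(u)$ and $\phi(v)$ directly; the orientation of a chord $\phi(v)$ is pinned down by a \emph{non-adjacent} witness $u\parallel v$ via whether $u\in\leftside(v)$ or $u\in\rightside(v)$, which is exactly what the paper does in the split-free branch. Second, even after fixing that, ``propagating along a spanning tree of $(U,{\sim})$'' is not the right mechanism and does not by itself establish that a consistent global orientation exists --- the paper sidesteps this by observing that orientation is \emph{per-vertex} determined once a chord model is fixed, with no cycle-consistency to check. So the split-free half of your proposal can be repaired, but the non-split-free half is a genuine missing idea, not a bookkeeping gap, and it is where almost all of the paper's proof lives.
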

\begin{proof}
Since $G$ is a circular-arc graph, $G_{ov}$ has at least one conformal model.
Since the restriction of any conformal model of $G_{ov}$ to the set $U$ is conformal, 
$(U,{\sim})$ has at least one conformal model.
Our goal is to prove that this model, up to reflection, is unique.

We prove the lemma by induction on the number of vertices in $(U,{\sim})$.
The smallest prime graph has $4$ vertices.
The only prime graph with $4$ vertices is isomorphic to the path $P_4$.
One can easily check that $P_4$ has two conformal models, 
one being the reflection of the other.
This proves the base of the induction.

Suppose $(U,{\sim})$ has at least $5$ vertices.
If $(U,{\sim})$ has no non-trivial splits, 
Theorem~\ref{thm:cicle_representation_no_split} asserts that the graph $(U,{\sim})$ has two chord models,
$\phi$ and $\phi^R$, where $\phi^R$ is the reflection of $\phi$.
Note that there is a unique orientation of the chords from $\phi$ that lead to a conformal model for $(U,{\sim})$.
Indeed, as $(U,{\sim})$ is prime, for every vertex $v \in U$ there is $u \in U$ such that $u \parallel v$.
Now, the orientation of the chord $\phi(v)$ can be decided basing on whether $u \in \leftside(v)$ or $u \in \rightside(v)$.
Similarly, there is a unique orientation of the chords in $\phi^R$ that lead to a conformal model of $(U,{\sim})$.
Clearly, the oriented conformal models $\phi$ and $\phi^R$ are the only conformal models of $(U,{\sim})$,
$\phi^R$ needs to be the reflection of $\phi$, and hence the thesis of the lemma holds.

Suppose $(U,{\sim})$ has a non-trivial split.
In this case the proof goes as follows.
We take a maximal split in $(U,{\sim})$ and then, using a structure induced by this split,
we divide $(U,{\sim})$ into so-called \emph{probes}.
A probe is a special proper induced subgraph of $(U,{\sim})$ which, as we shall prove, 
has a unique, up to reflection, conformal model.
Eventually, we show that there is a unique way to fit the models of the probes together 
to get a conformal model of $(U,{\sim})$.

\begin{definition}
A \emph{probe} in $(U,{\sim})$ is a quadruple $(y,x,X,\alpha(X))$ that satisfies the following properties:
\begin{enumerate}
 \item \label{item:probe_P_subset} $x \neq y$, $X \neq \emptyset$, $\alpha(X) \neq \emptyset$, 
 the sets $\{y,x\}$, $X$, $\alpha(X)$ are pairwise disjoint, and the set $P = \{x,y\} \cup X \cup \alpha(X)$ is a proper subset of $U$,
 \item \label{item:probe_yx_relation} $y \sim x$, $y \parallel X \cup \alpha(X)$,
 $x \sim  X$, $x \parallel \alpha(X)$, and the graph $(P,{\sim})$ is connected,
 \item \label{item:probe_neighborood_outside} for every $z \in U \setminus P$, 
 either $z \parallel (X \cup \alpha(X))$, or $z \sim X  \text{ and } z \parallel \alpha(X)$, 
 or $z \sim (X \cup \alpha(X))$.
\end{enumerate}
\end{definition}

\begin{claim}
\label{claim:probes_have_unique_model}.
Let $(y,x,X,\alpha(X))$ be a probe in $(U,{\sim})$, let $P = \{x,y\} \cup X \cup \alpha(X)$.
Then, $(P,{\sim})$ has a unique, up to reflection, conformal model.
\end{claim}
\begin{proof}
Let $Z = \{z \in X: \text{$z$ has only one neighbour in the graph $(P,{\sim})$}\}$.
Note that the only neighbor of $z \in Z$ is the vertex $x$.
Note also that $|Z| \leq 1$ as otherwise $Z$~would be a non-trivial module in $(U,{\sim})$ by property~\eqref{item:probe_neighborood_outside}.
We claim that:
\begin{itemize}
\item If $|Z| = 1$, then $\{y\} \cup Z$ is the only non-trivial module in $(P,{\sim})$.
\item If $Z = \emptyset$, then $(P,{\sim})$ has no non-trivial modules.
\end{itemize}

Suppose $M$ is a non-trivial module in $(P,{\sim})$.
We consider four cases depending on the intersection of $M$ with the set $\{y,x\}$.

Suppose $M \cap \{y,x\} = \emptyset$. 
Since $x \notin M$ and since $x \sim X$ and $x \parallel \alpha(X)$, 
we must have either $M \subseteq X$ or $M \subseteq \alpha(X)$.
Then, by property~\eqref{item:probe_neighborood_outside} of $P$, every $u \in U \setminus P$ satisfies either $u \sim M$ or $u \parallel M$.
So, $M$ is also a non-trivial module in $(U,{\sim})$, which contradicts the assumption of the lemma.

Suppose $M \cap \{y,x\}  = \{y,x\}$.
Since $X \sim x$ and $X \parallel y$, we must have $X \subseteq M$.
Since $(P,{\sim})$ is connected, we need to have $\alpha(X) \subset M$
as otherwise we would find a vertex $u \in P \setminus M$ such that $u$
is adjacent to a vertex in $M$ and non-adjacent to a vertex in $M$.
So, $M =P$, which contradicts that $M$ is a non-trivial module in $(P,{\sim})$.

Suppose $M \cap \{y,x\} = \{x\}$.
Since $y \sim x$ and $ y \parallel X \cup \alpha(X)$, 
we must have $M \cap (X \cup \alpha(X))= \emptyset$.
It follows that $M$ is trivial in $(P,{\sim})$, a contradiction.

Suppose $M \cap \{y,x\} = \{y\}$.
Note that $M \cap \alpha(X) = \emptyset$.
Otherwise, $x$ from outside $M$ is adjacent to $y$ in $M$ and
non-adjacent to a vertex in $M \cap \alpha(X)$, which can not be the case.
Let $M_X = M \cap X$.
If $M_X = \emptyset$, then $M = \{y\}$ and $M$ is trivial.
So, we must have $M_X \neq \emptyset$.
Notice that for every vertex $t \in (X \cup \alpha(X)) \setminus M_X$ we have that $t \parallel M_x$.
Otherwise, $t$ from outside $M$ would have a neighbor in $M$ and the non-neighbor $y$ in $M$.
If $|M_X| \geq 2$, then by property \eqref{item:probe_neighborood_outside} of~$P$, $M_X$ would be a non-trivial module in $(U,{\sim})$, which can not be the case.
So, $M$ might be a non-trivial module of $(P,{\sim})$ only when $|M_X| = 1$, i.e., when $M_X = \{z\}$ for some $z \in X$.
In this case, $z$ is adjacent only to the vertex $x$ in $(P,{\sim})$, which shows $Z = \{z\}$.
So, we have $M = \{y,z\}$, which completes the proof of our subclaim.

Now, we show that $(P,{\sim})$ has a unique, up to reflection, conformal model.

Suppose $Z = \emptyset$.
As we have shown, the graph $(P,{\sim})$ contains no non-trivial modules. 
Since $P$ has strictly fewer vertices than $(U,{\sim})$, 
from the inductive hypothesis we get that $(P,{\sim})$ has a unique, up to reflection, conformal model.

Suppose $Z = \{z\}$.
Then $\{y,z\}$ is the only non-trivial module in $(P,{\sim})$.
Since $(P,{\sim})$ and $(P,{\parallel})$ are connected, the graph $(P \setminus \{z\}, {\sim})$ is prime.
By the inductive hypothesis, $(P \setminus \{z\},{\sim})$ has exactly two conformal models, $\phi$ and $\phi^R$, where $\phi^R$ is the reflection of~$\phi$.
Note that the vertex $x$ is an articulation point in the graph $(P \setminus \{z\},{\sim})$.
Suppose that $(P \setminus \{z,x\},{\sim})$ has exactly $k$ connected components, 
say $D_1,\ldots,D_k$, for some $k \geq 2$.
Note that $D_i  = \{y\}$ for some $i \in [k]$.
By Theorem \ref{thm:trivial_split_representations}, $\phi \equiv x^0\tau_{i_1}\ldots \tau_{i_{k}}x^1 \tau'_{i_k}\ldots\tau'_{i_1}$, where $i_1,\ldots,i_k$ is a permutation of~$[k]$ and $x^0 \tau_{i_j} x^1 \tau'_{i_j}$ is a conformal model of $(\{x\} \cup D_{i_j},{\sim})$ for $j \in [k]$.
We show that there is a unique extension of $\phi$ by the oriented chord $\phi(z)$ 
such that the extended $\phi$ is conformal for $(P,{\sim})$.
Clearly, the extended $\phi$ must be of the form:
$$\phi \equiv x^0\tau_{i_1}\ldots \tau_{i_l} z' \tau_{i_{l+1}} \ldots \tau_{i_{k}}x^1 \tau'_{i_k} \ldots \tau'_{i_{l+1}} z'' \tau'_{i_{l}} \ldots \tau'_{i_1} \text{ for some } l \in \{0,\ldots,k\},$$
where  $z'$ and $z''$ are such that $\{z',z''\} = \{z^{0},z^{1}\}.$
%See Figure 1. for an illustration.
For every $i \in [k]$ pick a vertex $a_i$ in the component $D_i$ such that $x \sim a_i$.
Note that $\phi(z)$ must be on the left side of $\phi(a_i)$ if $z \in \leftside(a_i)$ 
and on the right side of $\phi(a_i)$ if $z \in \rightside(a_i)$.
Hence, the place in $\phi$ (or equivalently, the index $l$) for the chord $\phi(z)$ is uniquely determined.
The orientation of $\phi(z)$ can be based on whether $y \in \leftside(z)$ or $y \in \rightside(z)$ holds.
\end{proof}

Suppose $(U,{\sim})$ has a non-trivial split.
We use the algorithm given in Section~\ref{sec:split_decomposition} to compute a maximal split $(A,B)$ in $(U,{\sim})$. 
Depending on whether $(A,B)$ is trivial or not, we assume the following notation:
\begin{itemize}
 \item if $(A,B)$ is non-trivial, we assume that $C_1,\ldots,C_k$ and $\alpha(C_1),\ldots,\alpha(C_k)$ are such as defined in Subsection~\ref{subsec:structure_non_trivial_split},
 \item if $(A,B)$ is trivial, we assume that $A=\{a\}$ and that $C_1,\ldots,C_k$ and $\alpha(C_1),\ldots,\alpha(C_k)$ are such as defined in Subsection~\ref{subsec:structure_trivial_split}.
\end{itemize}

We partition the set $[k]$ into two subsets, $I_1$ and $I_2$, such that:
\begin{itemize}
 \item $i \in I_1$ if $|C_i \cup \alpha(C_i)| = 1$,
 \item $i \in I_2$ if $|C_i \cup \alpha(C_i)| \geq 2$.
\end{itemize}
Note that $|I_1| \leq 1$ as otherwise $\bigcup_{i \in I_1} C_i$ would be a non-trivial module
in $(U,{\sim})$.
Without loss of generality we assume $C_1,\ldots,C_k$ are enumerated such that 
$I_1 = \{k\}$ if $I_1 \neq \emptyset$.
For $i \in I_2$ we have $\alpha(C_i) \neq \emptyset$ as otherwise
$C_i$ would be a non-trivial module in $(U,{\sim})$.
Moreover, since $(U,{\sim})$ is connected, 
some vertex in $C_i$ is adjacent to some vertex in $\alpha(C_i)$.
Hence, for every $i \in [k]$ we can pick two vertices $a_i,b_i \in C_i \cup \alpha(C_i)$ such that:
\begin{itemize}
\item $a_i \in C_i$, $b_i \in \alpha(C_i)$, and $a_i \sim b_i$, if $i \in I_2$,
\item $a_i = b_i$, where $\{a_i\}$ is the only vertex in $C_i$, if $i \in I_1$.
\end{itemize}
We split the proof into two cases, depending on whether or not the following condition is satisfied:
\begin{equation}
\label{eq:probe_condition}
\begin{array}{c}
\text{For every $i \in I_2$ there exist $x,y \in U \setminus (C_i \cup \alpha(C_i))$ such that} \\
\text{$(y,x,C_i,\alpha(C_i))$ is a probe in $(U,{\sim})$}.
\end{array}
\tag{*}
\end{equation}
We claim that condition \eqref{eq:probe_condition} is not satisfied only when $(A,B)$ is a trivial split,
$k=2$, and $|C_2 \cup \alpha(C_2)|=1$.
Let $i \in I_2$.
Suppose $k\geq 3$.
If $(A,B)$ is non-trivial, 
the set $C_i \cup \alpha(C_i)$ can be extended to a probe 
by the vertices $a_j,b_j$, where $j$ is any index in $I_2$ different than $i$.
If $(A,B)$ is trivial, the set $C_i \cup \alpha(C_i)$ can be extended to a probe
by the vertices $a, a_j$, where $j$ is any index in $[k]$ different than $i$.
Suppose $k=2$ and suppose $(A,B)$ is non-trivial.
Note that $|C_j \cup \alpha(C_j)|\geq 3$ for every $j \in [2]$.
Otherwise, the only vertex $a_j$ in $C_j$ is adjacent to the only vertex $b_j \in \alpha(C_j)$, 
and hence the split $(A,B)$ is not maximal.
Hence, the set $C_i \cup \alpha(C_i)$ can be extended to a probe by the vertices $a_j,b_j$, 
where $j$ is the index in $[2]$ different than $i$.
If $(A,B)$ is trivial and $|C_2 \cup \alpha(C_2)| \geq 2$, then the set $C_i \cup \alpha(C_i)$ can be extended to a probe by the vertices $a,a_j$, where $j$ is the index in $[2]$ different than $i$.
So, the only case when condition \eqref{eq:probe_condition} is not satisfied is when 
$(A,B)$ is a trivial split,
$k=2$, $|C_2| = 1$, and $|\alpha(C_2)|=0$.

Suppose \eqref{eq:probe_condition} is satisfied.
Let 
$$ R_i = \left\{
\begin{array}{lll}
 \{a_1, a_2, \ldots, a_i,b_i\}& \text{if} & (A,B) \text{ is non-trivial},\\
 \{a, a_1,b_1, \ldots, a_i,b_i\} & \text{if} & (A,B) \text{ is trivial},
\end{array}
\right.
$$
let
$$
S = \left\{
\begin{array}{lll}
 \{a_1, a_2\}& \text{if} & (A,B) \text{ is non-trivial},\\
 \{a,a_1\} & \text{if} & (A,B) \text{ is trivial},
\end{array}
\right.
$$
and let $R = R_k$.
Eventually, let
$$
\begin{array}{lll}
 \phi^{0}_S \equiv a_1^0a^0_2a_{1}^1a_2^1 \quad \text{and} \quad \phi^{1}_S \equiv a_1^0a^1_2a_{1}^1a_2^0 &\text{if} & (A,B) \text{ is non-trivial},\\
 \phi^{0}_S \equiv a^0a^0_1a^1a_1^1 \quad \text{and} \quad \phi^{1}_S \equiv a^0a^1_1a^1a^0_1 & \text{if} & (A,B) \text{ is trivial}.\\
\end{array}
$$
In any case, $\phi^{0}_S$ is the reflection of $\phi^{1}_S$ and any
conformal model $\phi$ of $(U,{\sim})$ extends either $\phi^{0}_S$ or $\phi^{1}_S$.
We claim that:
\begin{itemize}
 \item There is a unique conformal model $\phi^{j}_R$ of $(R,{\sim})$ such that $\phi_R|S = \phi^j_S$, for every $j \in \{0,1\}$.
 \item For every conformal model $\phi^j_R$ of $(R,{\sim})$ extending $\phi^j_S$ there is at most one conformal
 model $\phi^j$ of $(U,{\sim})$ such that $\phi^j|R \equiv \phi^j_R$, for every $j \in [2]$.
\end{itemize}
Then, $\phi^{0}_R$ must be the reflection of $\phi^{1}_R$, 
and $\phi^1$ must be the reflection of $\phi^0$.
This will show the lemma for the case when condition \eqref{eq:probe_condition} is satisfied.

First we prove our second claim.
Let $i \in I_2$ and let $\phi^{0}_i$ be the unique conformal model of $(P_i,{\sim})$ that extends $\phi^0_S$, 
where $P_i = \{y,x\} \cup C_i \cup \alpha(C_i)$ is a probe in $(U,\sim)$ for some $x,y \in U \setminus (C_i \cup \alpha(C_i))$.
Since the restriction of every conformal model of $(U,{\sim})$ to the set $P_i$ is conformal,
for every conformal model $\phi^{0}$ of $(U,{\sim})$ extending $\phi^0_S$
we must have $\phi^{0}|P_i \equiv \phi^0_i$ and hence $\phi^{0}|(P_i \setminus \{y\}) \equiv \phi^{0}_i|(P_i \setminus \{y\})$.
Assume that $\phi^{0}_i|(P_i \setminus y) \equiv^{R} x' \pi_i x'' \pi'_i$, where $\{x',x''\} = \{x^{0},x^{1}\}$ and $\pi_i, \pi'_i$ are chosen such that both the labeled letters $b^0_i,b^1_i$
appear in $\pi_i$.
Note that
$$\text{either } b_i^{0}a_i^{0}b_i^{1}, \text{ or } b_i^{1}a_i^{0}b_i^{0}, \text{ or } b_i^{0}a_i^{1}b_i^{1}, \text{ or } b_i^{1}a_i^{0}b_i^{0} \text{ is a subword of } \pi_i.$$
Having in mind Theorems \ref{thm:non_trivial_split_representations} and \ref{thm:trivial_split_representations}, we conclude that every conformal model $\phi^{0}$ of $(U,{\sim})$ extending $\phi^0_S$ must be of the form:
\begin{equation}
\label{eq:conformal_models_form}
\phi \equiv \left\{ 
\begin{array}{lll} 
\tau_{i_1} \ldots \tau_{i_k} \tau'_{i_1} \ldots \tau'_{i_k} & \text{if} & \text{$(A,B)$ is non-trivial,} \\ 
a^0\tau_{i_1} \ldots \tau_{i_k}a^1 \tau'_{i_k} \ldots \tau'_{i_1} & \text{if} & \text{$(A,B)$ is trivial,} 
\end{array} \right.
\tag{**}
\end{equation}
where $i_1, \ldots, i_k$ is a permutation of the set $[k]$ and
\begin{itemize}
 \item $\{\tau_{i_j} , \tau'_{i_j} \} = \{a^{0}_{i_j}, a^{1}_{i_j}\}$, for $i_j \in I_1$,
 \item $(\tau_{i_j},\tau'_{i_j}) = (\pi_{i_j},\pi'_{i_j})$ or $(\tau_{i_j},\tau'_{i_j}) = (\pi'_{i_j},\pi_{i_j})$, for $i_j \in I_2$.
\end{itemize}
It means, in particular, that for every conformal model $\phi^0_R$ of $(R,{\sim})$ extending $\phi^{0}_S$ there exists at most one conformal model $\phi^0$ of $(U,{\sim})$ that extends $\phi^{0}_R$.
We prove similarly that for every conformal model $\phi^1_R$ of $(R,{\sim})$ extending $\phi^1_S$ there
is at most one conformal model $\phi^1$ such that $\phi^1|R \equiv \phi^1_R$.

Now, we prove that there is a unique conformal model $\phi^0_R$ of $(R,{\sim})$ that extends $\phi^{0}_S$.
Suppose $(A,B)$ is non-trivial.
We claim that for every $i \in [2,k]$ there is a unique conformal model $\phi$ of $(R_i,{\sim})$ extending $\phi^{0}_S$.
To prove the claim for $i=2$ we need to show that
there is a unique extension of $\phi \equiv \phi^{0}_S$ by the chords $\phi(b_1)$ and $\phi(b_2)$.
The chord $\phi(b_1)$ must be placed such that it intersect one of the ends of $\phi(a_1)$, 
and which end is intersected can be decided based on whether $b_1 \in \leftside(a_2)$ or $b_1 \in \rightside(a_2)$.
The orientation of $\phi(b_1)$ can be decided based on whether $a_2 \in \leftside(b_1)$ or whether $a_2 \in \rightside(b_1)$. 
We show similarly that the placement and the orientation of $\phi(b_2)$ are uniquely determined.

Suppose $\phi$ is a unique conformal model of $(R_{i-1}, {\sim})$ extending $\phi^0_S$.
We show that there is a unique extension of $\phi$ by the chords $\phi(a_i)$ and $\phi(b_i)$.
Note that $(\{a_1,\ldots,a_{i-1}\}, {\sim})$ is a clique in $(R_k,{\sim})$, and
hence the chords $\{\phi(a_1),\ldots,\phi(a_{i-1})\}$ are pairwise intersecting.
There are $(i-1)$ possible placements for the non-oriented chord $\phi(a_i)$.
Every such a placement determines uniquely the partition of $\{b_1,\ldots,b_{i-1}\}$ into two sets $A$ and $B$ such that the chords from $\phi(A)$ are on one side of the non-oriented chord $\phi(a_i)$
and the chords from $\phi(B)$ are on the opposite side of $\phi(a_i)$.
Note that the partitions $\{A,B\}$ corresponding to different placements of $\phi(a_i)$ are different:
shifting the chord $\phi(a_i)$ by one chord $\phi(a_j)$ moves $b_j$ either from $A$ to $B$ or from $B$ to $A$.
Hence, to keep $\phi$ conformal, only one placement for $\phi(a_i)$ can be compatible with the partition
$$\{\{b_1,\ldots,b_{i-1}\} \cap \leftside(a_i), \{b_1,\ldots,b_{i-1}\} \cap \rightside(a_i)\}.$$
The orientation of $\phi(a_i)$ can be decided based on whether $b_1 \in \leftside(a_i)$ or whether $b_1 \in \rightside(a_i)$.
With a similar ideas to those used earlier, we show that there is a unique extension of $\phi$ by the chord $\phi(b_i)$ if we want to keep $\phi$ conformal.

The case when the split $(A,B)$ is trivial is handled with similar ideas.
This completes the proof of the lemma for the case when condition \eqref{eq:probe_condition} is satisfied.

Now consider the case when condition \eqref{eq:probe_condition} is not satisfied.
This happens when $k=2$ and $|C_2 \cup \alpha(C_2)|=1$.
That is, we have $C_2 = \{a_2\}$ and $\alpha(C_2) = \emptyset$.
If $|C_1 \cup \alpha(C_1)| =2$, then $C_1 = \{a_1\}$, $\alpha(C_1) = \{b_1\}$,
and $U = \{a_1,b_1,a,a_2\}$.
In this case $(U,{\sim})$ induces $P_4$ in $G_{ov}$, 
which has a unique, up to reflection, conformal model.
So, in the remaining of the proof we assume $|C_1 \cup \alpha(C_1)| \geq 3$. 
We consider two cases depending on whether or not $(\{a\} \cup C_1 \cup \alpha(C_1),{\sim})$ is prime.

Suppose $(\{a\} \cup C_1 \cup \alpha(C_1),{\sim})$ is prime. 
By the inductive hypothesis, $(\{a\} \cup C_1 \cup \alpha(C_1),{\sim})$ has two conformal models, 
$\phi$ and its reflection $\phi^R$.
Suppose that $\phi \equiv a^0 \pi a^{1} \pi'$ for some $\pi,\pi'$.
By Theorem~\ref{thm:trivial_split_representations}, there are two extensions of $\phi$
by the chord $\phi(a_2)$ that lead to a chord model of $(U,{\sim})$:
$\phi^1 \equiv a^0 a_2 \pi a^1 \pi' a_2$ or $\phi^2 \equiv a^0 \pi a_2 a^1 a_2 \pi'$.
Depending on whether $a_2 \in \leftside(a_1)$ or $a_2 \in \rightside(a_1)$, 
only one among them can be extended to a conformal model of $(U,{\sim})$.
The orientation of $\phi(a_2)$ can be decided based on whether $a_1 \in \leftside(a_2)$ or whether $a_1 \in \rightside(a_2)$.

Suppose $(\{a\} \cup C_1 \cup \alpha(C_1), {\sim})$ has a non-trivial module~$M$.
Observe that $a \in M$.
Otherwise, we would have either $M \subseteq C_1$ or $M \subseteq \alpha(C_1)$.
In both these cases, $M$ would be also a trivial module of $(U,{\sim})$, a contradiction.
For the remaining part of the proof, let $M_1 = M \cap C_1$ and 
$M_2 = C_1 \setminus M$.
We claim that $M_1 \neq \emptyset$ and $M_2 \neq \emptyset$.
Suppose that $M_1 = \emptyset$.
Then, $M \cap \alpha(C_1) \neq \emptyset$ as $M$ is a non-trivial module in $(\{a\} \cup C_1 \cup \alpha(C_1),{\sim})$.
Moreover, $(M \cap \alpha(C_1)) \sim C_1$ as
otherwise there would be a vertex in $C_1$ from outside $M$ adjacent to $a$ in $M$ 
and non-adjacent to a vertex in $(M \cap \alpha(C_1))$.
Furthermore, since $\alpha(C_1) \parallel a$,
we must have $(\alpha(C_1) \setminus M) \parallel (M \cap \alpha(C_1))$.
Hence, the split $(A,B)$ is not maximal as then the set $A=\{a\}$ could be extended by $\alpha(C_1) \cap M$, 
which is not the case.
This proves $M_1 \neq \emptyset$.
Now, we prove $M_2 \neq \emptyset$.
Assuming otherwise, since $(\{a\} \cup C_1) \subset M$ and $a \parallel \alpha(C_1)$, one can show $M = \{a\} \cup C_1 \cup \alpha(C_1)$ by connectivity of $(U,{\sim})$.
Hence, $M$ would be trivial in $(\{a\} \cup C_1 \cup \alpha(C_1), {\sim})$, which is not the case.
This proves $M_2 \neq \emptyset$.

We partition the vertices of $\alpha(C_1)$ into to sets: $\alpha(M_1)$ and $\alpha(M_2)$.
Let $(D,{\sim})$ be a connected component of $(\alpha(C_1),{\sim})$.
We have $D \subset \alpha(M_1)$ if there is an edge between some vertex in $D$ and some vertex in $M_1$;
otherwise we have $D \subset \alpha(M_2)$.
In particular, for every component $D \subset \alpha(M_2)$ we have $D \parallel M_1$ and 
some vertex from $D$ is adjacent to a vertex in $M_2$ as $(U,{\sim})$ is connected.
Observe that $D \subset M$ if $D \subset \alpha(M_1)$ and 
that $M_2 \sim (M_1 \cup \alpha(M_1))$ as otherwise
there would be a vertex in $M_2$ from outside $M$ adjacent to $a$ in $M$ and not adjacent to some vertex in $M$.
Summing up, we have that:
\begin{itemize}
 \item $M_1 \neq \emptyset$, $M_2 \neq \emptyset$, and ($\alpha(M_1) \neq \emptyset$ or $\alpha(M_2) \neq \emptyset$),
 \item $M_2 \sim (M_1 \cup \alpha(M_1))$,
 \item $\alpha(M_2) \parallel (M_1 \cup \alpha(M_1))$,
\end{itemize}
Note that not necessarily $(\alpha(M_1),\alpha(M_2)) = (\alpha(C_1) \cap M,\alpha(C_1) \setminus M)$.
It might happen that there is a unique singleton component $(\{d\},{\sim})$ in $(\alpha(C_1),{\sim})$ such that 
$d \in \alpha(M_2)$ and $d \in M$ 
(in this case $d \parallel M_1$ and $d \sim M_2$).
For any component $D$ different than $\{d\}$, $D \subset \alpha(M_1)$ iff $D \subset M$.
Having in mind the above properties, note that:
\begin{itemize}
 \item $(a_2,a,M_1,\alpha(M_1))$ is a probe in $(U,{\sim})$ if $\alpha(M_1) \neq \emptyset$,
 \item $(a_2,a,M_2,\alpha(M_2))$ is a probe in $(U,{\sim})$ if $\alpha(M_2) \neq \emptyset$.
\end{itemize}
If $\alpha(M_i) = \emptyset$ then $|M_i|=1$ as otherwise $M_i$ would be a non-trivial module in $(U,{\sim})$.

Let $S = \{a,a_2\}$. 
We show that there is a unique conformal model $\phi$ of $U$ that extends 
$\phi_S \equiv a^0a^1_2a^1a^0_2$.
Suppose $\alpha(M_1) \neq \emptyset$.
Let $P_i = \{a_2,a\} \cup M_i \cup \alpha(M_i)$ for $i \in [2]$.
By Claim \ref{claim:probes_have_unique_model}, $(P_1,{\sim})$ has a unique conformal model $\phi_1$ extending $\phi_S$, which is of the form either $\phi_1 \equiv a^0 a^{1}_2 \pi'_1 a^1 \pi'' a^0_2$ or $\phi_1 \equiv a^0 \pi'_1 a^1_2 a^1 a^0_2 \pi''$.
Suppose the first case, that is, 
$$\phi_1 \equiv a^0 a^{1}_2 \pi'_1 a^1 \pi'' a^0_2.$$
Since any conformal model $\phi$ of $(U,{\sim})$ extending $\phi_S$ must also extend $\phi_1$, 
$\phi$ must be of the form:
\begin{equation}
\label{eq:general_form_of_phi_U_prime}
\phi \equiv a^0a^1_2\tau'_{\phi} a^1 \tau''_{\phi} a^0_2,
\end{equation}
where for every $u \in C_1$ both words $\tau'_\phi$  and $\tau''_{\phi}$
contain exactly one labeled letter of $u$ and for every $u \in \alpha(C_1)$ both
the labeled letters of $u$ are either in $\tau'_{\phi}$ or $\tau''_{\phi}$.
Hence, $(P_2,{\sim})$ has a unique conformal model extending $\phi_S$ of the form
$$\phi_2 \equiv a^0a^1_2\pi'_2 a^1 \pi''a^0_2,$$
which follows from 
\eqref{eq:general_form_of_phi_U_prime}, from Claim \ref{claim:probes_have_unique_model} if $\alpha(M_2) \neq \emptyset$, and from $|M_2|=1$ if $\alpha(M_2) = \emptyset$.
\begin{claim}
\label{claim:properties_of_conformal_models_of_U}
Let $\phi$ be a conformal model of $(U,{\sim})$ extending $\phi_S$.
Then:
\begin{enumerate}
\item $\pi'_1$ and $\pi'_2$ are subwords of $\tau'_{\phi}$ and $|\pi'_1| + |\pi'_2| = |\tau'_{\phi}|$,
\item $\pi''_1$ and $\pi''_2$ are subwords of $\tau''_{\phi}$ and $|\pi''_1| + |\pi''_2| = |\tau''_{\phi}|$.
\item \label{item:prop_v_separates_u_and_a} for every $u \in \alpha(M_2)$ and every $v \in \alpha(M_1)$, either
 $\phi(u)$ and $\phi(v)$ are on the opposite site of $\phi(a)$, or
 there are on the same side of $\phi(a)$ and then 
 the chord $\phi(v)$ has the chords $\phi(u)$ and $\phi(a)$ on the opposite sides.
\end{enumerate}
\end{claim}
\begin{proof}
The first two statements are obvious.
Suppose $\phi(u)$ and $\phi(v)$ are on the same side of $\phi(a)$, 
but the chord $\phi(u)$ has the chords $\phi(v)$ and $\phi(a)$ on the opposite side.
Then, $\phi(u)$ must intersect some chord from $\phi(P_1)$ as $(P_1,{\sim})$ is connected.
However, this is not possible as $u \parallel P_1$.
\end{proof}
Our goal is to show that there is unique way to compose the words $\pi'_1$ and $\pi'_2$ and the words
$\pi''_1$ and $\pi''_2$ to get a conformal model of $(U,{\sim})$.

Suppose that there are two non-equivalent models $\phi_1$ and $\phi_2$ extending $\phi_S$.
We say that $x \in M_1 \cup \alpha(M_1)$ and $y \in M_2 \cup \alpha(M_2)$ are \emph{mixed in $\tau'_{\phi_1}$ and $\tau'_{\phi_2}$}
if $\tau'_{\phi_1}|\{x',y'\} \neq \tau'_{\phi_2}|\{x',y'\}$ for some $x' \in \{x^{0},x^{1}\}$ and $y' \in \{y^0,y^1\}$.
That is, $x'$ and $y'$ are mixed in $\tau'_{\phi_1}$ and $\tau'_{\phi_2}$ if they occur in $\tau'_{\phi_1}$ and $\tau'_{\phi_2}$ in different order.
We introduce the notion of being mixed in $\tau''_{\phi_1}$ and $\tau''_{\phi_2}$ similarly.
Clearly, if $\phi_1$ and $\phi_2$ are non-equivalent,
there are vertices $x \in M_1 \cup \alpha(M_1)$ and $y \in M_2 \cup \alpha(M_2)$ such that $x$ and $y$ are mixed either 
in $\tau'_{\phi_1}$ and $\tau'_{\phi_2}$ or in $\tau''_{\phi_1}$ and $\tau''_{\phi_2}$.
Suppose $x \in M_1 \cup \alpha(M_1)$ and $y \in M_2 \cup \alpha(M_2)$ are mixed in $\tau'_{\phi_1}$ and $\tau'_{\phi_2}$.
We claim that $x \in M_1$ and $y \in M_2$.
We can not have $x \in M_1$ and $y \in \alpha(M_2)$ 
as in any conformal model $\phi$ of $(U,{\sim})$ of the form \eqref{eq:general_form_of_phi_U_prime} the chord
$\phi(y)$ is always on the right side of $\phi(x)$ if $y \in \rightside(x)$
or is always on the left side of $\phi(x)$ if $y \in \leftside(x)$.
We can not have $x \in \alpha(M_1)$ and $y \in M_2$ as 
in every conformal model of $(U,{\sim})$ the chord $\phi(x)$ intersects $\phi(y)$.
Finally, by Claim \ref{claim:properties_of_conformal_models_of_U}.\eqref{item:prop_v_separates_u_and_a} we can not have $x \in \alpha(M_1)$ and $y \in \alpha(M_2)$.
It means that $x \in M_1$ and $y \in M_2$ and hence $x \sim y$.
However, it also means that $x$ and $y$ are mixed in $\tau''_{\phi_1}$ and $\tau''_{\phi_2}$.
Hence, from now we abbreviate and we say that $x$ and $y$ \emph{are mixed} if $x$ and $y$ are mixed in $\tau'_{\phi_1}$ and $\tau'_{\phi_2}$ and in $\tau''_{\phi_1}$ and $\tau''_{\phi_2}$.
Now, we claim that for every $x \in M_1$ and every $y \in M_2$:
\begin{equation}
\label{eqref:mixed_properties}
\{x,y\} \sim \alpha(M_1) \text{ and } \{x, y\} \parallel \alpha(M_2) \text{ if $x$ and $y$ are mixed.}
\end{equation}
We prove $\{x,y\} \sim \alpha(M_1)$. 
Clearly, $y \sim \alpha(M_1)$ and $x \parallel \alpha(M_2)$ by the properties of $P_1$ and $P_2$.
Suppose there is $v \in \alpha(M_2)$ such that $v \parallel x$.
If this is the case, the relative position of $\phi(x)$ and $\phi(v)$ is the same
in any conformal model $\phi$ of $(U,{\sim})$ of the form \eqref{eq:general_form_of_phi_U_prime}.
Since $\phi_i(y)$ intersects $\phi_i(v)$ for every $i \in [2]$, $x$ and $y$ can not be mixed.
The second statement of \eqref{eqref:mixed_properties} is proved similarly.
Note that $(C_1,\sim)$ is a permutation subgraph of $(U,{\sim})$ as $a \sim C_1$.
Hence, if $x \in M_1$ is mixed with $y \in M_2$ and $x \parallel z$ for some $z \in M_1$, 
then $z$ is also mixed with $y$.
Similarly, if $x \in M_1$ is mixed with $y \in M_2$ and $t \parallel y$ for some $t \in M_2$, 
then $t$ is mixed with $x$.
Now, let
$$W = \bigcup \{ \{z,t\}: \text{ $z$ and $t$ are mixed}\},$$
that is, $W$ contains all the elements in $C_1$ that are mixed with some other element in $C_1$.
Note that $W$ contains at least two elements as there are at least two elements that are mixed.  
Moreover, $W \subset C_1 \subsetneq U$.
Note that $W \sim (C_1 \setminus W)$ from the observation given above.
Now, by the properties from \eqref{eqref:mixed_properties} we conclude that $W$ is a non-trivial module in $(U,{\sim})$, 
which is a contradiction.

Consider the remaining case $\alpha(M_2) \neq \emptyset$ and $\alpha(M_1) = \emptyset$.
In this setting $|M_1| = 1$.
Suppose $M_1 = \{x\}$.
We show that there is a unique extension of $\phi_2 \equiv a^0_2a^1\pi'_2 a_2^1 \pi''_2a^0$
by a chord $\phi(x)$ to a conformal model of $(U,{\sim})$.
We introduce the mixing relation analogously to the previous case.
Using similar ideas as previously we prove that $u$ and $x$ can not be mixed if $u \in \alpha(C_2)$. 
So, $x$ can be mixed only with the elements in $M_2$.
Let
$$W = \bigcup \{ \{x,y\}: \text{ $x$ and $y$ are mixed}\}.$$
Clearly, if there are two non-equivalent conformal models $\phi_1$ and $\phi_2$ of $(U,{\sim})$, 
then $x$ is mixing with some element $y \in M_2$.
Then $\{x,y\} \subset W \subset C_1$.
Using similar ideas as previously, one can prove that $W$ is a non-trivial module in $(U,{\sim})$.
\end{proof}
%Let us comment on a possible gap in the Hsu's proof (Theorem 5.7 of \cite{Hsu95}).
%The problem is in the sentence:
%\emph{Then it is easy to verify that the corresponding two components graphs, $H_1$ and $H_2$ of $G_{ov}$ (the %existence of a substitution in either $H_1$ or $H_2$ would imply the existence of one in $G_{ov}$) are $s$-%inseparable}, where
%$s$-inseparable means prime in our language.
%The graphs $H_1$ and $H_2$ in his proof arise from any split $(\alpha(A), A, B, \alpha(B))$ of $G_{ov}$ by
%contracting to a single vertex the set $B$ in the graph $(\alpha(A) \cup A \cup B, {\sim})$ and 
%the set $A$ in the graph $(\alpha(B) \cup B \cup A, {\sim})$.
%However, if we consider the split $(\{a_2\},\{a\},C_1,\alpha(C_1))$ between $\{a\}$ and $C_1$ 
%considered in the last case of our proof, 
%neither $H_1$ nor $H_2$ is prime.
Suppose $\phi^{0}_U$ and $\phi^{1}_U$ are two conformal models of $(U,{\sim})$ described by Lemma \ref{lemma:two_models_of_a_prime_graph}.
Inspired by Hsu \cite{Hsu95}, we define a \emph{consistent decomposition} of $M$,
the usefulness of which is highlighted in the upcoming Lemma \ref{lemma:consistent_decompositions}.
For every $i \in [k]$ we introduce an equivalence relation $K$ in the set $M_i$.
Depending on the type of $M_i$, the relation $K$ is defined as follows:
\begin{itemize}
\item If $M_i$ is prime, then $v K v'$ for every $v,v' \in M_i$.
\item If $M_i$ is parallel, then
$$
v K v'  \iff 
\begin{array}{c}
\vspace{1pt}
\text{either $\{v,v'\} \subseteq \leftside(u)$ or $\{v,v'\} \subseteq \rightside(u)$,} \\
\text{for every $u \in U \setminus M_i$.}
\end{array}
$$
\item If $M_i$ is serial, then 
$$
v K v'  \iff 
\begin{array}{c}
\vspace{1pt}
\{ \leftside(v) \cap (U \setminus M_i), \rightside(v) \cap (U \setminus M_i) \} = \\
\{ \leftside(v') \cap (U \setminus M_i), \rightside(v') \cap (U \setminus M_i)\}.
\end{array}
$$
\end{itemize}

Suppose $K(M_i)$ is a set of equivalence classes of $K$-relation in the module $M_i$.
The set $K(M_i)$ is called the \emph{consistent decomposition of $M_i$}
and the set $K(M) = \bigcup_{i=1}^{k} K(M_i)$ is called the \emph{consistent decomposition of $M$}.
The elements of $K(M_i)$ and $K(M)$ are called the \emph{consistent submodules} of $M_i$ and $M$, respectively.
See Figure \ref{fig:consistent_decomposition} for an illustration.

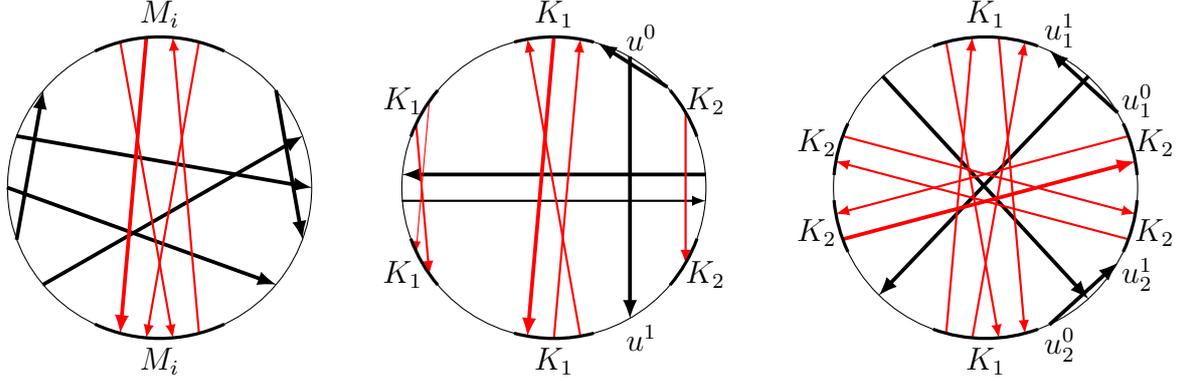
\begin{figure}[htp!]
\centering
\begin{tikzpicture}[scale=1,>=latex,shorten >=-0.2pt,shorten <=-0.2pt]
\coordinate (center) at (0,0) {};
\coordinate (ltau) at ($(center)+(90:2.3cm)$) {};
\coordinate (ltau') at ($(center)+(270:2.3cm)$) {};

\coordinate (u4) at ($(center)+(75:2cm)$) {};
\coordinate (u3) at ($(center)+(85:2cm)$) {};
\coordinate (u2) at ($(center)+(95:2cm)$) {};
\coordinate (u1) at ($(center)+(105:2cm)$) {};

\coordinate (b4) at ($(center)+(255:2cm)$) {};
\coordinate (b3) at ($(center)+(265:2cm)$) {};
\coordinate (b2) at ($(center)+(275:2cm)$) {};
\coordinate (b1) at ($(center)+(285:2cm)$) {};

\coordinate (l5) at ($(center)+(140:2cm)$) {};
\coordinate (l4) at ($(center)+(160:2cm)$) {};
\coordinate (l3) at ($(center)+(180:2cm)$) {};
\coordinate (l2) at ($(center)+(200:2cm)$) {};
\coordinate (l1) at ($(center)+(220:2cm)$) {};

\coordinate (r5) at ($(center)+(-40:2cm)$) {};
\coordinate (r4) at ($(center)+(-20:2cm)$) {};
\coordinate (r3) at ($(center)+(0:2cm)$) {};
\coordinate (r2) at ($(center)+(20:2cm)$) {};
\coordinate (r1) at ($(center)+(40:2cm)$) {};

\draw ($(center)$) circle (2cm);

\draw[line width=0.5mm,->] (l2)-- (l5);
\draw[line width=0.5mm,->] (r1)-- (r4);

\draw[line width=0.5mm,->] (l1)-- (r2);
\draw[line width=0.5mm,->] (l3)-- (r5);
\draw[line width=0.5mm,->] (l4)-- (r3);

\draw[thick,red,->] (u1)-- (b2);
\draw[line width=0.5mm,red,->] (u2)-- (b4);
\draw[thick,red,<-] (u3)-- (b1);
\draw[thick,red,->] (u4)-- (b3);

\draw[very thick] ([shift=(245:2cm)]0,0) arc (245:295:2cm);
\draw[very thick] ([shift=(65:2cm)]0,0) arc (65:115:2cm);

\tikzstyle{every node}=[inner sep=1pt]
\node at (ltau) {$M_i$};
\node at (ltau') {$M_i$};

\end{tikzpicture} 
\hspace{0.6cm}
%parallel
\begin{tikzpicture}[scale=1,>=latex,shorten >=-0.2pt,shorten <=-0.2pt]
\coordinate (center) at (0,0) {};

\coordinate (u0) at ($(center)+(60:2.3cm)$) {};
\coordinate (u1) at ($(center)+(300:2.3cm)$) {};

\coordinate (ltau1) at ($(center)+(150:2.3cm)$) {};
\coordinate (ltau'1) at ($(center)+(210:2.3cm)$) {};
\coordinate (ltau2) at ($(center)+(90:2.3cm)$) {};
\coordinate (ltau'2) at ($(center)+(270:2.3cm)$) {};
\coordinate (ltau3) at ($(center)+(30:2.3cm)$) {};
\coordinate (ltau'3) at ($(center)+(330:2.3cm)$) {};

\coordinate (tau1u1) at ($(center)+(145:2cm)$) {};
\coordinate (tau1u2) at ($(center)+(155:2cm)$) {};

\coordinate (tau1b1) at ($(center)+(205:2cm)$) {};
\coordinate (tau1b2) at ($(center)+(215:2cm)$) {};

\coordinate (tau2u1) at ($(center)+(80:2cm)$) {};
\coordinate (tau2u2) at ($(center)+(90:2cm)$) {};
\coordinate (tau2u3) at ($(center)+(100:2cm)$) {};

\coordinate (tau2b1) at ($(center)+(260:2cm)$) {};
\coordinate (tau2b2) at ($(center)+(270:2cm)$) {};
\coordinate (tau2b3) at ($(center)+(280:2cm)$) {};

\coordinate (tau3u1) at ($(center)+(30:2cm)$) {};
\coordinate (tau3b1) at ($(center)+(330:2cm)$) {};

\draw ($(center)$) circle (2cm);

\draw[line width=0.5mm,->] ($(center)+(5:2cm)$)--($(center)+(175:2cm)$);
\draw[black, thick,<-] ($(center)+(-5:2cm)$)--($(center)+(185:2cm)$);
\draw[line width=0.5mm,->] ($(center)+(60:2cm)$)--($(center)+(300:2cm)$);
\draw[line width=0.5mm,->] ($(center)+(42:2cm)$)--($(center)+(73:2cm)$);

\draw[red,->] (tau1u1)-- (tau1b1);
\draw[red,thick,->] (tau1u2)-- (tau1b2);

\draw[red,thick,<-] (tau2u1)-- (tau2b2);
\draw[line width=0.5mm,red,->] (tau2u2)-- (tau2b1);
\draw[red,thick,<-] (tau2u3)-- (tau2b3);

\draw[red,thick,->] (tau3u1)-- (tau3b1);

\draw[very thick] ([shift=(20:2cm)]0,0) arc (20:40:2cm);
\draw[very thick] ([shift=(75:2cm)]0,0) arc (75:105:2cm);
\draw[very thick] ([shift=(140:2cm)]0,0) arc (140:160:2cm);

\draw[very thick] ([shift=(200:2cm)]0,0) arc (200:220:2cm);
\draw[very thick] ([shift=(255:2cm)]0,0) arc (255:285:2cm);
\draw[very thick] ([shift=(320:2cm)]0,0) arc (320:340:2cm);

\tikzstyle{every node}=[inner sep=1pt]
\node at (ltau1) {$K_1$};
\node at (ltau'1) {$K_1$};
\node at (ltau2) {$K_1$};
\node at (ltau'2) {$K_1$};
\node at (ltau3) {$K_2$};
\node at (ltau'3) {$K_2$};
\node at (u0) {$u^0$};
\node at (u1) {$u^1$};

\end{tikzpicture} 
\hspace{0.6cm}
\begin{tikzpicture}[scale=1,>=latex,shorten >=-0.2pt,shorten <=-0.2pt]
\coordinate (center) at (0,0) {};
\coordinate (u00) at ($(center)+(30:2.3cm)$) {};
\coordinate (u10) at ($(center)+(65:2.3cm)$) {};
\coordinate (u01) at ($(center)+(295:2.3cm)$) {};
\coordinate (u11) at ($(center)+(330:2.3cm)$) {};

\coordinate (ltau1) at ($(center)+(165:2.3cm)$) {};
\coordinate (ltau'1) at ($(center)+(345:2.3cm)$) {};
\coordinate (ltau3) at ($(center)+(195:2.3cm)$) {};
\coordinate (ltau'3) at ($(center)+(15:2.3cm)$) {};
\coordinate (ltau2) at ($(center)+(90:2.3cm)$) {};
\coordinate (ltau'2) at ($(center)+(270:2.3cm)$) {};

\coordinate (u4) at ($(center)+(75:2cm)$) {};
\coordinate (u3) at ($(center)+(85:2cm)$) {};
\coordinate (u2) at ($(center)+(95:2cm)$) {};
\coordinate (u1) at ($(center)+(105:2cm)$) {};

\coordinate (b4) at ($(center)+(255:2cm)$) {};
\coordinate (b3) at ($(center)+(265:2cm)$) {};
\coordinate (b2) at ($(center)+(275:2cm)$) {};
\coordinate (b1) at ($(center)+(285:2cm)$) {};

\coordinate (tau1u1) at ($(center)+(160:2cm)$) {};
\coordinate (tau1u2) at ($(center)+(170:2cm)$) {};
\coordinate (tau1b1) at ($(center)+(340:2cm)$) {};
\coordinate (tau1b2) at ($(center)+(350:2cm)$) {};

\coordinate (tau3u1) at ($(center)+(10:2cm)$) {};
\coordinate (tau3u2) at ($(center)+(20:2cm)$) {};
\coordinate (tau3b1) at ($(center)+(190:2cm)$) {};
\coordinate (tau3b2) at ($(center)+(200:2cm)$) {};

\draw ($(center)$) circle (2cm);

\draw[line width=0.5mm,->] ($(center)+(47.5:2cm)$)-- ($(center)+(225.5:2cm)$);
\draw[line width=0.5mm,->] ($(center)+(30:2cm)$)-- ($(center)+(65:2cm)$);

\draw[line width=0.5mm,->] ($(center)+(132.5:2cm)$)-- ($(center)+(312.5:2cm)$);
\draw[line width=0.5mm,->] ($(center)+(295:2cm)$)-- ($(center)+(330:2cm)$);

\draw[thick,red,->] (tau1u1)-- (tau1b2);
\draw[thick,red,<-] (tau1u2)-- (tau1b1);

\draw[line width=0.5mm,red,<-] (tau3u1)-- (tau3b2);
\draw[thick,red,->] (tau3u2)-- (tau3b1);

\draw[thick,red,->] (u1)-- (b2);
\draw[thick,red,<-] (u2)-- (b4);
\draw[thick,red,->] (u3)-- (b1);
\draw[thick,red,<-] (u4)-- (b3);

\draw[very thick] ([shift=(5:2cm)]0,0) arc (5:25:2cm);
\draw[very thick] ([shift=(70:2cm)]0,0) arc (70:110:2cm);
\draw[very thick] ([shift=(155:2cm)]0,0) arc (155:175:2cm);

\draw[very thick] ([shift=(185:2cm)]0,0) arc (185:205:2cm);
\draw[very thick] ([shift=(250:2cm)]0,0) arc (250:290:2cm);
\draw[very thick] ([shift=(335:2cm)]0,0) arc (335:355:2cm);

\tikzstyle{every node}=[inner sep=1pt]
\node at (ltau1) {$K_2$};
\node at (ltau'1) {$K_2$};
\node at (ltau2) {$K_1$};
\node at (ltau'2) {$K_1$};
\node at (ltau3) {$K_2$};
\node at (ltau'3) {$K_2$};

\node at (u00) {$u^0_1$};
\node at (u10) {$u^1_1$};
\node at (u01) {$u^0_2$};
\node at (u11) {$u^1_2$};

\end{tikzpicture} 

\caption{\label{fig:consistent_decomposition} Proper module $M_i$ and its consistent submodules.
Chords associated with the module $M_i$ are in red, chords associated with $U$ are bolded.
From left to right: proper prime $M_i$ has one consistent submodule $M_i$, 
proper parallel $M_i$ has two consistent submodules $K_1$ and $K_2$ ($u$
has the vertices from $K_1$ and $K_2$ on different sides, which proves that the elements from $K_1$ 
and the elements from $K_2$ are not in $K$-relation),
proper serial $M_i$ has two consistent submodules $K_1$ and $K_2$ (the vertices from $K_1$
have $u_1,u_2$ on the same size, the vertices from $K_2$ have $\{u_1,u_2\}$ on different sides,
which proves that the elements from $K_1$ are not in $K$-relation with the elements from $K_2$).}
\end{figure}

%We mention that the consistent decomposition of $M$ is inspired by Hsu.
\begin{claim}
\label{claim:consistent_modules_structure}
Suppose $M_i$ is a child of $M$ in $\mathcal{M}(G_{ov})$.
Then:
\begin{enumerate}
 \item \label{item:consistent_modules_structure_prime} if $M_i$ is prime, then $K(M_i) = \{M_i\}$,
 \item \label{item:consistent_modules_structure_serial_parallel} if $M_i$ is serial or parallel, then every consistent submodule of $M_i$ is the union of some children of $M_i$ in $\mathcal{M}(G_{ov})$.
\end{enumerate}
In particular, every consistent submodule of $M_i$ is a submodule of $M_i$.
\end{claim}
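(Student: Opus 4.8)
The plan is to reduce the whole statement to the single implication: if $v$ and $v'$ lie in a common child $N$ of $M_i$ in $\mathcal M(G_{ov})$, then $v\,K\,v'$. Granting it, part~\eqref{item:consistent_modules_structure_prime} is immediate (for prime $M_i$ the relation $K$ already relates all pairs of $M_i$), and for part~\eqref{item:consistent_modules_structure_serial_parallel} every $K$-class, meeting each child of $M_i$ either in the whole child or not at all, is a union of children of $M_i$. Finally, such a union $E$ is a module of $G_{ov}$: a vertex outside $M_i$ is joined uniformly (by $\sim$ or by $\parallel$) to all of $M_i$, hence to $E$, and a child of $M_i$ not included in $E$ is joined uniformly -- by $\sim$ if $M_i$ is serial, by $\parallel$ if $M_i$ is parallel -- to every child included in $E$; together with the trivial prime case this also gives the last sentence. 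So I would devote the proof entirely to the displayed implication.

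First I would fix a conformal model $\phi$ of $G_{ov}$ (one exists because $G$ is a circular-arc graph) and recall that for $p\parallel q$ one has $p\in\leftside(q)$ iff $\phi(p)$ lies on the left side of $\phi(q)$. Take $v,v'\in N$ and any $u\in U\setminus M_i$. As $M_i$ is a module and $u\notin M_i$, either $u\sim M_i$ or $u\parallel M_i$; in the first case $u$ overlaps every vertex of $M_i$, so $u\notin\leftside(v)\cup\rightside(v)$, $u\notin\leftside(v')\cup\rightside(v')$ and $v,v'\notin\leftside(u)\cup\rightside(u)$, so this $u$ is irrelevant to whether $v\,K\,v'$. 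Hence only the vertices $u\in U_0:=\{u\in U\setminus M_i:u\parallel M_i\}$ matter; note $U_0\subseteq C[M_i]$, because $U\subseteq M$ if $M\subsetneq V$ and $U\subseteq V=C[M_i]$ if $M=V$, and $U_0\cap N[M_i]=\emptyset$. If $M_i$ is \emph{parallel}, then its child $N$ is prime or serial, so $(N,{\sim})$ is connected; for $u\in U_0$ the chord $\phi(u)$ crosses no chord of $N$, so connectivity of $(N,{\sim})$ forces all chords of $N$ onto one side of $\phi(u)$, giving $v\in\leftside(u)\iff v'\in\leftside(u)$. Thus no $u\in U\setminus M_i$ separates $v$ from $v'$, which is exactly $v\,K\,v'$.

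If $M_i$ is \emph{serial} with children $N^{(1)},\ldots,N^{(t)}$ (say $N=N^{(l)}$), I would invoke Lemma~\ref{lemma:circle_models_of_a_serial_module}: $\phi|M_i\equiv\sigma_{j_1}\ldots\sigma_{j_t}\sigma'_{j_1}\ldots\sigma'_{j_t}$, where $(\sigma_{j_p},\sigma'_{j_p})$ is a permutation model of $(N^{(j_p)},{\sim})$, the pairs of blocks $\sigma_{j_p},\sigma'_{j_p}$ are contiguous in $\phi|C[M_i]$, and the ``$\sigma$'' and ``$\sigma'$'' blocks are listed in the \emph{same} cyclic order $j_1,\ldots,j_t$; in particular every chord of $M_i$ has one endpoint among the $\sigma$-blocks and one among the $\sigma'$-blocks. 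Since $\phi(u)$ ($u\in U_0$) crosses no chord of $M_i$, a short inspection of the possible positions of its two endpoints among these blocks and the gaps between them -- using that the two block families appear in the same order -- shows that $\phi(u)$ cannot split $\{j_1,\ldots,j_t\}$ into two nonempty parts (otherwise a chord of an involved child would cross $\phi(u)$); hence \emph{all} chords of $M_i$ lie on one common side of $\phi(u)$. From this I would deduce that for each fixed $u\in U_0$ the truth of ``$u\in\leftside(v)$'' depends only on $u$ and on the orientation of $\phi(v)$ (whether $\phi(v)$ runs from its $\sigma$-block to its $\sigma'$-block or the reverse): $\phi(u)$ lies outside the cluster of all $M_i$-chords and therefore on the ``outer'' side of $\phi(v)$, which is left or right solely according to that orientation. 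Consequently, for $v,v'\in N$ the unordered pair $\{\leftside(v)\cap(U\setminus M_i),\ \rightside(v)\cap(U\setminus M_i)\}$ equals the corresponding pair for $v'$ -- unchanged if $\phi(v),\phi(v')$ carry the same orientation, swapped if opposite -- which is exactly $v\,K\,v'$. The step I expect to be the crux is precisely this last one: pinning down, uniformly in $v\in N$, on which named side of $\phi(v)$ the chord $\phi(u)$ lies from the sole information that all of $M_i$ sits on one side of $\phi(u)$. I would carry it out by passing to the normalized arc model $\psi$ obtained from $\phi$ by bending and examining the few ways the arc $\psi(u)$ can meet the arcs $\psi(w)$, $w\in M_i$ -- essentially $\psi(u)$ is either contained in the common outer region of $\phi(u)$ or contains it -- so that for a fixed $u$ the predicate ``$u\in\leftside(w)$'' flips precisely with the orientation of $\phi(w)$, licensing the global swap.
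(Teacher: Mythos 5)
Your proof is correct and follows essentially the same route as the paper, whose proof of this claim is barely more than a pointer to Lemmas~\ref{lemma:circle_models_of_a_proper_prime_module}, \ref{lemma:circle_models_of_a_parallel_module}, and~\ref{lemma:circle_models_of_a_serial_module}: the contiguity of each child's letters in any chord model of $C[M_i]$, together with the fact that $\phi(u)$ for $u\parallel M_i$ crosses no chord of $M_i$, is the substance in both cases, and you supply the bridge from that contiguity to the combinatorial $K$-relation which the paper leaves to the reader. A couple of small points of divergence. For parallel $M_i$ you derive ``all of $N$ on one side of $\phi(u)$'' directly from connectivity of $(N,{\sim})$ rather than from Lemma~\ref{lemma:circle_models_of_a_parallel_module}; this is an equivalent, slightly more elementary route. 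For serial $M_i$ the ``crux'' you flag does hold, but deserves one clarification: the complementary arc $B_u$ of $\phi(u)$ lies inside a single gap between consecutive $M_i$-blocks, and whether $B_u$ falls to the left or to the right of $\phi(v)$ is determined by the position of that gap relative to the two blocks of the child $N\ni v$ \emph{and} by the orientation of $\phi(v)$, but not by where exactly $\phi(v)$'s endpoints sit inside those blocks; fixing $N$ then gives exactly the swap invariance defining $K$ for serial modules, and passing to the bent arc model is one way to visualize this but is not needed. Your remark that vertices $u\sim M_i$ contribute nothing to $\leftside(\cdot)\cup\rightside(\cdot)$ is the correct reading of the paper's slightly over-literal phrasing of $K$ for parallel $M_i$, and both you and the paper implicitly rely on it.
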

\begin{proof}
Note that $M_i$ is proper as $M_i$ is a child of a prime module $M$.
Also, recall that $M$ is the connected component of $G_{ov}$ containing $M_i$.
Now, statement \eqref{item:consistent_modules_structure_prime} follows from Lemma \ref{lemma:circle_models_of_a_proper_prime_module} applied to the proper prime module $M_i$ and the connected component $M$ containing $M_i$.

Statement \eqref{item:consistent_modules_structure_serial_parallel} follows from Lemma \ref{lemma:circle_models_of_a_parallel_module} (Lemma \ref{lemma:circle_models_of_a_serial_module})
applied to the proper parallel (proper serial, respectively) module $M_i$ and the connected component $M$ of $G_{ov}$ containing~$M_i$.
\end{proof}

Suppose $K_1, \ldots, K_n$ is a consistent decomposition of $M$.
Recall that the set $U$ contains one element from every child of $M$.
So, $|K_i \cap U| \leq 1$ for every $i \in [n]$.
Fix a set $S = \{s_1,\ldots,s_n\} \subset M$ such that $U \subset S$ and $|K_i \cap S| = \{s_i\}$ for every $i \in [n]$. 
Call the set $S$ a \emph{skeleton} of $M$.
Thus, the skeleton of $M$ is a superset of $U$ containing exactly one element $s_i$ from every consistent submodule $K_i$ of $M$.
\begin{lemma}
\label{lemma:consistent_decompositions}
Suppose $K_1,\ldots,K_n$ is a consistent decomposition of $M$ 
and $S = \{s_1,\ldots,s_n\}$ is a skeleton of $M$.
Then:
\begin{enumerate}
\item\label{item:consistent_decompositions_conformal_models_skeleton}
 The graph $(S, {\sim})$ has exactly two conformal models, $\phi^{0}_S$ and $\phi^{1}_S$, one being the reflection of the other.
\item \label{item:consistent_decompositions_and_conformal_models}
For every conformal model $\phi$ of $(M,{\sim})$ and every $i \in [n]$, 
the words in $\phi|K_i$ form two contiguous subwords in the circular word $\phi$.
\end{enumerate}
\end{lemma}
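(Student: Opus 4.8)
The plan is to deduce both statements from the description of $\phi|M_j$ for the children $M_j$ of $M$ provided by Lemmas~\ref{lemma:circle_models_of_a_proper_prime_module}--\ref{lemma:circle_models_of_a_serial_module}, together with the two-model rigidity of the prime quotient from Lemma~\ref{lemma:two_models_of_a_prime_graph}. I first record two facts used throughout. Every child $M_j$ of $M$ is a \emph{proper} module of $(M,{\sim})$: since $M$ is prime, the quotient of $(M,{\sim})$ by its children is a prime graph, which (having at least four vertices) has no isolated vertex, so $M_j$ has a $\sim$-neighbouring child; in particular $C[M_j]=M$, so the cited structure lemmas apply to $M_j$ inside $(M,{\sim})$. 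And, by Claim~\ref{claim:consistent_modules_structure}, every consistent submodule $K_i$ is a submodule of the child $M_j$ of $M$ containing it: it equals $M_j$ when $M_j$ is prime or a singleton, and it is a union of children of $M_j$ when $M_j$ is serial or parallel.

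For statement~\eqref{item:consistent_decompositions_conformal_models_skeleton}, note that $U\subseteq S$ and that $U$ meets every child of $M$ in exactly one vertex, so $(U,{\sim})$ is the prime graph of Lemma~\ref{lemma:two_models_of_a_prime_graph} and has exactly two conformal models $\phi^{0}_U$ and $\phi^{1}_U=(\phi^{0}_U)^{R}$; every conformal model of $(S,{\sim})$ restricts on $U$ to one of them. Since $G$ is a circular-arc graph, $(S,{\sim})$ has at least one conformal model, and its reflection is a second one, distinct because the two already differ on $U$; so it remains to show that each of $\phi^{0}_U,\phi^{1}_U$ extends to at most one conformal model of $(S,{\sim})$. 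Fix $\phi^{0}_U$; the vertices of $S\setminus U$ are the extra skeleton representatives, one per consistent submodule not met by $U$ inside each parallel or serial child $M_j$ of $M$. I claim every chord $\phi(s_i)$ is forced. Within $M_j$, the set $S\cap M_j$ induces an edgeless graph (parallel $M_j$) or a clique (serial $M_j$), and Lemma~\ref{lemma:circle_models_of_a_parallel_module} (resp.~\ref{lemma:circle_models_of_a_serial_module}) shows $\phi|(S\cap M_j)$ is determined by an ordering of these representatives and an orientation of each of their chords; the representative lying in $U$ is already placed and oriented by $\phi^{0}_U$, and for any other representative $s_i\in S\cap M_j$ the definition of $K$ --- applied to $s_i$ against each previously placed representative --- supplies a vertex $u\in U\setminus M_j$ that distinguishes $s_i$ from those representatives (by the side of $\phi(u)$ it lies on, resp.\ by its left/right partition), which together with conformality pins down the position of $\phi(s_i)$; its orientation is then forced by a non-neighbour of $s_i$ already lying in $S$. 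Thus $\phi^{0}_U$ admits a unique extension, whence $(S,{\sim})$ has exactly two conformal models, and since the reflection of the extension of $\phi^{0}_U$ extends $\phi^{1}_U$, the two are reflections of one another.

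For statement~\eqref{item:consistent_decompositions_and_conformal_models}, fix a conformal model $\phi$ of $(M,{\sim})$ and a consistent submodule $K_i\subseteq M_j$. If $M_j$ is prime, then $K_i=M_j$ and Lemma~\ref{lemma:circle_models_of_a_proper_prime_module} already exhibits $\phi|M_j$ as two contiguous subwords of $\phi$; if $M_j=\{v\}$ is a singleton, then $v$ has a $\sim$-neighbour in $M$, so $v^{0}$ and $v^{1}$ are separated in $\phi$ and $\phi|\{v\}$ is two contiguous subwords. Suppose $M_j$ is parallel or serial and write $\phi|M_j$ according to Lemma~\ref{lemma:circle_models_of_a_parallel_module} (resp.~\ref{lemma:circle_models_of_a_serial_module}): its children occur in a fixed circular order along $\phi$, each contributing two blocks that are contiguous subwords of $\phi$. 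It then suffices to show (i) that the children of $M_j$ lying in $K_i$ occupy consecutive positions in this order, and (ii) that no labeled letter of $M\setminus K_i$ is interleaved with the letters of $K_i$ in $\phi$. Statement (i) is where the definition of $K$ is used: by Claim~\ref{claim:permutation_graphs_in_G_ov} applied with some $x\sim M_j$, a vertex of $M$ non-adjacent to $M_j$ whose chord separates an initial segment of the order of the children of $M_j$ from the rest yields --- through Lemmas~\ref{lemma:circle_models_of_a_proper_prime_module}--\ref{lemma:circle_models_of_a_serial_module} applied to that vertex's own child of $M$ --- a vertex $u\in U\setminus M_j$ placing two elements of $K_i$ on opposite sides of $\phi(u)$ (resp.\ inducing different left/right partitions), contradicting their $K$-equivalence; statement (ii) follows by the same mechanism, using in addition that nothing outside $M_j$ crosses the chords parallel to $M_j$. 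Combining (i) and (ii) with the block description of $\phi|M_j$ gives that $\phi|K_i$ consists of exactly two contiguous subwords of $\phi$.

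The main obstacle lies in part~\eqref{item:consistent_decompositions_and_conformal_models}: it is routine that $\phi|K_i$ forms two contiguous subwords \emph{inside} $\phi|(M_j\cup\{x\})$ for a suitable $x\sim M_j$, but promoting this to two-contiguity inside all of $\phi$ requires controlling exactly where the material of $M\setminus K_i$ --- in particular the material of children of $M$ parallel to $M_j$, which may nest inside the region of $M_j$ --- can sit, and this is precisely where the word \emph{consistent} in ``consistent submodule'', rather than merely ``module'', does the work; it is also what forces the case analysis on the type of $M_j$. A lesser difficulty, in part~\eqref{item:consistent_decompositions_conformal_models_skeleton}, is verifying that the separating vertex extracted from the definition of $K$ really determines both the position and the orientation of each new chord in the presence of the chords already placed, most delicately when $M_j$ is serial and $S\cap M_j$ induces a clique.
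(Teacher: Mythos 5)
Your overall strategy matches the paper's: reduce statement (1) to unique extendability of $\phi^{0}_U$ over the extra skeleton vertices by induction, and prove statement (2) by a case analysis on the type of the child $M_j$ containing $K_i$, invoking Lemmas~\ref{lemma:circle_models_of_a_proper_prime_module}--\ref{lemma:circle_models_of_a_serial_module} for the block structure. You also correctly diagnose where the work really is. But the diagnosis is not a proof, and the two places you flag as ``obstacles'' are exactly the places where the argument is still missing.

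For statement (1), the assertion that the $K$-distinguisher ``pins down the position of $\phi(s_i)$'' is the whole content of the inductive step, and it does not follow from the existence of a single separating $u\in U\setminus M_j$. Two candidate placements $\phi(x),\phi(y)$ for $s_l$ can still both be consistent with all the $\leftside/\rightside$ constraints imposed by the already-placed chords unless one rules this out. The paper does so by splitting on whether $\phi(x)$ and $\phi(y)$ cross: in the non-crossing case a connectivity/path argument inside the prime quotient produces a placed chord on the wrong side; in the crossing case one forms the set $S'$ of placed chords separating $x^{0}$ from $y^{0}$ and $x^{1}$ from $y^{1}$, shows $S'\cup\{s_l\}$ is a nontrivial module of $(\{s_1,\dots,s_l\},{\sim})$, concludes it lies inside a single serial child $M_i$, and derives $s_l\mathbin{K}s'$ for all $s'\in S'$ --- contradicting that a skeleton meets each $K$-class once. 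None of this is supplied by ``the $K$-distinguisher plus conformality''; in particular the non-crossing case is not even mentioned in your sketch, and the crossing case needs the module argument, not just a separating vertex.

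For statement (2), your reduction to (i) consecutivity of $K_i$'s children inside $M_j$'s block order and (ii) non-interleaving by material of $M\setminus M_j$ is the right shape, but ``follows by the same mechanism'' is not an argument, and the mechanism is nontrivial. In the serial case the paper must first prove existence of a vertex $u\in U\setminus M_j$ whose chord sits entirely in the complementary arcs of $\phi|K_i$ (this uses that $M_i\cup T$ would otherwise be a nontrivial module strictly between $M_i$ and $M$, impossible since $M_i$ is a maximal proper strong module), and then it needs a \emph{second} witness $u'$ to rule out the possibility that the ordered pairs $\bigl(\leftside(\cdot),\rightside(\cdot)\bigr)$ of two parts of $K_i$ agree only up to swapping --- remember the serial $K$-relation is defined via unordered pairs of sets. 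Your sketch produces one witness ``placing two elements of $K_i$ on opposite sides of $\phi(u)$'', which is insufficient for the serial case. The parallel case has a parallel two-witness structure (one for a chord nesting inside a gap, one for a chord of a different child straddling it). So while your route is the paper's route, the two claims you defer are precisely the lemmas-within-the-lemma that carry the load, and as written they remain genuinely open.
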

\begin{proof}
Without loss of generality we assume that $U = \{s_1,\ldots,s_{|U|}\}$.
Let $m \in \{0,1\}$.
By Lemma \ref{lemma:two_models_of_a_prime_graph}, $(U,{\sim})$ has two conformal models
$\phi^0_U$ and its reflection $\phi^1_U$.
Our proof is based on the following claim.
\begin{equation}
\label{eq:extending_conformal_models_of_U_on_S}
\begin{array}{c}
\text{For every $j = \{|U|,\ldots,n\}$ there is a unique conformal model $\phi^{m}_{j}$}\\
\text{of $(\{s_1,\ldots,s_j\},{\sim})$ such that $\phi^{m}_{j}|U = \phi^{m}_U$.}
\end{array}
\end{equation}
Clearly, statement \eqref{item:consistent_decompositions_conformal_models_skeleton} follows
from the claim for $j = n$.

We prove our claim by induction on $j$.
For $j=|U|$ the claim is trivially satisfied.
Suppose \eqref{eq:extending_conformal_models_of_U_on_S} holds for $j = l-1$ for some $l > |U|$.
Our goal is to prove \eqref{eq:extending_conformal_models_of_U_on_S} for $j=l$.
From the inductive hypothesis, there is a unique extension $\phi^m_{l-1}$ of $\phi^{m}_U$ on the set $\{s_1,\ldots,s_{l-1}\}$.
Suppose for a contradiction that there are two non-equivalent conformal models of $(\{s_1,\ldots,s_{l}\},{\sim})$ extending $\phi^{m}_{l-1}$.
That is, suppose there are two different placements for the chord of $s_l$ in the model $\phi^m_{l-1}$
that lead to two non-equivalent conformal models of $(\{s_1,\ldots,s_l\},{\sim})$.
Equivalently, there is a circular word $\phi$ extending $\phi^{m}_{l-1}$ by the letters $x^{0},x^{1}, y^{0}, y^{1}$ such that $\phi' \equiv \phi|\{s_1,\ldots,s_{l-1},x\}$ and $\phi'' \equiv \phi|\{s_1,\ldots,s_{l-1},y\}$ are non-equivalent conformal models of $(\{s_1,\ldots,s_l\},{\sim})$
if we replace $x^0$ by $s^0_l$ and $x^1$ by $s^1_l$ in $\phi'$ and $y^0$ by $s^0_l$ and $y^1$ by $s^1_l$ in $\phi''$.
Note that for every $s \in \{s_1,\ldots,s_{l-1}\}$ the circular word $\phi$ satisfies the following properties: 
\begin{itemize}
\item if $s \in \leftside(s_l)$, then $\phi(s)$ must be on the left side of $\phi(x)$ and $\phi(y)$.
\item if $s \in \rightside(s_l)$, then $\phi(s)$ must be on the right side of $\phi(x)$ and $\phi(y)$.
\item if $s \sim s_l$, then $\phi(s)$ must intersect both $\phi(x)$ and $\phi(y)$.
\end{itemize}

We consider two cases depending on whether the chords $\phi(x)$ and $\phi(y)$ intersect in~$\phi$.

Suppose that $\phi(x)$ and $\phi(y)$ do not intersect.
Suppose $\phi(y)$ is on the right side of $\phi(x)$ and $\phi(x)$ is on the left side of $\phi(y)$.
Since $\phi'$ and $\phi''$ are non-equivalent, there is $s^{*}$ in $\{s_1,\ldots,s_{l-1}\}^{*}$
such that $\phi(s^{*})$ is on the right side of $\phi(x)$ and on the left side of $\phi(y)$.
Since the chord $\phi(s)$ can not intersect both $\phi(x)$ and $\phi(y)$, we have $s \parallel s_l$.
But then $\phi(s)$ is on the right side of $\phi(x)$ and on the left side of $\phi(y)$, 
which contradicts one of the properties of $\phi$ listed above.

Suppose that $\phi(y)$ is on the right side of $\phi(x)$ and $\phi(x)$ is on the right side of $\phi(y)$.
Let $s \in \{s_1,\ldots,s_{l-1}\}$ be such that $s \parallel s_l$. 
The chord $\phi(s)$ must lie on the right side of $\phi(x)$ and the right side of $\phi(y)$ as
any other placement of $\phi(s)$ will contradict one of the properties of $\phi$.
Furthermore, $\phi(s)$ can not have $\phi(x)$ and $\phi(y)$ on its different sides.
Hence, $\phi(s)$ has both its ends either between $\phi(x^{1})$ and $\phi(y^{0})$ or
between $\phi(y^{1})$ and $\phi(x^{0})$.
Note that $s$ and $s_l$ belong to different children of $M$.
Otherwise, supposing that $s,s_l \in M_i$ for some child module $M_i$ of $M$, 
the chord $\phi(u)$ for any $u$ such that $u \in U \setminus M_i$ and $u \sim M_i$ can not intersect $\phi(s)$, $\phi(x)$, and $\phi(y)$ at the same time. 
Thus, there is a path $P$ in the graph $(\{s_1,\ldots,s_l\},{\sim})$ joining $s$ and $s_l$ with all the inner vertices in $U$.
Then, there must be an inner vertex $s'$ in $P$ such that $\phi(s')$ has $\phi(x)$ and $\phi(y)$ on its different sides. 
This contradicts the properties of $\phi$.
The remaining cases are proved similarly.

Suppose that $\phi(x)$ and $\phi(y)$ intersect.
Without loss of generality we assume that $\phi|\{x,y\} \equiv x^{0}y^{0}x^{1}y^{1}$.
First, note that for every $s \in \{s_1,\ldots,s_{l-1}\}$ 
the chord $\phi(s)$ can not have both its ends between $\phi(x^{0})$ and $\phi(y^{0})$.
Otherwise, $\phi(s)$ would be on the left side of $\phi(x)$ and on the right side of $\phi(y)$,
which is not possible.
For the same reason, $\phi(s)$ can not have both its ends between $\phi(x^{1})$ and $\phi(y^{1})$.
Let $S'$ be the set of all $s \in \{s_1,\ldots,s_{l-1}\}$
such that $\phi(s)$ has one end between $\phi(x^{0})$ and $\phi(y^{0})$ and the other end between 
$\phi(x^{1})$ and $\phi(y^{1})$.
Clearly, $S' \neq \emptyset$ as $\phi'$ and $\phi''$ are not equivalent.
Observe that $S' \cup \{s_l\}$ is a proper module in $(\{s_1,\ldots,s_l\}, {\sim})$.
Indeed, for every $t \in \{s_1,\ldots,s_{l-1}\} \setminus S'$ 
the chord $\phi(t)$ has either both ends between $\phi(y^{0})$ and $\phi(x^{1})$ 
or between $\phi(y^{1})$ and $\phi(x^{0})$, or has one of its ends between $\phi(y^{0})$ and $\phi(x^{1})$
and the second between $\phi(y^{1})$ and $\phi(x^{0})$.
In any case, either $t \sim (S' \cup \{s_l\})$ or $t \parallel (S' \cup \{s_l\})$.
Note that $S' \cup \{s_l\}$ is properly contained in $\{s_1,\ldots,s_l\}$
as otherwise $s_l \sim S'$ and $(U,{\sim})$ would not be prime.
Since the sets $M_1,\ldots,M_k$ restricted to $\{s_1,\ldots,s_l\}$ form
a partition of $\{s_1,\ldots,s_l\}$ into $k$ maximal submodules in $(\{s_1,\ldots,s_l\},{\sim})$,
we imply that $(S' \cup \{s_l\}) \subseteq M_i$ for some $i \in [k]$.
In particular, $M_i$ must be serial as $s_l \sim S'$.
Since for every $u \in U$ such that $u \parallel M_i$ the chord
$\phi(u)$ has both its ends between $\phi(y^{0})$ and $\phi(x^{1})$ or between
$\phi(y^{1})$ and $\phi(x^{0})$, we have $s_l K s'$ for every $s' \in S'$.
However, it can not be the case as $S$ contains exactly one element from every consistent submodule of $M_i$.

To prove statement \eqref{item:consistent_decompositions_and_conformal_models}
assume that $\phi$ is a conformal model of $(M,{\sim})$.

Suppose $K_j = M_i$, where $M_i$ is a prime child of $M$.
Then, statement \eqref{item:consistent_decompositions_and_conformal_models}
follows from Lemma~\ref{lemma:circle_models_of_a_proper_prime_module}
applied to the prime module $M_i$ contained in the connected component $(M,{\sim})$.

Suppose $K_j$ is a consistent submodule of $M_i$, where $M_i$ is a serial child of $M$.
Since $M$ is prime, there is $x \in M \setminus M_i$ such that $x \sim M_i$.
Suppose that $x \in M_{i'}$ for some $i' \in [k]$ different than $i$.
By Claim \ref{claim:permutation_graphs_in_G_ov}, 
$\phi|K_j \cup \{x\} \equiv x^{0} \tau x^{1} \tau'$, where 
$(\tau, \tau^{'})$ is an oriented permutation model of $(K_j,{\sim})$.
Denote by $l^{0}$ and $l^{3}$ the first and the last letter from $K_j^{*}$, respectively,
if we traverse $\phi$ from $\phi(x^{0})$ to $\phi(x^{1})$.
Similarly, denote by $r^{0}$ and $r^3$ the first and the last letter from $K_j^{*}$, respectively,
if we traverse $\phi$ from $\phi(x^{1})$ to $\phi(x^{0})$.
%-- see Figure 1 for an illustration.

We claim that there is $u \in U \setminus M_i$ such that
$\phi(u)$ has both its ends either between $\phi(l^{3})$ and $\phi(r^{0})$ or between $\phi(r^{3})$ and $\phi(l^{0})$.
Assume otherwise.
Let $T$ be the set of all $t \in M$ such that $\phi(t)$ has one end between $\phi(l^{3})$ and $\phi(r^{0})$ and the second end between $\phi(r^{3})$ and $\phi(l^{0})$.
Note that $T \neq \emptyset$ as $x \in T$.
We claim that $M_i \cup T$ is a module in $(M,{\sim})$.
Indeed, for every $v \in M \setminus (M_i \cup T)$ the chord $\phi(v)$ 
has both its ends either between $\phi(l^{0})$ and $\phi(l^3)$
or between $\phi(r^{0})$ and $\phi(r^3)$.
In particular, we have $u \parallel (M_i \cup T)$, which proves that $M_i \cup T$
is a module in $(M,{\sim})$.
Since $M$ is prime, there is $v \in M \setminus M_i$ such that $v \parallel M_i$.
In particular, $v$ is not in $T$, $M_i \cup T \subsetneq M$, and hence $(M_i \cup T)$
is a non-trivial module in $(M,{\sim})$.
However, this contradicts the fact that $M_i$ is a maximal non-trivial module in $(M,{\sim})$.
This completes the proof that there is $u \in U \setminus M_i$ such that
$\phi(u)$ has both its ends either between $\phi(l^{3})$ and $\phi(r^{0})$ or between $\phi(r^{3})$ and $\phi(l^{0})$.

Now, suppose that $\phi|K_j$ does not form two contiguous subwords in $\phi|M$.
That is, there is $y \in M \setminus K_j$ 
such that $\phi(y)$ has an end between $\phi(l^{0})$ and $\phi(l^{3})$
or between $\phi(r^{0})$ and $\phi(r^{3})$.
Assume that $\phi(y^{0})$ is between $\phi(l^{0})$ and $\phi(l^{3})$ -- the other case is handled analogously.
The end $\phi(y^{0})$ splits $K_j$ into two sets:
$$
\begin{array}{lcl}
K^{'}_j &=& \{v \in K_j: \phi(v) \text{ has an end between $\phi(l^{0})$ and $\phi(y^{0}) \}$, and} \\
K^{''}_j &=& \{v \in K_j: \phi(v) \text{ has an end between $\phi(y^{0})$ and $\phi(l^{3}) \}$.}
\end{array}
$$
By Lemma \ref{lemma:circle_models_of_a_serial_module}, $K^{'}_j$ and $K^{''}_j$ are the unions of some children of 
$M_i$.
Denote by $l^{1}$ and $l^{2}$ the last and the first labeled letter from the sets $K'_j$ and $K''_j$, respectively,
if we traverse $\phi$ from $\phi(x^{0})$ to $\phi(x^{1})$.
Similarly, denote by $r^{1}$ and $r^{2}$ the last and the first labeled letter from the sets $K'_j$ and $K''_j$, respectively, if we traverse $\phi$ from $\phi(x^{1})$ to $\phi(x^{0})$.
Assume that $y \parallel K_j$. 
Then, $y$ is not in $M_i$ as $M_i$ is serial.
Suppose that $y \in M_l$ for some $l \neq [k]$ different than $i$.
Let $P$ be a shortest path in $(M,{\sim})$ that joins $y$ and $M_i$ with all the inner vertices in $U$.
Let $v$ be a neighbor of $y$ in $P$.
Clearly, the chord $\phi(v)$ either has both ends between $\phi(l^{1})$ and $\phi(l^{2})$ or 
has one end between $\phi(l^{1})$ and $\phi(l^2)$ and the second end between $\phi(r^{1})$ and $\phi(r^2)$.
In any case, every chord from $\phi(M_l)$ has either both ends between $\phi(l^{1})$ and $\phi(l^{2})$
or both ends between $\phi(r^{1})$ and $\phi(r^{2})$.
Let $u'$ be the only vertex in $M_l \cap U$.
Now, note that $u,u' \in U\setminus M_i$ witness that $(v',v'') \notin K$ for every $v' \in K'_j$ and every $v'' \in K''_j$, which contradicts that $K_j$ is an equivalence class of $K$-relation in $M_i$.
Assume that $y \sim K_j$.
Then $\phi(y^{1})$ must be between $\phi(r^{1})$ and $\phi(r^{2})$.
If $y \in M_i$, then $y K v$ for every $v \in K_j$,
which contradicts that $K_j$ is an equivalence class of $K$-relation in $M_i$.
So, $y \notin M_i$.
Then, using an analogous argument as for the existence of $u$, 
we show that there is $u' \in U \setminus M_i$ such that $u' \parallel M_i$ and 
$\phi(u')$ has both ends either between $\phi(l^{1})$ and $\phi(l^{2})$
or between $\phi(r^{1})$ and $\phi(r^{2})$.
Consequently, $u$ and $u'$ witness that $(v',v'') \notin K$ for every $v' \in K^{'}_j$ and every $v'' \in K^{''}_j$ -- a contradiction.

Suppose $K_j$ is a consistent submodule of a parallel module $M_i$.
Since $M$ is prime, there is $x \in M \setminus M_i $ such that $x \sim M_i$.
From Claim \ref{claim:permutation_graphs_in_G_ov}, $\phi|(K_j \cup \{x\}) \equiv x^{0} \tau x^{1} \tau'$,
where $(\tau,\tau')$ is an oriented permutation model of $(K_j, \sim)$.
Let $l^{0}$ and $l^{3}$ be the first and the last letter from $K^{*}_j$
if we traverse $\phi$ from $x^{0}$ to $x^{1}$.
Similarly, let $r^{0}$ and $r^{3}$ be the first and the last letter from $K^{*}_j$
if we traverse $\phi$ from $x^{1}$ to $x^{0}$.
%See Figure~1 for an illustration.
Suppose statement \eqref{item:consistent_decompositions_and_conformal_models} does not hold.
That is, there is $y \in (M \setminus K_j)$ such that 
$\phi(y)$ has one of its ends between $\phi(l^{0})$
and $\phi(l^3)$ or between $\phi(r^{0})$ and $\phi(r^3)$.
Suppose that $\phi(u^{0})$ lies between $\phi(l^{0})$ and $\phi(l^{3})$; the other cases are handled analogously.
Split $K_j$ into two subsets, $K'_j$ and $K''_j$, where 
$$
\begin{array}{lcl}
K'_j &=& \{v \in K_j: \phi(v) \text{ has an end between $\phi(l^{0})$ and $\phi(y^{0}) \}$, and} \\
K''_j &=& \{v \in K_j: \phi(v) \text{ has an end between $\phi(y^{0})$ and $\phi(l^{3}) \}$.}
\end{array}
$$
By Lemma \ref{lemma:circle_models_of_a_parallel_module}, $K'_j$ and $K''_j$ are the unions of some children of $M_i$.
Denote by $l^{1}$ and $l^{2}$ the last and the first labeled letter in $K'_j$ and $K''_j$, respectively,
if we traverse $\phi$ from $\phi(l^{0})$ to $\phi(l^{3})$.
Similarly, denote by $r^{1}$ and $r^{2}$ the last and the first labeled letter in $K''_j$ and $K'_j$, respectively, if we traverse $\phi$ from $\phi(r^{0})$ to $\phi(r^{3})$.
%-- see Figure 1.
Since $\phi(y)$ can not intersect the chords from $\phi(K'_j)$ and from $\phi(K''_j)$ at the same time,
we have $y \parallel K_j$.
Suppose $\phi(y)$ has both ends between $\phi(l^{1})$ and $\phi(l^2)$.
Then we have $y \notin M_i$ as $\phi(y)$ does not intersect $\phi(x)$.
Again, using the idea of the shortest path between $y$ and $M_i$ with the inner vertices in $(U,{\sim})$, we show that there is $u \in U \setminus M_i$ such that $\phi(u)$ has one end
between $\phi(l^{1})$ and $\phi(l^{2})$ and the second one between $\phi(r^{1})$ and $\phi(r^2)$.
Then, the vertex $u$ proves that the vertices from $K'_j$ and the vertices from $K''_j$ are not in $K$-relation, which can not be the case.
So, suppose $\phi(y^{1})$ is between $\phi(r^{1})$ and $\phi(r^2)$.
Note that $y \notin M_i$. 
Otherwise, $y$ would be in $K$-relation with any vertex from $K_j$, 
which would contradict that $K_j$ is an equivalence class of $K$-relation in $M_i$.
Suppose that $y \in M_l$ for some $l \in [k]$ different than $i$.
Note that $x \notin M_l$ as $x \sim K_j$ and $y \parallel K_j$.
Hence, $x \sim M_l$.
Then, every chord from $\phi(M_l)$ has its ends on both sides of $\phi(x)$.
Moreover, every chord from $\phi(M_l)$ has one end between $\phi(l^{1})$ and $\phi(l^{2})$
and the second one between $\phi(r^{1})$ and $\phi(r^{2})$ as $M_l \parallel M_i$.
Hence, a vertex $u \in U \cap M_l$ shows that the vertices from $K'_j$
are not in $K$-relation with the vertices in $K''_j$, a contradiction.
\end{proof}
Let $\phi$ be a conformal model of $(M,{\sim})$.
By Lemma \ref{lemma:two_models_of_a_prime_graph}.\eqref{item:consistent_decompositions_conformal_models_skeleton}, \begin{equation*}
\phi|S = \phi^m_S \text{ for some } m \in [2],
\end{equation*}
where $\phi^0_S$ and its reflection $\phi^1_S$ are the only two conformal models of $(S,{\sim})$.
Let $K_i$ be a consistent submodule of $M$ for some $i \in [n]$.
Suppose $\tau^{0}_{i,\phi}$ and $\tau^1_{i,\phi}$ are two contiguous 
subwords of $\phi$ in $\phi|K_i$ enumerated such that
$\tau^{j}_{i,\phi}$ contains the letter $s^{j}_i$ for $j \in \{0,1\}$ -- see Lemma \ref{lemma:consistent_decompositions}.\eqref{item:consistent_decompositions_and_conformal_models}.
Let $K^{j}_{i,\phi}$ be the set of the letters contained in $\tau^{j}_{i,\phi}$ for $j \in \{0,1\}$.
Note that $K^{0}_{i,\phi},K^{1}_{i,\phi}$ are labeled copies of $K_i$ and $\{K^{0}_{i,\phi}, K^{1}_{i,\phi}\}$ forms a partition of $K^{*}_i$.
Clearly, $(\tau^{0}_{i,\phi}, \tau^{1}_{i,\phi})$ is an oriented permutation model of $(K_i,{\sim})$.
Let $({<^{0}_{K_i,\phi}},{\prec^{0}_{K_i,\phi}})$ be the transitive orientations of $(K_i,{\parallel})$ and $(K_i,{\sim})$, respectively, corresponding to the non-oriented permutation model $(\tau^{0}_{i,\phi}, \tau^{1}_{i,\phi})$ of $(K_i,{\sim})$.
Finally, let $\pi(\phi)$ be a circular permutation of $K^{0}_{1,\phi},K^{1}_{1,\phi}, \ldots, K^{0}_{n,\phi}, K^{1}_{n,\phi}$ that arises from $\phi$ by replacing every contiguous word $\phi|K^{j}_{i,\phi}$ by the set $K^{j}_i$.
%-- see Figure 1 for an illustration.
It turns out that the sets $K^{j}_{i,\phi}$ and the transitive orientations $<^{0}_{i,\phi}$ of $(K_i,\parallel)$ do not depend on the choice of a conformal model $\phi$ of $G_{ov}$.
Moreover, $\pi(\phi)$ may take only two values depending on whether $\phi|S \equiv \phi^{0}_S$ or 
$\phi|S \equiv \phi^{1}_S$.

\begin{claim}
\label{claim:prime_module_invariants}
For every $i \in [n]$ there are labeled copies $K^{0}_i$ and $K^{1}_i$ of $K_i$ forming a partition of $K^{*}_i$ and a transitive orientation $<_{K_i}$ of $(K_i,{\parallel})$ such that
$$(K^{0}_{i, \phi}, K^{1}_{i, \phi},{<^{0}_{K_i, \phi}}) = (K^{0}_{i}, K^{1}_i,{<_{K_i}}) 
\quad
\begin{array}{c}
\text{for every conformal model $\phi$ of $(M,{\sim})$} \\
\text{and every $i \in [n]$.}
\end{array}
$$
Moreover, there are circular permutations $\pi_{0}(M), \pi_{1}(M)$ of $\{K^{0}_1,K^{1}_0, \ldots,K^{0}_n,K^{1}_n\}$, where $\pi_{0}(M)$ is the reflection of $\pi_{1}(M)$, such that 
$$\pi(\phi) = 
\left\{
\begin{array}{cll}
\pi_{0}(M) & \text{if} & \phi|S = \phi^{0}_S\\
\pi_{1}(M) & \text{if} & \phi|S = \phi^{1}_S\\
\end{array}
\right.
\quad \text{for every conformal model $\phi$ of $(M,{\sim})$.}
$$
\end{claim}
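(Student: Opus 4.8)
The plan is to prove the two invariance statements separately: first that the labeled copies $K^{0}_{i,\phi},K^{1}_{i,\phi}$ and the transitive orientations $<^{0}_{K_i,\phi}$ are the same for every conformal model $\phi$ of $(M,{\sim})$, and then, assuming this, that $\pi(\phi)$ is determined by which of the two conformal models of $(S,{\sim})$ the restriction $\phi|S$ is. I would dispose of the second point first, as it is short. Once the sets $K^{j}_i$ are known to be $\phi$-independent, $\pi(\phi)$ is a circular word over the fixed alphabet $\{K^{0}_1,K^{1}_1,\dots,K^{0}_n,K^{1}_n\}$. Because the skeleton meets $K_i$ in exactly $\{s_i\}$ and $s^{j}_i\in K^{j}_i$, each of the $2n$ blocks $\phi|K^{j}_i$ contains exactly one labeled letter of $S^{*}$, namely $s^{j}_i$, and no labeled letter of $S^{*}$ falls between blocks; hence erasing from the circular word $\phi$ all labeled letters outside $S^{*}$ turns $\pi(\phi)$, symbol by symbol, into $\phi|S$ through the bijection $K^{j}_i\mapsto s^{j}_i$ of the two alphabets. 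A bijection of alphabets is injective on circular words, so $\pi(\phi)$ is recoverable from $\phi|S$; by Lemma~\ref{lemma:consistent_decompositions}.\eqref{item:consistent_decompositions_conformal_models_skeleton} the latter has only the two values $\phi^{0}_S,\phi^{1}_S$, and I would take $\pi_{0}(M),\pi_{1}(M)$ to be the corresponding values of $\pi(\phi)$.

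For the invariance of $K^{j}_{i,\phi}$ and $<^{0}_{K_i,\phi}$ I would fix a consistent submodule $K_i$, contained in a child $M_j$ of $M$. Since $M$ is prime, both $(M,{\sim})$ and $(M,{\parallel})$ are connected, so (contracting the children) $M_j$ has a $\sim$-neighbouring child $M_{j'}$ and a $\parallel$-neighbouring child $M_{j''}$; let $x,u\in U\subseteq S$ be their representatives, so that $x\sim K_i$ and $u\parallel K_i$. By Claim~\ref{claim:permutation_graphs_in_G_ov} and Lemma~\ref{lemma:consistent_decompositions}.\eqref{item:consistent_decompositions_and_conformal_models}, in every conformal model $\phi$ of $(M,{\sim})$ the chord $\phi(x)$ separates the two blocks $\tau^{0}_{i,\phi},\tau^{1}_{i,\phi}$ of $\phi|K_i$, while $\phi(u)$, being disjoint from every chord of $K_i$, lies inside one of the two gaps between these blocks. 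If $(K_i,{\parallel})$ is connected, I would rerun the breadth-first propagation from the proof of Claim~\ref{claim:invariants_for_proper_parallel_and_proper_prime_modules} starting at $s_i$: along each edge $v\parallel v'$ of $(K_i,{\parallel})$ the relations $v\in\leftside(v')$ and $v'\in\leftside(v)$, read off from $M_G$, fix the relative orientation of $\phi(v)$ and $\phi(v')$ and the order in which they occur, so the block assignment propagates and both $\{K^{0}_{i,\phi},K^{1}_{i,\phi}\}$ and the orientation $<^{0}_{K_i,\phi}$ of $(K_i,{\parallel})$ are forced.

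If $(K_i,{\parallel})$ is disconnected, then $(K_i,{\sim})$ is connected and, by Claim~\ref{claim:consistent_modules_structure}, $K_i$ is a union of children $C_1,\dots,C_q$ ($q\geq2$) of a serial strong module, each $C_l$ being a proper prime or proper parallel module with a metaedge $\mathbb{M}_{C_l}=(C_l^{0},C_l^{1},{<_{C_l}})$. Here I would argue directly: for $v\in K_i$ the relation $u\parallel v$ together with the geometry of $\phi(u)$, $\phi(s_i)$ and the two blocks of $\phi|K_i$ forces the orientation of $\phi(v)$, and tracing this out shows that whether $v^{0}$ lies in the block of $\phi|K_i$ containing $s^{0}_i$ depends only on $M_G$; concretely it amounts to whether ``$u\in\leftside(v)$'' and ``$u\in\leftside(s_i)$'' have the same truth value, the dependence on which gap holds $\phi(u)$ cancelling out after one specialises to $v=s_i$. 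Hence $\{K^{0}_{i,\phi},K^{1}_{i,\phi}\}$ is $\phi$-independent. Moreover, by Lemma~\ref{lemma:circle_models_of_a_serial_module} and Lemma~\ref{lem:restriction_to_a_proper_module_is_admissible}, $\phi|K_i$ is the juxtaposition of two blocks and each $C_l$ puts exactly one of its labeled copies $C_l^{0},C_l^{1}$ into block $0$; which one is now pinned down, and $(K_i,{\parallel})$ has no edge between distinct $C_l$'s, so by Theorem~\ref{thm:permutation_models_transitive_orientations} and Definition~\ref{def:admissible_models_for_proper_submodules} the restriction of $<^{0}_{K_i,\phi}$ to each $C_l$ equals $<_{C_l}$ or its reverse and is therefore $\phi$-independent. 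Finally, since the reflection $\phi\mapsto\phi^{R}$ carries $s^{j}_i$ to $s^{1-j}_i$ and reflects the whole block structure while $\{K^{0}_{i,\phi},K^{1}_{i,\phi}\}$ partitions $K^{*}_i$, the sets $K^{j}_{i,\phi}$ and the orientations $<^{0}_{K_i,\phi}$ also do not change between $\phi|S=\phi^{0}_S$ and $\phi|S=\phi^{1}_S$; I then set $K^{j}_i:={K^{j}_{i,\phi}}$ and $<_{K_i}:={<^{0}_{K_i,\phi}}$, which justifies the first step, and applying that step to $\phi^{R}$ shows that $\pi_{1}(M)$ is obtained from $\pi_{0}(M)$ by reversing the circular order and interchanging $K^{0}_i\leftrightarrow K^{1}_i$ for all $i$, i.e.\ is the reflection of $\pi_{0}(M)$.

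The main obstacle is the disconnected case above: a consistent submodule need not be a prime or parallel module of $\mathcal{M}(G_{ov})$, so $(K_i,{\parallel})$ may be disconnected and Claim~\ref{claim:invariants_for_proper_parallel_and_proper_prime_modules} is not directly available; one has to glue the metaedges of the children $C_l$ together by means of the external anchor $u\in S$ with $u\parallel K_i$, whose existence is exactly where primality of $M$ enters, and the delicate step is the geometric argument that this anchor really determines the block-$0$ copy of every $C_l$ in a $\phi$-independent way, covering the various sub-types of $C_l$ and the trivial-split situations inside $K_i$ just as in the proof of Lemma~\ref{lemma:two_models_of_a_prime_graph}.
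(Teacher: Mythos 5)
Your proof is correct and essentially follows the paper's argument: the second claim reduces to Lemma~\ref{lemma:consistent_decompositions} via the alphabet bijection $s^j_i \mapsto K^j_i$, and the first claim is settled by anchoring on a skeleton vertex $s_j$ with $s_j \parallel K_i$, whose existence is exactly where primality of $M$ enters, just as the paper does by analogy with Claim~\ref{claim:invariants_for_proper_parallel_and_proper_prime_modules}. Your case split on the connectivity of $(K_i,{\parallel})$ is not in the paper and is in fact unnecessary --- the external-anchor argument you give for the disconnected case works uniformly for every $v \in K_i$, as you implicitly observe when noting that the choice of gap containing $\phi(s_j)$ cancels against $s_i$ --- but the extra care does not harm the proof.
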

\begin{proof}
The second part of the claim follows directly from Lemma \ref{lemma:consistent_decompositions}.
In particular, note that $\pi_m(M)$ for $m \in [2]$ is obtained from $\phi^m_S$ by replacing every
labeled letter $s^0_i$ by $K^0_i$ and $s^1_i$ by $K^1_i$.
The first part of the claim is proved similarly to Claim~\ref{claim:invariants_for_proper_parallel_and_proper_prime_modules}:
in particular, for every vertex $v \in K_i$ the algorithm may decide whether $v^0 \in K^0_i$ or
$v^0 \in K^1_i$ basing on whether the vertices $s_i$ and $v$ have the vertex $s_j$ on the same side, where $s_j$
is such that $s_j \parallel K_i$.
Given the sets $K^0_i$ and $K^1_i$, the relation ${<_{K_i}}$ is computed in the same way as in Claim~\ref{claim:invariants_for_proper_parallel_and_proper_prime_modules}.
\end{proof}
The elements of the set $\{K_i^{0}, K^{1}_i,\ldots, K^{0}_n,K^{1}_n\}$ are called the \emph{slots} of $M$,
$\pi_{0}(M),\pi_{1}(M)$ are called the \emph{circular permutations of the slots} in the module $M$,
and the triple $(K^{0}_i,K^{1},{<_{K_i}})$, denoted by $\mathbb{K}_{i}$,
is called the \emph{metaedge} of $K_i$.

Now, we extend the notion of an admissible model to the metaedges associated with consistent submodules of $M$
and to the circular orientations of the slots $\pi_0(M)$ and $\pi_1(M)$.
\begin{definition}
\label{def:admissible_models_for_consistent_submodules}
Let $K_i$ be a consistent submodule of $M$ and let 
$\mathbb{K}_i = (K^{0}_i,K^{1}_i,{<_{K_i}})$ be the metaedge of $K_i$.
A pair $(\tau^{0}, \tau^{1})$ is \emph{an admissible model for $\mathbb{K}_i$} if:
\begin{itemize}
 \item $\tau^0$ is a permutation of $K^{0}_i$,
 \item $\tau^1$ is a permutation of $K^{1}_i$,
 \item $(\tau^{0},\tau^{1})$ is an oriented permutation model of $(K_i,{\sim})$ that corresponds to the pair $({<}, {\prec})$ of transitive orientations of $(K_i,{\parallel})$ and $(K_i,{\sim})$, respectively, where ${<} = {<_{K_i}}$.
\end{itemize}
\end{definition}
\begin{definition}
\label{def:admissible_models_for_prime_modules}
Let $m \in \{0,1\}$, let $M$ be an improper prime module in $\mathcal{M}(G_{ov})$, 
and let $\pi_m(M)$ be the circular permutation of the slots of $M$.
A circular word $\phi$ on the set $M^{*}$ is an \emph{admissible model for $\pi_m(M)$}
if $\phi$ arises from $\pi_{m}(M)$ by exchanging every slot $K^{j}_i$ by
a permutation $\tau^{j}_{i}$, where the words $\tau^{0}_{i}, \tau^{1}_{i}$ are such that $(\tau^{0}_{i}, \tau^{1}_{i})$ is an admissible model for $\mathbb{K}_i$.
\end{definition}
The next theorem provides a description of all conformal models of $(M,{\sim})$.
\begin{theorem}
\label{thm:description_of_all_conformal_models_of_improper_prime_modules}
Suppose $G$ is a circular-arc graph with no twins and no universal vertices.
Suppose $M$ is an improper prime module in $\mathcal{M}(G_{ov})$.
A circular word $\phi$ is a conformal model of $(M,{\sim})$ if and only if $\phi$ is an admissible model for $\pi_m(M)$ for some $m \in \{0,1\}$.
\end{theorem}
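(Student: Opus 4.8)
The plan is to derive both implications from the invariants of Claim~\ref{claim:prime_module_invariants} and the structural description in Lemma~\ref{lemma:consistent_decompositions}; the \emph{only if} direction is bookkeeping, whereas the \emph{if} direction carries the real content. For the \emph{only if} direction, let $\phi$ be a conformal model of $(M,{\sim})$. By Lemma~\ref{lemma:consistent_decompositions}.\eqref{item:consistent_decompositions_conformal_models_skeleton} we have $\phi|S\equiv\phi^m_S$ for some $m\in\{0,1\}$, and by Lemma~\ref{lemma:consistent_decompositions}.\eqref{item:consistent_decompositions_and_conformal_models} each $\phi|K_i$ forms two contiguous subwords $\tau^0_{i,\phi},\tau^1_{i,\phi}$ in $\phi$. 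Claim~\ref{claim:prime_module_invariants} identifies their letter sets as the fixed labeled copies $K^0_i,K^1_i$, shows that the transitive orientation of $(K_i,{\parallel})$ induced by $(\tau^0_{i,\phi},\tau^1_{i,\phi})$ equals $<_{K_i}$, and shows $\pi(\phi)=\pi_m(M)$. Since $(\tau^0_{i,\phi},\tau^1_{i,\phi})$ is an oriented permutation model of $(K_i,{\sim})$ whose $(K_i,{\parallel})$-orientation is $<_{K_i}$, it is an admissible model for $\mathbb{K}_i$ by Definition~\ref{def:admissible_models_for_consistent_submodules}; and since $\phi$ arises from $\pi_m(M)$ by replacing each slot $K^j_i$ with $\tau^j_{i,\phi}$, Definition~\ref{def:admissible_models_for_prime_modules} yields that $\phi$ is an admissible model for $\pi_m(M)$.

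For the \emph{if} direction, fix $m\in\{0,1\}$ and an admissible model $\phi$ for $\pi_m(M)$, where each slot pair $K^0_i,K^1_i$ is filled by an admissible model $(\tau^0_i,\tau^1_i)$ of $\mathbb{K}_i$. As $(M,{\sim})$ admits at least one conformal model and reflection interchanges $\phi^0_S$ and $\phi^1_S$, we may fix a conformal model $\chi$ of $(M,{\sim})$ with $\chi|S\equiv\phi^m_S$; applying the \emph{only if} direction to $\chi$ shows that $\chi$ is obtained from the same $\pi_m(M)$ by filling each slot $K^j_i$ with some admissible model $(\sigma^0_i,\sigma^1_i)$ of $\mathbb{K}_i$. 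Thus $\phi$ and $\chi$ have the same coarse structure: in both, every chord of a consistent submodule $K_i$ runs from the block $K^0_i$ to the block $K^1_i$, the resulting $2n$ blocks are arranged cyclically according to $\pi_m(M)$, and for every $w\in K_i$ the endpoint of $\phi(w)$ lying in $K^0_i$ is the labeled letter $w^{*}$, exactly as for $\chi$, so corresponding chords carry the same orientation. We will prove that $\phi$ satisfies the conformality conditions for every pair $v\neq u$ in $M$ by showing that the incidence of $\phi(u)$ and $\phi(v)$ -- whether they cross, and if not on which side of $\phi(v)$ the chord $\phi(u)$ lies -- coincides with the incidence of $\chi(u)$ and $\chi(v)$; since $\chi$ is conformal, this transfers conformality to $\phi$.

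If $u$ and $v$ lie in the same consistent submodule $K_i$, the incidence of $\phi(u),\phi(v)$ is read off from the oriented permutation model $(\tau^0_i,\tau^1_i)$ and that of $\chi(u),\chi(v)$ from $(\sigma^0_i,\sigma^1_i)$; the relation $u\sim v$ is witnessed in both by the common permutation graph $(K_i,{\sim})$, and when $u\parallel v$ the side of $\phi(u)$ relative to $\phi(v)$ is determined by the order of $u$ and $v$ in the block $K^0_i$, which by Theorem~\ref{thm:permutation_models_transitive_orientations} and equations~\eqref{eq:models_of_permutation_graphs_1}--\eqref{eq:models_of_permutation_graphs_2} is dictated by $<_{K_i}$ and hence agrees for the two admissible models. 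If $u\in K_i$ and $v\in K_j$ with $i\neq j$, then by Claim~\ref{claim:consistent_modules_structure} distinct consistent submodules are disjoint modules of $(M,{\sim})$, so $K_i\sim K_j$ or $K_i\parallel K_j$; in either case the four endpoints of $\phi(u)$ and $\phi(v)$ lie in the four pairwise distinct blocks $K^0_i,K^1_i,K^0_j,K^1_j$, so whether the two chords cross and, if not, which side $\phi(u)$ lies on are determined solely by the cyclic order of these four blocks together with the orientation of $\phi(v)$ -- data identical for $\phi$ and $\chi$ by the coarse structure. In all cases the incidences agree, so $\phi$ is conformal and the theorem follows.

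The main obstacle is the \emph{if} direction, and within it the observation that for two chords coming from different consistent submodules the crossing/side relation depends only on the cyclic arrangement of their four (pairwise distinct) home blocks and on the chord orientations -- i.e.\ that what happens inside each block is irrelevant for such a pair. This is exactly what makes the comparison with the reference model $\chi$ go through; verifying it requires the structural facts that each chord of $K_i$ spans from $K^0_i$ to $K^1_i$ (Lemma~\ref{lemma:consistent_decompositions}) and that distinct consistent submodules are fully adjacent or fully non-adjacent (Claim~\ref{claim:consistent_modules_structure}). The same-submodule case is more routine once Theorem~\ref{thm:permutation_models_transitive_orientations} is used to see that the side of a non-crossing pair inside $K_i$ is encoded by the fixed orientation $<_{K_i}$ shared by all admissible models of $\mathbb{K}_i$.
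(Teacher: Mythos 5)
Your proof is correct and takes essentially the same route as the paper: take a reference conformal model $\chi$ of $(M,{\sim})$ with the same slot arrangement $\pi_m(M)$, then argue that swapping in any admissible slot models preserves, for every pair of vertices, the crossing/side relation (via the fixed orientation ${<_{K_i}}$ within a consistent submodule and the fixed cyclic order of the $2n$ blocks across submodules). The paper does this by iterated slot replacement with a terse ``one can easily check''; you do it by a single global comparison and actually spell out the within-block and between-block cases, but the underlying invariants and the conclusion are identical.
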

\begin{proof}
Suppose $\phi$ is conformal model of $(M,{\sim})$.
By Claim \ref{claim:prime_module_invariants} we get $\pi(\phi) = \pi_m(M)$ for some $m \in \{0,1\}$
and $(\phi|K^0_i, \phi|K^1_i)$ is an admissible model for $\mathbb{K}_i$ for every $i \in [n]$.
Thus, $\phi$ is admissible for $\pi_m(M)$.

Suppose $\phi$ is an admissible model for $\pi_m(M)$ for some $m \in \{0,1\}$.
Since $G$ is a circular-arc graph, $G_{ov}$ has a conformal model.
Hence, $(M,{\sim})$ has a conformal model, say $\phi'$.
Since the reflection of a conformal model is also conformal, we may
assume that $\pi(\phi') = \pi_m(M)$.
Now, we start with $\phi'$ and for every $i \in [n]$
we replace the words $(\phi'|K^{0}_i, \phi'|K^{1}_i)$ in $\phi'$ 
by the words $(\phi|K^{0}_i, \phi|K^{1}_i)$, respectively.
Finally, we obtain the model $\phi$.
Since ${<^{0}_{K_i,\phi}} = {<^{0}_{K_i, \phi'}} = {<_{K_i}}$, one can easily check
that after every transformation the chords $\phi(v)$ and $\phi'(v)$ have on its both sides 
the chords representing exactly the same sets of vertices.
This proves that $\phi$ is also conformal.
\end{proof}
The above theorem characterizes all conformal models of $G_{ov}$ in the case when
$(V,{\sim})$ and $(V,{\parallel})$ are connected. 
Indeed, in this case $V$ is an improper prime module in
$\mathcal{M}(G_{ov})$ and hence Theorem \ref{thm:description_of_all_conformal_models_of_improper_prime_modules} applies.
Figure \ref{fig:prime_case_schematic_view} shows a schematic picture of some conformal models of $G_{ov}$.
\begin{figure}[htp!]
\begin{tikzpicture}[scale=0.7,>=latex,shorten >=-0.2pt,shorten <=-0.2pt]
\coordinate (center) at (0,0) {};

\coordinate (k11) at ($(center)+(0:2cm)$) {};
\coordinate (k10) at ($(center)+(180:2cm)$) {};
\coordinate (k21) at ($(center)+(60:2cm)$) {};
\coordinate (k20) at ($(center)+(300:2cm)$) {};
\coordinate (k31) at ($(center)+(135:2cm)$) {};
\coordinate (k30) at ($(center)+(225:2cm)$) {};
\coordinate (k41) at ($(center)+(30:2cm)$) {};
\coordinate (k40) at ($(center)+(110:2cm)$) {};
\coordinate (k51) at ($(center)+(250:2cm)$) {};
\coordinate (k50) at ($(center)+(330:2cm)$) {};

\coordinate (lk11) at ($(center)+(0:2.4cm)$) {};
\coordinate (lk10) at ($(center)+(180:2.4cm)$) {};
\coordinate (lk21) at ($(center)+(60:2.4cm)$) {};
\coordinate (lk20) at ($(center)+(300:2.4cm)$) {};
\coordinate (lk31) at ($(center)+(135:2.4cm)$) {};
\coordinate (lk30) at ($(center)+(225:2.4cm)$) {};
\coordinate (lk41) at ($(center)+(30:2.4cm)$) {};
\coordinate (lk40) at ($(center)+(110:2.4cm)$) {};
\coordinate (lk51) at ($(center)+(250:2.4cm)$) {};
\coordinate (lk50) at ($(center)+(330:2.4cm)$) {};

\tikzstyle{every node}=[inner sep=1pt]
\begin{footnotesize}
\node at (lk11) {$K_1^1$};
\node at (lk10) {$K_1^0$};
\node at (lk21) {$K_2^1$};
\node at (lk20) {$K_2^0$};
\node at (lk31) {$K_3^1$};
\node at (lk30) {$K_3^0$};
\node at (lk41) {$K_4^1$};
\node at (lk40) {$K_4^0$};
\node at (lk51) {$K_5^1$};
\node at (lk50) {$K_5^0$};
\end{footnotesize}

\draw (0,0) circle (2cm);

\draw[line width=0.5mm,->] (k10)--(k11);
\draw[line width=0.5mm,->] (k20)--(k21);
\draw[line width=0.5mm,->] (k30)--(k31);
\draw[line width=0.5mm,->] (k40)--(k41);
\draw[line width=0.5mm,->] (k50)--(k51);

\draw (3.2,2.7)--(3.2,-2.7);

\draw[white] (-3.0,-2.7)--(-3.0,-2.5);
\draw[white] (3.0,2.7)--(3.0,2.5);
\end{tikzpicture}
\begin{tikzpicture}[scale=0.7,>=latex,shorten >=-0.2pt,shorten <=-0.2pt]
\coordinate (center) at (0,0) {};

\coordinate (k11) at ($(center)+(0:2cm)$) {};
\coordinate (k10) at ($(center)+(180:2cm)$) {};
\coordinate (k20) at ($(center)+(120:2cm)$) {};
\coordinate (k21) at ($(center)+(240:2cm)$) {};
\coordinate (k30) at ($(center)+(45:2cm)$) {};
\coordinate (k31) at ($(center)+(315:2cm)$) {};
\coordinate (k40) at ($(center)+(150:2cm)$) {};
\coordinate (k41) at ($(center)+(70:2cm)$) {};
\coordinate (k50) at ($(center)+(290:2cm)$) {};
\coordinate (k51) at ($(center)+(210:2cm)$) {};

\coordinate (lk11) at ($(center)+(0:2.4cm)$) {};
\coordinate (lk10) at ($(center)+(180:2.4cm)$) {};
\coordinate (lk20) at ($(center)+(120:2.4cm)$) {};
\coordinate (lk21) at ($(center)+(240:2.4cm)$) {};
\coordinate (lk30) at ($(center)+(45:2.4cm)$) {};
\coordinate (lk31) at ($(center)+(315:2.4cm)$) {};
\coordinate (lk40) at ($(center)+(150:2.4cm)$) {};
\coordinate (lk41) at ($(center)+(70:2.4cm)$) {};
\coordinate (lk50) at ($(center)+(290:2.4cm)$) {};
\coordinate (lk51) at ($(center)+(210:2.4cm)$) {};

\tikzstyle{every node}=[inner sep=1pt]
\begin{footnotesize}
\node at (lk11) {$K_1^1$};
\node at (lk10) {$K_1^0$};
\node at (lk20) {$K_2^0$};
\node at (lk21) {$K_2^1$};
\node at (lk30) {$K_3^0$};
\node at (lk31) {$K_3^1$};
\node at (lk40) {$K_4^0$};
\node at (lk41) {$K_4^1$};
\node at (lk50) {$K_5^0$};
\node at (lk51) {$K_5^1$};
\end{footnotesize}

\draw (0,0) circle (2cm);

\draw[line width=0.5mm,->] (k10)--(k11);
\draw[line width=0.5mm,->] (k20)--(k21);
\draw[line width=0.5mm,->] (k30)--(k31);
\draw[line width=0.5mm,->] (k40)--(k41);
\draw[line width=0.5mm,->] (k50)--(k51);

\draw[white] (-3.0,-2.7)--(-3.0,-2.5);
\draw[white] (3.0,2.7)--(3.0,2.5);
\end{tikzpicture}
\caption{\label{fig:prime_case_schematic_view}}
\end{figure}
To get the full picture of a normalized model one needs to expand every metaedge $(K^0_i,K^1_i,{<_{K_i}})$ to show an admissible model hidden behind it.

To transform a normalized model of $G_{ov}$ into a normalized model of $G_{ov}$ 
we can reflect the circular permutation of the slots of $M$ and we can replace
an admissible model hidden under every metaedge.
Theorem \ref{thm:description_of_all_conformal_models_of_improper_prime_modules} asserts that such operations are complete, that is, 
we can transform one normalized model into any other by performing such operations.

\subsection{Conformal representations of improper parallel modules}
\label{subsec:conformal_models_of_improper_parallel_modules}
The results from the first subsection of this section are taken from \cite{Hsu95} by Hsu.
The notion of an admissible model for prime children of $V$ was also partially inspired by \cite{Hsu95}.

Suppose $G$ is a circular-arc graphs with no twins and no universal vertices
such that its overlap graph $G_{ov} = (V,{\sim})$ is disconnected.
In this case, $V$ is an improper parallel module in $\mathcal{M}(G_{ov})$.
Denote the children of $V$ in $\mathcal{M}(G_{ov})$ by $\mathcal{M}(V)$.
Note that every module $M$ in $\mathcal{M}(V)$ is either improper prime or improper serial.

\subsubsection{$T_{NM}$ tree} Let $M$ be a module in $\mathcal{M}(V)$ and let $v \in V \setminus M$.
Observe that either $M \subseteq \leftside(v)$ and then we say \emph{$M$ is on the left side of $v$} 
or $M \subseteq \rightside(v)$ and then we say \emph{$M$ is on the right side of $v$}.
Let $\mathcal{M}_1$ and $\mathcal{M}_2$
be two disjoint sets of modules from $\mathcal{M}(V)$  and let $v \notin \bigcup \mathcal{M}_1 \cup \bigcup \mathcal{M}_2$.
We say that \emph{$v$ separates $\mathcal{M}_1$ and $\mathcal{M}_2$} if
either $\bigcup \mathcal{M}_1 \subset \leftside(v)$ and $\bigcup \mathcal{M}_2 \subset \rightside(v)$
or $\bigcup \mathcal{M}_2 \subset \leftside(v)$ and $\bigcup \mathcal{M}_1 \subset \rightside(v)$.
We say that \emph{$v$ separates $M_1$ and $M_2$} if $v$ separates $\{M_1\}$ and $\{M_2\}$ -- see Figure \ref{fig:T_NM_tree} for an example.
We use analogous phrases to describe the mutual position of the corresponding sets of chords 
in conformal models $\phi$ of $G_{ov}$.

Two modules $M_1, M_2 \in \mathcal{M}(V)$ are \emph{non-separated} if there is no vertex
$v \in V \setminus (M_1 \cup M_2)$ that separates $M_1$ and $M_2$.
A set $N \subseteq \mathcal{M}(V)$ is a \emph{node} in $(V,{\sim})$
if $N$ is a maximal subset of pairwise non-separated modules from $\mathcal{M}(V)$.

Let $T_{NM}$ be a bipartite graph with the set of vertices containing all the modules in $\mathcal{M}(V)$ and all the nodes in $G_{ov}$ and the set of edges joining every module $M$ and every node $N$ such that $M \in N$ 
-- see Figure \ref{fig:T_NM_tree} for an example.

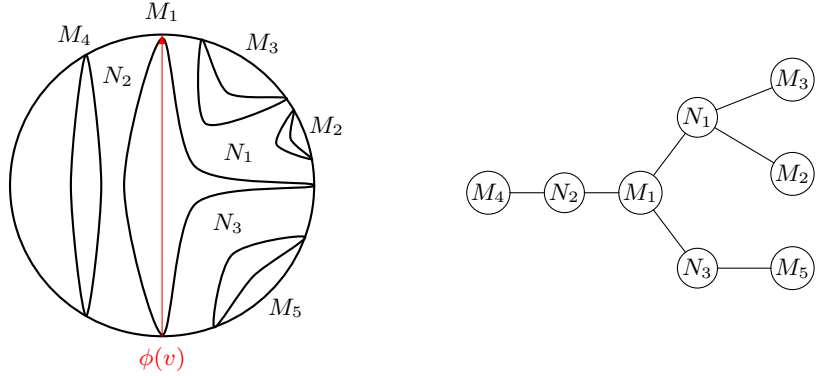
\begin{figure}[htp!]
\begin{tikzpicture}[scale=1,>=latex,shorten >=-0.4pt,shorten <=-0.4pt]
\coordinate (label) at (0,-3) {};

\coordinate (m11) at ($(center)+(90:1.95cm)$) {};
\coordinate (m112) at ($(center)+(180:0.5cm)$) {};
\coordinate (m12) at ($(center)+(270:1.98cm)$) {};
\coordinate (m13) at ($(center)+(330:0.5cm)$) {};
\coordinate (m14) at ($(center)+(0:2cm)$) {};
\coordinate (m15) at ($(center)+(30:0.5cm)$) {};
\coordinate (lm1) at ($(center)+(90:2.3cm)$) {};

\coordinate (m21) at ($(center)+(10:2cm)$) {};
\coordinate (m22) at ($(center)+(20:1.8cm)$) {};
\coordinate (m23) at ($(center)+(30:2cm)$) {};
\coordinate (m24) at ($(center)+(20:1.6cm)$) {};
\coordinate (lm2) at ($(center)+(20:2.3cm)$) {};

\coordinate (m31) at ($(center)+(35:2cm)$) {};
\coordinate (m32) at ($(center)+(55:1.5cm)$) {};
\coordinate (m33) at ($(center)+(75:2cm)$) {};
\coordinate (m34) at ($(center)+(55:1cm)$) {};
\coordinate (lm3) at ($(center)+(55:2.3cm)$) {};

\coordinate (m41) at ($(center)+(120:2cm)$) {};
\coordinate (m42) at ($(center)+(0:-1.2cm)$) {};
\coordinate (m43) at ($(center)+(240:2cm)$) {};
\coordinate (m44) at ($(center)+(0:-0.8cm)$) {};
\coordinate (lm4) at ($(center)+(120:2.3cm)$) {};

\coordinate (m51) at ($(center)+(290:2cm)$) {};
\coordinate (m52) at ($(center)+(315:1.7cm)$) {};
\coordinate (m53) at ($(center)+(340:2cm)$) {};
\coordinate (m54) at ($(center)+(315:1.3cm)$) {};
\coordinate (lm5) at ($(center)+(315:2.3cm)$) {};

\coordinate (ln1) at ($(center)+(23:1.1cm)$) {};
\coordinate (ln2) at ($(center)+(112:1.56cm)$) {};
\coordinate (ln3) at ($(center)+(330:1.0cm)$) {};

\coordinate (v) at ($(center)+(270:2.3cm)$) {};

\begin{scriptsize}
\tikzstyle{every node}=[inner sep=1pt]
\node at (lm1) {$M_1$};
\node at (lm2) {$M_2$};
\node at (lm3) {$M_3$};
\node at (lm4) {$M_4$};
\node at (lm5) {$M_5$};

\node at (ln1) {$N_1$};
\node at (ln2) {$N_2$};
\node at (ln3) {$N_3$};

\node[red] at (v) {$\phi(v)$};
\end{scriptsize}

%m1
\draw[thick] plot [smooth cycle] coordinates {(m11) (m112) (m12) (m13) (m14) (m15)};

%m2
\draw[thick] plot [smooth cycle] coordinates {(m21) (m22) (m23) (m24)};

%m3
\draw[thick] plot [smooth cycle] coordinates {(m31) (m32) (m33) (m34)};

%m4
\draw[thick] plot [smooth cycle] coordinates {(m41) (m42) (m43) (m44)};

%m5
\draw[thick] plot [smooth cycle] coordinates {(m51) (m52) (m53) (m54)};

\draw[thick] (0,0) circle (2cm);

\draw[<-, red] ($(center)+(90:2cm)$)--($(center)+(270:2cm)$);

\draw[white] (-2.5,-2.5)--(-2.5,-2.3);
\draw[white] (2.5,2.5)--(2.5,2.3);

\end{tikzpicture} 
\hspace{1cm}
\begin{tikzpicture}[scale=1,>=latex,shorten >=-0.4pt,shorten <=-0.4pt]
  \tikzstyle{every node}=[circle,minimum size=10pt,inner sep=0.5,draw];
  \begin{scriptsize}
  \node (m1) at (0.0,0) {$M_1$};
  \node (m4) at (-2,0.0) {$M_4$};
  \node (m3) at (2,1.5) {$M_3$};
  \node (m2) at (2,0.25) {$M_2$};
  \node (m5) at (2,-1) {$M_5$};
  \end{scriptsize}
  \tikzstyle{every node}=[circle,minimum size=15pt,inner sep=0.5,draw];
  \begin{scriptsize}
  \node (n1) at (0.75,1.0) {$N_1$};
  \node (n3) at (0.75,-1.0) {$N_3$};
  \node (n2) at (-1.0,0) {$N_2$};
  \end{scriptsize}
\path (m1) edge (n1); 
\path (m1) edge (n2); 
\path (m1) edge (n3); 

\path (n1) edge (m3); 
\path (n1) edge (m2); 
\path (n3) edge (m5); 
\path (m4) edge (n2); 

\draw[white] (-2.5,-2.4)--(-2.5,-1.8);
\draw[white] (2.5,2.6)--(2.5,2.3);
\end{tikzpicture}
\caption{\label{fig:T_NM_tree} The schematic picture of $T_{NM}$ tree with the nodes
$N_1,N_2,N_3$ and the modules $M_1,M_2,M_3,M_4,M_5$.
The chord $\phi(v)$ separates $\{M_4\}$ and $\{M_2,M_3,M_5\}$.
The maximal subsets consisting of non-separated modules are $\{M_1,M_2,M_3\}$, $\{M_1,M_4\}$, and $\{M_1,M_5\}$,
which correspond to the nodes $N_1$, $N_2$, and $N_3$, respectively.
}
\end{figure}

Let $N_T[M]$ denote the neighbours of a module $M$ in $T_{NM}$ and let
$N_T[N]$ denote the neighbors of a node $N$ in $T_{NM}$.
\begin{claim}
\label{claim:T_NM_tree} The following statements hold:
\begin{enumerate}
\item \label{item:T_NM_module_and_neighbors} For every module $M \in T_{NM}$ and every two nodes $N_1,N_2 \in N_T[M]$ there is a vertex $v \in M$ that separates 
the modules in $N_1 \setminus \{M\}$ and the modules in $N_2 \setminus \{M\}$.
\item \label{item:T_NM_tree} 
The bipartite graph $T_{NM}$ is a tree. 
All leaves of $T_{NM}$ are in the set $\mathcal{M}(V)$.
\end{enumerate}
\end{claim}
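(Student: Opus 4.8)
The plan is to prove the two statements of Claim \ref{claim:T_NM_tree} in order, using the defining property of nodes (maximal sets of pairwise non-separated modules) and a careful analysis of the ``separates'' relation.

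For statement \eqref{item:T_NM_module_and_neighbors}, I would argue as follows. Fix a module $M \in \mathcal{M}(V)$ and two distinct nodes $N_1, N_2$ both containing $M$. Since $N_1 \neq N_2$ and both are maximal sets of pairwise non-separated modules, there must be a module in one that is separated from a module in the other; more precisely, since $N_1 \cup N_2$ is not itself a set of pairwise non-separated modules, there exist $M_1 \in N_1 \setminus N_2$ and $M_2 \in N_2 \setminus N_1$ and a vertex $w$ separating $M_1$ and $M_2$. The key point is to show that this separating vertex can be chosen inside $M$, and that a single vertex of $M$ simultaneously separates \emph{all} of $N_1 \setminus \{M\}$ from \emph{all} of $N_2 \setminus \{M\}$. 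For the first part: since $M \in N_1$, $M$ is non-separated from $M_1$, and since $M \in N_2$, $M$ is non-separated from $M_2$; I would use this, together with the fact that $M$ is a module (so for any vertex $v \notin M$, either $M \subseteq \leftside(v)$ or $M \subseteq \rightside(v)$, and symmetrically every $v \in M$ sees each external module entirely on one side), to transfer the separation witnessed by $w$ to a vertex of $M$. Concretely, for each pair $M' \in N_1 \setminus \{M\}$, $M'' \in N_2 \setminus \{M\}$ that happens to be separated by some external vertex, I want to locate a vertex of $M$ doing the same job; and for pairs that are \emph{not} separated by any vertex outside $M$, I must show they are nonetheless separated by a vertex \emph{of} $M$ (otherwise $\{M'\} \cup N_1$ or similar would violate maximality — here is where the maximality of nodes does the real work). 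Finally, a consistency/transitivity argument on the left-right sides shows that one and the same vertex $v \in M$ works for every such pair: if $v \in M$ separates $M'$ from $M''$ and also $M'$ from $M'''$ with $M'', M''' \in N_2$, then since $M''$ and $M'''$ are non-separated they lie on the same side of $v$, so $v$ puts all of $N_2 \setminus \{M\}$ on one side and, by the same reasoning applied to $N_1$, all of $N_1 \setminus \{M\}$ on the other.

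For statement \eqref{item:T_NM_tree}, I would first show $T_{NM}$ is connected and then acyclic. Connectivity: since $(V,\parallel)$ is connected (as $V$ is a parallel module, $(V,\sim)$ is disconnected so its complement is connected) — actually more directly, one shows any two modules of $\mathcal{M}(V)$ are linked by alternating module--node steps; a non-separated pair lies in a common node, and a separated pair can be connected by a path threading through intermediate modules, using that the ``sides of $v$'' relation behaves like a betweenness/linear structure. Acyclicity is the crux: suppose $T_{NM}$ contained a cycle $M^{(1)} - N^{(1)} - M^{(2)} - N^{(2)} - \cdots - M^{(1)}$. Using part \eqref{item:T_NM_module_and_neighbors}, each module $M^{(i)}$ on the cycle has a vertex $v_i \in M^{(i)}$ separating $N^{(i-1)} \setminus \{M^{(i)}\}$ from $N^{(i)} \setminus \{M^{(i)}\}$; in particular $v_i$ separates $M^{(i-1)}$ from $M^{(i+1)}$. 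I would then derive a contradiction from a ``cyclic orientation'' argument: walking around the cycle, the separations $v_i$ impose a consistent linear order along the cycle of modules on, say, left versus right, and a cycle of such strict separations cannot close up (this is analogous to why a set of betweenness constraints arranged in a cycle is unsatisfiable). Finally, leaves: a leaf of $T_{NM}$ cannot be a node $N$, because a node with only one neighbour $M$ would mean $N = \{M\}$, but then $M$ itself would have to be a node with $M$ as a neighbour — and one checks every module lies in at least two nodes unless $|\mathcal{M}(V)| \le 2$, the degenerate case handled directly — so leaves are modules.

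The main obstacle I anticipate is statement \eqref{item:T_NM_module_and_neighbors}, specifically showing that a \emph{single} vertex of $M$ separates \emph{all} of $N_1 \setminus \{M\}$ from \emph{all} of $N_2 \setminus \{M\}$ rather than just some pair. This requires combining (i) the module property of every element of $\mathcal{M}(V)$ to know each external module sits wholly on one side of each vertex, (ii) the maximality in the definition of a node to rule out ``missing'' separations, and (iii) a transitivity argument on the left/right relation. Once \eqref{item:T_NM_module_and_neighbors} is in hand, the tree property follows from a fairly standard cycle-elimination argument, so the weight of the proof is front-loaded into the first part.
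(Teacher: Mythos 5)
Your treatment of statement \eqref{item:T_NM_module_and_neighbors} is in essence the paper's: pick separated $M_1\in N_1\setminus N_2$, $M_2\in N_2\setminus N_1$ with witness $w$; observe $w\in M$, else $M$ would lie wholly on one side of $w$ and $w$ would separate $M$ from $M_1$ or from $M_2$, contradicting that $N_1$ or $N_2$ is a node; then note the same $w$ puts all of $N_1\setminus\{M\}$ on one side and all of $N_2\setminus\{M\}$ on the other, since a disagreement would exhibit a separated pair inside $N_1$ or $N_2$. Your intermediate detour through ``locating a vertex of $M$ for each pair'' is unnecessary — once $w\in M$ is established the single vertex already does the whole job — but the final transitivity observation is exactly the right closing move.

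The real gap is in your acyclicity argument. From part \eqref{item:T_NM_module_and_neighbors} you extract, for each $M^{(i)}$ on the cycle, a vertex $v_i\in M^{(i)}$ separating $M^{(i-1)}$ from $M^{(i+1)}$, and then appeal to a ``cyclic orientation''/betweenness argument to declare this inconsistent. That appeal does not hold up: separations are induced by chords on a \emph{circle}, where cyclic systems of ``$b$ is between $a$ and $c$'' constraints are perfectly satisfiable (three chords near three equally spaced arcs realize $v_i$ separating $M^{(i-1)}$ from $M^{(i+1)}$ for all $i$). The contradiction has to come from the node constraints, which your cycle argument never invokes. The paper's proof uses only \emph{one} vertex $v\in M_1$, which by part \eqref{item:T_NM_module_and_neighbors} separates $M_2$ from $M_k$, and then performs a discrete intermediate-value step: walking $M_2,M_3,\dots,M_k$ the side of $v$ must flip at some index $i$, so $v$ separates $M_i$ from $M_{i+1}$ — but both lie in the node $N_i$, a contradiction. (The case $k=2$ is killed immediately since $M_2$ would have to be on both sides of $v$.) This is the decisive step your proposal is missing; without it the argument does not close.

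The rest of statement \eqref{item:T_NM_tree} is also under-argued in your proposal. For connectivity and for ``every node has at least two modules'' (hence all leaves are modules), the paper works inside a fixed conformal model $\phi$: for the leaf claim it picks consecutive labeled letters $u^*,v^*$ with $v\in M$ and $u\notin M$ and observes that $M$ and the module containing $u$ cannot be separated, so $\{M\}$ is not maximal; for connectivity it similarly finds consecutive $u^*,v^*$ landing in modules from different alleged components and derives a non-separated pair bridging the components. Your leaf argument (``$M$ itself would have to be a node with $M$ as a neighbour'' and a count of how many nodes contain each module) conflates modules with nodes and does not touch the actual reason, namely the adjacency structure of the conformal model. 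You should replace both hand-waves with the explicit conformal-model argument.
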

\begin{proof}
Let $M$ be a module in $T_{NM}$ and let $N_1, N_2$ be two different nodes adjacent to $M$ in $T_{NM}$.
Since $N_1,N_2$ are different maximal subsets of pairwise non-separated modules from $\mathcal{M}(V)$, 
there is a module $M_1 \in N_1 \setminus N_2$ and a module $M_2 \in N_2 \setminus N_1$ such that $M_1$ and $M_2$ are separated by some $v \in V \setminus (M_1 \cup M_2)$.
Suppose $M_1 \subset \leftside(v)$ and $M_2 \subset \rightside(v)$.
Suppose $v \notin M$.
Then, depending on whether $M \subset \rightside(v)$ or $M \subset \leftside(v)$, 
$v$ separates either $M$ and $M_1$ or $M$ and $M_2$.
Hence, either $N_1$ or $N_2$ is not a node of $G_{ov}$.
So, $v \in M$.
Now, the modules from $N_1 \setminus \{M\}$ are on the left side of $v$ 
and the modules from $N_2 \setminus \{M\}$ are on the right side of $v$ 
as otherwise $N_1$ or $N_2$ is not a node of $G_{ov}$.
This proves \eqref{item:T_NM_module_and_neighbors}.

Now, we show that $T_{NM}$ is a tree.
First we prove that $T_{NM}$ contains no cycles.
Suppose that $M_1N_1 \ldots M_kN_k$ is a cycle in $T_{NM}$, for some $k \geq 2$.
%-- see Figure~1.
Since $N_1,N_k$ are neighbors of $M_1$ in $T_{NM}$, there
is $v \in M_1$ that separates the modules in $N_1 \setminus \{M_1\}$ 
and the modules in $N_k \setminus \{M_1\}$.
So, $v$ separates $M_{2}$ and $M_k$. 
In particular, $M_2$ and $M_k$ can not be in a node of $G_{ov}$,
and hence $k \geq 3$.
Since $M_2$ and $M_k$ are separated by $v$, 
there is $i \in [2,k-1]$ such that $M_i$ and $M_{i+1}$ are also separated by $v$.
So, $M_i$ and $M_{i+1}$ can not be contained in a node of $G_{ov}$,
which is not the case as $N_i$ contains both $M_i$ and $M_{i+1}$.
This shows that $T_{NM}$ is a forest.

To show that $T_{NM}$ has all leaves in the set $\mathcal{M}(V)$
it suffices to prove that every node in $G_{ov}$ contains at least two modules.
Suppose for a contradiction that $\{M\}$ is a node in $T_{NM}$, for some $M \in \mathcal{M}(V)$.
Let $\phi$ be a conformal model of $G_{ov}$.
Pick a vertex $u \in V \setminus M$ and a vertex $v \in M$ such that
$u^{*}$ and $v^{*}$ are two consecutive labeled letters in $\phi$ for some $u^{*} \in \{u^{0},u^{1}\}$
and $v^{*} \in \{v^{0},v^{1}\}$.
Then, note that $M$ and the module from $\mathcal{M}(V)$ containing $u$ are non-separated, 
which shows that $\{M\}$ can not be a node in $G_{ov}$.
It remains to show that $T_{NM}$ is connected.
Suppose $T_{NM}$ is a forest and suppose $\phi$ is any conformal model of $G_{ov}$.
Then, there are vertices $u$ and $v$ such that $u^{*}$ and $v^{*}$ are consecutive in $\phi$, 
$u^{*} \in \{u^{0},u^{1}\}$, $v^{*} \in \{v^{0},v^{1}\}$, $u \in M_u$, $v \in M_v$, and $M_u$ and $M_v$ are 
modules from $\mathcal{M}(V)$ which are
in different connected components of $T_{NM}$.
By the choice of $u$ and $v$ there is no vertex in $M \setminus (M_u \cup M_v)$
that separates $M_u$ and $M_v$.
So, $M_u$ and $M_v$ are adjacent to some node of $G_{ov}$ in $T_{NM}$,
which contradicts that $M_u$ and $M_v$ are in different connected components of $T_{NM}$.
\end{proof}

Let $M$ be a module and $N$ be a node such that $M$ and $N$ are adjacent in $T_{NM}$. 
Let $T_{NM} \setminus M$ be a forest obtained from $T_{NM}$ by deleting the module $M$ and let
$V_{T\setminus M}(N)$ be the set of all vertices contained in the modules from the connected component of $T_{NM} \setminus M$ containing~$N$.
Similarly, 
let $T_{NM} \setminus N$ be a forest obtained from $T_{NM}$ by deleting the node $N$ and let
$V_{T\setminus N}(M)$ be the set of all vertices contained in the modules from the connected component of $T_{NM} \setminus N$ containing~$M$.
In the example shown in Figure~\ref{fig:T_NM_tree}, we have
$V_{T \setminus M_1}(N_1) = M_2 \cup M_3$, $V_{T \setminus M_1}(N_2) = M_4$,
$V_{T \setminus N_3}(M_1) = M_1 \cup M_2 \cup M_3 \cup M_4$.
\begin{claim}
\label{claim:placements_of_neighboring_nodes_in_a_module}
Suppose $M$ is a module in $T_{NM}$, $N$ is a node adjacent to $M$ in $T_{NM}$ and $v$ is a vertex in $M$.
Then, either $V_{T\setminus M}(N) \subset \leftside(v)$ or $V_{T\setminus M}(N) \subset \rightside(v)$.
\end{claim}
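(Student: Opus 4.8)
The plan is to argue by contradiction, using that $T_{NM}$ is a tree (Claim~\ref{claim:T_NM_tree}.\eqref{item:T_NM_tree}) together with the defining property of nodes as \emph{maximal} sets of pairwise non-separated modules.

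First I would record the basic observation, made at the opening of Subsection~\ref{subsec:conformal_models_of_improper_parallel_modules}, that for any module $M' \in \mathcal{M}(V)$ and any vertex $w \notin M'$ we have either $M' \subseteq \leftside(w)$ or $M' \subseteq \rightside(w)$. This will be applied with $w = v$: since $v \in M$ and the children of $V$ are pairwise disjoint, every module $M' \neq M$ of $\mathcal{M}(V)$ satisfies $v \notin M'$, hence $M'$ lies entirely on one side of $v$. Note also that $V_{T\setminus M}(N)$ is, by definition, the union of those modules of $\mathcal{M}(V)$ that are vertices of the connected component $\mathcal{C}$ of $T_{NM}\setminus M$ containing $N$, and that none of these modules equals $M$, hence none contains $v$.

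Next, suppose for contradiction that $V_{T\setminus M}(N)$ is contained neither in $\leftside(v)$ nor in $\rightside(v)$. Then there are modules $M_1, M_2$ that are vertices of $\mathcal{C}$ with $M_1 \subseteq \leftside(v)$ and $M_2 \subseteq \rightside(v)$. Since $T_{NM}$ is a tree, there is a unique path from $M_1$ to $M_2$, and as $M_1, M_2 \in \mathcal{C}$ this path stays inside $\mathcal{C}$; in particular it avoids $M$. Writing the path as $M_1 = M^{(0)}, N^{(1)}, M^{(1)}, N^{(2)}, \ldots, N^{(t)}, M^{(t)} = M_2$, each $M^{(j)}$ is a module of $\mathcal{M}(V)$ lying in $\mathcal{C}$, hence $M^{(j)} \subseteq \leftside(v)$ or $M^{(j)} \subseteq \rightside(v)$ and $v \notin M^{(j)}$. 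Because $M^{(0)} \subseteq \leftside(v)$ while $M^{(t)} \subseteq \rightside(v)$, there is an index $j$ with $M^{(j-1)} \subseteq \leftside(v)$ and $M^{(j)} \subseteq \rightside(v)$; then $v$, which lies outside both $M^{(j-1)}$ and $M^{(j)}$, separates $M^{(j-1)}$ and $M^{(j)}$. But $M^{(j-1)}$ and $M^{(j)}$ are both members of the node $N^{(j)}$, contradicting the fact that the modules in a node are pairwise non-separated. This contradiction finishes the proof.

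I do not expect a real obstacle here. The only point needing a little care is the opening observation that each module of $\mathcal{M}(V)$ sits on a single side of $v$; this is exactly the remark stated at the start of the subsection and holds because $v$ and $M'$ lie in different connected components of $(V,{\sim})$ (so $v \parallel M'$). Everything else is the standard ``a path running from the left side to the right side of $v$ must at some point cross from one to the other'', transported to the tree $T_{NM}$, where ``crossing'' is blocked by the non-separation property of the node sitting between the two offending modules.
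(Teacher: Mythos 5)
Your proof is correct and follows essentially the same route as the paper's: both take two modules $M_1, M_2$ in the component $F_N$ lying on opposite sides of $v$, pick a path between them inside $F_N$, and locate a consecutive pair $M', M''$ in some node $N'$ that $v$ separates, contradicting non-separation. Your write-up adds a bit more scaffolding (spelling out that each module lies wholly on one side of $v$ and that the path avoids $M$), but the underlying argument is identical.
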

\begin{proof}
Let $F_N$ be a connected component of $T_{NM} \setminus M$ containing $N$.
Suppose there are two modules $M_1,M_2 \in F_N$ 
such that $M_1 \subset \leftside(v)$ and $M_2 \subset \rightside(v)$.
Let $P$ be a path between $M_1$ and $M_2$ in $F_{N}$.
Clearly, there are three consecutive elements $M'_1 N' M'_2$ on the path $P$
such that $M'_1 \in \leftside(v)$ and $M'_2 \in \rightside(v)$.
Thus, $M'_1$ and $M'_2$ are separated by $v$,
which contradicts $M'_1,M'_2 \in N'$.
\end{proof}
Let $M$ be a module in $T_{NM}$, $N,N'$ be two nodes adjacent to $M$ in $T_{NM}$, and $v$ be a vertex in $M$. 
We say that:
\begin{itemize}
 \item $V_{T \setminus M}(N)$ (or shortly $N$) is \emph{on the left (right) side} of $v$ if $V_{T \setminus M}(N) \subset \leftside(v)$ ($V_{T \setminus M}(N) \subset \rightside(v)$, respectively),
 \item $v$ \emph{separates} $V_{T\setminus M}(N)$ and $V_{T\setminus M}(N')$ (or shortly $v$ \emph{separates} $N$
 and $N'$) if $v$ has $N$ and $N'$ on its different sides.
\end{itemize}
We use similar phrases to describe the mutual location of the corresponding sets of chords in conformal models of $G_{ov}$.

\begin{claim}
\label{claim:conformal_models_properties_modules_nodes_T_NM}
Let $M$ be a module in $T_{NM}$, $N$ be a node in $T_{NM}$, and $\phi$ be a conformal model of $G_{ov}$.
Then:
\begin{enumerate}
\item \label{item:conformal_models_properties_modules_T_NM} For every node $N' \in N_{T}[M]$ the set $\phi|V_{T \setminus M}(N')$ contains a single contiguous subword of $\phi$.
Moreover, for every two different nodes $N',N'' \in N_T[M]$ there is $v \in M$ 
such that $\phi(v)$ separates $\phi|V_{T \setminus M}(N')$
and $\phi|V_{T \setminus M}(N'')$.
\item \label{item:conformal_models_properties_nodes_T_NM} For every module $M' \in N_T[N]$ the set $\phi|V_{T \setminus N}(M')$ contains a single contiguous subword of $\phi$.
\end{enumerate}
\end{claim}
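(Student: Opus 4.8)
The plan is to leverage the tree structure of $T_{NM}$ (Claim \ref{claim:T_NM_tree}) together with the characterization of conformal models of $G_{ov}$ when $(V,{\sim})$ is disconnected. Since $V$ is an improper parallel module with children $\mathcal{M}(V)$, every chord model $\psi$ of $G_{ov}$ (and in particular every conformal $\phi$) restricted to $V$ has, by Lemma \ref{lemma:circle_models_of_a_parallel_module}, the form $\tau_{i_1}\ldots\tau_{i_k}\tau'_{i_k}\ldots\tau'_{i_1}$, where $(i_1,\ldots,i_k)$ is a permutation of the indices of $\mathcal{M}(V)$ and each $\phi|M_{i_j}$ consists of the two contiguous subwords $\tau_{i_j},\tau'_{i_j}$. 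The key fact I would extract first is: \emph{for any module $M'\in\mathcal{M}(V)$ the set $\phi|M'$ forms exactly two contiguous subwords of $\phi$, and for any $v\in V\setminus M'$ the chord $\phi(v)$ lies entirely on the left side of every chord $\phi(w)$, $w\in M'$, if $M'\subseteq\leftside(v)$, and entirely on the right side otherwise.} This is exactly the combinatorial content of ``$M'$ is on the left/right side of $v$'' transported to chords, and it follows from the definition of conformal models together with Lemma \ref{lemma:circle_models_of_a_parallel_module}.

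Next I would prove \eqref{item:conformal_models_properties_nodes_T_NM} first, as it is the simpler of the two and \eqref{item:conformal_models_properties_modules_T_NM} will reduce to a similar argument. Fix a module $M'\in N_T[N]$. The set $V_{T\setminus N}(M')$ is the union of all modules lying in the connected component of $T_{NM}\setminus N$ containing $M'$; by Claim \ref{claim:T_NM_tree}\eqref{item:T_NM_tree} this component is itself a tree whose module-vertices form a sub-family of $\mathcal{M}(V)$. Enumerate these modules. Each individually occupies two contiguous subwords of $\phi$; I want to show that together they occupy a single contiguous subword. The obstruction would be some vertex $u$ whose module $M_u\notin V_{T\setminus N}(M')$ has a chord end interleaved among the chords of $V_{T\setminus N}(M')$. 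I would argue that $M_u$ lies, in $T_{NM}$, ``beyond'' $N$ from the point of view of $M'$, so that every vertex of $N$-side modules separates $M_u$ from the interior of $V_{T\setminus N}(M')$; more precisely, choosing a shortest path in $T_{NM}$ from $M_u$ to $M'$, it passes through $N$ and through some module $M''\in N\cap N_T[\cdot]$, and by Claim \ref{claim:T_NM_module_and_neighbors} applied along this path one finds a vertex $w\in M''$ that separates $M_u$ from all of $V_{T\setminus N}(M')$. Transporting to chords via the key fact above, $\phi(w)$ has $M_u$ on one side and the whole of $V_{T\setminus N}(M')$ on the other, so $\phi(u)$ cannot be interleaved. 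Hence $\phi|V_{T\setminus N}(M')$ is contiguous.

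For \eqref{item:conformal_models_properties_modules_T_NM}, fix $M\in T_{NM}$ and a neighbouring node $N'$. By the same shortest-path reasoning — now the path from any module outside $V_{T\setminus M}(N')$ to a module inside it passes through $M$ — Claim \ref{claim:T_NM_module_and_neighbors} gives a vertex $v\in M$ separating the outside module from all of $V_{T\setminus M}(N')$, so $\phi|V_{T\setminus M}(N')$ is a single contiguous subword. For the ``moreover'' part, take two distinct nodes $N',N''\in N_T[M]$: Claim \ref{claim:T_NM_module_and_neighbors} directly yields $v\in M$ separating the modules in $N'\setminus\{M\}$ from those in $N''\setminus\{M\}$; I then extend this to all of $V_{T\setminus M}(N')$ versus $V_{T\setminus M}(N'')$ by the shortest-path argument again, and finally translate ``$v$ separates'' into ``$\phi(v)$ separates the two contiguous subwords $\phi|V_{T\setminus M}(N')$ and $\phi|V_{T\setminus M}(N'')$'' using Claim \ref{claim:placements_of_neighboring_nodes_in_a_module} and the conformality of $\phi$.

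The main obstacle I anticipate is the bookkeeping in the shortest-path argument: one must verify that along the path in $T_{NM}$ from an ``outside'' module to an ``inside'' one, the separating vertex supplied by Claim \ref{claim:T_NM_module_and_neighbors} at the relevant module genuinely separates \emph{the entire} subtree $V_{T\setminus M}(N')$ (resp.\ $V_{T\setminus N}(M')$), not merely the two specific modules on the path; this requires a small induction on the tree, essentially Claim \ref{claim:placements_of_neighboring_nodes_in_a_module} propagated along edges. Everything else — the passage between the separation relation on vertices/modules and the left/right position of chords in $\phi$ — is routine given conformality and the restriction lemmas of Section \ref{sec:tools}.
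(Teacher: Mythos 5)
Your proposal uses the same two building blocks as the paper, namely Claim~\ref{claim:placements_of_neighboring_nodes_in_a_module} (the entire subtree $V_{T\setminus M}(N)$ lies on one side of every $v\in M$) and Claim~\ref{claim:T_NM_tree}.\eqref{item:T_NM_module_and_neighbors} (two node-neighbours of a module are split by some vertex of that module), and your argument for part~\eqref{item:conformal_models_properties_modules_T_NM} is essentially the paper's. The organisational difference is that you prove part~\eqref{item:conformal_models_properties_nodes_T_NM} first by a direct path argument in $T_{NM}$, whereas the paper proves~\eqref{item:conformal_models_properties_modules_T_NM} first and then deduces~\eqref{item:conformal_models_properties_nodes_T_NM} from it in two steps: it first observes that, because the modules of $N$ are pairwise non-separated, $\phi|M'$ is contiguous already inside the circular word $\phi|(\bigcup N)$, and then it glues onto $\phi|M'$ the contiguous blocks $\phi|V_{T\setminus M'}(N'')$ supplied by~\eqref{item:conformal_models_properties_modules_T_NM} for the node-neighbours $N''\neq N$ of~$M'$.

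This reversal leaves a genuine gap in your proof of~\eqref{item:conformal_models_properties_nodes_T_NM}. When $M_u$ is itself a neighbour of $N$ (the path is simply $M_u - N - M'$), there is no intermediate module strictly between $N$ and $M_u$ at which Claim~\ref{claim:T_NM_tree}.\eqref{item:T_NM_module_and_neighbors} can be invoked: applying it at $M'$ only yields $w\in M'\subseteq V_{T\setminus N}(M')$, and such a $w$ separates $M_u$ from $V_{T\setminus M'}(N'')$ for each $N''\neq N$ but says nothing about the relative placement of $\phi(u)$ among the chords of $M'$ itself. This is precisely the interaction the paper absorbs via the preliminary observation that $\phi|M'$ is contiguous in $\phi|(\bigcup N)$, a fact that rests on the non-separation of the modules inside $N$ (together with the laminarity of the components of the disconnected circle graph $G_{ov}$) rather than on a separating vertex. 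To repair your argument you would need to treat this case explicitly, essentially reproving that observation, after which the translation from separation to contiguity of the circular subword goes through as you indicate.
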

See Figure \ref{fig:T_NM_tree_subwords} for an illustration.
\begin{proof}
Statement \eqref{item:conformal_models_properties_modules_T_NM} follows from
Claim \ref{claim:conformal_models_properties_modules_nodes_T_NM} and Claim \ref{claim:T_NM_tree}.\eqref{item:T_NM_module_and_neighbors}

Since $N$ is a maximal subset of $\mathcal{M}(V)$ containing pairwise non-separated modules,
$\phi|M'$ is a contiguous word in the circular word $\phi|(\bigcup N)$.
Now, statement \eqref{item:conformal_models_properties_modules_T_NM} applied to the neighbors of the module $M'$ different than $N$ proves that $\phi|V_{T \setminus N}(M')$ is a contiguous subword of $\phi$.
\end{proof}

Let $\phi$ be a conformal model of $G_{ov}$,
$M$ be a module in $T_{NM}$, and $N$ be a node in $T_{NM}$.
For every $N' \in N_T[M]$ we replace the contiguous subword $\phi|V_{T \setminus M}(N')$ in $\phi$ by the letter $N'$.
We denote the circular word arisen this way by $\phi|(M \cup N_T[M])$.
Similarly, for every $M' \in N_T[N]$ we replace the contiguous subword $\phi|V_{T \setminus N}(M')$ in $\phi$ by the letter $M'$.
We denote the circular word arisen this way by $\phi|N_T[N]$.
See Figure \ref{fig:T_NM_tree_subwords} for an illustration.
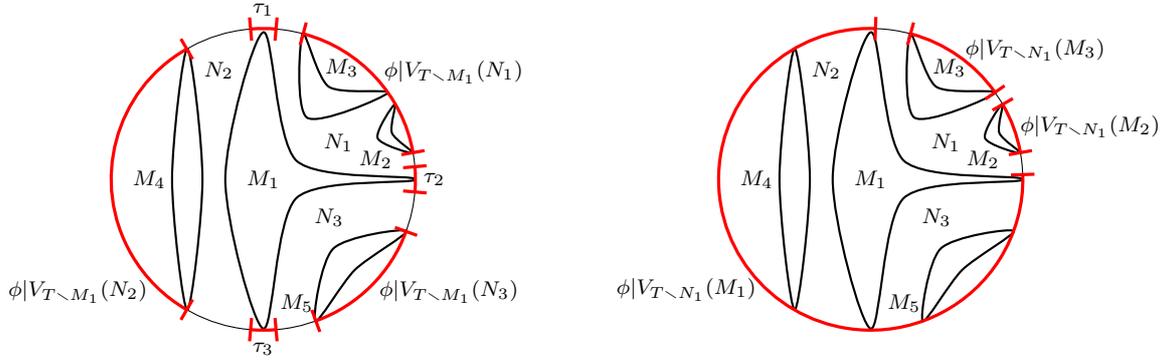
\begin{figure}[htp!]
\begin{tikzpicture}[scale=1,>=latex,shorten >=-0.4pt,shorten <=-0.4pt]
\coordinate (label) at (0,-3) {};

\coordinate (m11) at ($(center)+(90:1.95cm)$) {};
\coordinate (m112) at ($(center)+(180:0.5cm)$) {};
\coordinate (m12) at ($(center)+(270:1.98cm)$) {};
\coordinate (m13) at ($(center)+(330:0.5cm)$) {};
\coordinate (m14) at ($(center)+(0:2cm)$) {};
\coordinate (m15) at ($(center)+(30:0.5cm)$) {};
\coordinate (lm1) at ($(center)+(90:0.0cm)$) {};
\coordinate (phi_m1) at ($(center)+(225:2.07cm)$) {};

\coordinate (m21) at ($(center)+(10:2cm)$) {};
\coordinate (m22) at ($(center)+(20:1.8cm)$) {};
\coordinate (m23) at ($(center)+(30:2cm)$) {};
\coordinate (m24) at ($(center)+(20:1.6cm)$) {};
\coordinate (lm2) at ($(center)+(10:1.5cm)$) {};
\coordinate (phi_m2) at ($(center)+(20:2.05cm)$) {};

\coordinate (m31) at ($(center)+(35:2cm)$) {};
\coordinate (m32) at ($(center)+(55:1.5cm)$) {};
\coordinate (m33) at ($(center)+(75:2cm)$) {};
\coordinate (m34) at ($(center)+(55:1cm)$) {};
\coordinate (lm3) at ($(center)+(55:1.8cm)$) {};

\coordinate (m41) at ($(center)+(120:2cm)$) {};
\coordinate (m42) at ($(center)+(0:-1.2cm)$) {};
\coordinate (m43) at ($(center)+(240:2cm)$) {};
\coordinate (m44) at ($(center)+(0:-0.8cm)$) {};
\coordinate (lm4) at ($(center)+(180:1.5cm)$) {};

\coordinate (m51) at ($(center)+(290:2cm)$) {};
\coordinate (m52) at ($(center)+(315:1.7cm)$) {};
\coordinate (m53) at ($(center)+(340:2cm)$) {};
\coordinate (m54) at ($(center)+(315:1.3cm)$) {};
\coordinate (lm5) at ($(center)+(285:1.7cm)$) {};

\coordinate (ln1) at ($(center)+(26:1.1cm)$) {};
\coordinate (ln2) at ($(center)+(112:1.56cm)$) {};
\coordinate (ln3) at ($(center)+(330:1.0cm)$) {};

%\draw[very thick,red,|-|] ([shift=(10:2cm)]0,0) arc (10:75:2cm);
\coordinate (phi_n1) at ($(center)+(42.5:2.1cm)$) {};

%\draw[very thick,red,|-|] ([shift=(85:2cm)]0,0) arc (85:95:2cm);
\coordinate (phi_tau_1) at ($(center)+(90:2.25cm)$) {};

%\draw[very thick,red,|-|] ([shift=(120:2cm)]0,0) arc (120:240:2cm);
\coordinate (phi_n2) at ($(center)+(225:2.1cm)$) {};

%\draw[very thick,red,|-|] ([shift=(265:2cm)]0,0) arc (265:275:2cm);
\coordinate (phi_tau_2) at ($(center)+(270:2.25cm)$) {};

%\draw[very thick,red,|-|] ([shift=(290:2cm)]0,0) arc (290:340:2cm);
\coordinate (phi_n3) at ($(center)+(315:2.1cm)$) {};

%\draw[very thick,red,|-|] ([shift=(-5:2cm)]0,0) arc (-5:5:2cm);
\coordinate (phi_tau_3) at ($(center)+(0:2.25cm)$) {};

\begin{tiny}
\tikzstyle{every node}=[inner sep=1pt]
\node[anchor=west] at (phi_n1) {$\phi|V_{T\setminus M_1}(N_1)$};
\node[anchor=east] at (phi_n2) {$\phi|V_{T\setminus M_1}(N_2)$};
\node[anchor=west] at (phi_n3) {$\phi|V_{T\setminus M_1}(N_3)$};
\node at (phi_tau_1) {$\tau_1$};
\node at (phi_tau_2) {$\tau_3$};
\node at (phi_tau_3) {$\tau_2$};

\node at (lm1) {$M_1$};
\node at (lm2) {$M_2$};

\node at (lm3) {$M_3$};

\node at (lm4) {$M_4$};
\node at (lm5) {$M_5$};

\node at (ln1) {$N_1$};
\node at (ln2) {$N_2$};
\node at (ln3) {$N_3$};
\end{tiny}

%m1
\draw[thick] plot [smooth cycle] coordinates {(m11) (m112) (m12) (m13) (m14) (m15)};

%m2
\draw[thick] plot [smooth cycle] coordinates {(m21) (m22) (m23) (m24)};

%m3
\draw[thick] plot [smooth cycle] coordinates {(m31) (m32) (m33) (m34)};

%m4
\draw[thick] plot [smooth cycle] coordinates {(m41) (m42) (m43) (m44)};

%m5
\draw[thick] plot [smooth cycle] coordinates {(m51) (m52) (m53) (m54)};

\draw (0,0) circle (2cm);

\draw[very thick,red,|-|] ([shift=(10:2cm)]0,0) arc (10:75:2cm);

\draw[very thick,red,|-|] ([shift=(85:2cm)]0,0) arc (85:95:2cm);

\draw[very thick,red,|-|] ([shift=(120:2cm)]0,0) arc (120:240:2cm);

\draw[very thick,red,|-|] ([shift=(265:2cm)]0,0) arc (265:275:2cm);

\draw[very thick,red,|-|] ([shift=(290:2cm)]0,0) arc (290:340:2cm);

\draw[very thick,red,|-|] ([shift=(-5:2cm)]0,0) arc (-5:5:2cm);

\draw[white] (-3.4,-2.5)--(-3.4,-2.3);
\draw[white] (3.8,2.5)--(3.8,2.3);

\end{tikzpicture} 
\hspace{0.5cm}
\begin{tikzpicture}[scale=1,>=latex,shorten >=-0.4pt,shorten <=-0.4pt]
\coordinate (label) at (0,-3) {};

\coordinate (m11) at ($(center)+(90:1.95cm)$) {};
\coordinate (m112) at ($(center)+(180:0.5cm)$) {};
\coordinate (m12) at ($(center)+(270:1.98cm)$) {};
\coordinate (m13) at ($(center)+(330:0.5cm)$) {};
\coordinate (m14) at ($(center)+(0:2cm)$) {};
\coordinate (m15) at ($(center)+(30:0.5cm)$) {};
\coordinate (lm1) at ($(center)+(90:0.0cm)$) {};
\coordinate (phi_m1) at ($(center)+(225:2.07cm)$) {};

\coordinate (m21) at ($(center)+(10:2cm)$) {};
\coordinate (m22) at ($(center)+(20:1.8cm)$) {};
\coordinate (m23) at ($(center)+(30:2cm)$) {};
\coordinate (m24) at ($(center)+(20:1.6cm)$) {};
\coordinate (lm2) at ($(center)+(10:1.5cm)$) {};
\coordinate (phi_m2) at ($(center)+(20:2.05cm)$) {};

\coordinate (m31) at ($(center)+(35:2cm)$) {};
\coordinate (m32) at ($(center)+(55:1.5cm)$) {};
\coordinate (m33) at ($(center)+(75:2cm)$) {};
\coordinate (m34) at ($(center)+(55:1cm)$) {};
\coordinate (lm3) at ($(center)+(55:1.8cm)$) {};
\coordinate (phi_m3) at ($(center)+(55:2.1cm)$) {};

\coordinate (m41) at ($(center)+(120:2cm)$) {};
\coordinate (m42) at ($(center)+(0:-1.2cm)$) {};
\coordinate (m43) at ($(center)+(240:2cm)$) {};
\coordinate (m44) at ($(center)+(0:-0.8cm)$) {};
\coordinate (lm4) at ($(center)+(180:1.5cm)$) {};

\coordinate (m51) at ($(center)+(290:2cm)$) {};
\coordinate (m52) at ($(center)+(315:1.7cm)$) {};
\coordinate (m53) at ($(center)+(340:2cm)$) {};
\coordinate (m54) at ($(center)+(315:1.3cm)$) {};
\coordinate (lm5) at ($(center)+(285:1.7cm)$) {};

\coordinate (ln1) at ($(center)+(26:1.1cm)$) {};
\coordinate (ln2) at ($(center)+(112:1.56cm)$) {};
\coordinate (ln3) at ($(center)+(330:1.0cm)$) {};

\begin{tiny}
\tikzstyle{every node}=[inner sep=1pt]
\node at (lm1) {$M_1$};
\node[anchor=east] at (phi_m1) {$\phi|V_{T\setminus N_1}(M_1)$};
\node at (lm2) {$M_2$};
\node[anchor=west] at (phi_m2) {$\phi|V_{T\setminus N_1}(M_2)$};

\node at (lm3) {$M_3$};
\node[anchor=west] at (phi_m3) {$\phi|V_{T\setminus N_1}(M_3)$};

\node at (lm4) {$M_4$};
\node at (lm5) {$M_5$};

\node at (ln1) {$N_1$};
\node at (ln2) {$N_2$};
\node at (ln3) {$N_3$};
\end{tiny}

%m1
\draw[thick] plot [smooth cycle] coordinates {(m11) (m112) (m12) (m13) (m14) (m15)};

%m2
\draw[thick] plot [smooth cycle] coordinates {(m21) (m22) (m23) (m24)};

%m3
\draw[thick] plot [smooth cycle] coordinates {(m31) (m32) (m33) (m34)};

%m4
\draw[thick] plot [smooth cycle] coordinates {(m41) (m42) (m43) (m44)};

%m5
\draw[thick] plot [smooth cycle] coordinates {(m51) (m52) (m53) (m54)};

\draw (0,0) circle (2cm);

\draw[very thick,red,|-|] ([shift=(10:2cm)]0,0) arc (10:30:2cm);
\draw[very thick,red,|-|] ([shift=(35:2cm)]0,0) arc (35:75:2cm);
\draw[very thick,red,|-|] ([shift=(88:2cm)]0,0) arc (88:362:2cm);

\draw[white] (-3.4,-2.5)--(-3.4,-2.3);
\draw[white] (3.8,2.5)--(3.8,2.3);

\end{tikzpicture} 
\caption{\label{fig:T_NM_tree_subwords} 
A normalized model $\phi$ and contiguous subwords: $\phi|V_{T \setminus M_1}(N_1)$, $\phi|V_{T \setminus M_1}(N_2)$, and $\phi|V_{T \setminus M_1}(N_3)$ (to the left), and $\phi|V_{T \setminus N_1}(M_1)$, $\phi|V_{T \setminus N_1}(M_2)$, and $\phi|V_{T \setminus N_1}(M_3)$ (to the right). The circular word $\phi|(M_1 \cup N_T[M_1]) \equiv \tau_1N_1\tau_2N_3\tau_3N_2$ and the circular word $\phi|N_{T}[N_1] \equiv M_1M_3M_2$.}
\end{figure}
Note that $\phi|N_T[N]$ is a circular permutation of the modules from the set $N_T[N]$, for every node $N$ in $T_{NM}$.
Now, our goal is to describe the circular words $\phi|(M \cup N_T[M])$ that arise from conformal models $\phi$ for all the modules $M$ in $T_{NM}$.
To accomplish our task, we consider two cases: $M$ is prime and $M$ is serial.

\subsubsection{$M$ is prime}
Fix a prime module $M$ in $T_{NM}$.
Suppose $\phi$ is a conformal model of $G_{ov}$.
Denote by $\phi'$ the circular word $\phi|(M \cup N_T[M])$.
Clearly, $\phi|M$ is a conformal model of $(M,{\sim})$.
Let $S$ be a slot in $\pi(\phi|M)$.
Let $\tau_{\phi'}(S)$ be the smallest contiguous subword of $\phi'$ containing 
all the letters from $S$ and containing no letter from other slots of $\pi(\phi|M)$.
Clearly, $\tau_{\phi'}(S)|S$ is a permutation of $S$.
From Claim \ref{claim:conformal_models_properties_modules_nodes_T_NM}.\eqref{item:conformal_models_properties_modules_T_NM} we deduce that $\tau_{\phi'}$ has the form
$$\tau_{\phi'}(S) = \tau^{1}_{\phi'}\ N_{\phi'}^{1}\ \tau_{\phi'}^{2}\ \ldots\ \tau_{\phi'}^{l-1}\ N_{\phi'}^{(l-1)}\ \tau_{\phi'}^{l} \text{ for some $l \geq 1$},$$
where $N_{\phi'}^1, \ldots, N_{\phi'}^{(l-1)}$ is a sequence of all nodes from $N_T[M]$ occurring in $\tau_{\phi'}(S)$ (possibly empty)
and $\tau_{\phi'}^{1}, \ldots, \tau_{\phi'}^{l}$ are non-empty words satisfying $\tau_{\phi'}^{1}\cdot \ldots \cdot \tau_{\phi'}^{l} = \tau_{\phi'}(S)|S$.
Next, let
$$p_{\phi'}(S) = (S^{1}_{\phi'},N^1_{\phi'},S^{2}_{\phi'}, \ldots, S_{\phi'}^{l-1}, N_{\phi'}^{(l-1)}, S_{\phi'}^{l}),$$
where $S_{\phi'}^{i}$ is the set containing all labeled letters from the word $\tau_{\phi'}^{i}$ for $i \in [l]$.
In particular, note that $(S_{\phi'}^{1}, \ldots, S_{\phi'}^{l})$ is an ordered partition of the slot $S$ --
see Figure \ref{fig:slot_pattern} for an illustration.

\begin{figure}[!htp]

\begin{tikzpicture}[xscale=4,yscale=0.8,>=latex,shorten >=-0.4pt,shorten <=-0.4pt]
    \coordinate (center) at (-2,0) {};
    
    \coordinate (u1) at ($(center)+(110:2)$) {};
    \coordinate (lu1) at ($(center)+(110:2.1)$) {};
    \coordinate (u2) at ($(center)+(105:2)$) {};
    \coordinate (lu2) at ($(center)+(105:2.1)$) {};
    \coordinate (u3) at ($(center)+(100:2)$) {};
    \coordinate (lu3) at ($(center)+(100:2.1)$) {};
    \coordinate (u4) at ($(center)+(95:2)$) {};
    \coordinate (lu4) at ($(center)+(95:2.1)$) {};
    \coordinate (u5) at ($(center)+(90:2)$) {};
    \coordinate (lu5) at ($(center)+(90:2.1)$) {};
    \coordinate (u6) at ($(center)+(85:2)$) {};
    \coordinate (lu6) at ($(center)+(85:2.1)$) {};
    \coordinate (u7) at ($(center)+(80:2)$) {};
    \coordinate (lu7) at ($(center)+(80:2.1)$) {};
    \coordinate (u8) at ($(center)+(75:2)$) {};
    \coordinate (lu8) at ($(center)+(75:2.1)$) {};
    \coordinate (u9) at ($(center)+(70:2)$) {};
    \coordinate (lu9) at ($(center)+(70:2.1)$) {};
 
    \coordinate (b9) at ($(center)+(290:2)$) {};
    \coordinate (lb9) at ($(center)+(290:2.1)$) {};
    \coordinate (b8) at ($(center)+(285:2)$) {};
    \coordinate (lb8) at ($(center)+(285:2.1)$) {};
    \coordinate (b7) at ($(center)+(280:2)$) {};
    \coordinate (lb7) at ($(center)+(280:2.1)$) {};
    \coordinate (b6) at ($(center)+(275:2)$) {};
    \coordinate (lb6) at ($(center)+(275:2.1)$) {};
    \coordinate (b5) at ($(center)+(270:2)$) {};
    \coordinate (lb5) at ($(center)+(270:2.1)$) {};
    \coordinate (b4) at ($(center)+(265:2)$) {};
    \coordinate (lb4) at ($(center)+(265:2.1)$) {};
    \coordinate (b3) at ($(center)+(260:2)$) {};
    \coordinate (lb3) at ($(center)+(260:2.1)$) {};
    \coordinate (b2) at ($(center)+(255:2)$) {};
    \coordinate (lb2) at ($(center)+(255:2.1)$) {};
    \coordinate (b1) at ($(center)+(250:2)$) {};
    \coordinate (lb1) at ($(center)+(250:2.1)$) {};

   \coordinate (lS) at ($(center)+(118:2)$) {};
   \coordinate (lS') at ($(center)+(242:2)$) {};
    
    %\draw[very thick] ($(center) + (60:2)$) arc (60:120:2);    
    
    \draw ($(center) + (65:2)$) arc (65:115:2);    
    \draw ($(center) + (245:2)$) arc (245:295:2);    

    \draw[line width=1mm] ($(center) + (98:2)$) arc (98:102:2);    
    \draw[line width=1mm] ($(center) + (83:2)$) arc (83:87:2);    

    \draw[line width=1mm] ($(center) + (258:2)$) arc (258:262:2);    
    \draw[line width=1mm] ($(center) + (283:2)$) arc (283:287:2);    
    
    \tikzstyle{every node}=[inner sep=1pt]
    \node at (lS) {$S$};
    \node at (lS') {$S'$};
    \begin{footnotesize}
%    \node[anchor=south] at (lS1) {$S1$};
    \node[anchor=south] at (lu1) {$v^1_1$};
    \node[anchor=south] at (lu2) {$v^0_2$};
    \node[anchor=south] at (lu3) {$N_1$};
    \node[anchor=south] at (lu4) {$v^1_3$};
    \node[anchor=south] at (lu5) {$v^1_4$};
    \node[anchor=south] at (lu6) {$N_2$};
    \node[anchor=south] at (lu7) {$v^0_5$};
    \node[anchor=south] at (lu8) {$v^0_6$};
    \node[anchor=south] at (lu9) {$v^1_7$};

    \node[anchor=north] at (lb1) {$v^0_3$};
    \node[anchor=north] at (lb2) {$v^0_1$};
    \node[anchor=north] at (lb3) {$N_4$};
    \node[anchor=north] at (lb4) {$v^1_2$};
    \node[anchor=north] at (lb5) {$v^1_6$};
    \node[anchor=north] at (lb6) {$v^0_7$};
    \node[anchor=north] at (lb7) {$v^0_4$};
    \node[anchor=north] at (lb8) {$N_3$};
    \node[anchor=north] at (lb9) {$v^1_5$};
    \end{footnotesize}
    
    \draw[thick,<-] (u1) -- (b2);
    \draw[thick,->] (u2) -- (b4);
    \draw[thick,<-] (u4) -- (b1);
    \draw[thick,<-] (u5) -- (b7);
    \draw[thick,->] (u7) -- (b9);
    \draw[thick,->] (u8) -- (b5);
    \draw[thick,<-] (u9) -- (b6);

\end{tikzpicture}
\caption{\label{fig:slot_pattern} Pattern of the slot $S$: 
$(\{v^1_1,v^0_2\},N_1,\{v^1_3,v^1_4\},N_2,\{v^0_5,v^0_6,v^1_7\})$.
Pattern of the slot $S'$: $(\{v^1_5\},N_3,\{v^0_4,v^0_7,v^1_6,v^1_2\},N_4,\{v^0_1,v^0_3\})$.}
\end{figure}
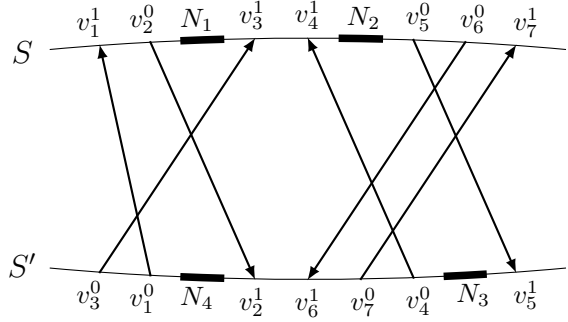

Now, let $(S,T)$ be two consecutive slots in $\pi_{\phi'|M}(M)$.
Denote by $p_{\phi'}(S,T)$ the set of all nodes from $N_T[M]$ that appear between $\tau_{\phi'}(S)$
and $\tau_{\phi'}(T)$ in the circular word $\phi'$.
Claim \ref{claim:conformal_models_properties_modules_nodes_T_NM}.\eqref{item:conformal_models_properties_modules_T_NM} proves that the set $p_{\phi'}(S,T)$ is either empty or contains exactly one node.
Similarly to the previous sections, it turns out that $p_{\phi'}(S)$ and $p_{\phi'}(S,T)$ 
do not depend on the choice of $\phi$
provided we choose $\phi$ from the set of conformal models 
admitting the same circular order of the slots.
\begin{claim}
\label{claim:ivariants_for_prime_child_of_V}
Let $m \in \{0,1\}$ and let $M$ be a prime module in $T_{NM}$.
\begin{enumerate}
\item For every slot $S$ in $\pi_m(M)$ there exists a sequence 
$$p_m(S) = (S^{1}_{m},N^1_{m},S^{2}_{m}, \ldots, S_{m}^{l-1}, N_{m}^{l-1}, S_{m}^{l})$$
where $(S^{1}_{m}, \ldots, S^{l}_m)$ is an ordered partition of $S$ and $N_m^1,\ldots,N_m^{l-1}$ are nodes from $N_T[M]$ such that $$p_m(S) = p_{\phi|(M\cup N_T[M])}(S)$$ 
for every conformal model of $G_{ov}$ such that $\pi(\phi|M) = \pi_{m}(M)$.
\item For every two consecutive slots $(S,T)$ in $\pi_{m}(M)$ there exists a set $p_m(S,T) \subset N_T[M]$ 
such that $$p_{m}(S,T) = p_{\phi|(M \cup N_T[M])}(S,T)$$ 
for every conformal model $\phi$ of $G_{ov}$ such that $\pi(\phi|M) = \pi_{m}(M)$.
\end{enumerate}
\end{claim}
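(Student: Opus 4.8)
The plan is to show that, once the circular order $\pi_m(M)$ of the slots of $M$ has been fixed, the position of every node-letter of $N_T[M]$ in the circular word $\phi|(M\cup N_T[M])$ is already determined by invariants of $G$, so that neither $p_{\phi|(M\cup N_T[M])}(S)$ nor $p_{\phi|(M\cup N_T[M])}(S,T)$ can depend on the particular conformal model $\phi$ of $G_{ov}$ with $\pi(\phi|M)=\pi_m(M)$.

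First I would record the structural fact that for every node $N'\in N_T[M]$ the set $V_{T\setminus M}(N')$ satisfies $V_{T\setminus M}(N')\parallel M$: indeed $M$ and all the modules of $\mathcal{M}(V)$ contained in $V_{T\setminus M}(N')$ are children of the parallel module $V$. Consequently, in any conformal model $\phi$ of $G_{ov}$, the single contiguous subword $\phi|V_{T\setminus M}(N')$ provided by Claim~\ref{claim:conformal_models_properties_modules_nodes_T_NM} crosses no chord $\phi(v)$ with $v\in M$, so by Claim~\ref{claim:placements_of_neighboring_nodes_in_a_module} it lies, for each $v\in M$, entirely on the left side of $\phi(v)$ if $V_{T\setminus M}(N')\subseteq\leftside(v)$ and entirely on the right side otherwise. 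Write $\delta(N',v)$ for this side; it is a function of $G$ alone. Thus, in $\phi' := \phi|(M\cup N_T[M])$, the node-letter $N'$ occupies a gap between two consecutive labeled letters of $M^{*}$, and this gap must be consistent with $\delta(N',v)$ for every $v\in M$.

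Next I would argue that, because $\pi(\phi|M)=\pi_m(M)$, this gap is forced. The circular order of the slots of $M$ is $\pi_m(M)$; by Claim~\ref{claim:prime_module_invariants} the labeled copies $K^0_i,K^1_i$ of every consistent submodule $K_i$ and the orientation $<_{K_i}$ of $(K_i,{\parallel})$ are also fixed, so the only remaining freedom in $\phi|M$ is the choice of an admissible model for each metaedge $\mathbb{K}_i$, i.e.\ a transitive orientation of $(K_i,{\sim})$ internal to the slots $K^0_i,K^1_i$. For a fixed $v\in K_i$ the chord $\phi(v)$ has its two ends in $K^0_i$ and $K^1_i$, so $\pi_m(M)$ together with the orientation of $v$ already splits all remaining slots and gaps into the left part and the right part of $\phi(v)$, independently of the internal orders of the slots. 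Intersecting these constraints over all $v\in M$, using the invariant side $\delta(N',v)$, pins down which slot $K^j_i$, or which inter-slot gap, contains $N'$. If $N'$ falls inside a slot $K^j_i$, I would pin down its position among the cells of that slot by the same device applied to the vertices $v\in K_i$: two such vertices $v,w$ with $v\parallel w$ have their order inside the slot fixed by $<_{K_i}$; two such vertices with $v\sim w$ have two possible orders, but the two orders place $N'$ on opposite sides of $\phi(v)$, so whenever a node actually lies between $v^{*}$ and $w^{*}$ the invariant $\delta(N',v)$ selects the order, while if no node separates $v^{*}$ and $w^{*}$ their order is immaterial, since they end up in a common cell, which is recorded only as a set.

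Finally I would assemble the two statements. For a slot $S=K^j_i$, the ordered partition $(S^1,\ldots,S^l)$ with the interleaved nodes is exactly the data ``$v^{*}$ and $w^{*}$ lie in the same cell iff no node of $N_T[M]$ lies strictly between them in $\phi'$, and a node lies between two consecutive cells in a prescribed order'', and by the previous paragraph this is a function of $G$ and $m$ only; this is $p_m(S)$. For two consecutive slots $S,T$ of $\pi_m(M)$, Claim~\ref{claim:conformal_models_properties_modules_nodes_T_NM} forces at most one node into the gap between them — two distinct nodes in a gap containing no letter of $M^{*}$ could not be separated by any $\phi(v)$ with $v\in M$ — and whether such a node exists and which one it is is again read off from the $\delta(N',\cdot)$'s; this is $p_m(S,T)$. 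The case $m'=1-m$ follows because reflecting $\phi$ reverses $\pi_m(M)$ and reverses all of this data simultaneously. The step I expect to be the main obstacle is the middle one: turning ``the invariant sides $\delta(N',\cdot)$ and the slot order $\pi_m(M)$ determine the gap and the cell'' into a rigorous argument requires careful bookkeeping of the labeled copies $K^0_i,K^1_i$, of the orientations of the chords, and of the clockwise versus left/right conventions, and in particular a careful check of the dichotomy for $\sim$-adjacent vertices inside a slot.
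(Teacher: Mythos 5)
Your proposal is correct and follows essentially the same approach as the paper's proof: both deduce that, once $\pi_m(M)$ and the metaedges $\mathbb{K}_i$ are fixed (Claim~\ref{claim:prime_module_invariants}), the position of every node $N\in N_T[M]$ inside $\phi|(M\cup N_T[M])$ is forced by the $G$-invariant sides from Claim~\ref{claim:placements_of_neighboring_nodes_in_a_module} together with the fixed circular order of the slots. The paper packages this as an iterative algorithm that inserts each node and refines $p_m(S)$ by splitting a cell, while you argue the uniqueness directly via the side constraints and the $<_{K_i}$/$\sim$ dichotomy inside a slot, but the underlying mechanism is the same.
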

\begin{proof}
The algorithm computes $p_m(S)$ and $p_m(S,T)$ as follows.
It starts with the circular orientation $\pi_m(M)$.
Then, for every node $N \in N_T[M]$
it finds a slot $S$ in $\pi_m(M)$ or two consecutive slots $(S,T)$ in $\pi_m(M)$ 
where the node $N$ must be placed, 
basing on whether $V_{T \setminus M}(N)$ is on the left or right side of $v$, for every $v \in M$.
Clearly, when the algorithm inserts $N$ to $S$,
the pattern of $S$ needs to be updated accordingly.

From the perspective of the slot $S$, the algorithm works as follows. 
Let $K$ be a consistent submodule of $M$ associated with the slot $S$.
Let $v$ be a vertex in $M \setminus K$ such that $v \sim K$; 
such a vertex exists as $M$ is prime.
Let $L$ be a consistent submodule of $M$ containing $v$.
Assume that $v$ is oriented from $L^0$ to $L^1$ and $S$ appears 
between $L^{0}$ and $L^{1}$ in $\pi_m(M)$ (for the other case the algorithm works similarly).
The algorithm starts with $p_m(S) = (S)$.
Then, for every node $N \in N_T[M]$ such that $V_{T \setminus M}(N) \subset \leftside(v)$
the algorithm computes the sets $S_1$ and $S_2$, where
$$
S_1 = 
\begin{array}{l}
\{s^{0} \in S: \text{$N$ is on the left side of $s$} \}\ \cup \\
\{s^{1} \in S: \text{$N$ is on the right side of $s$} \}
\end{array}
$$
and 
$$
S_2 =  S \setminus S_1.
$$
If $S_1 = \emptyset$ or $S_2 = \emptyset$, the algorithm does nothing.
Otherwise, the algorithm refines $p_m(S)$ by $(S_1,N,S_2)$, which means
that it finds the unique set $S' \in p_m(S)$ such that $S' \cap S_1 \neq \emptyset$ and 
$S' \cap S_2 \neq \emptyset$ and then it replaces $S'$ in $p_m(S)$ by the triple
$(S' \cap S_1, N, S' \cap S_2)$.
\end{proof}
Given a slot $S$ of $\pi_m(M)$, 
$p_m(S)$ is called the \emph{pattern} of the slot $S$ in $\pi_m(M)$,
the sequence $(N_1,\ldots,N_{l-1})$ is called the \emph{sequence of the nodes} in the slot $S$ in $\pi_m(M)$,
and $(S_1,\ldots,S_l)$ is called the \emph{ordered partition} of the slot $S$ in $\pi_m(M)$.
For every two consecutive slots $(S,T)$ in $\pi_m(M)$, 
$p_m(S,T)$ is called the \emph{set of nodes} between $S$ and $T$ in $\pi_{m}(M)$.

Suppose $K_1,\ldots,K_n$ is a consistent decomposition of $M$ and
$\mathbb{K}_i = (K^0_i,K^1_i,<_{K_i})$ is the metaedge of $K_i$ for $i \in [n]$.
With respect to the metaedge $\mathbb{K}_i$, we naturally divide the nodes from $N_T[M]$ into three categories:
\begin{itemize}
 \item \emph{$N$ is on the left side of $\mathbb{K}_i$}, written $N \in \leftside(\mathbb{K}_i)$, if $N$ is on the left side of any $v \in K_i$
 oriented from $K^0_i$ to $K^1_i$ and $N$ is on the right side of any $v \in K_i$
 oriented from $K^1_i$ to $K^0_i$,
 \item \emph{$N$ is on the right side of $\mathbb{K}_i$}, written $N \in \rightside(\mathbb{K}_i)$, if $N$ is on the right side of any $v \in K_i$ oriented from $K^0_i$ to $K^1_i$ and $N$ is on the left side of any $v \in K_i$
 oriented from $K^1_i$ to $K^0_i$ 
 \item \emph{$N$ is inside $\mathbb{K}_i$}, written $N \in \inside(\mathbb{K}_i)$, if neither of the previous conditions hold.
\end{itemize}
In particular, the set $\inside(\mathbb{K}_i)$ contains exactly the nodes that appear in the patterns
$p_0(K^0_i)$ and $p_0(K^1_i)$ as well as in the patterns $p_1(K^0_i)$ or $p_1(K^1_i)$.
Indeed, note that for every $N \in N_T[M]$, $N$ appears in $p_0(K^0_i)$ iff $N$ appears in $p_1(K^1_j)$
and $N$ appears in $p_0(K^1_j)$ iff $N$ appears in $p_1(K^0_i)$
-- see Figure~\ref{fig:slot_pattern_reflection} demonstrating the changes in the patterns after the reflection of a normalized model $\phi$.

\begin{figure}[!htp]
\begin{tikzpicture}[xscale=4,yscale=0.8,>=latex,shorten >=-0.4pt,shorten <=-0.4pt]
    \coordinate (center) at (-2,0) {};
    
    \coordinate (u1) at ($(center)+(110:2)$) {};
    \coordinate (lu1) at ($(center)+(110:2.1)$) {};
    \coordinate (u2) at ($(center)+(105:2)$) {};
    \coordinate (lu2) at ($(center)+(105:2.1)$) {};
    \coordinate (u3) at ($(center)+(100:2)$) {};
    \coordinate (lu3) at ($(center)+(100:2.1)$) {};
    \coordinate (u4) at ($(center)+(95:2)$) {};
    \coordinate (lu4) at ($(center)+(95:2.1)$) {};
    \coordinate (u5) at ($(center)+(90:2)$) {};
    \coordinate (lu5) at ($(center)+(90:2.1)$) {};
    \coordinate (u6) at ($(center)+(85:2)$) {};
    \coordinate (lu6) at ($(center)+(85:2.1)$) {};
    \coordinate (u7) at ($(center)+(80:2)$) {};
    \coordinate (lu7) at ($(center)+(80:2.1)$) {};
    \coordinate (u8) at ($(center)+(75:2)$) {};
    \coordinate (lu8) at ($(center)+(75:2.1)$) {};
    \coordinate (u9) at ($(center)+(70:2)$) {};
    \coordinate (lu9) at ($(center)+(70:2.1)$) {};
 
    \coordinate (b9) at ($(center)+(290:2)$) {};
    \coordinate (lb9) at ($(center)+(290:2.1)$) {};
    \coordinate (b8) at ($(center)+(285:2)$) {};
    \coordinate (lb8) at ($(center)+(285:2.1)$) {};
    \coordinate (b7) at ($(center)+(280:2)$) {};
    \coordinate (lb7) at ($(center)+(280:2.1)$) {};
    \coordinate (b6) at ($(center)+(275:2)$) {};
    \coordinate (lb6) at ($(center)+(275:2.1)$) {};
    \coordinate (b5) at ($(center)+(270:2)$) {};
    \coordinate (lb5) at ($(center)+(270:2.1)$) {};
    \coordinate (b4) at ($(center)+(265:2)$) {};
    \coordinate (lb4) at ($(center)+(265:2.1)$) {};
    \coordinate (b3) at ($(center)+(260:2)$) {};
    \coordinate (lb3) at ($(center)+(260:2.1)$) {};
    \coordinate (b2) at ($(center)+(255:2)$) {};
    \coordinate (lb2) at ($(center)+(255:2.1)$) {};
    \coordinate (b1) at ($(center)+(250:2)$) {};
    \coordinate (lb1) at ($(center)+(250:2.1)$) {};

   \coordinate (lS) at ($(center)+(118:2)$) {};
   \coordinate (lS') at ($(center)+(242:2)$) {};
    
    %\draw[very thick] ($(center) + (60:2)$) arc (60:120:2);    
    
    \draw ($(center) + (65:2)$) arc (65:115:2);    
    \draw ($(center) + (245:2)$) arc (245:295:2);    

    \draw[line width=1mm] ($(center) + (98:2)$) arc (98:102:2);    
    \draw[line width=1mm] ($(center) + (83:2)$) arc (83:87:2);    

    \draw[line width=1mm] ($(center) + (258:2)$) arc (258:262:2);    
    \draw[line width=1mm] ($(center) + (283:2)$) arc (283:287:2);    
    \draw[thick,->] ($(center) + (180:0.82)$)--($(center) + (0:0.8)$);    
    
    \tikzstyle{every node}=[inner sep=1pt]
    \node at (lS) {$K^1_i$};
    \node at (lS') {$K^0_i$};
    \begin{footnotesize}
%    \node[anchor=south] at (lS1) {$S1$};
    \node[anchor=south] at (lu1) {$v^1_1$};
    \node[anchor=south] at (lu2) {$v^0_2$};
    \node[anchor=south] at (lu3) {$N_1$};
    \node[anchor=south] at (lu4) {$v^1_3$};
    \node[anchor=south] at (lu5) {$v^1_4$};
    \node[anchor=south] at (lu6) {$N_2$};
    \node[anchor=south] at (lu7) {$v^0_5$};
    \node[anchor=south] at (lu8) {$v^0_6$};
    \node[anchor=south] at (lu9) {$v^1_7$};

    \node[anchor=north] at (lb1) {$v^0_3$};
    \node[anchor=north] at (lb2) {$v^0_1$};
    \node[anchor=north] at (lb3) {$N_4$};
    \node[anchor=north] at (lb4) {$v^1_2$};
    \node[anchor=north] at (lb5) {$v^1_6$};
    \node[anchor=north] at (lb6) {$v^0_7$};
    \node[anchor=north] at (lb7) {$v^0_4$};
    \node[anchor=north] at (lb8) {$N_3$};
    \node[anchor=north] at (lb9) {$v^1_5$};
    \end{footnotesize}
    
    \draw[thick,<-] (u1) -- (b2);
    \draw[thick,->] (u2) -- (b4);
    \draw[thick,<-] (u4) -- (b1);
    \draw[thick,<-] (u5) -- (b7);
    \draw[thick,->] (u7) -- (b9);
    \draw[thick,->] (u8) -- (b5);
    \draw[thick,<-] (u9) -- (b6);

    \draw (-1,-2.2) -- (-1,2.2);    
\end{tikzpicture}
\hspace{0.1cm}
\begin{tikzpicture}[xscale=4,yscale=0.8,>=latex,shorten >=-0.4pt,shorten <=-0.4pt]
    \coordinate (center) at (-2,0) {};
    
    \coordinate (u9) at ($(center)+(110:2)$) {};
    \coordinate (lu9) at ($(center)+(110:2.1)$) {};
    \coordinate (u8) at ($(center)+(105:2)$) {};
    \coordinate (lu8) at ($(center)+(105:2.1)$) {};
    \coordinate (u7) at ($(center)+(100:2)$) {};
    \coordinate (lu7) at ($(center)+(100:2.1)$) {};
    \coordinate (u6) at ($(center)+(95:2)$) {};
    \coordinate (lu6) at ($(center)+(95:2.1)$) {};
    \coordinate (u5) at ($(center)+(90:2)$) {};
    \coordinate (lu5) at ($(center)+(90:2.1)$) {};
    \coordinate (u4) at ($(center)+(85:2)$) {};
    \coordinate (lu4) at ($(center)+(85:2.1)$) {};
    \coordinate (u3) at ($(center)+(80:2)$) {};
    \coordinate (lu3) at ($(center)+(80:2.1)$) {};
    \coordinate (u2) at ($(center)+(75:2)$) {};
    \coordinate (lu2) at ($(center)+(75:2.1)$) {};
    \coordinate (u1) at ($(center)+(70:2)$) {};
    \coordinate (lu1) at ($(center)+(70:2.1)$) {};
 
    \coordinate (b1) at ($(center)+(290:2)$) {};
    \coordinate (lb1) at ($(center)+(290:2.1)$) {};
    \coordinate (b2) at ($(center)+(285:2)$) {};
    \coordinate (lb2) at ($(center)+(285:2.1)$) {};
    \coordinate (b3) at ($(center)+(280:2)$) {};
    \coordinate (lb3) at ($(center)+(280:2.1)$) {};
    \coordinate (b4) at ($(center)+(275:2)$) {};
    \coordinate (lb4) at ($(center)+(275:2.1)$) {};
    \coordinate (b5) at ($(center)+(270:2)$) {};
    \coordinate (lb5) at ($(center)+(270:2.1)$) {};
    \coordinate (b6) at ($(center)+(265:2)$) {};
    \coordinate (lb6) at ($(center)+(265:2.1)$) {};
    \coordinate (b7) at ($(center)+(260:2)$) {};
    \coordinate (lb7) at ($(center)+(260:2.1)$) {};
    \coordinate (b8) at ($(center)+(255:2)$) {};
    \coordinate (lb8) at ($(center)+(255:2.1)$) {};
    \coordinate (b9) at ($(center)+(250:2)$) {};
    \coordinate (lb9) at ($(center)+(250:2.1)$) {};

   \coordinate (lS) at ($(center)+(118:2)$) {};
   \coordinate (lS') at ($(center)+(242:2)$) {};
    
    %\draw[very thick] ($(center) + (60:2)$) arc (60:120:2);    
    
    \draw ($(center) + (65:2)$) arc (65:115:2);    
    \draw ($(center) + (245:2)$) arc (245:295:2);    

    \draw[line width=1mm] ($(center) + (93:2)$) arc (93:97:2);    
    \draw[line width=1mm] ($(center) + (78:2)$) arc (78:82:2);    

    \draw[line width=1mm] ($(center) + (253:2)$) arc (253:257:2);    
    \draw[line width=1mm] ($(center) + (278:2)$) arc (278:282:2);    

    \draw[thick,->] ($(center) + (180:0.82)$)--($(center) + (0:0.82)$);    
    
    \tikzstyle{every node}=[inner sep=1pt]
    \node at (lS) {$K^0_i$};
    \node at (lS') {$K^1_i$};
    \begin{footnotesize}
%    \node[anchor=south] at (lS1) {$S1$};
    \node[anchor=south] at (lu1) {$v^0_1$};
    \node[anchor=south] at (lu2) {$v^1_2$};
    \node[anchor=south] at (lu3) {$N_1$};
    \node[anchor=south] at (lu4) {$v^0_3$};
    \node[anchor=south] at (lu5) {$v^0_4$};
    \node[anchor=south] at (lu6) {$N_2$};
    \node[anchor=south] at (lu7) {$v^1_5$};
    \node[anchor=south] at (lu8) {$v^1_6$};
    \node[anchor=south] at (lu9) {$v^0_7$};

    \node[anchor=north] at (lb1) {$v^1_3$};
    \node[anchor=north] at (lb2) {$v^1_1$};
    \node[anchor=north] at (lb3) {$N_4$};
    \node[anchor=north] at (lb4) {$v^0_2$};
    \node[anchor=north] at (lb5) {$v^0_6$};
    \node[anchor=north] at (lb6) {$v^1_7$};
    \node[anchor=north] at (lb7) {$v^1_4$};
    \node[anchor=north] at (lb8) {$N_3$};
    \node[anchor=north] at (lb9) {$v^0_5$};
    \end{footnotesize}
    
    \draw[thick,->] (u1) -- (b2);
    \draw[thick,<-] (u2) -- (b4);
    \draw[thick,->] (u4) -- (b1);
    \draw[thick,->] (u5) -- (b7);
    \draw[thick,<-] (u7) -- (b9);
    \draw[thick,<-] (u8) -- (b5);
    \draw[thick,->] (u9) -- (b6);

\end{tikzpicture}

\caption{\label{fig:slot_pattern_reflection} 
The restriction of a conformal model $\phi$ to the set $(K_i \cup N_{T}(K_i))$ and the restriction of $\phi^R$ to the set $(K_i \cup N_{T}(K_i))$.
Assuming $\pi(\phi|M) = \pi_0(M)$ and $\pi(\phi^R|M) = \pi_1(M)$, the pair $(v^1_1v_2^0N_1v_3^1v_4^1N_2v_5^0v_6^0v_7^1, v_5^1N_3v_4^0v_7^0v_6^1v_2^1N_4v_1^0v_3^0)$ is an admissible model for $\mathbb{K}_{i,0}$, the pair $(v_7^0v_6^1v_5^1N_2v_4^0v_3^0N_1v_2^1v_1^0, v_3^1v_1^1N_4v_2^0v_6^0v_7^1v_4^1N_3v_5^0)$ is an admissible model for $\mathbb{K}_{i,1}$.}
\end{figure}
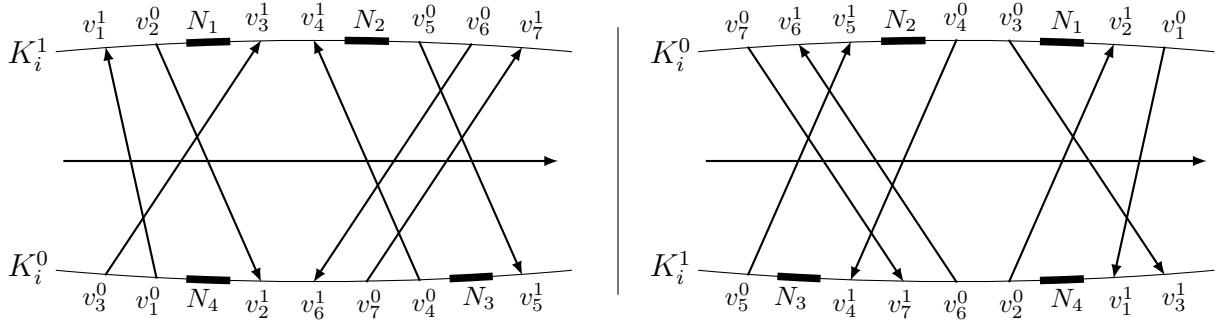

Now, we enrich the metaedge $(K^0_i,K^1_i,{<_{K_i}})$ in $\pi_m(M)$ 
by the patterns $p_m(K^{0}_i)$ and $p_m(K^1_i)$, obtaining two \emph{extended metaedges} of $K_i$,
$\mathbb{K}_{i,0}$ and $\mathbb{K}_{i,1}$, where
$$\mathbb{K}_{i,m} = (K^{0}_i, K^{1}_i,{<_{K_{i}}},p_m(K^{0}_i),p_m(K^1_i)) \text{ for every $m \in \{0,1\}$}.$$
Further, we extend the notion of an admissible model to the extended metaedges of $K_i$.
As before, the definition given below is formulated so as the restriction of $\phi|(M \cup N_T(M))$ to the slots 
$K^0_i$ and $K^1_i$ forms an admissible model for the extended $\mathbb{K}_{i,m}$ provided $\phi$ is conformal model such that $\pi_m(\phi|M) = \pi_{m}(M)$ -- see Figure~\ref{fig:slot_pattern_reflection}.

\begin{definition}
\label{def:extended_admissible_model_for_consisten_submodules}
Let $m \in \{0,1\}$, let $K_i$ be a consistent submodule of $M$, and let $\mathbb{K}_{i,m} = (K^0_i,K^1_i,{<_{K_i}},p_m(K^0_i),p_m(K^1_i))$ be the extended metaedge of $K_i$. 
A pair $(\tau^0,\tau^1)$, where $\tau^0$ and $\tau^1$ are words containing all the labeled letters from $K^0_i \cup K^1_i$ and all the nodes from $\inside(K_i)$, forms an \emph{extended admissible model for the extended metaedge $\mathbb{K}_{i,m}$} if 
\begin{itemize}
 \item $(\tau^0|K^0_i,\tau^1|K^1_i)$ is an admissible model for $(K^0_i,K^1_i,{<_{K_i}})$,
 \item the order of the nodes in $\tau^j$ equals to the order of the nodes in $p_m(K^j_i)$, 
 for every $j \in \{0,1\}$,
 \item the ordered partition of $K^j_i$ arisen from $\tau^j$ equals to the ordered partition of $K^j_i$ in $p_m(K^j_i)$, for every $j \in \{0,1\}$.
\end{itemize}
\end{definition}
\begin{claim}
Let $M$ be a prime module in $T_{NM}$.
Let $\phi$ be a conformal model of $G_{ov}$ such that $\pi(\phi|M) \equiv \pi_m(M)$ and let $\phi' \equiv \phi|(M \cup N_T[M])$.
Then:
\begin{itemize}
 \item $(\tau_{\phi'}(K^0_i), \tau_{\phi'}(K^1_i))$ is an extended admissible model for $\mathbb{K}_m$,
 \item $(\tau_{\phi'}(K^1_i), \tau_{\phi'}(K^0_i))$ is an extended admissible model for $\mathbb{K}_m$.
\end{itemize}
\end{claim}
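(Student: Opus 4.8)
The plan is to verify, one by one, the three defining conditions of an extended admissible model from Definition~\ref{def:extended_admissible_model_for_consisten_submodules} directly for the pair $(\tau_{\phi'}(K^0_i),\tau_{\phi'}(K^1_i))$, reading each condition off from facts already established for $\phi$ and for $\phi|M$, and then to deduce the second assertion from the first by reflection. Throughout I would keep in mind that $\phi|M$ is a conformal model of $(M,{\sim})$ and that, by hypothesis, $\pi(\phi|M)\equiv\pi_m(M)$.

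First I would pin down the slot parts of the two words. The circular word $\phi|M$ is obtained from $\phi'\equiv\phi|(M\cup N_T[M])$ by deleting all the node letters $N'$, $N'\in N_T[M]$; by Lemma~\ref{lemma:consistent_decompositions}.\eqref{item:consistent_decompositions_and_conformal_models} the word $\phi|K^j_i$ is one of the two contiguous subwords of $\phi|M$ in $\phi|K_i$; and $\tau_{\phi'}(K^j_i)$ was defined as the \emph{smallest} contiguous subword of $\phi'$ carrying every labeled letter of $K^j_i$ and no labeled letter from another slot. Combining these, the window $\tau_{\phi'}(K^j_i)$ is exactly $\phi|K^j_i$ with some node letters interspersed, so that deleting the node letters from $\tau_{\phi'}(K^j_i)$ returns exactly $\phi|K^j_i$; that is, $\tau_{\phi'}(K^j_i)|K^j_i\equiv\phi|K^j_i$ for $j\in\{0,1\}$. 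Since $\pi(\phi|M)\equiv\pi_m(M)$, Claim~\ref{claim:prime_module_invariants} together with Definition~\ref{def:admissible_models_for_consistent_submodules} gives that $(\phi|K^0_i,\phi|K^1_i)$ is an admissible model for the metaedge $\mathbb{K}_i=(K^0_i,K^1_i,{<_{K_i}})$; rewriting via the previous identity, $(\tau_{\phi'}(K^0_i)|K^0_i,\tau_{\phi'}(K^1_i)|K^1_i)$ is an admissible model for $(K^0_i,K^1_i,{<_{K_i}})$, which is precisely the first bullet of Definition~\ref{def:extended_admissible_model_for_consisten_submodules}.

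Next I would treat the nodes and the ordered partitions. By construction, the pattern $p_{\phi'}(K^j_i)$ records exactly the alternating sequence formed by the blocks of the ordered partition of $K^j_i$ induced by $\tau_{\phi'}(K^j_i)$ and by the nodes of $N_T[M]$ occurring inside $\tau_{\phi'}(K^j_i)$, listed in their order of appearance in $\tau_{\phi'}(K^j_i)$. Claim~\ref{claim:ivariants_for_prime_child_of_V} asserts $p_{\phi'}(K^j_i)=p_m(K^j_i)$ precisely when $\pi(\phi|M)\equiv\pi_m(M)$, which is our situation. Reading off the node coordinates of this equality yields that the order of the nodes in $\tau_{\phi'}(K^j_i)$ equals the order of the nodes in $p_m(K^j_i)$ (the second bullet), and reading off the set coordinates yields that the ordered partition of $K^j_i$ arising from $\tau_{\phi'}(K^j_i)$ equals the ordered partition of $K^j_i$ in $p_m(K^j_i)$ (the third bullet). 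The same equality shows that the nodes occurring in $\tau_{\phi'}(K^0_i)$ and in $\tau_{\phi'}(K^1_i)$ are exactly those of $\inside(\mathbb{K}_i)$, so the two words jointly carry all the labeled letters of $K^0_i\cup K^1_i$ and all the nodes of $\inside(\mathbb{K}_i)$, as Definition~\ref{def:extended_admissible_model_for_consisten_submodules} requires. This completes the verification that $(\tau_{\phi'}(K^0_i),\tau_{\phi'}(K^1_i))$ is an extended admissible model for $\mathbb{K}_{i,m}$.

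For the second assertion I would apply the case just proved to the reflection $\phi^R$ of $\phi$: $\phi^R$ is again a conformal model of $G_{ov}$, it satisfies $\pi(\phi^R|M)\equiv\pi_{1-m}(M)$, and reflection interchanges the two contiguous subwords of each consistent submodule, so that the two words obtained from $\phi^R$ play the roles of $\tau_{\phi'}(K^1_i)$ and $\tau_{\phi'}(K^0_i)$; combined with the observation recorded before Figure~\ref{fig:slot_pattern_reflection} that a node $N$ lies in $p_0(K^0_i)$ iff it lies in $p_1(K^1_i)$, this yields the statement with the two words interchanged. I expect the main point requiring care to be the identification $\tau_{\phi'}(K^j_i)|K^j_i\equiv\phi|K^j_i$ in the first step: one must use both the minimality built into the definition of $\tau_{\phi'}$ and the two-contiguous-subwords statement of Lemma~\ref{lemma:consistent_decompositions} in order to be sure that trimming out the node letters from the window $\tau_{\phi'}(K^j_i)$ neither shortens nor lengthens the block, and that the window contains no labeled letter of a foreign slot. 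The remaining steps are straightforward bookkeeping with invariants that are already in place.
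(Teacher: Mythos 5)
The paper states this claim without a proof, so there is no author argument to compare against; the worth of your proposal lies in supplying one. Your verification of the first item is correct and uses exactly the right ingredients: the identification $\tau_{\phi'}(K^j_i)|K^j_i\equiv\phi|K^j_i$ (which holds because $\phi'$ restricted to $M$ equals $\phi|M$ and, by Lemma~\ref{lemma:consistent_decompositions}, $\phi|K^j_i$ is a single contiguous block of $\phi|M$), then Claim~\ref{claim:prime_module_invariants} applied to the conformal model $\phi|M$ of $(M,{\sim})$ to get the first condition of Definition~\ref{def:extended_admissible_model_for_consisten_submodules}, and Claim~\ref{claim:ivariants_for_prime_child_of_V} to get the other two, since $p_{\phi'}(K^j_i)=p_m(K^j_i)$ records precisely the node order and the ordered partition.

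Your treatment of the second item, however, does not establish what that item says, and the item itself needs a correction before it can be proved. Under Definitions~\ref{def:admissible_models_for_consistent_submodules} and \ref{def:extended_admissible_model_for_consisten_submodules} the first component of a pair that is extended admissible for $\mathbb{K}_{i,m}=(K^0_i,K^1_i,{<_{K_i}},p_m(K^0_i),p_m(K^1_i))$ must, after dropping node letters, be a permutation of $K^0_i$; since $\tau_{\phi'}(K^1_i)$ carries no labeled letter of $K^0_i$, the swapped pair cannot literally be admissible for the \emph{same} extended metaedge. The intended reading is almost certainly admissibility for the \emph{dual} extended metaedge $(K^1_i,K^0_i,{<^1_{K_i}},p_m(K^1_i),p_m(K^0_i))$, made explicit in Section~\ref{sec:isomorphism_problem}, where the paper records that $(\tau^0,\tau^1)$ is admissible for $\mathbb{K}^0_i$ iff $(\tau^1,\tau^0)$ is admissible for the dual $\mathbb{K}^1_i$; under that reading the second item follows from the first by simply unwinding the definitions, no reflection needed. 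Your reflection argument is also internally off: applying the first item to $\phi^R$, which satisfies $\pi(\phi^R|M)=\pi_{1-m}(M)$, yields an extended admissible model for $\mathbb{K}_{i,1-m}$, not $\mathbb{K}_{i,m}$, and the two words it produces are the \emph{reflections} of $\tau_{\phi'}(K^1_i)$ and $\tau_{\phi'}(K^0_i)$, not those words themselves. So that last paragraph should be dropped in favour of a one-line appeal to the dual metaedge.
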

In the next definition we use the same notation for circular words on
$M^{*} \cup N_T[M]$ as we have introduced in the beginning of this subsection for circular words $\phi' \equiv M^* \cup N_T(M)$.
\begin{definition}
Let $M$ be a prime module in $T_{NM}$ and let $m \in \{0,1\}$.
A circular word $\phi'$ on the set of letters $M^{*} \cup N_T[M]$ is
\emph{an extended admissible model for the order of the slots $\pi_m(M)$} if 
\begin{itemize}
 \item $\phi'|M$ is an admissible model of $(M,{\sim})$ such that $\pi(\phi'|M) = \pi_m(M)$,
 \item for every slot $K_i$ in $\pi_m(M)$ the pair $(\tau_{\phi'}(K^0_i), \tau_{\phi'}(K^1_i))$ is an admissible
 model for~$\mathbb{K}_{i,m}$,
 \item for every two consecutive slots $(S,T)$ in $\pi_{m}(M)$,
 $\pi_{\phi'}(S,T) = p_m(S,T)$.
\end{itemize}
A circular word $\phi'$ on the set of letters $M^{*} \cup N_T[M]$ is
\emph{an extended admissible model for $(M,{\sim})$} if $\phi'$ is extended admissible model for $\pi_m(M)$ for some $m \in \{0,1\}$.
\end{definition}
Again, the previous definition was formulated so as the following claim holds.
\begin{claim}
\label{claim:conformal_models_are_admissible}
If $\phi$ is a conformal model of $G_{ov}$ such that $\pi(\phi|M) = \pi_m(M))$ for some prime module $M$ in $T_{NM}$, 
then $\phi|(M \cup N_T[M])$ is an extended admissible model for $\pi_m(M)$.
\end{claim}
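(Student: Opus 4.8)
The plan is to unpack the definition of an \emph{extended admissible model for $\pi_m(M)$} and verify its three bullet points one at a time; each of them reduces directly to a result already established in this subsection, so no new induction or combinatorial construction is required.

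First I would check that $\phi' \coloneq \phi|(M \cup N_T[M])$ is a well-defined circular word on $M^{*} \cup N_T[M]$. By Claim~\ref{claim:conformal_models_properties_modules_nodes_T_NM}.\eqref{item:conformal_models_properties_modules_T_NM}, for each node $N' \in N_T[M]$ the set $\phi|V_{T\setminus M}(N')$ is a single contiguous subword of $\phi$; these subwords are pairwise disjoint because the sets $V_{T\setminus M}(N')$ partition $V \setminus M$, and none of them contains a labeled letter of $M$, so contracting each one to the letter $N'$ is legitimate and produces $\phi'$ with $\phi'|M \equiv \phi|M$.

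For the first bullet, since $\phi$ is a conformal model of $G_{ov}$ its restriction $\phi|M$ is a conformal model of $(M,{\sim})$. We are in the case where $G_{ov}$ is disconnected, so $V$ is parallel in $\mathcal{M}(G_{ov})$, and $M$, being a prime module of $T_{NM}$, is a prime child of $V$, hence an improper prime module in $\mathcal{M}(G_{ov})$; Theorem~\ref{thm:description_of_all_conformal_models_of_improper_prime_modules} then says $\phi|M$ is admissible for $\pi_{m'}(M)$ for some $m' \in \{0,1\}$, and the hypothesis $\pi(\phi|M) = \pi_m(M)$ forces $m' = m$. Combined with $\phi'|M \equiv \phi|M$ this gives the first bullet. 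The second bullet --- that $(\tau_{\phi'}(K^0_i),\tau_{\phi'}(K^1_i))$ is an extended admissible model for $\mathbb{K}_{i,m}$ for every consistent submodule $K_i$ of $M$ --- is precisely the content of the Claim stated immediately before the definition of an extended admissible model for the order of the slots, applied to this $\phi$ and to $\phi' = \phi|(M \cup N_T[M])$. For the third bullet, $\pi_{\phi'}(S,T)$ is by construction the set of nodes of $N_T[M]$ lying between $\tau_{\phi'}(S)$ and $\tau_{\phi'}(T)$ in $\phi'$, i.e.\ it is $p_{\phi'}(S,T)$, and Claim~\ref{claim:ivariants_for_prime_child_of_V}(2) asserts that $p_{\phi'}(S,T) = p_m(S,T)$ for every conformal model $\phi$ with $\pi(\phi|M) = \pi_m(M)$.

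The only point that needs any attention --- and it is bookkeeping rather than a genuine obstacle --- is reconciling the notation: one has to confirm that the data $\tau_{\phi'}(K^j_i)$ and $p_{\phi'}(S,T)$ extracted from $\phi' = \phi|(M \cup N_T[M])$ really are the objects named in Claim~\ref{claim:ivariants_for_prime_child_of_V} and in the preceding Claim --- in particular that the sequence of nodes occurring inside $\tau_{\phi'}(K^j_i)$, together with the induced ordered partition of $K^j_i$, is exactly what the extended metaedge $\mathbb{K}_{i,m}$ records. Once this is made explicit, the claim follows by assembling the three verified bullets.
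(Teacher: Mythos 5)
Your proposal is correct, and it makes explicit exactly the reasoning the paper leaves implicit: the paper states Claim~\ref{claim:conformal_models_are_admissible} without proof, treating it as an immediate assembly of the first bullet (via Theorem~\ref{thm:description_of_all_conformal_models_of_improper_prime_modules} and the hypothesis $\pi(\phi|M)=\pi_m(M)$), the second bullet (the unnamed claim stated just before the definition of extended admissible models for $\pi_m(M)$), and the third bullet (Claim~\ref{claim:ivariants_for_prime_child_of_V}(2), reading the paper's $\pi_{\phi'}(S,T)$ as the typo for $p_{\phi'}(S,T)$). Your bookkeeping observations — well-definedness of $\phi' = \phi|(M\cup N_T[M])$ via Claim~\ref{claim:conformal_models_properties_modules_nodes_T_NM}, and that the data extracted from $\phi'$ are the objects appearing in the preceding claims — are exactly what the paper takes for granted, so no discrepancy in approach.
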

Note that the reflection of an extended admissible model for $\pi_0(M)$ is an 
extended admissible model for~$\pi_1(M)$.

\subsubsection{$M$ is serial}
Suppose that $M$ is a serial module in $T_{NM}$.
Suppose $M_1,\ldots,M_k$ are the children of $M$ in $\mathcal{M}(G_{ov})$.
Clearly, every $M_i$ is a proper prime or a proper parallel module in $\mathcal{M}(G_{ov})$.
Suppose for every $i \in [k]$ a representant of $M_i$ is fixed,
and hence the metaedge $\mathbb{M}_i = (M^0_i,M^1_i,{<_{M_i}})$ is defined.

It turns out that for serial modules in $T_{NM}$ 
we can provide a similar descriptions of the restrictions of conformal models $\phi$ to the set 
$M \cup N_T(M)$ as we have obtained for prime ones.
To have such a description,
we need to partition $M$ into \emph{consistent submodules} of $M$.
If $\inside(\mathbb{M}_i) \neq \emptyset$, 
$M_i$ is a consistent submodule of $M$.
In the set of the remaining vertices in $M$, that is, 
in the set $\bigcup \{M_i: i \in [k] \text{ and } \inside(\mathbb{M}_i) = \emptyset\}$ 
we introduce an equivalence relation $K$ defined such that
$$
\begin{array}{ccc}
u K  v\ & \iff 
\begin{array}{l}
\{\leftside(u) \cap (V \setminus M), \rightside(u) \cap (V \setminus M) \} = \\
\{\leftside(v) \cap (V \setminus M), \rightside(v) \cap (V \setminus M) \}.
\end{array}
\end{array}
$$
The equivalence classes of $K$ are the remaining consistent submodules of $M$.
Note that every consistent submodule of $M$ is the union of some children of $M$.
The set of all consistent submodules of $M$ forms a partition of $M$, called the \emph{consistent decomposition of the serial module $M$}.
Note that it might happen that $M$ has only one consistent submodule.
This take place when $\inside(\mathbb{M}_i) = \emptyset$ for every child $M_i$ of $M$
and when every conformal model $\phi$ of $G_{ov}$ is of the form
\begin{equation}
\label{eq:forms_of_phi_one_consistent_submodule_of_M}
\phi|(M \cup N_T[M]) \equiv \tau N \tau' N' \text{ or }\phi|(M \cup N_T[M]) \equiv \tau N \tau',
\end{equation}
where $(\tau,\tau')$ is the permutation model of $(M,{\sim})$, and $N,N'$ are the only nodes in $N_T[M]$ if the first case holds and $N$ is the only node in $N_T[M]$ if the latter case holds.

Suppose $K_1,\ldots,K_n$ is a consistent decomposition of $M$.
A \emph{skeleton of $M$} is a subset $\{s_1,\ldots,s_n\}$ of $M$ such that $s_i \in K_i$ for every $i \in [n]$.
The next lemma can be seen as an analogue of Lemma \ref{lemma:consistent_decompositions}.
\begin{lemma}
\label{lemma:consistent_decomposition_serial_module}
Suppose $M$ is a serial module in $\mathcal{M}(V)$, $K_1,\ldots,K_n$ is a consistent decomposition of $M$, and $S = \{s_1,\ldots,s_n\}$ is a skeleton of $M$.
Then:
\begin{enumerate}
 \item \label{item:consistent_decomposition_serial_module_two_models} There exist two conformal models $\phi^{0}_S$ and $\phi^1_S$ of $(S,{\sim})$, one being the reflection of the other, such that for every conformal model $\phi$ of $G_{ov}$ we have $\phi|S \equiv \phi^{m}_S$ for some $m \in \{0,1\}$.
 \item \label{item:consistent_decomposition_serial_module_consistency}
If $n \geq 2$, for every conformal model $\phi$ of $G_{ov}$ and every $i \in [n]$, the set $\phi|K_i$ consists of two contiguous subwords of the circular word $\phi|M$.
\end{enumerate}
\end{lemma}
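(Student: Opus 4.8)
The plan is to mimic the proof of Lemma~\ref{lemma:consistent_decompositions}, substituting for the prime transversal and Lemma~\ref{lemma:two_models_of_a_prime_graph} (which have no serial analogue) the external information carried by the tree $T_{NM}$. First I would record the shape of $(S,{\sim})$: since $M$ is serial and, as noted before the statement, every consistent submodule of $M$ is a union of children of $M$, the skeleton $S$ meets each child of $M$ in at most one vertex; hence any two elements of $S$ lie in distinct children of the serial module $M$, and are therefore adjacent in $G_{ov}$, so $(S,{\sim})$ is a clique. In particular every oriented chord model of $(S,{\sim})$ is conformal, and part~\eqref{item:consistent_decomposition_serial_module_two_models} is precisely the assertion that the circular word $\phi|S$ is the same, up to reflection, for all conformal models $\phi$ of $G_{ov}$. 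If $n=1$ this is trivial (we may take $\phi^{0}_S\equiv\phi^{1}_S$, which is the degenerate situation of~\eqref{eq:forms_of_phi_one_consistent_submodule_of_M}), so from now on I would assume $n\geq 2$.

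Next I would fix a conformal model $\phi$ of $G_{ov}$, which exists because $G$ is a circular-arc graph. By Theorem~\ref{thm:description_of_all_conformal_models_of_serial_modules}, $\phi|M\equiv\mu_{i_1}\ldots\mu_{i_k}\mu'_{i_1}\ldots\mu'_{i_k}$, where for each $j$ the pair $(\mu_{i_j},\mu'_{i_j})$ or $(\mu'_{i_j},\mu_{i_j})$ is an admissible model for $\mathbb{M}_{i_j}$; write $M_{p(i)}$ for the child of $M$ that contains $s_i$. Then $\phi|S$ is determined by the cyclic order in which $M_{p(1)},\ldots,M_{p(n)}$ occur in $\phi|M$, by which half of each metaedge $\mathbb{M}_{p(i)}$ comes first, and by whether $s_i^{0}$ or $s_i^{1}$ lies in that half. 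I would pin these down, up to one global reflection, as follows. By Claim~\ref{claim:placements_of_neighboring_nodes_in_a_module} each set $V_{T\setminus M}(N')$ — hence each vertex of $V\setminus M$ — lies entirely on one side of each vertex of $M$, so the external signature of a vertex $u\in M$ is the unordered pair consisting of the external vertices on the left of $u$ and those on its right; by the definition of the $K$-relation, vertices lying in different consistent submodules of $M$ have different external signatures. Consequently, for $i\neq j$ there is a vertex of $V\setminus M$ (and, when $\inside(\mathbb{M}_{p(i)})\neq\emptyset$, also a node of $N_T[M]$) that separates $M_{p(i)}$ from $M_{p(j)}$; by Claim~\ref{claim:conformal_models_properties_modules_nodes_T_NM} such a witness occurs as one contiguous block of $\phi$ lying between the chords of $M_{p(i)}$ and those of $M_{p(j)}$, which fixes the cyclic position of $\phi(s_i)$ relative to $\phi(s_j)$. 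Running over all pairs fixes the cyclic order of the $M_{p(i)}$'s up to reversal, and then conformality (on which side of $\phi(s_i)$ a separating witness lies) fixes the orientation of $\mathbb{M}_{p(i)}$ and the labeled letter of $s_i$ once the global orientation is chosen. Hence $\phi|S$ takes exactly the two values $\phi^{0}_S$ and its reflection $\phi^{1}_S$, proving~\eqref{item:consistent_decomposition_serial_module_two_models}.

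For part~\eqref{item:consistent_decomposition_serial_module_consistency} I would follow the second half of the proof of Lemma~\ref{lemma:consistent_decompositions}. Fix a conformal model $\phi$ of $G_{ov}$ and $i\in[n]$. If $K_i$ is a single child of $M$, then $\phi|K_i$ consists of two contiguous subwords of $\phi|C[M]=\phi|M$ directly by Lemma~\ref{lemma:circle_models_of_a_proper_prime_module} or Lemma~\ref{lemma:circle_models_of_a_parallel_module}, applied to that proper prime, resp.\ proper parallel, child inside the component $(M,{\sim})$. Otherwise $K_i$ is a union of several children of $M$, and it suffices to show that these children are consecutive in the cyclic order $i_1,\ldots,i_k$ of $\phi|M$. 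If not, some child $M_l$ with $M_l\nsubseteq K_i$ occurs cyclically between two children contained in $K_i$, which produces a labeled letter splitting $\phi|K_i$ into $\phi|K'_i$ and $\phi|K''_i$, each a union of children of $M$ by Lemma~\ref{lemma:circle_models_of_a_serial_module}; then, exactly as in Lemma~\ref{lemma:consistent_decompositions}, one exhibits vertices of $V\setminus M$ witnessing that some pair in $K'_i\times K''_i$ is not $K$-related, contradicting that $K_i$ is a $K$-class.

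The step I expect to be the main obstacle is part~\eqref{item:consistent_decomposition_serial_module_two_models}: turning the observation that the cyclic order and orientations of $\mathbb{M}_1,\ldots,\mathbb{M}_k$ are unconstrained in a conformal model of the serial graph $(M,{\sim})$ alone into the statement that they become rigid, up to reflection, once the ambient conformal model of $G_{ov}$ is fixed. The delicate points are treating children $M_i$ with $\inside(\mathbb{M}_i)\neq\emptyset$ uniformly with those with $\inside(\mathbb{M}_i)=\emptyset$, and verifying carefully — from the unordered form of the $K$-relation together with Claim~\ref{claim:placements_of_neighboring_nodes_in_a_module} — that two distinct consistent submodules are genuinely separated by some external vertex or node, so that Claim~\ref{claim:conformal_models_properties_modules_nodes_T_NM} actually has a witness to act on.
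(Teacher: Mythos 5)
The idea of fixing $\phi|S$ by producing, for every pair of distinct consistent submodules, an external witness that ``separates'' $M_{p(i)}$ from $M_{p(j)}$ is genuinely different from the paper's proof, which instead builds the two models $\phi^{0}_{S_i},\phi^{1}_{S_i}$ by induction on $i$, adding one skeleton vertex at a time and running a case analysis on the possible placements of the new chord (the two cases of Figure~\ref{fig:clique_cases}, with ``the remaining cases proved similarly''). Your route is attractive because it is more symmetric, but as written it has a real gap. First, for a vertex $w\in V\setminus M$ you have $w\parallel M$ and $M$ is a strong module of $G_{ov}$, so \emph{all} of $M$ -- including both $M_{p(i)}$ and $M_{p(j)}$ -- lies on a single side of $\phi(w)$; such a $w$ cannot separate two submodules of $M$ in the sense of the paper's separation notion. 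What the $K$-relation actually gives you is a statement about \emph{unordered} external signatures of individual vertices, and this is not the same as a consistent sidedness of $w$ with respect to all of $M_{p(i)}$ (two chords in the same child, oriented oppositely, have $w$ on opposite $\leftside/\rightside$ sides). Second, the claim ``vertices lying in different consistent submodules have different external signatures'' holds only for vertices in children $M_i$ with $\inside(\mathbb{M}_i)=\emptyset$; a child with $\inside(\mathbb{M}_i)\neq\emptyset$ is declared a consistent submodule by fiat, its vertices do not have constant signature, and its vertices can share a signature with a vertex of another consistent submodule, so the $K$-relation gives no witness at all. You acknowledge this ``$\inside\neq\emptyset$'' case parenthetically but don't work it out. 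These are precisely the cases the paper's inductive argument is designed to control: its enumeration puts the children with $\inside(\mathbb{K}_i)\neq\emptyset$ first, and each of the two displayed configurations for $\phi,\phi'$ first shows $\inside(\mathbb{K}_l)=\emptyset$ (resp.\ $\inside(\mathbb{K}_l)=\inside(\mathbb{K}_k)=\emptyset$) by a node-tracking contradiction, and only then reduces to a violation of the $K$-relation. Until you prove a correct, precise version of the separation statement -- in particular one that also handles a pair of children both having nonempty $\inside$ -- and then argue rigorously that a collection of pairwise ``witnesses'' forces a unique cyclic order and orientation up to one global reflection, part~\eqref{item:consistent_decomposition_serial_module_two_models} is not established.

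Your treatment of part~\eqref{item:consistent_decomposition_serial_module_consistency} is closer to the paper, but the citation is slightly off: in Lemma~\ref{lemma:consistent_decompositions} the final contradiction is produced with vertices of the transversal $U$ of a prime module, which has no analogue here. The paper's argument for Lemma~\ref{lemma:consistent_decomposition_serial_module}.\eqref{item:consistent_decomposition_serial_module_consistency} instead produces nodes $N',N''\in N_T[M]$ located in specific gaps and uses them together with $s_j$ to contradict $K$-equivalence of the relevant children. The overall shape of your argument for this part is right, but ``exactly as in Lemma~\ref{lemma:consistent_decompositions}'' overstates the parallelism and you would need to substitute nodes for the transversal vertices explicitly.
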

\begin{proof}
Suppose $K_1,\ldots,K_n$ are enumerated such that $i < j$ if $\inside(\mathbb{K}_i) \neq \emptyset$ and $\inside(\mathbb{K}_j) = \emptyset$.
Let $S_i = \{s_1,\ldots,s_i\}$.
Note that $(S_i,{\sim})$ is a clique for every $i \in [n]$.
We claim that for every $i \in [2,n]$ there exist two models of $(S_i,{\sim})$, say $\phi^{0}_{S_i}$ 
and its reflection $\phi^{1}_{S_i}$, such that for every conformal model $\phi$ of $G_{ov}$,
either 
$$\phi|S_i \equiv \phi^{0}_{S_i} \text{ or }\phi|S_i \equiv \phi^{1}_{S_i}.$$ 
Then \eqref{item:consistent_decomposition_serial_module_two_models} follows from the claim for $i=n$.
We prove the claim by induction on $i$.
Note that $(S_2,{\sim})$ has two conformal models,
$$\phi^{0}_{S_2} \equiv s^{0}_1s^{0}_2s^{1}_1s^{1}_2 \text { and } \phi^{1}_{S_2} \equiv s^{0}_1s^{1}_2s^{1}_1s^{0}_2,$$
and $\phi^{0}_{S_2}$ is the reflection of $\phi^{1}_{S_2}$.
So, the claim holds for $i=2$.
Suppose the claim holds for $i=j-1$ for some $j \in [3,n]$.
To show the claim for $i=j$ it suffices to show that there is a unique extension $\phi^{0}_{S_j}$ of 
$\phi^{0}_{S_{j-1}}$ on the set $S_j$, such that $\phi|S_j \equiv \phi^{0}_{S_j}$ holds for every conformal model 
$\phi$ of $G_{ov}$ such that $\phi|S_{j-1} \equiv \phi^{0}_{S_{j-1}}$.
Suppose for a contradiction that there are two conformal models of $G_{ov}$, $\phi$ and $\phi'$,
such that $\phi|S_{j-1} \equiv \phi'|S_{j-1} \equiv \phi^{0}_{S_j-1}$ and $\phi|S_j \not\equiv \phi'|S_j$.
That is, the chords $\phi(s_j)$ and $\phi'(s_j)$ extend $\phi^{0}_{S_{j-1}}$ into two non-equivalent models.
\begin{figure}[!htp]
\begin{tikzpicture}[scale=0.85,>=latex,shorten >=-0.4pt,shorten <=-0.4pt]
    \coordinate (center) at (0,0) {};
    \coordinate (sl0) at ($(center)+(180:2)$) {};
    \coordinate (sl1) at ($(center)+(0:2)$) {};
    \coordinate (lsl0) at ($(center)+(180:2.1)$) {};
    \coordinate (lsl1) at ($(center)+(0:2.1)$) {};

    \coordinate (sk0) at ($(center)+(270:2)$) {};
    \coordinate (sk1) at ($(center)+(90:2)$) {};
    \coordinate (lsk0) at ($(center)+(270:2.4)$) {};
    \coordinate (lsk1) at ($(center)+(90:2.4)$) {};

    \coordinate (x0) at ($(center)+(45:2)$) {};
    \coordinate (x1) at ($(center)+(225:2)$) {};
    \coordinate (lx0) at ($(center)+(45:1.9)$) {};
    \coordinate (lx1) at ($(center)+(225:1.9)$) {};
    
    \coordinate (y0) at ($(center)+(135:2)$) {};
    \coordinate (y1) at ($(center)+(315:2)$) {};
    \coordinate (ly0) at ($(center)+(135:1.9)$) {};
    \coordinate (ly1) at ($(center)+(315:1.9)$) {};
    \tikzstyle{every node}=[inner sep=1pt]
    \begin{tiny}
    \node[anchor=east] at (lsl0) {$\phi(s^0_l)=\phi'(s^0_l)$};
    \node[anchor=west] at (lsl1) {$\phi(s^1_l) =\phi'(s^1_l)$};
    \node at (lsk0) {$\phi(s^0_k) = \phi'(s^0_k)$};
    \node at (lsk1) {$\phi(s^1_k) = \phi'(s^0_k)$};

    \node[anchor=south west] at (lx0) {$\phi(s^1_j)$};
    \node[anchor=north east] at (lx1) {$\phi(s^0_j)$};
    \node[anchor=south east] at (ly0) {$\phi'(s^0_j)$};
    \node[anchor=north west] at (ly1) {$\phi'(s^1_j)$};
    \end{tiny}
    \draw[thick,->] (sl0) -- (sl1);
    \draw[thick,->] (sk0) -- (sk1);
    \draw[thick,->] (x1) -- (x0);
    \draw[thick,->] (y0) -- (y1);

  \draw (0,0) circle (2cm);
%  \draw[line width=1mm] ($(center) + (60:2)$) arc (60:75:2);    
%  \draw[line width=1mm] ($(center) + (105:2)$) arc (105:120:2);    

%  \draw[line width=1mm] ($(center) + (240:2)$) arc (240:255:2);    
%  \draw[line width=1mm] ($(center) + (280:2)$) arc (285:300:2);    
  
\end{tikzpicture}
\begin{tikzpicture}[scale=0.85,>=latex,shorten >=-0.4pt,shorten <=-0.4pt]
    \coordinate (center) at (0,0) {};
    \coordinate (sl0) at ($(center)+(180:2)$) {};
    \coordinate (sl1) at ($(center)+(0:2)$) {};
    \coordinate (lsl0) at ($(center)+(180:2.1)$) {};
    \coordinate (lsl1) at ($(center)+(0:2.1)$) {};

    \coordinate (sk0) at ($(center)+(270:2)$) {};
    \coordinate (sk1) at ($(center)+(90:2)$) {};
    \coordinate (lsk0) at ($(center)+(270:2.4)$) {};
    \coordinate (lsk1) at ($(center)+(90:2.4)$) {};

    \coordinate (x0) at ($(center)+(30:2)$) {};
    \coordinate (x1) at ($(center)+(210:2)$) {};
    \coordinate (lx0) at ($(center)+(30:2.1)$) {};
    \coordinate (lx1) at ($(center)+(214:2.1)$) {};
    
    \coordinate (y0) at ($(center)+(60:2)$) {};
    \coordinate (y1) at ($(center)+(240:2)$) {};
    \coordinate (ly0) at ($(center)+(66:2.2)$) {};
    \coordinate (ly1) at ($(center)+(245:2.3)$) {};
    \tikzstyle{every node}=[inner sep=1pt]
    \begin{tiny}
    \node[anchor=east] at (lsl0) {$\phi(s^0_l)=\phi'(s^0_l)$};
    \node[anchor=west] at (lsl1) {$\phi(s^1_l) =\phi'(s^1_l)$};
    \node at (lsk0) {$\phi(s^0_k) = \phi'(s^0_k)$};
    \node at (lsk1) {$\phi(s^1_k) = \phi'(s^0_k)$};
    \node[anchor=west] at (lx0) {$\phi(s^0_j)$};
    \node[anchor=east] at (lx1) {$\phi(s^1_j)$};
    \node[anchor=west] at (ly0) {$\phi'(s^1_j)$};
    \node[anchor=east] at (ly1) {$\phi'(s^0_j)$};
    \end{tiny}
    \draw[thick,->] (sl0) -- (sl1);
    \draw[thick,->] (sk0) -- (sk1);
    \draw[thick,->] (x0) -- (x1);
    \draw[thick,<-] (y0) -- (y1);

  \draw (0,0) circle (2cm);
  %\draw[line width=1mm] ($(center) + (35:2)$) arc (40:50:2);    

  %\draw[line width=1mm] ($(center) + (220:2)$) arc (220:230:2);    
  
\end{tikzpicture}
\caption{\label{fig:clique_cases}}
\end{figure}
Hence, there are two different vertices $s_l,s_k \in S_{j-1}$ such that
$$
\phi'|\{s_l,s_k\} \equiv \phi|\{s_l,s_k\} \equiv s_k^0s_l^{0}s_k^{1}s_l^{1},
$$
but the chords $\phi(s_j)$ and $\phi'(s_j)$ 
have its endpoints in different sections $s^0_ks^0_l$, $s^0_ls^1_k$, $s^1_ks^1_l$, $s^1_ls^0_l$ of the circular word
$s_k^0s_l^{0}s_k^{1}s_l^{1}$ -- see Figure \ref{fig:clique_cases} for an illustration.

First, suppose that
$$\phi|\{s_l,s_k,s_j\} \equiv s_k^0 s_j^0 s^0_l s_k^1 s_j^1 s_l^1 \text{ and }
\phi'|\{s_l,s_k,s_j\} \equiv s_k^0 s_l^0 s_j^0 s_k^1 s_l^1 s_j^1,$$
see Figure \ref{fig:clique_cases} on the left.
Let $\phi_M = \phi|(M \cup N_T(M))$
and $\phi'_M = \phi'|(M \cup N_T(M))$.
Using similar arguments as above we may show that $\inside(\mathbb{K}_l) = \emptyset$ and $\inside(\mathbb{K}_j) = \emptyset$.
Suppose that $\inside(K_l) \neq \emptyset$.
Let $N \in N_T[M]$ be a node such that $N \in \inside(\mathbb{K}_l)$.
Consider the position of $\phi_M(N)$ and $\phi'_M(N)$ relatively 
to the chords $\phi_M(s_j)$, $\phi_M(s_k)$ in $\phi_M$ and $\phi'_M(s_k)$, $\phi'_M(s_j)$ in $\phi'_M$.
Note that in $\phi_M$ the point $\phi_M(N)$ is either on the left side of both $\phi_M(s_k)$ and $\phi_M(s_j)$ or on the right side of both $\phi_M(s_k)$ and $\phi_M(s_j)$.
However, in $\phi'_M$ the point $\phi'_M(N)$ is either on the right side of $\phi'_M(s_k)$ and the left side of $\phi'_M(s_j)$ or on the left side of $\phi'_M(s_k)$ and the right side of $\phi'_M(s_j)$.
However, this is not possible as $\phi$ and $\phi'$ are conformal models of $G_{ov}$.
So, we must have $\inside(K_l) = \emptyset$.
For the same reason we also have $\inside(\mathbb{K}_j) = \emptyset$.
To complete the proof in this case, we show that for every $N \in N_T[M]$,
$V_{T \setminus M}(N) \subseteq \leftside(s_l)$ iff $V_{T \setminus M}(N) \subseteq \leftside(s_j)$.
If this is the case, then $s_j K s_l$, which contradicts that $s_j$ and $s_l$ are from two different equivalence classes of $K$-relation.
Suppose that $V_{T \setminus M}(N) \subseteq \leftside(s_l)$ and $V_{T \setminus M}(N) \subseteq \rightside(s_j)$.
Then $\phi_M(N)$ is between $\phi_M(s^1_j)$ and $\phi_M(s^1_l)$ in $\phi_M$
and between $\phi'_M(s^0_l)$  and $\phi'_M(s^0_j)$ in $\phi'$.
Thus, $\phi_M(N)$ is on the right side of $\phi_M(s_k)$ and on the left side of $\phi'_M(s_k)$.
This can not be the case.
The second implication is proved analogously.

Next, suppose 
$$\phi|\{s_l,s_k,s_j\} \equiv s_k^0 s_j^1 s^0_l s_k^1 s_j^0 s_l^1 \text{ and }
\phi'|\{s_l,s_k,s_j\} \equiv s_k^0 s_j^0 s_l^0 s_k^1 s_j^1 s_l^1,$$
see Figure \ref{fig:clique_cases} on the right.
Let $\phi_M = \phi|(M \cup N_T(M))$
and let $\phi'_M = \phi'|(M \cup N_T(M))$.
First, note that $\inside(\mathbb{K}_l) = \inside(\mathbb{K}_k) = \emptyset$
as otherwise a node $N$ in $\inside(\mathbb{K}_l) \cup \inside(\mathbb{K}_k)$
would be such that either $\phi_M(N)$ is on the left side of 
$\phi_M(s_j)$ and $\phi'_M(N)$ is on the right side of $\phi'_M(s_j)$ or $\phi_M(N)$ is on the right side of 
$\phi_M(s_j)$ and $\phi'_M(N)$ is on the left side of $\phi'_M(s_j)$. 
This is not possible.
Now, we claim that $\leftside(\mathbb{K}_l) \cap \leftside(\mathbb{K}_s) = \emptyset$
and $\rightside(\mathbb{K}_l) \cap \rightside(\mathbb{K}_s) = \emptyset$.
This will yield $s_k K s_l$ and will lead to a contradiction.
If $N \in \leftside(\mathbb{K}_l) \cap \leftside(\mathbb{K}_s)$, then
$\phi_M(N)$ is on the right side of $\phi_M(s_j)$ and $\phi'_M(N)$ is on the left side of $\phi'_M(s_j)$, 
which is not possible.
The second case is proved analogically.

All the remaining cases corresponding to different placements of $\phi(s_j)$ and $\phi'(s_j)$ are proven in a similar way.

Let $\phi$ be any conformal model of $G_{ov}$ and let $\phi_M \equiv \phi| (M \cup N_{T}(M))$.
Statement \eqref{item:consistent_decomposition_serial_module_consistency} obviously holds if 
$K_i$ is a child of $M$ in $\mathcal{M}(G_{ov})$.
So, suppose $K_i$ is the union of at least two children of $M$ in $\mathcal{M}(G_{ov})$.
Since $n \geq 2$, there is $s_j \in S$ such that $s_j \sim K_i$.
Denote by $l^{0}$ and $l^{3}$ the first and the last labeled letter from $K_i$, respectively,
if we traverse $\phi_M$ from $s^{0}_j$ to $s^{1}_j$.
Denote by $r^{0}$ and $r^{3}$ the first and the last labeled letter from $K_i$, respectively,
if we traverse $\phi_M$ from $s^{1}_j$ to $s^{0}_j$.
To show statement \eqref{item:consistent_decomposition_serial_module_consistency} suppose for a contrary that there is $v \in M \setminus K_i$ such that $\phi(v)$ has an end, say $\phi(v')$, between $\phi(l^{0})$ and $\phi(l^3)$.
Suppose $v''$ is such that $\{v',v''\} = \{v^{0},v^{1}\}$.
Note that for every child $M'$ of $M$ such that $M' \subseteq K_i$, $\inside(M') = \emptyset$.
Thus, there are children $M_1,M_2$ of $M$ that satisfy the following properties.
If we traverse from $\phi_M(l^0)$ to $\phi_M(l^3)$ then we encounter an end of every chord from $\phi_M(M_1)$ first, then $\phi_M(v')$, and then an end of every chord from $\phi_M(M_2)$.
Since $M$ is serial,
if we traverse from $\phi_M(r^0)$ to $\phi_M(r^3)$ then we encounter an end of every chord from $\phi_M(M_2)$ first, then $\phi(v'')$, and then an end of every chord from $\phi_M(M_1)$.
Denote by $l^{1}$ and $l^{2}$ the last labeled letter from $M_1$ and the first labeled letter from $M_2$
if we traverse $\phi_M$ from $\phi_M(l^{0})$ to $\phi_M(l_2)$.
Similarly, denote by $r^{1}$ and $r^{2}$ the last labeled letter from $M_2$ and the first labeled letter from $M_1$ if we traverse $\phi_M$ from $\phi_M(r^{0})$ to $\phi_M(r^3)$.
First, note that there is a node $N' \in N_T[M]$ such that $\phi_M(N')$
is either between $\phi_M(l^1)$ and $\phi_M(l^{2})$ or between $\phi_M(r^1)$ and $\phi_M(r^2)$.
Otherwise, we have that $v K M_1$ and $v K M_2$, which contradicts $v \notin K_i$.
On the other hand, it is not possible that there is a node $N_T[M]$ between $\phi(l^1)$ and $\phi(l^2)$
and there is a node from $N_T[M]$ between $\phi(r^1)$ and $\phi(r^2)$, 
as otherwise $M_1$ is not in $K$-relation with $M_2$.
So, suppose $N' \in N_T[M]$ is between $\phi(l^1)$ and $\phi(l^2)$ and suppose there is no node between
$\phi(r^1)$ and $\phi(r^2)$.
Note that there is a node $N''$ between $\phi_M(l^{3})$ and $\phi_M(r^{0})$
or between $\phi_M(r^{3})$ and $\phi_M(l^{0})$ as otherwise 
$s_j$ would be in $K$-relation with $M_1$ and $M_2$.
In any case, $N'$ and $N''$ prove that $M_1$ and $M_2$ are not in $K$-relation, a contradiction.
\end{proof}
Because of the analogy between Lemmas~\ref{lemma:consistent_decompositions}
and Lemma \ref{lemma:consistent_decomposition_serial_module}, 
for a serial child $M$ of $V$ an analogue of Claim~\ref{claim:prime_module_invariants} is satisfied, where
for every conformal model $\phi$ the triple $(K^{0}_{i, \phi}, K^{1}_{i, \phi},{<^{0}_{K_i, \phi}})$ and
the circular order $\pi(\phi|M)$ (which may have only two elements) of the slots of $M$ are defined as for prime children of $V$.
\begin{claim}
\label{claim:serial_module_invariants}
Suppose $M$ is a serial module in $T_{NM}$, $K_1,\ldots,K_n$
is a consistent decomposition of $M$, and $S$ is a skeleton of $M$.
For every $i \in [n]$ there are labeled copies $K^{0}_i$ and $K^{1}_i$ of $K_i$ forming a partition of $K^{*}_i$ and a transitive orientation ${<_{K_i}}$ of $(K_i,{\parallel})$ such that
$$(K^{0}_{i, \phi}, K^{1}_{i, \phi},{<^{0}_{K_i, \phi}}) = (K^{0}_{i}, K^{1}_i,{<_{K_i}}) 
\quad
\begin{array}{c}
\text{for every conformal model $\phi$ of $G_{ov}$} \\
\text{and every $i \in [n]$.}
\end{array}
$$
Moreover, there are circular permutations $\pi_{0}(M),\pi_{1}(M)$ of $\{K^{0}_1,K^{1}_0, \ldots,K^{0}_n,K^{1}_n\}$ such that 
$$\pi(\phi|M) = 
\left\{
\begin{array}{cll}
\pi_{0}(M) & \text{if} & \phi|S = \phi^{0}_S\\
\pi_{1}(M) & \text{if} & \phi|S = \phi^{1}_S\\
\end{array}
\right.
\quad \text{for every conformal model $\phi$ of $G_{ov}$.}
$$
Moreover, $\pi_{0}(M)$ is the reflection of $\pi_{1}(M)$.
\end{claim}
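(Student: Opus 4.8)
The plan is to reproduce, almost verbatim, the argument used for Claim~\ref{claim:prime_module_invariants} (and through it for Claim~\ref{claim:invariants_for_proper_parallel_and_proper_prime_modules}), with Lemma~\ref{lemma:consistent_decomposition_serial_module} playing the role that Lemmas~\ref{lemma:two_models_of_a_prime_graph} and~\ref{lemma:consistent_decompositions} play in the prime case. Concretely, there are two essentially independent things to establish: (i) the labeled copies $K^{0}_{i,\phi},K^{1}_{i,\phi}$ and the transitive orientation ${<^{0}_{K_i,\phi}}$ of $(K_i,{\parallel})$ do not depend on the conformal model $\phi$; (ii) the circular order $\pi(\phi|M)$ of the slots of $M$ takes only the two values $\pi_{0}(M),\pi_{1}(M)$, according to whether $\phi|S\equiv\phi^{0}_S$ or $\phi|S\equiv\phi^{1}_S$, and these two values are reflections of one another.

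For (i) I would fix once and for all a vertex $z\in V\setminus M$; such a vertex exists because $V$ is a parallel module, hence has at least two children, and moreover $z\parallel M$ — so $z\parallel K_i$ — again because $V$ is parallel. For each $v\in K_i$ one then decides whether $v^{0}\in K^{0}_i$ or $v^{0}\in K^{1}_i$ according to whether $s_i$ and $v$ have $z$ on the same side (in $\leftside$/$\rightside$ terms), exactly as in the proof of Claim~\ref{claim:prime_module_invariants} (there the same role is played by a vertex $s_j$ with $s_j\parallel K_i$). Since this recipe mentions only $M_G$, the resulting labeled copies $K^{0}_i,K^{1}_i$ do not depend on $\phi$. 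The single new point, absent in the prime case, is that $(K_i,{\parallel})$ need not be connected: $K_i$ is a union of children $M_a$ of the serial module $M$, and distinct children are joined by~${\sim}$, so $(K_i,{\parallel})$ is the disjoint union of the graphs $(M_a,{\parallel})$. I would handle this by applying Claim~\ref{claim:invariants_for_proper_parallel_and_proper_prime_modules} to each child $M_a\subseteq K_i$ — every such $M_a$ is a proper prime or a proper parallel module, so it has its own $\phi$-independent triple $(M^{0}_a,M^{1}_a,{<_{M_a}})$ — and then aligning the pieces: ${<_{K_i}}$ restricted to $M_a$ is ${<_{M_a}}$ if $M^{0}_a\subseteq K^{0}_i$ and its reverse otherwise, the alternative being already fixed by the labeled copies; transitivity of ${<_{K_i}}$ is automatic because $(K_i,{\parallel})$ carries no edge between distinct children. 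One then checks, as in the cited claims, that for every conformal model $\phi$ the orientation ${<^{0}_{K_i,\phi}}$ read off from the contiguous subword of $\phi|K_i$ containing $s^{0}_i$ coincides with this ${<_{K_i}}$.

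For (ii) I would invoke Lemma~\ref{lemma:consistent_decomposition_serial_module} directly. Part~\eqref{item:consistent_decomposition_serial_module_two_models} of that lemma gives that $(S,{\sim})$ has exactly the two conformal models $\phi^{0}_S$ and its reflection $\phi^{1}_S$, and that every conformal model $\phi$ of $G_{ov}$ restricts on $S$ to one of them; part~\eqref{item:consistent_decomposition_serial_module_consistency} gives, when $n\geq 2$, that $\phi|K_i$ forms two contiguous subwords of $\phi|M$, so — each block $K^{j}_{i,\phi}$ being the unique block that contains $s^{j}_i$ — the cyclic order of the $2n$ blocks in $\phi$ is obtained from $\phi|S$ by replacing $s^{j}_i$ with $K^{j}_i$. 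This defines $\pi_{m}(M)$, depending only on $m$, and $\pi_{0}(M)$ is the reflection of $\pi_{1}(M)$ because $\phi^{0}_S$ is the reflection of $\phi^{1}_S$. The case $n=1$, in which the contiguity statement of the lemma is vacuous, is treated separately from~\eqref{eq:forms_of_phi_one_consistent_submodule_of_M}: there $\phi|M\equiv\tau\tau'$ with $(\tau,\tau')$ the permutation model of $(M,{\sim})$, and one takes $K^{0}_1,K^{1}_1$ to be the labeled letters of $\tau,\tau'$ and $\pi_{0}(M)=(K^{0}_1,K^{1}_1)$, $\pi_{1}(M)=(K^{1}_1,K^{0}_1)$.

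The hard part will be the verification hidden in step (i): that the per-vertex criterion really names the block of $\phi|K_i$ containing $s^{0}_i$ uniformly over all conformal models, and that the per-child runs of the algorithm of Claim~\ref{claim:invariants_for_proper_parallel_and_proper_prime_modules}, glued as above, reproduce exactly the orientation induced by that block. This is precisely where the serial case genuinely differs from the prime case of Claim~\ref{claim:prime_module_invariants}, since there $(K_i,{\parallel})$ is always connected; the degenerate case $n=1$ is a minor but genuinely separate point that should not be overlooked.
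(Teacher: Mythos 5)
Your proof is correct, and it is more explicit than what the paper itself provides — the paper merely asserts the claim ``because of the analogy'' with Lemma~\ref{lemma:consistent_decompositions} and Claim~\ref{claim:prime_module_invariants} and never writes out the argument. You correctly locate the one place where the analogy is not literal: in the proof of Claim~\ref{claim:prime_module_invariants}, block membership of $v^0$ is decided by comparing the side of a skeleton vertex $s_j$ with $s_j\parallel K_i$, but when $M$ is serial the skeleton $S$ is a clique in $(V,{\sim})$ (distinct consistent submodules lie in distinct children of the serial $M$, hence pairwise $\sim$), so no such $s_j$ exists inside $M$. Your replacement — any $z\in V\setminus M$, which satisfies $z\parallel M$ because $V$ is an improper parallel module — is exactly the right fix: since $z\parallel K_i$, the chord $\phi(z)$ lies wholly in one of the two gaps flanking the two blocks of $\phi|K_i$, and the side of $\phi(z)$ relative to each $\phi(v)$, $v\in K_i$, is a $\phi$-independent fact; hence the comparison ``do $s_i$ and $v$ have $z$ on the same side'' determines $K^0_{i,\phi}$ uniformly over all conformal models. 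Your handling of the possible disconnectivity of $(K_i,{\parallel})$ via the per-child triples from Claim~\ref{claim:invariants_for_proper_parallel_and_proper_prime_modules} is also sound; the one fact worth stating explicitly is the nesting property that each block of $\phi|M_a$ sits inside a single block of $\phi|K_i$ (so that $M^0_a$ lands wholly in $K^0_i$ or wholly in $K^1_i$, and $<_{K_i}$ restricted to $M_a$ is $<_{M_a}$ or its reverse accordingly), which follows because both pairs of blocks are contiguous in $\phi|M$ and are labeled copies. For (ii) the reduction to Lemma~\ref{lemma:consistent_decomposition_serial_module} is the intended one. Two small polishing remarks: in the degenerate case $n=1$ you should define $K^0_1$ by the same $z$-criterion rather than ``the labeled letters of $\tau$'', which as written presupposes a choice of presentation of $\phi|M$; and since $n=1$ makes $\phi^0_S=\phi^1_S$ and $\pi_0(M)=\pi_1(M)$ as circular words, the second part of the claim is there trivially consistent, which is worth noting.
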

Given the above claim, we define analogously the pattern $p_m(S)$ for every slot $S$ in $\pi_m(M)$ and the sets $p_m(S,T)$ for every two consecutive slots $(S,T)$ in $\pi_m(M)$.
With similar ideas we get an analogue of Claim \ref{claim:ivariants_for_prime_child_of_V}.
We define the extended metaedge $\mathbb{K}_{i,m}$ for every consistent submodule $K_i$ of $M$ and every $m \in \{0,1\}$, and the notions of extended admissible models for $\mathbb{K}_{i,m}$ and extended admissible models for $\pi_0(M)$, $\pi_1(M)$ and $(M,{\sim})$.

Eventually, we are ready to present operations that transform conformal models of $G_{ov}$ into 
other conformal models of $G_{ov}$.
Suppose $\phi$ is a conformal model of $G_{ov}$.
Recall that:
\begin{itemize}
\item for every node $N$ in $T_{NM}$ the circular word $\phi|N_T(N)$ is a circular permutation of the modules neighboring $N$ in $T_{NM}$,
\item for every module $M$ in $T_{NM}$ the circular word $\phi|(V \cup N_T(M))$ is an extended admissible model for $\pi_m(M)$ for some $m \in \{0,1\}$. 
Call $m$  the \emph{type} of $M$ in~$\phi$.
\end{itemize}
The first operation, which can be performed on every node of $T_{NM}$, 
allows us to change the circular permutation $\phi|N_T(N)$ around a node $N$.
Suppose that $\phi|N_T(N) \equiv M_1\ldots M_k$.
By Claim~\ref{claim:conformal_models_properties_modules_nodes_T_NM}, $\phi|V_{T \setminus N}(M_i)$ is a contiguous subword in $\phi$.
Denote this subword by $\tau_{M_i}$ and note that $\phi \equiv \tau_{M_1}\ldots \tau_{M_k}$.
Let $M_{i_1}\ldots M_{i_k}$ be a circular permutation of $N_T[N]$.
One can check that $\phi' \equiv \tau_{M_{i_1}}\ldots \tau_{M_{i_k}}$ is also a conformal model of $G_{ov}$,
which leaves the circular permutations around the remaining nodes in $T_{NM}$ and the types of all the modules in $T_{NM}$ unchanged -- see Figure \ref{fig:node_permuting} for an illustration.

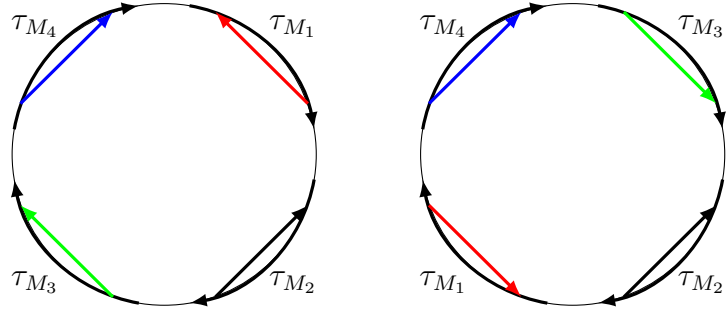
\begin{figure}[!htp]
\begin{tikzpicture}[scale=1,>=latex,shorten >=-0.4pt,shorten <=-0.4pt]
    \coordinate (center) at (0,0) {};
    \coordinate (ur1) at ($(center)+(20:2)$) {};
    \coordinate (ur2) at ($(center)+(35:2)$) {};
    \coordinate (ur3) at ($(center)+(55:2)$) {};
    \coordinate (ur4) at ($(center)+(70:2)$) {};

    \coordinate (ul1) at ($(center)+(110:2)$) {};
    \coordinate (ul2) at ($(center)+(125:2)$) {};
    \coordinate (ul3) at ($(center)+(145:2)$) {};
    \coordinate (ul4) at ($(center)+(160:2)$) {};

    \coordinate (bl1) at ($(center)+(200:2)$) {};
    \coordinate (bl2) at ($(center)+(215:2)$) {};
    \coordinate (bl3) at ($(center)+(235:2)$) {};
    \coordinate (bl4) at ($(center)+(250:2)$) {};

    \coordinate (br1) at ($(center)+(290:2)$) {};
    \coordinate (br2) at ($(center)+(305:2)$) {};
    \coordinate (br3) at ($(center)+(325:2)$) {};
    \coordinate (br4) at ($(center)+(340:2)$) {};
    
    \coordinate (lur) at ($(center)+(45:2.4)$) {};
    \coordinate (lul) at ($(center)+(135:2.4)$) {};
    \coordinate (lbl) at ($(center)+(225:2.4)$) {};
    \coordinate (lbr) at ($(center)+(315:2.4)$) {};

    \tikzstyle{every node}=[inner sep=1pt]
    \node at (lur) {$\tau_{M_1}$};
    \node at (lbr) {$\tau_{M_2}$};
    \node at (lbl) {$\tau_{M_3}$};
    \node at (lul) {$\tau_{M_4}$};
  
    \draw (0,0) circle (2cm);
    \draw[very thick,->] ($(center) + (80:2)$) arc (80:10:2);    
    \draw[very thick,->] ($(center) + (170:2)$) arc (170:100:2);    
    \draw[very thick,->] ($(center) + (260:2)$) arc (260:190:2);    
    \draw[very thick,->] ($(center) + (350:2)$) arc (350:280:2);    

    \draw[->,very thick,red] (ur1)--(ur4);    
    \draw[->,very thick,blue] (ul4)--(ul1);    
    \draw[->,very thick,green] (bl4)--(bl1);    
    \draw[->,very thick] (br1)--(br4);    
  %\draw[line width=1mm] ($(center) + (220:2)$) arc (220:230:2);    
  
\end{tikzpicture}
\hspace{1cm}
\begin{tikzpicture}[scale=1,>=latex,shorten >=-0.4pt,shorten <=-0.4pt]
    \coordinate (center) at (0,0) {};
    \coordinate (ur1) at ($(center)+(20:2)$) {};
    \coordinate (ur2) at ($(center)+(35:2)$) {};
    \coordinate (ur3) at ($(center)+(55:2)$) {};
    \coordinate (ur4) at ($(center)+(70:2)$) {};

    \coordinate (ul1) at ($(center)+(110:2)$) {};
    \coordinate (ul2) at ($(center)+(125:2)$) {};
    \coordinate (ul3) at ($(center)+(145:2)$) {};
    \coordinate (ul4) at ($(center)+(160:2)$) {};

    \coordinate (bl1) at ($(center)+(200:2)$) {};
    \coordinate (bl2) at ($(center)+(215:2)$) {};
    \coordinate (bl3) at ($(center)+(235:2)$) {};
    \coordinate (bl4) at ($(center)+(250:2)$) {};

    \coordinate (br1) at ($(center)+(290:2)$) {};
    \coordinate (br2) at ($(center)+(305:2)$) {};
    \coordinate (br3) at ($(center)+(325:2)$) {};
    \coordinate (br4) at ($(center)+(340:2)$) {};
    
    \coordinate (lur) at ($(center)+(45:2.4)$) {};
    \coordinate (lul) at ($(center)+(135:2.4)$) {};
    \coordinate (lbl) at ($(center)+(225:2.4)$) {};
    \coordinate (lbr) at ($(center)+(315:2.4)$) {};

    \tikzstyle{every node}=[inner sep=1pt]
    \node at (lur) {$\tau_{M_3}$};
    \node at (lbr) {$\tau_{M_2}$};
    \node at (lbl) {$\tau_{M_1}$};
    \node at (lul) {$\tau_{M_4}$};
  
    \draw (0,0) circle (2cm);
    \draw[very thick,->] ($(center) + (80:2)$) arc (80:10:2);    
    \draw[very thick,->] ($(center) + (170:2)$) arc (170:100:2);    
    \draw[very thick,->] ($(center) + (260:2)$) arc (260:190:2);    
    \draw[very thick,->] ($(center) + (350:2)$) arc (350:280:2);    

    \draw[<-,very thick,green] (ur1)--(ur4);    
    \draw[->,blue,very thick] (ul4)--(ul1);    
    \draw[->,red,very thick] (bl1)--(bl4);    
    \draw[->,very thick] (br1)--(br4);    
  %\draw[line width=1mm] ($(center) + (220:2)$) arc (220:230:2);    
  
\end{tikzpicture}
\caption{\label{fig:node_permuting} 
Permuting the modules $M_1,M_2,M_3,M_4$ around the node $N$:
$\phi\equiv \tau_{M_1}\tau_{M_2}\tau_{M_3}\tau_{M_4}$ and $\phi \equiv \tau_{M_3}\tau_{M_2}\tau_{M_1}\tau_{M_4}$ are conformal models of $G_{ov}$. Note that after permuting $\tau_1,\tau_2,\tau_3,\tau_4$ the mutual relations between the oriented chords in $\phi$ and $\phi'$ is preserved.} 
\end{figure}

The second operation, which can be performed on every module in $T_{NM}$, 
allows us to change the type of the module $M$.
Suppose $\pi(\phi|M) = \pi_m(M)$ for some $m \in \{0,1\}$.
Clearly, $\phi_M \equiv \phi|(M \cup N_T[M])$ is an extended admissible model for $\pi_n(M)$.
By Claim \ref{claim:conformal_models_properties_modules_nodes_T_NM},
 $\phi|V_{T\setminus M}(N)$ is a contiguous subword in $\phi$ for every neighbor $N$ of the module $M$.
Denote this subword by $\tau_{N}$.
Clearly, the reflection of $\phi_M$, say $\phi^R_M$, is an extended admissible model for $\phi_{1 - m}(M)$.
One can check that if we replace every node $N$ in $\phi^R(M)$ by the word $\tau_N$,
we get a normalized model $\phi'$ of $G_{ov}$ in which the type of $M$ is changed.
Moreover, the types of the remaining modules in $T_{NM}$ and the circular permutations around all the nodes in $T_{NM}$ remain unchanged
-- see Figure \ref{fig:module_reflection} for an illustration.

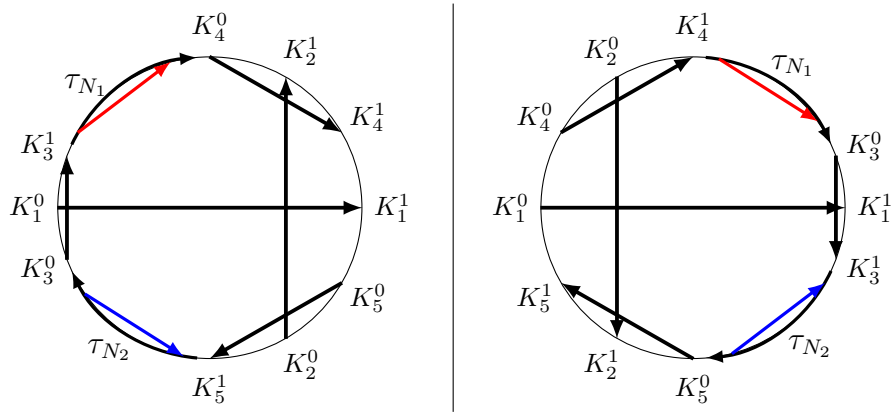
\begin{figure}[htp!]
\begin{tikzpicture}[scale=1,>=latex,shorten >=-0.2pt,shorten <=-0.2pt]
\coordinate (center) at (0,0) {};

\coordinate (k11) at ($(center)+(0:2cm)$) {};
\coordinate (k10) at ($(center)+(180:2cm)$) {};
\coordinate (k21) at ($(center)+(60:2cm)$) {};
\coordinate (k20) at ($(center)+(300:2cm)$) {};
\coordinate (k31) at ($(center)+(160:2cm)$) {};
\coordinate (k30) at ($(center)+(200:2cm)$) {};
\coordinate (k41) at ($(center)+(30:2cm)$) {};
\coordinate (k40) at ($(center)+(90:2cm)$) {};
\coordinate (k51) at ($(center)+(270:2cm)$) {};
\coordinate (k50) at ($(center)+(330:2cm)$) {};

\coordinate (lk11) at ($(center)+(0:2.4cm)$) {};
\coordinate (lk10) at ($(center)+(180:2.4cm)$) {};
\coordinate (lk21) at ($(center)+(60:2.4cm)$) {};
\coordinate (lk20) at ($(center)+(300:2.4cm)$) {};
\coordinate (lk31) at ($(center)+(160:2.4cm)$) {};
\coordinate (lk30) at ($(center)+(200:2.4cm)$) {};
\coordinate (lk41) at ($(center)+(30:2.4cm)$) {};
\coordinate (lk40) at ($(center)+(90:2.4cm)$) {};
\coordinate (lk51) at ($(center)+(270:2.4cm)$) {};
\coordinate (lk50) at ($(center)+(330:2.4cm)$) {};

\coordinate (taun1) at ($(center)+(135:2.3cm)$) {};
\coordinate (taun2) at ($(center)+(235:2.3cm)$) {};

\tikzstyle{every node}=[inner sep=1pt]
\begin{footnotesize}
\node at (lk11) {$K_1^1$};
\node at (lk10) {$K_1^0$};
\node at (lk21) {$K_2^1$};
\node at (lk20) {$K_2^0$};
\node at (lk31) {$K_3^1$};
\node at (lk30) {$K_3^0$};
\node at (lk41) {$K_4^1$};
\node at (lk40) {$K_4^0$};
\node at (lk51) {$K_5^1$};
\node at (lk50) {$K_5^0$};
\end{footnotesize}
\node at (taun1) {$\tau_{N_1}$};
\node at (taun2) {$\tau_{N_2}$};

\draw (0,0) circle (2cm);

\draw[line width=0.5mm,->] (k10)--(k11);
\draw[line width=0.5mm,->] (k20)--(k21);
\draw[line width=0.5mm,->] (k30)--(k31);
\draw[line width=0.5mm,->] (k40)--(k41);
\draw[line width=0.5mm,->] (k50)--(k51);

\draw[very thick,->] ($(center) + (155:2)$) arc (155:95:2);    
\draw[very thick,->,red] ($(center) + (150:2)$) -- ($(center) + (105:2)$);    
\draw[very thick,->] ($(center) + (265:2)$) arc (265:205:2);    
\draw[very thick,<-,blue] ($(center) + (260:2)$) -- ($(center) + (215:2)$);

\draw (3.2,2.7)--(3.2,-2.7);

\draw[white] (-3.0,-2.7)--(-3.0,-2.5);
\draw[white] (3.0,2.7)--(3.0,2.5);
\end{tikzpicture}
\begin{tikzpicture}[scale=1,>=latex,shorten >=-0.2pt,shorten <=-0.2pt]
\coordinate (center) at (0,0) {};

\coordinate (k11) at ($(center)+(0:2cm)$) {};
\coordinate (k10) at ($(center)+(180:2cm)$) {};
\coordinate (k20) at ($(center)+(120:2cm)$) {};
\coordinate (k21) at ($(center)+(240:2cm)$) {};
\coordinate (k30) at ($(center)+(20:2cm)$) {};
\coordinate (k31) at ($(center)+(340:2cm)$) {};
\coordinate (k40) at ($(center)+(150:2cm)$) {};
\coordinate (k41) at ($(center)+(90:2cm)$) {};
\coordinate (k50) at ($(center)+(270:2cm)$) {};
\coordinate (k51) at ($(center)+(210:2cm)$) {};

\coordinate (lk11) at ($(center)+(0:2.4cm)$) {};
\coordinate (lk10) at ($(center)+(180:2.4cm)$) {};
\coordinate (lk20) at ($(center)+(120:2.4cm)$) {};
\coordinate (lk21) at ($(center)+(240:2.4cm)$) {};
\coordinate (lk30) at ($(center)+(20:2.4cm)$) {};
\coordinate (lk31) at ($(center)+(340:2.4cm)$) {};
\coordinate (lk40) at ($(center)+(150:2.4cm)$) {};
\coordinate (lk41) at ($(center)+(90:2.4cm)$) {};
\coordinate (lk50) at ($(center)+(270:2.4cm)$) {};
\coordinate (lk51) at ($(center)+(210:2.4cm)$) {};

\coordinate (taun1) at ($(center)+(55:2.3cm)$) {};
\coordinate (taun2) at ($(center)+(310:2.4cm)$) {};

\tikzstyle{every node}=[inner sep=1pt]
\begin{footnotesize}
\node at (lk11) {$K_1^1$};
\node at (lk10) {$K_1^0$};
\node at (lk20) {$K_2^0$};
\node at (lk21) {$K_2^1$};
\node at (lk30) {$K_3^0$};
\node at (lk31) {$K_3^1$};
\node at (lk40) {$K_4^0$};
\node at (lk41) {$K_4^1$};
\node at (lk50) {$K_5^0$};
\node at (lk51) {$K_5^1$};
\end{footnotesize}
\node at (taun1) {$\tau_{N_1}$};
\node at (taun2) {$\tau_{N_2}$};

\draw (0,0) circle (2cm);

\draw[line width=0.5mm,->] (k10)--(k11);
\draw[line width=0.5mm,->] (k20)--(k21);
\draw[line width=0.5mm,->] (k30)--(k31);
\draw[line width=0.5mm,->] (k40)--(k41);
\draw[line width=0.5mm,->] (k50)--(k51);

\draw[very thick,->] ($(center) + (85:2)$) arc (85:25:2);    
\draw[very thick,->,red] ($(center) + (80:2)$) -- ($(center) + (35:2)$);    
\draw[very thick,->] ($(center) + (335:2)$) arc (335:275:2);    
\draw[very thick,<-,blue] ($(center) + (330:2)$) -- ($(center) + (285:2)$);    

\draw[white] (-3.0,-2.7)--(-3.0,-2.5);
\draw[white] (3.0,2.7)--(3.0,2.5);
\end{tikzpicture}
\caption{\label{fig:module_reflection} Reflecting the module $M$. Note that after the reflection,
the relative position of the oriented chords remains unchanged.}
\end{figure}

Note that we can perform the second operation so as the type of $M$ remains unchanged.
That is, we replace admissible model $\phi_M$ with another admissible model of the same type.
Such an operation exchanges some extended admissible models for some consistent submodules of $M$,
which corresponds to reordering of the oriented metaedges in some consistent submodules of $M$.

As we shall prove in Theorem \ref{thm:description_of_all_conformal_models_of_improper_parallel_modules}, 
these three operations are complete: that is, we can transform any conformal model into any other by performing a series of the above defined operations.

In the proof of Theorem \ref{thm:description_of_all_conformal_models_of_improper_parallel_modules} we use 
two obvious properties of an extended admissible model $\phi^M$ of $(M,{\sim})$:
\begin{equation}
\label{eq:conformal_models_agree_on_M}
\phi^M|M \text{ is a conformal model of } (M,{\sim}),
\tag{$C_1$}
\end{equation}
and
\begin{equation}
\label{eq:conformal_models_agree_on_N}
\begin{array}{l}
\begin{array}{lll}
  \text{$\phi^M(N)$ is on the left side of $\phi(u)$} &\iff & V_{T\setminus M}(N) \subset \leftside(u), \\
  \text{$\phi^M(N)$ is on the right side of $\phi(u)$} &\iff& V_{T \setminus M}(N) \subset \rightside(u),
\end{array}
\end{array}
\tag{$C_2$}
\end{equation}
which hold for every $N \in N_T[M]$ and every $v \in M$.

\begin{theorem}
\label{thm:description_of_all_conformal_models_of_improper_parallel_modules}
Suppose $G$ is a circular-arc graph with no twins and no universal vertices such that $G_{ov}$ is disconnected.
Then, for every module $M$ in $T_{NM}$ the circular word $\phi|(M \cup N_T[M])$ is an extended admissible model of $(M,{\sim})$ and for every node $N$ in $T_{NM}$ the circular word $\phi|N_T[N]$ is a circular permutation of $N_T[N]$.

On the other hand, given an extended admissible model $\phi^{M}$ for every module $M \in T_{NM}$
and a circular permutation $\pi(N)$ of $N_T[N]$ for every node $N \in T_{NM}$,
there is a conformal model $\phi$ of $G_{ov}$ such that:
\begin{equation}
\label{eq:conformal_model_that_agrees_on_N_and_M}
\begin{array}{rlll}
\phi|(M \cup N_T[M])& \equiv& \phi^{M} & \text{for every module $M$ in $T_{NM}$,} \\
\phi|N_T(N)&\equiv &\pi(N) & \text{for every node $N$ in $T_{NM}$.} \\
\end{array}
\tag{*}
\end{equation}
\end{theorem}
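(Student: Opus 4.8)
The first assertion merely collects earlier results. For a prime module $M$ in $T_{NM}$ it is Claim~\ref{claim:conformal_models_are_admissible}; for a serial module $M$ it is the analogue of that claim announced after Claim~\ref{claim:serial_module_invariants}; and for a node $N$ it follows from Claim~\ref{claim:conformal_models_properties_modules_nodes_T_NM}.\eqref{item:conformal_models_properties_nodes_T_NM}: the words $\phi|V_{T\setminus N}(M')$, $M'\in N_T[N]$, are single contiguous subwords of $\phi$ and, since deleting the node $N$ splits the tree $T_{NM}$ into one component per neighbour so that the sets $V_{T\setminus N}(M')$ partition $V$, they tile the circular word $\phi$, whence $\phi|N_T[N]$ is a circular permutation of $N_T[N]$.

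For the converse I would build $\phi$ by a recursive gluing along $T_{NM}$. Root $T_{NM}$ at an arbitrary module $M_0$; then every module $M\neq M_0$ has a parent node $N_{\mathrm{par}}(M)$ and every node $N$ has a parent module $M_{\mathrm{par}}(N)$. For every oriented edge of $T_{NM}$ I define a word by the following mutual recursion, running from the leaves upward; the leaves of $T_{NM}$ are modules by Claim~\ref{claim:T_NM_tree}.\eqref{item:T_NM_tree}, so the recursion is well founded. For an edge $(N,M)$ with $N=N_{\mathrm{par}}(M)$: cut the circular word $\phi^{M}$ at the unique letter $N$, obtaining a linear word over $M^{*}\cup(N_T[M]\setminus\{N\})$, and in it replace every remaining node letter $N'$ by the already defined word $W(M\to N')$; call the result $W(N\to M)$. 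For an edge $(M,N)$ with $M=M_{\mathrm{par}}(N)$: cut the circular permutation $\pi(N)$ at the letter $M$ to get a linear order $M_{j_1},\dots,M_{j_p}$ of the remaining neighbours of $N$, and set $W(M\to N):=W(N\to M_{j_1})\cdots W(N\to M_{j_p})$. Finally let $\phi$ be obtained from $\phi^{M_0}$ by replacing each node letter $N\in N_T[M_0]$ with the word $W(M_0\to N)$.

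A straightforward induction down the rooted tree then shows that in $\phi$ every $\phi|V_{T\setminus N}(M)$ equals the contiguous block $W(N\to M)$, every $\phi|V_{T\setminus M}(N)$ with $N\neq N_{\mathrm{par}}(M)$ equals the contiguous block $W(M\to N)$, and the \emph{parent side} $\phi|V_{T\setminus M}(N_{\mathrm{par}}(M))$ is the complement of such a block, hence also contiguous. Since the letters occurring in $W(M\to N)$ are exactly the labeled letters of $V_{T\setminus M}(N)$, and the sets $M_0$ and $V_{T\setminus M_0}(N)$, $N\in N_T[M_0]$, partition $V$, every $v\in V$ occurs exactly twice in $\phi$, so $\phi$ is a circular word over $V^{*}$ determining a chord arrangement. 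Undoing the substitutions yields $\phi|(M\cup N_T[M])\equiv\phi^{M}$ for every module $M$ (for $M\neq M_0$ the parent side collapses to the single letter $N_{\mathrm{par}}(M)$, and the remaining node letters are recovered by collapsing their blocks; for $M_0$ this is the construction), and $\phi|N_T[N]\equiv\pi(N)$ for every node $N$ (its children appear in the order obtained by cutting $\pi(N)$ at $M_{\mathrm{par}}(N)$, while $M_{\mathrm{par}}(N)$ sits on the opposite side), which is precisely~\eqref{eq:conformal_model_that_agrees_on_N_and_M}.

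It then remains to verify that $\phi$ is conformal, that is, that for every $v\in V$, say $v\in M\in\mathcal{M}(V)$, and every $u\neq v$ the chord $\phi(u)$ is on the left side of $\phi(v)$ iff $u\in\leftside(v)$, and on the right side iff $u\in\rightside(v)$; since for two chords with distinct endpoints exactly one of ``$\phi(u)$ on the left of $\phi(v)$'', ``on the right'', ``crossing'' holds, this forces $\phi(u)\cap\phi(v)\neq\emptyset$ iff $u\sim v$ as well, so $\phi$ is indeed a chord model of $G_{ov}$. If $u\in M$, then $\phi|M\equiv\phi^{M}|M$ is a conformal model of $(M,{\sim})$ by~\eqref{eq:conformal_models_agree_on_M}, and the mutual position of $\phi(u)$ and $\phi(v)$ is unaffected by the remaining chords, so the claim reduces to conformality of $\phi^{M}|M$ together with the observation that for $u\parallel v$ the four relations $v\bdi u$, $v\bcs u$, $v\bcd u$, $v\bcc u$ are exhaustive and mutually exclusive, the cases $v\bcs u$, $v\bcc u$ placing $u$ in $\leftside(v)$ and the cases $v\bdi u$, $v\bcd u$ placing $u$ in $\rightside(v)$. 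If $u\notin M$, then $u\in V_{T\setminus M}(N)$ for the unique $N\in N_T[M]$ lying on the $M$--$u$ path; by Claim~\ref{claim:placements_of_neighboring_nodes_in_a_module} the whole set $V_{T\setminus M}(N)$ lies on one side of $v$, and in $\phi$ it is exactly the contiguous block replacing the letter $N$ of $\phi^{M}$, so $\phi(u)$ lies entirely on the side of $\phi(v)$ on which $N$ sits in $\phi^{M}$; by property~\eqref{eq:conformal_models_agree_on_N} of the extended admissible model $\phi^{M}$ that side is the left one iff $V_{T\setminus M}(N)\subseteq\leftside(v)$ iff $u\in\leftside(v)$, and symmetrically for the right side. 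The main obstacle I foresee is exactly this gluing bookkeeping: one must check that the interface between a module and an adjacent node consists of nothing more than one summary letter on the module side and the cyclic order of neighbours on the node side, so that cutting $\pi(N)$ and concatenating the children's blocks reproduces $\pi(N)$ and collapsing the inserted blocks reproduces each $\phi^{M}$; once this is set up correctly, conformality follows mechanically from~\eqref{eq:conformal_models_agree_on_M}, \eqref{eq:conformal_models_agree_on_N}, and Claim~\ref{claim:placements_of_neighboring_nodes_in_a_module}.
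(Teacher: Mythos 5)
Your construction is essentially the paper's: both root $T_{NM}$ at an arbitrary module, build the model by a bottom-up recursion that alternately substitutes children's blocks for node letters (in modules) and concatenates children's blocks in the cyclic order dictated by $\pi(N)$ (in nodes), and then verify conformality via the two invariants \eqref{eq:conformal_models_agree_on_M}, \eqref{eq:conformal_models_agree_on_N} together with Claim~\ref{claim:placements_of_neighboring_nodes_in_a_module}. The only presentational difference is that the paper carries an explicit five-condition inductive invariant on words $\phi'_A$ associated with tree vertices, whereas you phrase the recursion on oriented tree edges $W(A\to B)$ and state the conformality check directly at the end; the content is the same.
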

\begin{proof}
The first part of the proof follows from Claim \ref{claim:conformal_models_are_admissible}.

Choose any module $R$ in $T_{NM}$ and fix $R$ as the root of $T_{NM}$.
Let $A$ be a vertex in the rooted tree $T_{NM}$.
Denote by $V_T(A) \subset V$ the set of all vertices
contained in the modules descending $A$ in $T_{NM}$, including $A$.
By $B$ we denote the parent of $A$ if it exists or the empty word if $A = R$.
We proceed $T_{NM}$ bottom-up and for every 
vertex $A$ in the rooted tree $T_{NM}$ we construct a word $\phi'_A$ consisting of all the 
labeled letters from $V^*_T(A)$ so as the circular word $\phi_A \equiv \phi'_AB$ satisfies the following properties:
\begin{enumerate}
 \item \label{item:uv_cond_left} For every $u,v$ in $V_T(A)$, $\phi_A(u)$ is on the left side of $\phi_A(v)$ if $u \in \leftside(v)$.
 \item \label{item:uv_cond_right} For every $u,v$ in $V_T(A)$, $\phi_A(u)$ is on the right side of $\phi_A(v)$ if $u \in \rightside(v)$.
 \item \label{item:uv_overlap} For every $u,v$ in $V_T(A)$, $\phi_A(u)$ intersects $\phi_A(v)$ if $u \sim v$.
 \item \label{item:vB_cond_left} For every $v \in V_T(A)$, $\phi(B)$ is on the left side of $\phi(v)$ if $V_{T\setminus A}(B) \subset \leftside(v)$.
 \item \label{item:vB_cond_right} For every $v \in V_T(A)$, $\phi(B)$ is on the right side of $\phi(v)$ if $V_{T\setminus A}(B) \subset \rightside(v)$.
\end{enumerate}
If we construct such a word $\phi'_A$ for every vertex $A$ in $T_{NM}$, then $\phi_R$ is a conformal model of $G_{ov}$.
Moreover, from the construction of $\phi_R$ it will be clear that $\phi_R$ satisfies properties
\ref{eq:conformal_model_that_agrees_on_N_and_M}
(in particular, for every $A$ in $T_{NM}$ the word $\phi'_A$ is a contiguous subword of $\phi_R$).

Suppose $A$ is a leaf in $T_{NM}$.
So, $A$ is a module in $T_{NM}$.
Let $\phi^A$ be the admissible model of $(A,{\sim})$ given by the assumption.
Then, $\phi^A \equiv \tau^AB$, where $B$ is the parent of $A$ in $T_{NM}$.
We set $\phi'_A = \tau^A$ and we claim that $\phi'_A$ satisfies properties \eqref{item:uv_cond_left}--\eqref{item:vB_cond_right}.
By property \eqref{eq:conformal_models_agree_on_M} of $\phi^A$ we deduce that $\phi_A$ satisfies 
\eqref{item:uv_cond_left} -- \eqref{item:uv_overlap}.
By property \eqref{eq:conformal_models_agree_on_N} of $\phi^A$ we deduce that $\phi_A$ satisfies \eqref{item:vB_cond_left}--\eqref{item:vB_cond_right}.

Now, suppose that $A$ is a node.
Suppose $A_1,\ldots,A_k$ are the children of $A$ in $T_{NM}$.
Suppose $\phi'_{A_i}$ is the word that has been constructed for $A_i$.
Suppose $A_1,\ldots,A_k$ are enumerated such that $\pi(A) \equiv BA_{1} \ldots A_{k}$, where 
$\pi(A)$ is a circular permutation of $N_T[A]$ associated with the node $A$ given by the assumption.
Let $\phi'_A = \phi'_{A_{1}}\ldots\phi'_{A_k}$.
We claim that $\phi_A \equiv \phi'_AB$ satisfies conditions \eqref{item:uv_cond_left}--\eqref{item:vB_cond_right}.
Suppose $u,v \in M$. 
If $uv \in V_T(A_i)$ for some $i \in [k]$, \eqref{item:uv_cond_left}--\eqref{item:uv_overlap} are satisfied
by $\phi_A$ as they are are satisfied by $\phi_{A_i}$.
Suppose $u \in A_i$ and $v \in A_j$.
But then, $u \in V_{T \setminus A_j}(A)$.
In particular, $u \in \leftside(v)$ yields $V_{T \setminus A_j}(A) \subset \leftside(v)$ by property \eqref{eq:conformal_models_agree_on_N} of $\phi_{A_j}$.
Hence, in the word $\phi_{A_j}$ the letter $A$ is on the left side of $\phi_{A_j}(v)$.
Since $\phi_A$ arises from $\phi_{A_j}$ by substituting the letter $A$ by the word including $u^0,u^1$,
we deduce that $\phi(u)$ is on the left side of $\phi_A(v)$.
We prove similarly that $u \in \rightside(v)$ yields $\phi(u)$ is on the right side of $\phi_A(v)$.
Now, let us prove that $\phi_A$ satisfies \eqref{item:vB_cond_left}.
Let $v \in A_j$.
We have $V_{T\setminus A_j}(A) \subset \leftside(v)$ iff $V_{T\setminus A}(B) \subset \leftside(v)$ for every $v \in V_T(A)$.
Suppose $v \in A_{j}$.
Hence, $\phi_{A_j}(A)$ is on the left side of $\phi_{A_j}(v)$.
But the word $\phi_A$ arises from $\phi_{A_j}$ by substituting the letter $A$ by the word containing $B$,
so we have $\phi_A(B)$ is on the left side of $\phi_A(v)$.
Property \eqref{item:vB_cond_right} is proved similarly.

Suppose $A$ is a module in $T_{NM}$ with children $A_1,\ldots,A_k$.
Let $\phi^A$ be a an extended admissible model of $(A,{\sim})$ associated with $A$ given by the assumption.
In the word $\phi^A$ we exchange the letter $A_i$ by the word $\phi'_{A_i}$, for every $i \in [k]$.
We denote the obtained word by $\phi_A$.
We set $\phi'_A$ such that $\phi_A = \phi'_AB$.
Now, we claim that $\phi_A$ satisfies \eqref{item:uv_cond_left}--\eqref{item:vB_cond_right}.
If $u,v \in A$, properties \eqref{item:uv_cond_left}--\eqref{item:uv_overlap} follow by property \eqref{eq:conformal_models_agree_on_M} of the conformal model $\phi^A$ of $(A,{\sim})$.
If $v \in A$ and $u \in A_i$, properties \eqref{item:uv_cond_left}--\eqref{item:uv_cond_right} follow by
property \eqref{eq:conformal_models_agree_on_N} of $\phi^A$.
If $v \in A_i$, the proof of \eqref{item:uv_cond_left}--\eqref{item:uv_cond_right}
is the same as in the previous case (now $u$ might belong to $A$, but $\phi_A(u)$ is on the left side of $\phi_A(v)$ iff $\phi_{A_i}(A)$ is in the left side of $\phi_{A_i}(v)$).
It remains to prove \eqref{item:vB_cond_left}--\eqref{item:vB_cond_right}.
If $v \in A$, properties \eqref{item:vB_cond_left}--\eqref{item:vB_cond_right} follow by property \eqref{eq:conformal_models_agree_on_N} of $\phi^A$.
If $v \in A_j$ for some $j \in [k]$, properties \eqref{item:vB_cond_left}--\eqref{item:vB_cond_right} 
are proved similarly to the previous case.

Now, from the construction of $\phi_R$ we easily deduce that
$$\begin{array}{rlll}
\phi_R|(M \cup N_T[M])& \equiv& \phi^{M} & \text{for every module $M$ in $T_{NM}$,} \\
\phi_R|N_T(N)&\equiv &\pi(N) & \text{for every node $N$ in $T_{NM}$,}
\end{array}
$$
which completes the proof of the lemma.
\end{proof}

%The result of this section can be used to provide the recognition algorithm 
%for circular-arc graphs for the case when the input graph $G$ is such that $G_{ov}$ is disconnected.
%For this purpose we need a short definition: an extended model $\phi^M$ of a module $M \in T_{NM}$
%is \emph{conformal} if $\phi^M$ satisfies properties \eqref{eq:conformal_models_agree_on_M} and %\eqref{eq:conformal_models_agree_on_N}.
%\begin{theorem}
%Let $G$ be a graph with no twins and no universal vertices.
%Suppose that $G$ is such that $G_{ov}$ is disconnected.
%The following statements are equivalent:
%\begin{enumerate}
% \item $G$ is a circular-arc graph.
% \item For every child $M$ of $V$ in $\mathcal{M}(G_{ov})$:
% $(M,{\sim})$ admits an extended admissible model $\phi^{M}$ and $\phi^M$ is conformal.
%\end{enumerate}
%\end{theorem}

\section{Isomorphism of circular-arc graphs}
\label{sec:isomorphism_problem}

Two graphs $G$ and $H$ are \emph{isomorphic} if there 
exists a bijection $\alpha:V(G) \to V(H)$ such that 
$uv \in E(G)$ iff $\alpha(u)\alpha(v) \in E(H)$.
Our goal is to provide a polynomial-time algorithm that tests whether two circular-arc graphs $G$ and $H$
are isomorphic.

Note that every isomorphism between $G$ and $H$ maps universal vertices of $G$ into 
universal vertices in $H$.
So, if $G$ and $H$ have different number of universal vertices, then $G$ and $H$ are not isomorphic.
Otherwise, $G$ and $H$ are isomorphic iff $G'$ and $H'$ are isomorphic, where $G'$ and $H'$
are obtained from $G$ and $H$ by deleting all the universal vertices in $G$ and $H$, respectively.
So, in the rest of the paper we assume $G$ and $H$ have no universal vertices.

Suppose $G$ and $H$ are circular-arc graphs with no universal vertices.
For every vertex $v \in V(G)$ we define the set of its twins in $G$:
$$T_G(v) = \{w \in V(G): N_G[v] = N_G[w]\},$$
where $N_G[v]$ denotes the closed neighbourhood of $v$ in $G$.
Clearly, $\{T_G(v): v \in V\}$ forms a partition of the set $V(G)$.
Let $V'$ be a set containing a representant from every set $\{\{T_G(v)\}: v \in G\}$.
Let $G'$ be a graph induced by the set $V'$ in $G$ and let $m_G(v)$ denotes the size of the set $T_G(v)$, 
for every $v' \in V'$.
Note that $G'$ has no universal vertices and no twins.
We call $(G',m_G)$ a \emph{circular-arc graph with multiplicities} or simply a circular-arc graph.
A pair $(H',m_H)$ for the graph $H$ is defined analogously.
We say $(G',m_G)$ and $(H',m_H)$ are \emph{isomorphic}
if there is an isomorphism $\alpha'$ from $G'$ to $H'$ that \emph{preserves multiplicities},
that is, that satisfies $m_G(v) = m_H(\alpha'(v))$ for every for every $v \in V(G')$.
\begin{claim}
$G$ and $H$ are isomorphic if and only if $(G',m_G)$ and $(H',m_H)$ are isomorphic.
\end{claim}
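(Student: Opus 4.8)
The plan is to prove the two directions separately, using the fact that twins in a graph with no universal vertices are exactly the vertices sharing a closed neighbourhood, and that this relation is preserved by any isomorphism. First I would establish the forward direction: suppose $\beta : V(G) \to V(H)$ is an isomorphism. Since $uv \in E(G) \iff \beta(u)\beta(v) \in E(H)$, we get $N_G[v] = N_G[w]$ iff $N_H[\beta(v)] = N_H[\beta(w)]$, so $\beta$ maps each twin class $T_G(v)$ bijectively onto the twin class $T_H(\beta(v))$; in particular $|T_G(v)| = |T_H(\beta(v))|$. Now I would define $\alpha' : V' \to V(H')$ by sending each representant $v \in V'$ to the unique representant of the class $T_H(\beta(v))$ inside $V(H')$. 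To see $\alpha'$ is well-defined and a bijection, note that $\beta$ induces a bijection between the set of twin classes of $G$ and the set of twin classes of $H$, and $V'$ (resp.\ $V(H')$) is in natural bijection with the set of twin classes. That $\alpha'$ preserves adjacency follows because for representants $v, w \in V'$ with $v \neq w$ we have $vw \in E(G') \iff vw \in E(G)$, and $vw \in E(G)$ iff some (equivalently every) vertex of $T_G(v)$ is adjacent to some (equivalently every) vertex of $T_G(w)$ -- a standard fact about twin classes -- which via $\beta$ translates to adjacency of $\alpha'(v)$ and $\alpha'(w)$ in $H'$. Finally $m_G(v) = |T_G(v)| = |T_H(\beta(v))| = |T_{H}(\alpha'(v))| = m_H(\alpha'(v))$, using that $\alpha'(v)$ and $\beta(v)$ lie in the same twin class of $H$; so $\alpha'$ is a multiplicity-preserving isomorphism from $G'$ to $H'$.

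For the converse, suppose $\alpha' : V(G') \to V(H')$ is an isomorphism with $m_G(v) = m_H(\alpha'(v))$ for every $v \in V(G')$. I would build $\beta : V(G) \to V(H)$ by handling each twin class separately: for $v \in V'$, the class $T_G(v)$ has size $m_G(v)$ and the class $T_H(\alpha'(v))$ has size $m_H(\alpha'(v)) = m_G(v)$, so I fix an arbitrary bijection $\beta_v : T_G(v) \to T_H(\alpha'(v))$. Since $\{T_G(v) : v \in V'\}$ partitions $V(G)$ and $\{T_H(\alpha'(v)) : v \in V'\} = \{T_H(w) : w \in V(H')\}$ partitions $V(H)$ (the latter equality because $\alpha'$ is a bijection $V(G') \to V(H')$), gluing the $\beta_v$ gives a bijection $\beta : V(G) \to V(H)$. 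To check $\beta$ is an isomorphism, take $x, y \in V(G)$ with $x \in T_G(v)$, $y \in T_G(w)$ for $v, w \in V'$. If $v = w$ then $x, y$ are twins in $G$, so $xy \in E(G)$ iff $N_G[x] = N_G[y] \ni$ both, which since $G$ has no universal vertices and no self-loops is governed by whether the common closed neighbourhood contains the twin class itself; the same reasoning applies verbatim to $\beta(x), \beta(y)$ which are twins in $H$, and $|T_G(v)| = |T_H(\alpha'(v))|$ ensures the "$\geq 2$ versus $=1$'' distinction matches. If $v \neq w$, then $xy \in E(G) \iff vw \in E(G')$ (moving along twin classes does not change adjacency between distinct classes) $\iff \alpha'(v)\alpha'(w) \in E(H') \iff \beta(x)\beta(y) \in E(H)$, the last step again because $\beta(x) \in T_H(\alpha'(v))$ and $\beta(y) \in T_H(\alpha'(w))$.

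The main obstacle, and the only place requiring genuine care, is the within-class adjacency bookkeeping: whether two twins are adjacent to each other depends on whether their common closed neighbourhood "includes themselves'', i.e.\ whether the twin class induces a clique or an independent set, and one must verify that $m_G(v) = m_H(\alpha'(v))$ together with $N_G[v]$ determining $N_H[\alpha'(v)]$ (via $\alpha'$ acting on the quotient) forces these to agree. Concretely: a twin class $T_G(v)$ of size $\geq 2$ induces a clique iff $v \in N_G[v]$ contains the other twins iff... -- this is cleanest phrased by observing $N_G[v] = N_G[w]$ for twins $v,w$ forces $v \sim w$ (since $v \in N_G[v] = N_G[w]$ implies $vw \in E(G)$), so \emph{every twin class induces a clique}, and symmetrically in $H$; hence the within-class case always produces edges on both sides and there is nothing to reconcile. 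Once this observation is in hand both directions are routine partition-chasing, and I would present the two constructions $\alpha' \mapsto \beta$ and $\beta \mapsto \alpha'$ explicitly as above, remarking that they need not be mutually inverse (the choices of $\beta_v$ are arbitrary) but that existence in each direction is all that is claimed.
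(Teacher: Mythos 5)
Your proof is correct and follows the same quotient-by-twin-classes strategy as the paper, which merely sketches the forward direction (``$\alpha|V'$ is an isomorphism'') and declares the converse trivial. You are actually slightly more careful than the paper in composing $\beta|V'$ with the map to representants (since $\beta(V')$ need not literally equal $V(H')$), and your observation that closed-neighbourhood twins are always adjacent, so every twin class is a clique, cleanly disposes of the within-class bookkeeping that could otherwise be a worry.
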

\begin{proof}
Let $\alpha$ be an isomorphism between $G$ and $H$.
Note that for every $v \in V(G')$ $\alpha$ maps every twin $u$ of $v$ in $G$ into a twin $\alpha(u)$ of $\alpha(v)$ in $H$.
Hence, $\{\alpha(w): w \in T_G[v]\} = T_H(\alpha(v))$.
In other words, $\alpha$ maps the set $T_G[v]$ into the set $T_H[\alpha(v)]$.
So, $\alpha|V'$ is an isomorphism between $(G',m_G)$ and $(H',m_H)$.
The converse implication is trivial.
\end{proof}

In the rest of this section we assume that $(G,m_G)$ and $(H,m_H)$ are given on the input,
where $G$ and $H$ are circular-arc graphs with no universal vertices and no twins. 
By $G_{ov}$ and $H_{ov}$ we denote the graphs associated with $G$ and $H$, respectively. 

\begin{theorem}
\label{thm:iso_cond}
Let $(G,m_G)$ and $(H,m_H)$ be two circular-arc graphs and let 
$\alpha$ be a bijection from $V(G)$ to $V(H)$.
Then $\alpha:V(G) \to V(H)$ is an isomorphism from $(G,m_G)$ to $(H,m_H)$ iff
$\alpha$ preserves multiplicities and for every pair $(u,v) \in V(G) \times V(G)$:
\begin{enumerate}
 \item \label{item:iso_cond_1} $u \in \leftside(v)$ if and only if $\alpha(u) \in \leftside(\alpha(v))$.
 \item \label{item:iso_cond_2} $u \in \rightside(v)$ if and only if $\alpha(v) \in \rightside(\alpha(v))$.
\end{enumerate}
\end{theorem}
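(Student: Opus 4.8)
The statement is an ``if and only if'' characterising isomorphisms of $(G,m_G)$ and $(H,m_H)$ in terms of the $\leftside/\rightside$ data. The plan is to prove both directions by unwinding the definitions of $\leftside(v)$ and $\rightside(v)$ and relating them to the intersection matrix $M_G$, using the discussion already carried out in the ``Our approach'' subsection. First I would recall the definitions
$$
\leftside(v) = \{u : v\cs u \text{ or } v\cc u\},\qquad
\rightside(v) = \{u : v\di u \text{ or } v\cd u\},
$$
and observe that for distinct $u,v$ we have $M_G[v,u]\in\{\bdi,\bcs,\bcd,\bcc,\bov\}$, so that $\{\leftside(v),\rightside(v)\}$ together with the set $\{u : v\ov u\}$ partitions $V(G)\setminus\{v\}$. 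The crucial bookkeeping observation is that the four relations $\di,\cs,\cd,\cc$ are each recoverable from the pair of membership facts ``$u\in\leftside(v)$ or $u\in\rightside(v)$'' and ``$v\in\leftside(u)$ or $v\in\rightside(u)$'': indeed $v\di u \iff u\in\rightside(v)\text{ and }v\in\rightside(u)$; $v\cs u \iff u\in\leftside(v)\text{ and }v\in\rightside(u)$; $v\cd u \iff u\in\rightside(v)\text{ and }v\in\leftside(u)$; $v\cc u \iff u\in\leftside(v)\text{ and }v\in\leftside(u)$; and $v\ov u$ exactly when neither $u$ is in $\leftside(v)\cup\rightside(v)$ — this is precisely the list of equivalences already spelled out in the discussion after the definition of the bending procedure.

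For the forward direction, assume conditions \eqref{item:iso_cond_1}--\eqref{item:iso_cond_2} hold and that $\alpha$ preserves multiplicities. Using the recovery formulas above, I conclude that $\alpha$ preserves each of the five relations $\di,\cs,\cd,\cc,\ov$, i.e.\ $M_G[v,u]=M_H[\alpha(v),\alpha(u)]$ for all distinct $v,u$. In particular $vu\in E(G)\iff v\,M_G[v,u]\,u\neq\bdi\iff \alpha(v)\,M_H[\alpha(v),\alpha(u)]\,\alpha(u)\neq\bdi\iff\alpha(v)\alpha(u)\in E(H)$, so $\alpha$ is a graph isomorphism $G\to H$; combined with multiplicity preservation this is exactly what it means for $\alpha$ to be an isomorphism of $(G,m_G)$ and $(H,m_H)$.

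For the backward direction, suppose $\alpha$ is an isomorphism of $(G,m_G)$ and $(H,m_H)$, so $\alpha$ preserves multiplicities and $uv\in E(G)\iff\alpha(u)\alpha(v)\in E(H)$. The key point is that $M_G$ is defined purely in terms of the closed-neighbourhood containment structure (the conditions $N[u]\subsetneq N[v]$, $N[v]\cup N[u]=V$, and the ``for each $w\in N[v]\setminus N[u]$'' clauses), and an isomorphism of graphs maps closed neighbourhoods bijectively to closed neighbourhoods — $\alpha(N_G[v])=N_H[\alpha(v))]$ — so it preserves all the set inclusions and unions appearing in the definition of $M_G$. Hence $M_G[v,u]=M_H[\alpha(v),\alpha(u)]$ for all distinct $v,u$, and therefore $\alpha$ preserves $\di,\cs,\cc,\cd$, which by the very definitions of $\leftside$ and $\rightside$ yields conditions \eqref{item:iso_cond_1} and \eqref{item:iso_cond_2}. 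The only genuinely delicate step — and the one I would write out carefully — is verifying that $\alpha$ respects the quantified clauses in the definition of $\bcc$ (those involving ``for each $w\in N[v]\setminus N[u]$, $N[w]\subsetneq N[v]$''); this follows because $\alpha(N[v]\setminus N[u])=N[\alpha(v)]\setminus N[\alpha(u)]$ and $\alpha$ preserves strict neighbourhood inclusions, but it is worth stating explicitly since it is the place where one might otherwise be tempted to hand-wave. Everything else is a routine translation between the matrix entries and the two sets $\leftside,\rightside$.
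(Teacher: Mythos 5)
Your proposal is correct and follows essentially the same route as the paper: the direction from conditions \eqref{item:iso_cond_1}--\eqref{item:iso_cond_2} to isomorphism uses exactly the paper's recovery of the five relations $\bdi,\bcs,\bcd,\bcc,\bov$ from the two membership facts, and the other direction fleshes out what the paper dismisses as ``clear'' by noting that graph isomorphisms carry $N_G[\cdot]$ to $N_H[\cdot]$ and hence preserve all entries of the intersection matrix. (Minor point: your labels ``forward'' and ``backward'' are swapped relative to the usual reading of the ``iff,'' but the content of each direction is sound.)
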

\begin{proof}
It is clear that every isomorphism $\alpha$ between $(G,m_G)$ and $(H,m_H)$
preserves multiplicities and satisfies \eqref{item:iso_cond_1} and \eqref{item:iso_cond_2}.

Suppose $u,v \in V(G)$.
Suppose $G_{ov}$ and $H_{ov}$ be the overlap graphs for $G$ and $H$, respectively.
Suppose that $u$ and $v$ overlap in $G_{ov}$, which is equivalent to
$u \notin (\leftside(v) \cup \rightside(v))$.
Hence, by \eqref{item:iso_cond_1} and \eqref{item:iso_cond_2}, we get $\alpha(u) \notin (\leftside(\alpha(v)) \cup \rightside(\alpha(v)))$, which yields that $\alpha(u)$ overlaps $\alpha(v)$ in $H_{ov}$.
Further, by \eqref{item:iso_cond_1} and \eqref{item:iso_cond_2} we have that:
\begin{itemize} 
\item $u \in \leftside(v)$ and $v \in \leftside(v)$ iff $\alpha(u) \in \leftside(\alpha(v))$ and $\alpha(v) \in \leftside(\alpha(v))$,
\item $u \in \leftside(v)$ and $v \in \rightside(v)$ iff $\alpha(u) \in \leftside(\alpha(v))$ and $\alpha(v) \in \rightside(\alpha(v))$,
\item $u \in \rightside(v)$ and $v \in \leftside(v)$ iff $\alpha(u) \in \rightside(\alpha(v))$ and $\alpha(v) \in \leftside(\alpha(v))$,
\item $u \in \rightside(v)$ and $v \in \rightside(v)$ iff $\alpha(u) \in \rightside(\alpha(v))$ and $\alpha(v) \in \rightside(\alpha(v))$.
\end{itemize}
Hence, $uv \in E(G)$ iff $\alpha(u)\alpha(v) \in E(H)$.
\end{proof}
If $\alpha:V(G) \to V(H)$ satisfies the conditions from the previous lemma, 
then $\alpha$ is said to be an \emph{isomorphism between $(G_{ov},m_G)$ and $(H_{ov},m_h)$}.

Suppose $\alpha$ is an isomorphism between $(G',m_G)$ and $(H',m_H)$.
Suppose $\phi$ is a conformal model of $G_{ov}$.
The \emph{image of $\phi$ by $\alpha$}, denoted $\alpha(\phi)$, is a circular word on $V^{*}(H)$ 
that arises from $\phi$ by exchanging the labeled letter $u^0$ by $\alpha^0(u)$ and the labeled letter
$u^1$ by $\alpha^1(u)$, for every $u \in V^{*}(G)$.
Since $\alpha$ is an isomorphism between $(G_{ov},m_G)$ and $(H_{ov},m_H)$,
$\alpha(\phi)$ is a conformal model of $(H_{ov},m_H)$.
We extend the notion of the \emph{image of $\tau$ by $\alpha$}, denoted $\alpha(\tau)$, on words $\tau$ consisting of letters from $V^*(G)$.

\subsection{Extended admissible models for extended metaedges}
Suppose $K$ is a consistent submodule of $G_{ov}$ for which the extended metaedge 
$$\mathbb{K}_m = (K^0,K^1,{<_K},p_m(K^0), p_m(K^1))$$ is 
defined.

The goal of this subsection is to characterize the structure of all 
extended admissible models for $\mathbb{K}_m$.
By Theorem \ref{thm:permutation_models_transitive_orientations}, 
the admissible models of $(K^0,K^1,{<_K})$ are in the correspondence with the transitive
orientations $({<}, {\prec})$ of the graphs $(K,{\parallel})$ and $(K,{\sim})$, where
${<} = {<_K}$.
So, there is a correspondence between the admissible models of $(K^0,K^1,{<_K})$
and the transitive orientations of $(K,{\sim})$.
However, not every transitive orientation $\prec$ of $(K,{\sim})$ gives back
an extended admissible model for $\mathbb{K}_m$.
Again, using Theorem \ref{thm:permutation_models_transitive_orientations}, we have
the correspondence between the extended admissible models for $\mathbb{K}_m$
and the set of transitive orientations $(K,{\prec})$ of $(K,{\sim})$
that satisfy for every $x,y \in K$ the following conditions:
\begin{equation}
\label{eq:restrictions_on_transitive_orientations}
\begin{array}{c}
\text{If $x \sim y$ and $x$ is before $y$ in $p_m(K^0)$, then $x \prec y$.} \\
\text{If $x \sim y$ and $x$ is before $y$ in $p_m(K^1)$, then $x \prec y$.}
\end{array}
\end{equation}
The abbreviation \emph{$x$ is before $y$ in $p_m(K^j)$} means that
$x^*$ and $y^*$ are in different subsets of the ordered partition associated with $p_m(K^j)$ and 
the set containing $x^*$ is before the set containing $y^*$ in $p_m(K^j)$.

Our goal is to characterize all transitive orientations $(K,{\prec})$ of $(K,{\sim})$ that satisfy 
condition~\eqref{eq:restrictions_on_transitive_orientations}. 
We call such orientations \emph{admissible} for $\mathbb{K}_m$.

Every transitive orientation $(K,{\prec})$ of $(K,{\sim})$ can be 
obtained by an independent transitive orientation of the edges of every strong module in $\mathcal{M}(K,{\sim})$ -- see Theorem~\ref{thm:transitive_orientations_versus_transitive_orientations_of_strong_modules}.
Let $A$ be a strong module in $\mathcal{M}(K,{\sim})$ with the children $A_1,\ldots,A_k$.
Recall that the graph $(A,{\sim_A})$ is obtained from $(A,{\sim})$ by restricting to the edges
between different children of~$A$.

Suppose $A$ is a prime module in $\mathcal{M}(K,{\sim})$.
By Theorem \ref{thm:prime_graph_has_two_transitive_orientations}, $(A,{\sim_A})$ has two transitive orientations, ${\prec^{0}_A}$ and ${\prec^{1}_A}$, one being the reverse of the other.
An orientation ${\prec^{i}_{A}}$ of $(A,{\sim})$ is \emph{admissible} for $\mathbb{K}_m$
if $\prec^{i}_{A}$ satisfies conditions \eqref{eq:restrictions_on_transitive_orientations} for every two vertices $x,y$ from different children of $A$.
In particular, if $\inside(A) = \emptyset$ then both ${\prec^{0}_A}$ and ${\prec^{1}_A}$ are admissible.
Otherwise, only one among ${\prec^{0}_A}$ and ${\prec^{1}_A}$
is admissible for $\mathbb{K}_m$.
Indeed, since $\inside(A) \neq \emptyset$, there are $x,y \in A$ such that $x$ is before $y$ in 
$p_m(K^j)$ for some $K^j \in \{K^0,K^1\}$.
Since $(A,{\sim_A})$ is connected, there are $x',y' \in A$ such that $x' \sim_A y'$ and $x'$ is before $y'$ in
$p_m(K^{j})$.
It means that the orientation $(A,{\prec^{i}_A})$ for which $y' \prec^i_A x'$ is not admissible for $\mathbb{K}_m$.

Now, suppose $A$ is a serial module in $\mathcal{M}(K,{\sim})$.
The transitive orientations of $(A,{\sim_A})$ are in one-to-one correspondence 
with the linear orders of $A_1,\ldots,A_k$.
That is, every transitive orientation of $(A,{\sim_A})$ is of the form $A_{i_1} \prec \ldots \prec A_{i_k}$, where $i_1, \ldots, i_k$ is a permutation of $[k]$.
Now, we define $\prec^{wo}_A$ relation on the set $\{A_1,\ldots,A_k\}$,
where 
\begin{equation}
\label{def:weak_orders}
A_i \prec^{wo}_A A_j 
\iff
\begin{array}{c}
\text{there is $x \in A_i$ and $y \in A_j$ such that $x$ appears before $y$} \\
\text{in $p_m(K^j)$ for some $K^j \in \{K^0,K^1\}$.}
\end{array}
\end{equation}
We claim that $(\{A_1,\ldots,A_k\},{\prec^{wo}_A})$ is a partial order.
Suppose $A_i,A_j,A_l$ are such that $A_i \prec^{wo}_{A} A_j \prec^{wo}_A A_l$.
It means that there are $x \in A_i$, $y,y' \in A_j$, and $z \in A_l$ such that
$x$ is before $y$ in either $p_m(K^0)$ or $p_m(K^1)$ and 
$y'$ is before $z$ in either $p_m(K^0)$ or $p_m(K^1)$.
Since $x \sim \{y,y'\} \sim z$ as $A$ is serial, 
we deduce that $x$ is before $z$ in either $p_m(K^0)$ or $p_m(K^1)$.
This proves that $(\{A_1,\ldots,A_k\},{\prec^{wo}_A})$ is a partial order.

Furthermore, we claim that $(\{A_1,\ldots,A_k\},{\prec^{wo}_A})$ is a \emph{weak order}, 
which means that $(\{A_1,\ldots,A_k\},{\prec^{wo}_A})$ can be partitioned into the set of antichains 
such that every two of them induce a complete bipartite poset in $(\{A_1,\ldots,A_k\},{\prec^{wo}_A})$.
Clearly, $(\{A_1,\ldots,A_k\},{\prec^{wo}_A})$ is a weak order iff 
it does not have three elements $A_i,A_j,A_{l}$ such that 
$A_j \prec^{wo}_A A_{l}$ and $A_i$ is incomparable 
to $A_j$ and $A_l$.
Suppose there exists $A_i,A_j,A_{l}$ with such properties.
It means that there are $x \in A_j$ and $y \in A_{l}$ such that
$x$ is before $y$ in $p_m(K^j)$ for some $K^j \in \{K^0,K^1\}$.
Let $z$ be any vertex from $A_i$.
Clearly, $z$ is either before $y$ in $p_m(K^j)$ or $z$ is after $x$ in $p_m(K^j)$.
So, $A_i$ is $\prec^{wo}_A$-comparable to $A_j$ or $A_l$.
An orientation $A_{i_1} \prec_A \ldots \prec_A A_{i_k}$ of $(A,{\sim_A})$
is \emph{admissible for $\mathbb{K}_m$} if ${\prec_A}$ extends ${\prec^{wo}_A}$.
The result of this subsection are summarized by the following claim.
\begin{claim}
\label{claim:admissible_orientations_of_M}
There is a one-to-one correspondence between the set of admissible orientations $(K,{\prec})$ for $\mathbb{K}_m$ and the families 
$$\{(A,{\prec_A}): A \in \mathcal{M}(K,{\sim}) \text{ and $\prec_A$ is an admissible orientation of $(A, {\sim_A})$ for $\mathbb{K}_m$}\}$$
given by $x \prec y \iff x \prec_A y$, where $A$ is the module in $\mathcal{M}(K,{\sim})$ such that $x \sim_A y$.
\end{claim}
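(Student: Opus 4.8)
The plan is to obtain Claim~\ref{claim:admissible_orientations_of_M} as a refinement of Gallai's Theorem~\ref{thm:transitive_orientations_versus_transitive_orientations_of_strong_modules}. That theorem already gives a bijection between the transitive orientations of $(K,{\sim})$ and the families $\{(A,{\prec_A}): A\in\mathcal{M}(K,{\sim})\}$ of transitive orientations of the strong modules $(A,{\sim_A})$, via $x\prec y \iff x\prec_A y$ for the unique $A$ with $x\sim_A y$. So I would not reprove bijectivity; instead I would check that this bijection carries \emph{admissible} orientations of $(K,{\sim})$ exactly to families of \emph{admissible} orientations of the $(A,{\sim_A})$, which yields the claim. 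Throughout I would use that the relations $\{{\sim_A}\}$ partition ${\sim}$, so every edge $x\sim y$ lies in $\sim_A$ for a unique strong module $A$, and then $x$ and $y$ lie in two distinct children of $A$.

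First I would handle the direction from $(K,{\prec})$ to the induced orientations $(A,{\prec_A})$. Assume $(K,{\prec})$ satisfies~\eqref{eq:restrictions_on_transitive_orientations}, fix a strong module $A$, and look at the induced $\prec_A$ (which is transitive, hence, for prime $A$, one of the two orientations $\prec^0_A,\prec^1_A$ of Theorem~\ref{thm:prime_graph_has_two_transitive_orientations}). If $A$ is parallel there is nothing to prove. If $A$ is prime, I would take any $x,y$ in distinct children of $A$ with $x$ before $y$ in $p_m(K^j)$ for some $K^j\in\{K^0,K^1\}$; since then $x\sim_A y$ (hence $x\sim y$), admissibility of $(K,{\prec})$ gives $x\prec y$ and therefore $x\prec_A y$, which is precisely the definition of $\prec_A$ being admissible for $\mathbb{K}_m$. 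If $A$ is serial with children $A_1,\dots,A_k$, I would take any pair with $A_i\prec^{wo}_A A_{i'}$, pick witnesses $x\in A_i$, $y\in A_{i'}$ with $x$ before $y$ in some $p_m(K^j)$; since $A$ is serial we have $x\sim y$, so $x\prec y$ by admissibility, hence $A_i$ precedes $A_{i'}$ in the linear order $\prec_A$. As this holds for every such pair, $\prec_A$ extends $\prec^{wo}_A$, i.e.\ it is admissible for $\mathbb{K}_m$.

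Next I would handle the converse: assuming each $\prec_A$ is admissible for $\mathbb{K}_m$, show that the assembled orientation $(K,{\prec})$ satisfies~\eqref{eq:restrictions_on_transitive_orientations}. Given $x\sim y$ with $x$ before $y$ in $p_m(K^j)$ for some $K^j\in\{K^0,K^1\}$, let $A$ be the unique strong module with $x\sim_A y$ and let $A_p,A_q$ be the children of $A$ containing $x$ and $y$ (so $p\neq q$, and $A$ is prime or serial). If $A$ is prime, admissibility of $\prec_A$ directly yields $x\prec_A y$, hence $x\prec y$. If $A$ is serial, the pair $x,y$ witnesses $A_p\prec^{wo}_A A_q$, so since $\prec_A$ extends $\prec^{wo}_A$ we get $A_p$ before $A_q$ in $\prec_A$, whence $x\prec_A y$ and $x\prec y$. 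Either way~\eqref{eq:restrictions_on_transitive_orientations} holds, so $(K,{\prec})$ is admissible; combined with the first direction and Gallai's bijection this finishes the proof.

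The main obstacle, to the extent there is one, is not in the two implications above --- those are pure bookkeeping once the definitions are unwound edge by edge --- but in the facts that were set up just before the claim and that I would simply cite: that $\prec^{wo}_A$ is a genuine strict partial order, indeed a weak order, so that ``the linear order $\prec_A$ extends $\prec^{wo}_A$'' is a coherent and satisfiable condition; and that for a prime $A$ exactly one of the two orientations $\prec^0_A,\prec^1_A$ is admissible when $\inside(A)\neq\emptyset$, both being admissible otherwise. The one small subtlety I would flag inside the argument is that the two requirements in~\eqref{eq:restrictions_on_transitive_orientations} (for $K^0$ and for $K^1$) cannot clash on a fixed adjacent pair $x\sim y$, because adjacent vertices appear in the same relative order in $\tau^0$ and $\tau^1$ of any permutation model, cf.\ equation~\eqref{eq:models_of_permutation_graphs_1}; this is what makes the ``for some $K^j$'' phrasing harmless.
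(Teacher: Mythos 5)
Your proposal is correct, and it follows the approach the paper implicitly intends: the claim is stated as a summary of the subsection without an explicit proof, and the intended argument is precisely what you spell out, namely to start from Gallai's bijection (Theorem \ref{thm:transitive_orientations_versus_transitive_orientations_of_strong_modules}) and then verify edge by edge, via the module-type definitions of admissibility and the partition of $\sim$ by the relations $\sim_A$, that admissibility of $(K,{\prec})$ corresponds exactly to admissibility of each induced $(A,{\prec_A})$. Your case analysis (parallel trivial, prime via the pointwise conditions of \eqref{eq:restrictions_on_transitive_orientations}, serial via the weak order $\prec^{wo}_A$) and the remark that $K^0$ and $K^1$ cannot impose conflicting constraints on an adjacent pair are all correct and match what the surrounding text of the paper sets up.
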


\subsection{Local isomorphisms between oriented modules}
Suppose that 
$$\mathbb{M} = (M^{0},M^{1},{<_M},p(M^{0}),p(M^{1})) \text{ and } \mathbb{N} = (N^{0},N^{1},{<_N},p(N^{0}),p(N^{1}))$$ are two extended metaedges of some consistent modules
$M$ and $N$ from $\mathcal{M}(G_{ov})$ and $\mathcal{M}(H_{ov})$, respectively.
Let $u \in M$. We say that $u$ occurs in $\mathbb{M}$ at position $(i,j)$, written 
$pos(u) = (i,j)$, if $u$ is in the $i$-th subset of the ordered partition of $p(M^{0})$ 
and in the $j$-th subset of the ordered partition of $p(M^{0})$.
The position of $u \in N$ is defined analogously.

\begin{definition}
Let $A$ be a strong module in $\mathcal{M}(M,{\sim})$ and $B$ be a strong module in $\mathcal{M}(N,{\sim})$.
We say that $A$ and $B$ are \emph{locally isomorphic}
if there is a bijection $\alpha: A \to B$ that satisfies for every $u,v \in M$:
\begin{enumerate}
 \item \label{def:loc_iso_left_right} $u <_M v$ iff $\alpha(u) <_N \alpha(v)$,
 \end{enumerate}
 and for every $u \in M$:
 \begin{enumerate}[resume]
 \item \label{def:loc_iso_orientation} $u$ is oriented from $M^{0}$ to $M^{1}$ iff $\alpha(u)$ is 
oriented from $N^{0}$ to $N^{1}$,
 \item \label{def:loc_iso_pos_mul} $pos(u) = pos(\alpha(u))$ and $m_G(u) = m_H(\alpha(u))$,
 \end{enumerate}
We say that $\mathbb{M}$ and $\mathbb{N}$ are \emph{locally isomorphic} if they are locally isomorphic on $M$ and~$N$.
\end{definition}
If $\alpha$ is a local isomorphism between $\mathbb{M}$ and $\mathbb{N}$, then $|p(M^0)| = |p(N^0)|$, and $|p(M^1)| = |p(N^1)|$.
in this case we can naturally extend $\alpha$ on the set $M \cup \inside(M)$: 
$\alpha$ maps the the subsequent nodes in $p(M^0)$ into the subsequent nodes in $p(N^0)$ and the subsequent nodes in $p(M^1)$ into the subsequent nodes
in $p(N^1)$.

Let $\alpha$ be a local isomorphism between $\mathbb{M}$ and $\mathbb{N}$
and let $(\tau^{0},\tau^{1})$ be an extended admissible model of $\mathbb{N}$.
Note that the image $(\alpha(\tau^0),\alpha(\tau^1))$ of $(\tau^{0},\tau^{1})$ 
by $\alpha$ is an extended admissible model of $\mathbb{N}$.

Let $A$ be a strong module in $\mathcal{M}(A,{\sim})$.
By $height(A)$ we denote the height of $A$ in the modular decomposition tree $\mathcal{M}(M,{\sim})$, that is,
the length of the longest path from $A$ to a leaf in $\mathcal{M}(M,{\sim})$.
We define $height(B)$ for every $B \in \mathbb{M}(N,{\sim})$ similarly.
Clearly, if $height(M) \neq height(N)$, then $\mathbb{M}$ and $\mathbb{N}$ can not be locally isomorphic.

Now, we provide an algorithm that tests whether $\mathbb{M}$ and $\mathbb{N}$ are locally isomorphic.
The algorithm traverses the trees $\mathcal{M}(M,{\sim})$ and $\mathcal{M}(N,{\sim})$ in the bottom-up order and for every $A \in \mathcal{M}(M,{\sim})$ and every $B \in \mathcal{M}(N,{\sim})$ such that $height(A) = height(B)$ it checks whether $A$ and $B$ are locally isomorphic.
If this is the case, the algorithm computes a local isomorphism $\alpha'_{AB}$ between $A$ and $B$. 
On the other hand, we show that if there is a local isomorphism $\alpha$ between $\mathbb{M}$ and $\mathbb{N}$, 
then the algorithm denotes $A$ and $\alpha(A)$ as locally isomorphic, where $A$ is any strong module in $\mathcal{M}(M,{\sim})$.
Hence, the algorithm accepts $\mathbb{M}$ and $\mathbb{N}$ iff they are locally isomorphic.

We distinguish the following cases: both $A,B$ are leaves,  both $A,B$ are parallel,
both $A,B$ are prime, and both $A,B$ are serial.
In the remaining cases, $A$ and $B$ can not be locally isomorphic.

\textbf{Case 1}: $A$ is a leaf in $\mathcal{M}(M,{\sim})$ and $B$ is a leaf in $\mathcal{M}(N,{\sim})$.
Let $\{u\} = A$ and $\{v\} = B$.
The algorithm denotes $A$ and $B$ as locally isomorphic if and only if
$m_G(u) = m_H(v)$, $u^{0} \in M^{0}$ iff $v^{0}$ in $M^{0}$, and $pos(u) = pos(v)$.
If this is the case, the algorithm sets $\alpha'_{AB}(u) = v$.

Suppose $A$ and $B$ are denoted as locally isomorphic. 
One can easily check that $\alpha'_{AB}$ satisfies properties \eqref{def:loc_iso_left_right}--\eqref{def:loc_iso_pos_mul}, and hence $\alpha'_{AB}$ is a local isomorphism between $A$ and $B$.

On the other hand, if $\mathbb{M}$ and $\mathbb{N}$ are isomorphic by $\alpha$, 
$\{\alpha(u)\}$ must be a leaf in $\mathcal{M}(N,{\sim})$, and $m_G(u) = m_H(\alpha(u))$, $u^{0} \in M^{0}$ iff $\alpha^{0}(u)$ in $M^{0}$, and $pos(u) = pos(\alpha(u))$.
Hence, the algorithm denotes $A=\{u\}$ and $\alpha(A) = \{\alpha(u)\}$ as locally isomorphic.

Now, assume that $A$ and $B$ are not leaves.
Clearly, if $A$ and $B$ have different number of children or have different types, then 
$A$ and $B$ can not be locally isomorphic.
We suppose $A_1,\ldots,A_k$ are the children of $A$ in $\mathcal{M}(M,{\sim})$ enumerated such that
$\tau^{0}|A_i$ appears before $\tau^{0}|A_j$ for every $i < j$, where $(\tau^0,\tau^1)$
is a fixed extended admissible model for $\mathbb{M}$.

\textbf{Case 2}: $A$ and $B$ are parallel.
Let $B_1,\ldots,B_{k}$ be the children of $B$ enumerated according to ${<_N}$, that is,
$B_{1} <_N \ldots <_N B_{k}$.
The algorithm denotes $A$ and $B$ as locally isomorphic if and only if
for every $i \in [k]$ the sets $A_i$ and $B_{i}$ have been denoted as locally isomorphic.
If this is the case, for every $i \in [k]$ and every $u \in A_i$ 
the algorithm sets $\alpha'_{AB}(u) =v$, where $v \in B_{i}$ is such that $\alpha'_{A_iB_{i}}(u) = v$.

Suppose $A$ and $B$ are denoted as locally isomorphic.
Note that $\alpha'_{AB}$ satisfies properties \eqref{def:loc_iso_orientation}-\eqref{def:loc_iso_pos_mul} as the corresponding properties are satisfied by 
$\alpha'_{A_iB_{i}}$ for every $i \in [k]$.
Suppose $u_1, u_2$ in $A$ are such that $u_1 <_M u_2$.
If $u_1,u_2 \in A_i$ for some $i \in [k]$, then
$\alpha'_{AB}(u_1) <_N \alpha'_{AB}(u_2)$. 
Indeed, $\alpha'_{AB}(u_1) <_N \alpha'_{AB}(u_2)$ iff $\alpha'_{A_iB_i}(u_1) <_N \alpha'_{A_iB_i}(u_2)$
by definition of $\alpha'_{AB}$ and $\alpha'_{A_iB_i}(u_1) <_N \alpha'_{A_iB_i}(u_2)$ as $\alpha'_{A_i,B_i}$ satisfies \eqref{def:loc_iso_left_right}.
If $u_1 \in A_i$ and $u_{2} \in A_{j}$ for some $i < j$, then
$\alpha'_{AB}(u_1) <_N \alpha'_{AB}(u_2)$ as $\alpha'_{AB}(u_1) \in B_{i}$, $\alpha'_{AB}(u_2) \in B_{j}$, and $B_{i} <_N B_{j}$.
This proves $\alpha'_{AB}$ satisfies \eqref{def:loc_iso_left_right}.

On the other hand, suppose $\mathbb{M}$ and $\mathbb{N}$ are locally isomorphic by $\alpha$.
From inductive hypothesis, for every $i \in [k]$ the sets $A_i$ and $\alpha(A_i)$ have been marked as locally isomorphic.
Since $\alpha(A_1) <_N \ldots <_N \alpha(A_k)$ by \eqref{def:loc_iso_left_right}, 
the algorithm denotes $A$ and $\alpha(A)$ as locally isomorphic.

\textbf{Case 3}: $A$ and $B$ are prime.
For every (at most two) admissible orientations $(B,{\prec_B})$ of $(B,{\sim_B})$
the algorithm does the following.
First, it computes the order $B_1,\ldots,B_k$ of the children of $B$
in which $B_i$ is before $B_j$ iff $B_{i} <_N B_{j}$ or $B_{i} \prec_B B_j$.
The algorithm denotes $A$ and $B$ as locally isomorphic if for every $i \in [k]$
the sets $A_i$ and $B_{i}$ have been denoted as locally isomorphic.
If this is the case, for every $i \in [k]$ and every $u \in A_i$ 
the algorithm sets $\alpha'_{AB}(u) =v$, where $v \in B_{i}$ is such that $\alpha'_{A_iB_{i}}(u) = v$.

Suppose $A$ and $B$ have been marked as locally isomorphic
at the time when an admissible orientation ${\prec_B}$ of $(B,{\sim_B})$ was processed.
Note that $\alpha'_{AB}$ satisfies properties \eqref{def:loc_iso_orientation}-\eqref{def:loc_iso_pos_mul} as the corresponding properties are satisfied by 
$\alpha'_{A_iB_{i}}$ for every $i \in [k]$.
Suppose $u_1,u_2 \in M$ are such that $ u_1 < u_2$.
If $u_1,u_2 \in A_i$ for some $i \in [k]$, then 
$\alpha'_{AB}(u_1) < \alpha'_{AB}(u_2)$ as $\alpha'_{A_iB_i}(u_1) < \alpha'_{A_iB_i}(u_2)$.
Suppose $u_1 \in A_i$ and $u_j$ in $A_j$.
Then $A_i \parallel A_j$.
Note that the numbering of $B_1,\ldots,B_k$ asserts that
$A_i$ is before $A_j$ in $A_1, \ldots, A_k$ iff $B_i <_N B_j$.
This proves \eqref{def:loc_iso_left_right}.

On the other hand, suppose that $\mathbb{M}$ and $\mathbb{N}$ are locally isomorphic by $\alpha$.
Since $\alpha$ is a local isomorphism between $\mathbb{M}$ and $\mathbb{N}$, 
the image $(\alpha(\tau^{0}), \alpha(\tau^1))$ of $(\tau^0,\tau^1)$ is an extended admissible model of $\mathbb{N}$.
Then, $(\alpha(\tau^{0}), \alpha(\tau^1))$ corresponds to 
transitive orientations $({<_N},{\prec})$ of $(N,{\parallel})$ and $(N,{\sim})$, respectively, where $\prec$
is an admissible orientation for $\mathbb{N}$.
Let $\prec_{\alpha(A)}$ be the restriction of ${\prec}$ to the edges of $(\alpha(N),{\sim}_{\alpha(N)})$.
Now, when processing the admissible orientation $(A, \prec_{\alpha(A)})$ for $\mathbb{N}$, 
the algorithm orders the children of $\alpha(A)$ in the order $\alpha(A_1),\ldots,\alpha(A_k)$.
Since for every $i \in [k]$ the sets $A_i$ and $\alpha(A_i)$ have been marked as locally isomorphic,
the algorithm will denote $A$ and $\alpha(A)$ as locally isomorphic.

\textbf{Case 4:} $A$ and $B$ are serial.
Let $(A,{\prec^{wo}_A})$ and $(B,{\prec^{wo}_B})$ be the weak orders associated with the serial modules $A$
and $B$ defined by \eqref{def:weak_orders}.
For every child $A'$ of $A$ denote by $height(A')$ the height of $A'$ in $(A,{\prec^{wo}_{A}})$, that is, the size of the chain in $(A,{\prec^{wo}_A})$ which has $A'$ as its largest element.
Define $hight(B')$ for every child $B'$ of $B$ similarly.
Let $(\mathcal{A},\mathcal{B}, \mathcal{E}_{AB})$ be a bipartite graph,
where:
\begin{itemize}
 \item $\mathcal{A}$ is the set of the children $A'$ of $A$,
 \item $\mathcal{B}$ is the set of the children $B'$ of $B$,
 \item $\mathcal{E}_{AB}$ is the set of all pairs $(A',B')$ such that $A'$ and $B'$ have been denoted as locally isomorphic and $height(A') = height(B')$.
\end{itemize}
The algorithm marks $A$ and $B$ as locally isomorphic iff there is a perfect matching 
$\mathcal{M}$ in the bipartite graph $(\mathcal{A},\mathcal{B}, \mathcal{E}_{AB})$.
If such a matching exists, for every $(A', B') \in \mathcal{M}$ and every $u \in A'$ the algorithm sets $\alpha'_{AB}(u) = v$, where $v \in B'$ is such that $\alpha'_{A'B'}(u) = v$.

Suppose $A$ and $B$ have been marked as locally isomorphic.
Since for every $u_1,u_2 \in A$, $u_1 <_M u_2$ yields $u_1,u_2 \in A'$ for some child $A'$ of $A$ as $A$ is serial, we deduce that $\alpha'_{AB}$ satisfies \eqref{def:loc_iso_left_right}
as $\alpha'_{A',B'}$ satisfies \eqref{def:loc_iso_left_right} for every $(A',B') \in \mathcal{M}$.

On the other hand, suppose $\mathbb{M}$ and $\mathbb{N}$ are locally isomorphic by $\alpha$.
Since $\alpha$ preserves positions, we deduce that $height(A_i)$ in $(A,{\prec^{wo}_A})$ equals $height(\alpha(A')$ in $(\alpha(A),{\prec^{wo}_{\alpha(A)}})$.
In particular, for every $A'$ the pair $(A',\alpha(A'))$ is in the edge set of the bipartite graph
$(\mathcal{A}, B, \mathcal{E}_{AB})$.
In particular, $\{(A',\alpha(A')): A' \in \mathcal{A}\}$ establishes a perfect matching in
$(\mathcal{A}, B, \mathcal{E}_{AB})$.
Hence, the algorithm denotes $A$ and $\alpha(A)$ as locally isomorphic.

\subsection{Isomorphisms between modules with slots}
Suppose $M$ and $N$ are two modules in $G_{ov}$ and $H_{ov}$ for which 
the circular permutations of the slots $\pi_{0}(M), \pi_{1}(M)$ and $\pi_0(N),\pi_1(N)$ are defined.
Let $K_i$ be a consistent module in $M$.
For convenience, we assume that for every $m \in \{0,1\}$ 
there are two extended metaedges associated with the consistent submodule $K_i$,
$$\mathbb{K}^0_i = (K^0_i,K^1_i,{<^0_{K_i}},p_m(K^0_i),p_m(K^1_i)) \text{ and its dual }\mathbb{K}^1_i = (K^1_i,K^0_i,{<^1_{K_i}},p_m(K^1_i),p_m(K^0_i)),$$
where $<^0_{K_i} = <_{K_i}$ and $<^{1}_{i}$ is the reverse of $<^{0}_{K_i}$.
That is, $(\tau^0,\tau^1)$ is an admissible model for $\mathbb{K}^0_i$ iff
$(\tau^1,\tau^0)$ is an admissible model for $\mathbb{K}^1_i$.

Let $\pi(M)$ be an element in $\{\pi_{0}(M), \pi_{1}(M)\}$ and
$\pi(N)$ be an element in $\{\pi_{0}(M), \pi_{1}(M)\}$.
We choose a slot $K'$ in $\pi(M)$ and a slot $N'$ in $\pi(N)$ --
we say $\pi(M)$ \emph{is pinned} in $K'$ and $\pi(N)$ \emph{is pinned} in $N'$.
Let $K'_{1} \ldots K'_{2n}$ be the order of the slots in $\pi(M)$ if we traverse
$\pi(M)$ in the clockwise order starting from $K'=K'_1$.
Let $K_i$ be the consistent submodule of $M$ associated with $K'_i$, for every $i \in [2n]$.
That is, for every consistent submodule $K$ of $M$ there are different $i$ and $j$ such that $K = K_i =K_j$.
Let $\mathbb{K}'_i = (K'_i,K''_i,{<'_{K_i}},p(K'_i),p(K''_i))$ be an (extended) orientation
of $K_i$, where $p(K'_i),p(K''_i)$ are appropriate for the chosen $\pi(M)$.
We introduce similar notation for the slots of $N$.
\begin{definition}
Suppose $M$, $N$, $\pi(M)$, $\pi(N)$, $K'$, $L'$ are as given above.
We say \emph{$\pi(M)$ pinned in $K'$ is isomorphic to $\pi(N)$ pinned in $L'$} if 
there is a bijection $\alpha$ from $M \cup N_{T_G}[M]$ to $N \cup N_{T_H}[N]$ such that:
\begin{itemize}
 \item $\alpha|K_i$ establishes a local isomorphism between $\mathbb{K}'_i$ and 
 $\mathbb{L}'_i$, for every $i \in [2n]$,
 \item $\alpha|N_{T_G}(M)$ is a bijection from 
 $N_{T_G}(M)$ to $N_{T_H}(N)$ and $\alpha$ maps every node in $p(K'_i,K'_{i+1})$ into a node in $p(L'_i,L'_{i+1})$, cyclically.
\end{itemize}
\end{definition}
Note that there may exist many isomorphisms between $\pi(M)$ and $\pi(N)$ pinned in $K'$
and $L'$, respectively.
However, their restrictions to the set $N_{T_G}[M]$ are unique and are determined by the patterns of the slots in $\pi(M)$ and $\pi(N)$.

Clearly, an isomorphism between $\pi(M)$ and $\pi(N)$ can be computed by the algorithm 
presented in the previous section.

\begin{claim}
\label{claim:isomorphism_betwen_admissible_models}
Let $\alpha:M \cup N_{T_G}[M] \to N \cup N_{T_H}[N]$ be an isomorphism between $\pi(M)$ and $\pi(N)$ pinned in $K'$ and $L'$, respectively.
Then $\alpha$ satisfies the following properties for every $u,v \in M$ and every $N' \in N_{T_G}[M]$:
\begin{enumerate}
\item $u \in \leftside(v)$ iff $\alpha(u) \in \leftside(\alpha(v))$,
\item $u \in \rightside(v)$ iff $\alpha(u) \in \rightside(\alpha(v))$,
\item $N' \in \leftside(v)$ iff $\alpha(N') \in \leftside(\alpha(v))$.
\end{enumerate} 
In particular, if $\phi^M$ is an extended admissible model of $(M,{\sim})$ for $\pi(M)$, then the image $\alpha(\phi^M)$ is an extended admissible model of $(N,{\sim})$ for $\pi(N)$.
\end{claim}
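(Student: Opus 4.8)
The relations in question are already encoded in (extended) conformal models: for $u,v\in M$ the position of $\phi^{M}(u)$ relative to $\phi^{M}(v)$ records whether $u\in\leftside(v)$, $u\in\rightside(v)$ or $u\sim v$ by \eqref{eq:conformal_models_agree_on_M}, and the side of the point $\phi^{M}(N')$ relative to $\phi^{M}(v)$ records whether $N'\in\leftside(v)$ by \eqref{eq:conformal_models_agree_on_N}; the same holds in $N$ and $H_{ov}$. So the plan is to carry a model across $\alpha$. Concretely, I would fix any extended admissible model $\phi^{M}$ of $(M,{\sim})$ for $\pi(M)$ (one exists: $G_{ov}$ has a conformal model since $G$ is a circular-arc graph, and, reflecting it if needed so that the circular order of the slots of $M$ is $\pi(M)$, its restriction is such a model by Claim~\ref{claim:conformal_models_are_admissible}), and set $\psi\equiv\alpha(\phi^{M})$. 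The core claim, which is also the ``in particular'' assertion, is that $\psi$ is an extended admissible model of $(N,{\sim})$ for $\pi(N)$; once that is done, properties (1)--(3) follow easily, so I would prove it first, and for \emph{any} extended admissible $\phi^{M}$, since that is what the last sentence asks.

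To prove that $\psi$ is extended admissible for $\pi(N)$ I would check the three defining clauses, using repeatedly that relabelling a circular word by $\alpha$ commutes with the operations of cutting it into slots, into the subword carrying a slot together with the nodes inside it, and into the set of nodes between two consecutive slots. First, $\phi^{M}|M$ is an admissible model of $(M,{\sim})$ obtained from $\pi(M)$ by filling each pair of slots $K^{0}_{i},K^{1}_{i}$ (which appear in the cyclic order $K'_{1}\ldots K'_{2n}$) with an extended admissible model for the extended metaedge of $K_{i}$; since $\alpha|K_{i}$ is a local isomorphism of extended metaedges it sends such a model to one for the image metaedge (the remark following the definition of a local isomorphism), and since $\alpha$ realises an isomorphism between $\pi(M)$ pinned in $K'$ and $\pi(N)$ pinned in $L'$ it sends $K'_{1}\ldots K'_{2n}$ to $L'_{1}\ldots L'_{2n}$ cyclically; hence $\psi|N$ is an admissible model of $(N,{\sim})$ with circular slot order $\pi(N)$. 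Second, for each slot $L_{i}$ the pair $\big(\tau_{\psi}(L^{0}_{i}),\tau_{\psi}(L^{1}_{i})\big)$ is the $\alpha$-image of $\big(\tau_{\phi^{M}}(K^{0}_{i}),\tau_{\phi^{M}}(K^{1}_{i})\big)$, an extended admissible model for $\mathbb{K}'_{i}$, and $\alpha$ (extended to nodes so that it carries the successive nodes of $p(K^{j}_{i})$ to those of $p(L^{j}_{i})$) sends extended admissible models to extended admissible models, so the image is extended admissible for $\mathbb{L}'_{i}$. Third, the set of nodes between $L'_{i}$ and $L'_{i+1}$ in $\psi$ is the $\alpha$-image of the set between $K'_{i}$ and $K'_{i+1}$ in $\phi^{M}$, which equals $p(K'_{i},K'_{i+1})$, and that image is $p(L'_{i},L'_{i+1})$ by the node condition in the definition of the pinned isomorphism.

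For (1)--(3), note that the orientation clause in the definition of a local isomorphism forces $u$ and $\alpha(u)$ to be co-oriented, so $\psi$ is a faithful relabelling of $\phi^{M}$: the oriented chord $\psi(\alpha(u))$ is the copy of $\phi^{M}(u)$ with the same tail and head, the point $\psi(\alpha(N'))$ is the copy of $\phi^{M}(N')$, and hence the two-sided position of a chord or a point against another chord is preserved. Then for $u,v\in M$: by \eqref{eq:conformal_models_agree_on_M} applied to $\phi^{M}$, $u\in\leftside(v)$ iff $\phi^{M}(u)$ lies on the left side of $\phi^{M}(v)$, iff $\psi(\alpha(u))$ lies on the left side of $\psi(\alpha(v))$, iff $\alpha(u)\in\leftside(\alpha(v))$ by \eqref{eq:conformal_models_agree_on_M} applied to $\psi$; this is (1), (2) is identical with ``right'' in place of ``left'', and (3) is the same chain with \eqref{eq:conformal_models_agree_on_N} instead of \eqref{eq:conformal_models_agree_on_M}. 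I expect the only genuinely delicate point to be the first clause of the middle paragraph --- that $\alpha$ preserves the cyclic order of the slots and of the interspersed nodes --- as this is exactly where the hypothesis that $\alpha$ is an \emph{isomorphism of the pinned permutations} is used; everything else is the bookkeeping fact that relabelling commutes with slicing a circular word into its slots and node-blocks, already made available by Claims~\ref{claim:prime_module_invariants}, \ref{claim:ivariants_for_prime_child_of_V} and \ref{claim:serial_module_invariants} together with the attendant definitions of (extended) admissible models and local isomorphisms.
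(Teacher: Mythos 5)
Your proof is correct, but it reverses the logical route that the paper takes. The paper proves statements (1)--(3) directly by a combinatorial argument on the slots: for $u,v$ in the same consistent submodule $K_i$ it invokes the local isomorphism $\alpha|K_i$; for $u,v$ in different submodules it determines, from $u\in\leftside(v)$, the cyclic order in $\pi(M)$ of the four slots containing $u^0,u^1,v^0,v^1$, and transports that order to $\pi(N)$ using the orientation-preservation clause of local isomorphism and the cyclic matching of slots given by the pinned isomorphism; the ``in particular'' clause is then left to the reader. You instead prove the ``in particular'' clause first --- that $\alpha$ carries any extended admissible model $\phi^M$ for $\pi(M)$ to an extended admissible model for $\pi(N)$, by checking the three defining clauses against the cyclic slot-matching, the action of the local isomorphisms on admissible models, and the node-matching condition --- and then read off (1)--(3) via the chain $u\in\leftside(v)\iff\phi^M(u)$ lies on the left of $\phi^M(v)\iff\psi(\alpha(u))$ lies on the left of $\psi(\alpha(v))\iff\alpha(u)\in\leftside(\alpha(v))$, using the conformality properties \eqref{eq:conformal_models_agree_on_M}--\eqref{eq:conformal_models_agree_on_N}. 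Your route is conceptually cleaner and makes the ``in particular'' claim the central step rather than an afterthought, but it quietly leans on two extra facts: the existence of an extended admissible model for $\pi(M)$ (which follows because $G$ is a circular-arc graph, via Claim~\ref{claim:conformal_models_are_admissible}), and the equivalence between admissibility and conformality on $N$ (Theorem~\ref{thm:description_of_all_conformal_models_of_improper_prime_modules} when $N$ is prime, and its serial analogue otherwise), which in turn presupposes that $H$ is a circular-arc graph. The paper's argument is more elementary in that it never needs to conjure a model; both, however, handle the case of serial $M$ only by analogy.
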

\begin{proof}
If $u$ and $v$ belong to the same consistent submodule of $M$,
then $u \in \leftside(v)$ iff $\alpha(u) \in \leftside(\alpha(v))$ 
as $\alpha|K_i$ is a local isomorphism on $K_i$ and $\alpha(K_i)$.
Suppose $u$ and $v$ belong to two different submodules of $M$.
For every $u' \in \{u^{0},u^{1}\}$ let $i(u')$ be such that $u' \in K'_{i(u')}$.
Define $i(v')$ similarly.
Since $u \in \leftside(v)$, we have 
$$\pi(M)|\{K'_{i(u^{0})},K'_{i(u^{1})},K'_{i(v^{0})},K'_{i(v^{1})} \} \equiv K'_{i(v^{0})}K'_{i(u')}K'_{i(u'')}K'_{i(v^{1})}$$
for some $\{u',u''\} = \{u^{0},u^{1}\}$.
However, we have $\alpha^{0}(v) \in L_{i(v^{0})}$, $\alpha^{1}(v) \in L_{i(v^{1})}$, 
$\alpha^{0}(u) \in L_{i(u^{0})}$, $\alpha^{1}(u) \in L_{i(u^{1})}$ as local isomorphisms preserve orientations of the vertices, and hence
$$\pi(N)|\{L'_{i(u^{0})},L'_{i(u^{1})},L'_{i(v^{0})},L'_{i(v^{1})} \} \equiv L'_{i(v^{0})}L'_{i(u')}L'_{i(u'')}L'_{i(v^{1})}.$$
This proves $u \in \leftside(v)$ iff $\alpha(u) \in \leftside(\alpha(v))$.
The remaining statements of the claim are proved analogously.
\end{proof}

Now, we are ready to provide an isomorphism algorithm testing whether two circular-arc graphs
$(G,m_G)$ and $(H,m_H)$ are isomorphic. 
As in the previous sections, we consider three cases:
\begin{itemize}
 \item $V(G_{ov})$ and $V(H_{ov})$ are serial.
 \item $V(G_{ov})$ and $V({H_{ov}})$ are prime.
 \item $V(G_{ov})$ and $V(H_{ov})$ are parallel.
\end{itemize}
In the remaining cases, $(G,m_G)$ and $(H,m_H)$ can not be isomorphic.

\subsection{$V(G_{ov})$ and $V(H_{ov})$ are serial}
Suppose $(G,m_G)$ and $(H,m_H)$ are two circular-arc graphs such that both $V(G_{ov})$ and $V(H_{ov})$ are serial.
Suppose that:
\begin{itemize}
 \item $M_1,\ldots,M_k$ are the children of $V(G_{ov})$ in $\mathcal{M}(G_{ov})$,
 \item $N_1,\ldots,N_k$ are the children of $V(H_{ov})$ in $\mathcal{M}(H_{ov})$,
 \item $\mathbb{M}^0_i = (M^{0}_i,M^{1}_i,{<^{0}_{M_i}})$ and $\mathbb{M}_i^1 = (M^{1}_i,M^{0}_i,{<^{1}_{M_i}})$ are the metaedges of $M_i$, for $i \in [k]$,
 \item $\mathbb{N}^0_i = (N^{0}_i,N^{1}_i,{<^0_{N_i}})$ and $\mathbb{N}^1_i = (N^{1}_i,N^{0}_i,{<^1_{N_i}})$ are the metaedges of $N_i$, for $i \in [k]$.
\end{itemize}

The algorithm constructs a bipartite graph $G_{MN}$ between the modules $\{M_1,\ldots,M_k\}$ and $N_1,\ldots,N_k$, where there is an edge in $G_{MN}$ between $M_i$ and $N_j$ iff
there are $\mathbb{M}'_i \in \{\mathbb{M}^{0}_i,\mathbb{M}^{1}_i\}$ and $\mathbb{N}'_j \in \{\mathbb{N}^{0}_j,\mathbb{N}^{1}_j\}$ such that $\mathbb{M}'_i$ and $\mathbb{N}'_j$ are locally isomorphic.
The algorithm accepts $(G,m_G)$ and $(H,m_H)$ iff there is a perfect matching $\mathcal{M}$
in $G_{MN}$.

We claim that the algorithm accepts $(G,m_G)$ and $(H,m_H)$ iff $(G,m_G)$ and $(H,m_H)$ are isomorphic.
Suppose $\alpha$ is an isomorphism between $(G,m_G)$
and $(H, m_H)$.
Clearly, $\alpha(M_1),\ldots,\alpha(M_k)$ is a permutation of $N_1,\ldots,N_k$.
By Theorem \ref{thm:description_of_all_conformal_models_of_serial_modules}, the image $N'_i$ of $M^{0}_i$ by $\alpha$ satisfies either $N'_i = N^{0}_i$ or $N'_i = N^{1}_i$.
Now, $\mathbb{M}^{0}_i = (M^{0}_i,M^{1}_i,{<^{0}_{M_i}})$ is locally isomorphic with $\mathbb{N}'_i = (N'_i,N''_i,{<'_{N_i}})$.
So, $(M_i,\alpha(M_i))$ is an edge of $G_{MN}$ and $\{(M_i,\alpha(M_i)): i \in [k]\}$ is a perfect matching in $G_{MN}$.
So, the algorithm accepts $(G,m_G)$ and $(H,m_H)$.

Now, suppose that the algorithm accepts $(G,m_G)$ and $(H,m_H)$.
Suppose $\mathcal{M}$ is a matching between $M_1,\ldots,M_k$
and $N_1,\ldots,N_k$ in $G_{MN}$.
Without loss of generality suppose that the children of $V(H_{ov})$ are enumerated such that $(M_i,N_i) \in \mathcal{M}$ for every $i \in [k]$.
Let $\alpha_i$ be a local isomorphism between $\mathbb{M}'_i$ and 
$\mathbb{N}'_i$, where $\mathbb{M}'_i \in \{\mathbb{M}^{0}_i,\mathbb{M}^{1}_i\}$ and $\mathbb{N}'_i \in \{\mathbb{N}^{0}_i,\mathbb{N}^{1}_i\}$.
Let $\alpha:V(G_{ov}) \to V(H_{ov})$ be a mapping such that $\alpha|M_i = \alpha_i$.
Now, given $\alpha_i$ is a local isomorphism between $\mathbb{M}'_i$ and 
$\mathbb{N}'_i$, one can easily verify that $\alpha$ is an isomorphism between 
$(G,m_G)$ and $(H_{ov},m_H)$.

\subsection{$V(G_{ov})$ and $V(H_{ov})$ are prime}
Suppose $(G,m_G)$ and $(H,m_H)$ are two circular-arc graphs
such that both $V(G_{ov})$ and $V(H_{ov})$ are prime.
Suppose that $\pi_{0}(V(G_{ov}))$, $\pi_{1}(V(G_{ov}))$ and $\pi_{0}(V(H_{ov}))$, $\pi_{1}(V(H_{ov}))$
are circular permutations of the slots of $V(G_{ov})$ and $V(H_{ov})$, respectively.

The algorithm iterates over circular orders $\pi(V(G))$ in $\{\pi_{0}(V(G)), \pi_{1}(V(G))\}$
and $\pi(V(H))$ in $\{\pi_{0}(V(H)), \pi_{1}(V(H))\}$.
For every pair $(\pi(V(G)), \pi(V(H)))$ the algorithm fixes a slot $K'$ in $\pi(V(G))$.
Next, it iterates over all slots $L'$ in $\pi(V(H))$ and 
checks whether $\pi(V(G))$ and $\pi(V(H))$ pinned in $K'$ and $L'$ are isomorphic.
It accepts $(G,m_G)$ and $(H,m_h)$ iff for some choice of $\pi(V(G))$, $\pi(V(H))$, and $L'$
the algorithm finds out that $\pi(V(G))$ and $\pi(V(H))$ pinned in $K'$ and $L'$ are isomorphic.

If the algorithm accepts $(G,m_G)$ and $(H,m_H)$, then $(G,m_G)$ and $(H,m_H)$ are isomorphic,
which follows immediately from Claim \ref{claim:isomorphism_betwen_admissible_models}.
Suppose $\alpha$ is an isomorphism between $(G_{ov},m_G)$ and $(H_{ov},m_H)$.
Let $\phi$ be a conformal model of $G_{ov}$. 
The image $\alpha(\phi)$ of $\phi$ by $\alpha$ is a conformal model of $H_{ov}$.
Let $\pi(V(G)) = \pi_{\phi}(V(G))$, $\pi(V(H)) = \pi_{\alpha(\phi)}(V(H))$, 
$K'$ be a slot chosen by the algorithm 
at the time when it processes $\pi(V(G))$ and $\pi(V(H))$, and let $\alpha(K') = \{\alpha(u'): u' \in K'\}$ be the image of the slot $K'$.
Now, note that the algorithm accepts $(G,m_G)$ and $(H,m_H)$ when it processes $\pi(V(G))$, $\pi(V(H))$, and $\alpha(K')$.

\subsection{$V(G_{ov})$ and $V(H_{ov})$ are parallel}

Suppose that:
\begin{itemize}
\item $T_{G}$ and $T_H$ are $T_{NM}$ trees for $G_{ov}$ and $H_{ov}$, respectively,
\item $M_1,\ldots,M_k$ are the children of $V(G_{ov})$ and 
$K_1,\ldots,K_k$ are the children of $V(H_{ov})$.
\end{itemize}

Suppose $T_G$ is rooted in $R$, where $R$ is any leaf module in $T_G$.
For every module (node) $A$ in the rooted tree $T_G$, 
by $V_{T_G}[A]$ we denote the union of all modules descending $A$ in $T_{G}$, including $A$.
We introduce the similar notation for every module (node) $B$ in the rooted tree $T_H$.

Let $\phi$ be a conformal model of $G$.
Let $\alpha$ be an isomorphism between $G_{ov}$ and $H_{ov}$ and
let $\alpha(\phi)$ be the image of $\phi$ by $\alpha$.
Note that $\alpha$ maps every module $M$ in $T_{G}$ into a
module $\alpha(M)$ in $H_{ov}$.
In particular, the root $R$ of $T_{G}$ is mapped into a leaf module $\alpha(R)$ in $T_H$.
Let $\alpha(R)$ be the root of $T_H$.
Note that the children of every module $M$ (node $N$) are mapped into
the children of $\alpha(M)$ ($\alpha(N)$, respectively).
By Claim \ref{claim:conformal_models_properties_modules_nodes_T_NM}, 
for every node/module $A$ in $T_{G}$, 
$\phi|V_{T_G}[A]$ is a contiguous subword of $\phi$.
We denote this subword by $\phi'_A$ 
(recall the proof of Theorem \ref{thm:description_of_all_conformal_models_of_improper_parallel_modules}, 
where the word $\phi'_A$ was constructed from
extended admissible models $\phi|(M \cup N_T[M])$ for modules $M \in T_{G}$
and from circular permutations $\phi|(N \cup N_T[N])$ for nodes $N$ in $T_G$).
Similarly, the image $\alpha(\phi'_A)$ of $\phi'_A$ by $\alpha$ 
is a contiguous subword of $\alpha(\phi)$.

Now, we present the algorithm that tests whether $(G,m_G)$ and $(H,m_H)$ are isomorphic.
The algorithm picks a leaf module $R$ in $T_G$ arbitrarily and sets $R$ as the root of $T_{G}$.
Next, the algorithm iterates over all leaves $R'$ of $T_H$, sets $R'$ as the root of $T_H$, and does the following.
It traverses the trees $T_G$ and $T_H$ bottom-up
and for every two nodes $A \in T_G$ and $B \in T_H$ 
or two modules $A \in T_G$ and $B \in T_H$  
it tests whether there is a bijection $\alpha:V_{T_G}[A] \to V_{T_H}[B]$ that satisfies the following conditions:
\begin{itemize}
\item $m_G(u) = m_H(\alpha(u))$ for every $u \in V_{T_G}[A]$,
\item the image of $\phi'_A$ by $\alpha$ is a contiguous subword of some conformal model of $H_{ov}$.
\end{itemize}
We call such a mapping $\alpha$ \emph{an isomorphism between $A$ and $B$ in the rooted trees $T_G$ and~$T_H$}.

The algorithm accepts $(G,m_G)$ and $(H,m_H)$ if and only if there is a leaf module $R'$ in $T_H$ 
such that $R$ and $R'$ are isomorphic in the trees $T_G$ and $T_H$ rooted in $R$ and $R'$, respectively.

Suppose the algorithm accepts $(G,m_G)$ and $(H,m_H)$.
Note that an isomorphism $\alpha$ between $R$ and $R'$ in the rooted trees $T_G$ and $T_H$
maps bijectively the vertices from $V(G_{ov})$ into $V(H_{ov})$, preserves multiplicities,
and the image of $\phi_R$ by $\alpha$ is a conformal model of $H_{ov}$.
In particular, it means that $\alpha$ is an isomorphism between $G_{ov}$ and $H_{ov}$.

On the other hand, since $\alpha$ is an isomorphism between $G_{ov}$ and $H_{ov}$, 
$\alpha$ is an isomorphism between $R$ and $\alpha(R)$ in the trees $T_G$ and $T_H$ rooted in $R$ and $\alpha(R)$, and hence the algorithm accepts $(G,m_G)$ and $(H,m_H)$.

It remains to show how the algorithm tests whether $A$ and $B$ are isomorphic in the rooted trees $T_G$ and $T_H$.
Whenever the algorithm denotes $A$ and $B$ as isomorphic, it constructs
an isomorphism $\alpha'_{AB}$ between $A$ and $B$ witnessing its answer.
We also show that the algorithm denotes $A$ and $\alpha(A)$ as isomorphic.

Suppose that $A$ and $B$ are nodes in $T_G$ and $T_H$, respectively.
Suppose $A_1,\ldots,A_k$ are the children of $A$ in $T_G$ and $B_1,\ldots,B_l$
are the children of $B$ in $T_H$.
We define a bipartite graph $G_{AB}$ on $\{A_1,\ldots,A_k\}$ 
and $\{B_1,\ldots,B_l\}$ with the edge between $A_i$ and $B_j$ iff
$A_i$ and $B_j$ have been denoted as isomorphic.
The algorithm marks $A$ and $B$ as isomorphic iff 
there is a perfect matching $\mathcal{M}$ between $\{A_1,\ldots,A_k\}$ and 
$\{B_1,\ldots,B_l\}$ in $G_{AB}$.
If this is the case, for every $(A',B') \in \mathcal{M}$
and every $u \in A'$ the algorithm sets $\alpha'_{AB}(u) = v$, where $v \in B'$ is such that 
${\alpha'_{A'B'}(u) = v}$.

Suppose the algorithm denotes $A$ and $B$ as isomorphic.
Clearly, $\alpha'_{AB}$ preserves multiplicities.
By Theorem \ref{thm:description_of_all_conformal_models_of_improper_parallel_modules},
$\phi'_A  = \phi'_{A'_1}\ldots \phi'_{A'_k}$ for some 
permutation $A'_1,\ldots,A'_k$ of $A_1,\ldots,A_k$.
Suppose $B'_1,\ldots,B'_k$ is a permutation of the children of $B$ such that $(A'_i,B'_i) \in \mathcal{M}$.
Hence, the image $\tau'_{B'_i}$ of $\phi'_{A'_i}$ by $\alpha'_{A'_iB'_i}$ is a contiguous subword of some conformal model of $H_{ov}$, for every $i \in [k]$.
By Theorem \ref{thm:description_of_all_conformal_models_of_improper_parallel_modules}, the image $\tau'_{B'_1} \ldots \tau'_{B'_k}$ of $\phi'_A$ by $\alpha'_{AB}$ is a contiguous subword of a conformal model of $H_{ov}$.
So, $\alpha'_{AB}$ is an isomorphism between $A$ and $B$.

Suppose $G_{ov}$ and $H_{ov}$ are isomorphic.
For every child $A'$ of $A$ algorithm denoted $A'$ and $\alpha(A')$ as isomorphic.
Note that $\alpha(A_1),\ldots,\alpha(A_k)$ is the set of all the children of $B$.
Hence, $\{(A_i, \alpha(A_i)): i \in [k]\}$ is a perfect matching in $G_{AB}$.
Hence, the algorithm denotes $A$ and $\alpha(A)$ as isomorphic in $T_G$ and $T_H$.

Suppose $A$ and $B$ are leaves in $T_G$ and $T_H$, respectively.
In particular, $A$ and $B$ are modules in $T_G$ and $T_H$.
Let $p(A)$ be the parent node of $A$ in $T_G$.
Let $p(B)$ be the parent of $B$ in $T_H$.
Note that $\phi|(A \cup N_{T_G}[A])$ is an extended admissible model of $(A,{\sim})$.
Let $\pi_{m}(A)$ be the circular order of the slots in $\phi|A$, for some $m \in \{0,1\}$.
Let $S'$ be a slot in $\pi_{m}(A)$ such that
$p(A) \in p_{m}(S')$ or $p(A) \in p_{m}(S',S'')$ for some two consecutive slots $(S',S'')$ in $\pi_{m}(A)$.
Let $\pi_{0}(B)$ and $\pi_{1}(B)$ be two circular orders of the slots of $B$ in $H_{ov}$.
For every $\pi(B) \in \{\pi_{0}(B), \pi_{1}(B)\}$ the algorithm does the following.
First, it chooses a slot $T'$ in $\pi(B)$ such that
either $p(B) \in p(T')$ or 
$p(B) \in p(T',T'')$, where $T''$ is such that $(T',T'')$ are consecutive slots in $\pi(B)$ and $p(T')$ and $p(T',T'')$ are the patterns from $\pi(B)$.
Next, the algorithm checks whether $\pi_{m}(A)$ and 
$\pi(B)$ pinned in $S'$ and $T'$ are isomorphic.
Suppose $\alpha_{AB}$ is such an isomorphism.
Note that $\alpha_{AB}$ needs to map $p(A)$ into $p(B)$.
The algorithm denotes $A$ and $B$ as isomorphic, and returns $\alpha_{AB}|A$ as 
an isomorphism $\alpha'_{AB}$ between $A$ and $B$.

Suppose $A$ and $B$ are denoted as isomorphic.
Note that $A\phi'_A$ is an extended admissible model of $(A,{\sim})$.
Note that $\alpha_{AB}$ satisfies Claim \ref{claim:isomorphism_betwen_admissible_models}
and hence $\alpha_{AB}(A\phi_A)$ is an extended admissible model of $(B,{\sim})$.
In particular, by Theorem~\ref{thm:description_of_all_conformal_models_of_improper_parallel_modules}, 
the image of the word $\phi'_A$ by $\alpha'_{AB}$ can be extended to a conformal model of $H_{ov}$.
This shows that $\alpha'_{AB}$ is a local isomorphism between $A$ and $B$.

Suppose $G_{ov}$ and $H_{ov}$ are isomorphic.
Clearly, $\alpha(A)$ is a leaf in $T_H$.
Let $\pi_{m'}(\alpha(A))$ be the order of the slots of $\alpha(A)$ in the admissible model $\alpha(\phi)|\alpha(A)$.
Note that when the algorithm processes the circular order $\pi_{m'}$ of $\alpha(A)$, 
it will denote $A$ and $\alpha(A)$ as isomorphic.

Eventually, suppose $A$ and $B$ are modules in $T_G$ and $T_H$, respectively.
In this case the algorithm is similar as for the leaves, with one exception.
Suppose $\alpha_{AB}: A \cup N_{T_G}[A] \to B \cup N_{T_G}[B]$ is an isomorphism between 
$\pi_{m}(A)$ and $\pi(B)$ pinned in $S'$ and $T'$, where $m$, $\pi(B)$, $S'$, and $T'$ are defined as for leaves.
First, the algorithm checks whether $\alpha_{AB}$ maps the parent $p(M)$ of
$A$ into the parent $p(B)$ of $B$ (if the parents exist).
Moreover, for every node $N \in N_{T_G}[A]$ different than $p(A)$, the algorithm checks 
whether $N$ and $\alpha_{AB}(N)$ have been denoted as isomorphic.
If this is the case, the algorithm accepts $A$ and $B$.
The mapping $\alpha'_{AB}$ is defined as follows. 
If $u \in A$ we set $\alpha'_{AB}(u) = v$, where $v \in B$ is such that $\alpha_{AB}(u) = v$
and for every child $N$ of $A$ and every $u \in V_{T_G}[N]$ the algorithm sets $\alpha_{AB}(u) = v$, where $v \in V_{T_H}(\alpha_{AB}(N))$ is such that $\alpha'_{N\alpha_{AB}(N)}(u) = v$.

Suppose $A$ and $B$ are denoted as isomorphic.
Let $\phi^A \equiv A\phi_A|(A \cup N_{T_G}(A))$.
Note that $\phi^A$ is an extended admissible model for $A$.
By Claim \ref{claim:isomorphism_betwen_admissible_models}, 
the image of $\phi^A$ by $\alpha_{AB}$ is an extended admissible model for $B$.
Now, replace in $\alpha_{AB}(\phi^A)$ every child $N'$ of $B$ different than $p(B)$
by the word $\alpha'_{AB}(\phi'_N)$, where $N$ in $N_{T_G}[A]$ is such that $\alpha_{AB}(N) = N'$.
Note that we obtain the word $p(B)\alpha'_{AB}(\phi'_A)$.
Since $\alpha'_{AB}(\phi'_N)$ is a subword of a conformal model of $H_{ov}$,
$\alpha_{AB}(\phi^A)$ is an admissible model of $B$,
we deduce from Theorem~\ref{thm:description_of_all_conformal_models_of_improper_parallel_modules} that
$\alpha'_{AB}(\phi'_A)$ is a contiguous subword of some conformal model of $H_{ov}$.
Hence, $\alpha'_{AB}$ establishes an isomorphism between $A$ and $B$.

Suppose $G_{ov}$ and $H_{ov}$ are isomorphic.
We prove that the algorithm will denote $A$ and $\alpha(A)$ as isomorphic 
in the same way as for the leaves.

\bibliographystyle{plain}

\bibliography{lit}

\end{document}